\newtheorem{theorem}{Theorem}
\newtheorem{lemma}[theorem]{Lemma}
\newtheorem{claim}[theorem]{Claim}
\newtheorem{corollary}[theorem]{Corollary}
\newtheorem{proposition}[theorem]{Proposition}
\theoremstyle{definition}
\newtheorem{definition}[theorem]{Definition}
\def\cqedsymbol{\ifmmode$\lrcorner$\else{\unskip\nobreak\hfil
\penalty50\hskip1em\null\nobreak\hfil$\lrcorner$
\parfillskip=0pt\finalhyphendemerits=0\endgraf}\fi} 
\newcommand{\cqed}{\renewcommand{\qed}{\cqedsymbol}}
\newcommand{\executeiffilenewer}[3]{%
\ifnum\pdfstrcmp{\pdffilemoddate{#1}}%
{\pdffilemoddate{#2}}>0%
{\immediate\write18{#3}}\fi%
} 
\newcommand{%
\executeiffilenewer{figures/.svg}{figures/.pdf}%
{inkscape -z -D --file=figures/.svg %
--export-pdf=figures/.pdf --export-latex}%
{\input{figures/.pdf_tex}}}[1]{%
\executeiffilenewer{figures/#1.svg}{figures/#1.pdf}%
{inkscape -z -D --file=figures/#1.svg %
--export-pdf=figures/#1.pdf --export-latex}%
{\input{figures/#1.pdf_tex}}}%
\newcommand{\Oh}{\ensuremath{\mathcal{O}}}
\newcommand{\Ohtilde}{\ensuremath{\widetilde{\mathcal{O}}}}
\newcommand{\Ff}{\mathcal{F}}
\newcommand{\Tt}{\mathcal{T}}
\newcommand{\parent}{\mathrm{parent}}
\newcommand{\reach}{\mathsf{reach}}
\newcommand{\Cc}{\mathcal{C}}
\newcommand{\ctw}{c_{\mathrm{tw}}}
\newcommand{\Nats}{\mathbb{N}}
\DeclareMathOperator{\distance}{dist}
\DeclareMathOperator{\distball}{Ball}
\newcommand{\Ii}{\mathcal{I}}
\newcommand{\Far}{\mathsf{Far}}
\newcommand{\patsiz}{\Pi}
\newcommand{\grasiz}{\Gamma}
\newcommand{\dstpot}{\Phi}
\newcommand{\dist}{\mathrm{dist}}
\newcommand{\pin}{\mathsf{in}}
\newcommand{\pout}{\mathsf{out}}
\newcommand{\cc}{\mathsf{cc}}
\newcommand{\Monster}{\mathsf{LB}}
\newcommand{\Comps}{\mathsf{CC}}
\newcommand{\Prv}{\mathsf{Prv}}
\newcommand{\Pub}{\mathsf{Pub}}
\newcommand{\trms}{T}
\newcommand{\light}{\trms^{\mathsf{li}}}
\newcommand{\heavy}{\trms^{\mathsf{he}}}
\newcommand{\ghost}{R}
\newcommand{\torso}[2]{#1\langle #2\rangle}
\newcommand{\torsoo}{\mathrm{torso}}
\newcommand{\adh}{\sigma}
\newcommand{\myparagraph}[1]{\medskip\noindent{\bf{#1}}}
\title{Subexponential parameterized algorithms for planar and apex-minor-free graphs via low treewidth pattern covering\thanks{%
The research leading to these results has received funding from the European Research Council under the European Union's Seventh Framework Programme (FP/2007-2013) / ERC Grant Agreements no. 267959 (F. V. Fomin),
no.~280152 (D. Marx), and no.~306992 (S. Saurabh).
The work of D.~Marx is also supported by Hungarian Scientific Research Fund (OTKA) grant NK105645.
The research of Ma. Pilipczuk is supported by Polish National Science Centre grant UMO-2013/09/B/ST6/03136.
The research of Mi. Pilipczuk is supported by Polish National Science Centre grant UMO-2013/11/D/ST6/03073.
Mi. Pilipczuk is also supported by the Foundation for Polish Science (FNP) via the START stipend programme.
Part of the research of F. Fomin, D. Marx, and Ma. Pilipczuk has been done when they were participating in the ``Fine-grained complexity and algorithm design'' program
at the Simons Institute for Theory of Computing in Berkeley.}}
\author{
  Fedor V. Fomin\thanks{
    Department of Informatics, University of Bergen, Norway, \texttt{fomin@ii.uib.no}.
  }
  \and
  Daniel Lokshtanov\thanks{
    Department of Informatics, University of Bergen, Norway, \texttt{daniello@ii.uib.no}.
  }
  \and
  D\'aniel Marx\thanks{
    Institute for Computer Science and Control, Hungarian Academy of Sciences (MTA SZTAKI), Hungary, \texttt{dmarx@cs.bme.hu}.
  }
  \and
  Marcin Pilipczuk\thanks{
    Institute of Informatics, University of Warsaw, Poland, \texttt{marcin.pilipczuk@mimuw.edu.pl}.
  }
  \and
  Micha\l{} Pilipczuk\thanks{
    Institute of Informatics, University of Warsaw, Poland, \texttt{michal.pilipczuk@mimuw.edu.pl}.
  }
  \and 
  Saket Saurabh\thanks{
    Institute of Mathematical Sciences, India, \texttt{saket@imsc.res.in}, and
    Department of Informatics, University of Bergen, Norway, \texttt{Saket.Saurabh@ii.uib.no}.
  }
}
\date{}
\begin{document}

\begin{titlepage}
\def\thepage{}
\thispagestyle{empty}
\maketitle

\begin{abstract}
We prove the following theorem.
Given a planar graph $G$ and an integer $k$, it is possible in polynomial time to randomly sample a subset $A$ of vertices of $G$ with the following properties:
\begin{itemize}
\item $A$ induces a subgraph of $G$ of treewidth $\Oh(\sqrt{k}\log k)$, and 
\item for every connected subgraph $H$ of $G$ on at most $k$ vertices, the probability that $A$ covers the whole vertex set of $H$ is at least $(2^{\Oh(\sqrt{k}\log^2 k)}\cdot n^{\Oh(1)})^{-1}$, 
where $n$ is the number of vertices of $G$.
\end{itemize}
Together with standard dynamic programming techniques for graphs of bounded treewidth, this result gives 
a versatile technique for obtaining (randomized) subexponential parameterized algorithms
for problems on planar graphs, usually with running time bound $2^{\Oh(\sqrt{k} \log^2 k)} n^{\Oh(1)}$.
The technique can be applied to problems expressible as searching for a small, connected pattern with a prescribed property in a large host graph; examples of such problems include
{\sc{Directed $k$-Path}}, {\sc{Weighted $k$-Path}}, {\sc{Vertex Cover Local Search}}, and {\sc{Subgraph Isomorphism}}, among others.
Up to this point, it was open whether these problems can be solved in subexponential parameterized time on planar graphs, because they are not amenable to the classic technique of bidimensionality.
Furthermore, all our results hold in fact on any class of graphs that exclude a fixed apex graph as a minor, in particular on graphs embeddable in any fixed surface.

\end{abstract}
\end{titlepage}

\newcommand{\pname}{\textsc}
\newcommand{\ProblemFormat}[1]{\pname{#1}}
\newcommand{\ProblemIndex}[1]{\index{problem!\ProblemFormat{#1}}}
\newcommand{\ProblemName}[1]{\ProblemFormat{#1}\ProblemIndex{#1}{}\xspace}

 
\newcommand{\probVC}{\ProblemName{Vertex Cover}}
\newcommand{\probPlVC}{\ProblemName{Planar Vertex Cover}}
\newcommand{\probFVS}{\ProblemName{Feedback Vertex Set}}
\newcommand{\probIFVS}{\ProblemName{Independent Feedback Vertex Set}}
\newcommand{\probCycPacking}{\ProblemName{Cycle Packing}}
\newcommand{\probPlFVS}{\ProblemName{Planar Feedback Vertex Set}}
\newcommand{\probIS}{\ProblemName{Independent Set}}
\newcommand{\probDS}{\ProblemName{Dominating Set}}
\newcommand{\probPlDS}{\ProblemName{Planar Dominating Set}}
\newcommand{\probCDS}{\ProblemName{Connected Dominating Set}}
\newcommand{\probOCT}{\ProblemName{Odd Cycle Transversal}}
\newcommand{\probHam}{\ProblemName{Hamiltonian Cycle}}
\newcommand{\probkCycle}{\ProblemName{$k$-Cycle}}
\newcommand{\probkDWCycle}{\ProblemName{Directed Weighted $k$-Cycle}}
\newcommand{\probPlHam}{\ProblemName{Planar Hamiltonian Cycle}}
\newcommand{\probHamP}{\ProblemName{Hamiltonian Path}}
\newcommand{\probKPath}{\ProblemName{Longest Path}}
\newcommand{\probDKPath}{\ProblemName{Directed Longest Path}}
\newcommand{\probWKPath}{\ProblemName{Weighted Longest Path}}
\newcommand{\probDWKPath}{\ProblemName{Directed Weighted Longest Path}}
\newcommand{\probthreeCol}{\ProblemName{$3$-Coloring}}
\newcommand{\probqCol}{\ProblemName{$q$-Coloring}}
\newcommand{\probSI}{\ProblemName{Subgraph Isomorphism}}
\newcommand{\probEDS}{\ProblemName{Edge Dominating Set}}
\newcommand{\probDirEMWay}{\ProblemName{Directed Edge Multiway Cut}}
\newcommand{\probEMCut}{\ProblemName{Edge Multicut}}
\newcommand{\probMulticut}{\ProblemName{Multicut}}
\newcommand{\probDirEMCut}{\ProblemName{Directed Edge Multicut}}
\newcommand{\probSkewEMCut}{\ProblemName{Skew Edge Multicut}}
\newcommand{\probDFVS}{\ProblemName{Directed Feedback Vertex Set}}
\newcommand{\probDFAS}{\ProblemName{Directed Feedback Arc Set}}
\newcommand{\probTw}{\ProblemName{Treewidth}}
\newcommand{\probSteiner}{\ProblemName{Steiner Tree}}
\newcommand{\probSteinerF}{\ProblemName{Steiner Forest}}
\newcommand{\probCVC}{\ProblemName{Connected Vertex Cover}}
\newcommand{\probCFVS}{\ProblemName{Connected Feedback Vertex Set}}
\newcommand{\probBisec}{\ProblemName{Minimum Bisection}}
\newcommand{\probMaxBisec}{\ProblemName{Maximum Bisection}}
\newcommand{\probMaxLeafSubtree}{\ProblemName{Max Leaf Subtree}}
\newcommand{\probTSP}{\ProblemName{TSP}}
\newcommand{\probSubTSP}{\ProblemName{Subset TSP}}
\newcommand{\probMaxLeafOut}{\ProblemName{Max Leaf Outbranching}}
\newcommand{\probLSVC}{\ProblemName{LS Vertex Cover}}
\newcommand{\probLSDS}{\ProblemName{LS Dominating Set}}

\newcommand{\NP}{$\mathsf{NP}$}

\section{Introduction}\label{sec:intro}
Most of natural \NP-hard problems on graphs remain \NP-hard even when the input graph is restricted to be planar.
However, it was realized already in the dawn of algorithm design that the planarity of the input can be exploited algorithmically.
Using the classic planar separator theorem of Lipton and Tarjan~\cite{LiptonT80}, one can design algorithms working in subexponential time, usually of the form $2^{\Oh(\sqrt{n})}$ or $2^{\Oh(\sqrt{n}\log n)}$,
for a wide variety of problems that behave well with respect to separators; such running time cannot be achieved on general graph unless the Exponential Time Hypothesis (ETH) fails~\cite{ImpagliazzoP01}.
From the modern perspective, the planar separator theorem implies that a planar graph on $n$ vertices has treewidth $\Oh(\sqrt{n})$, and the obtained tree decomposition can be used to
run a Divide\&Conquer algorithm or a dynamic programming subroutine. 
The idea of exploiting short separators plays a crucial role in modern algorithm design on planar graphs and related graph classes, including polynomial-time, approximation, and parameterized algorithmic paradigms.

Let us take a closer look at the area of parameterized complexity. While for most parameterized \NP-hard problems the exponential dependence on the parameter is the best we can hope for, under ETH, 
there are plenty of problems admitting subexponential parameterized algorithms when restricted to planar graphs.
This was first observed in $2000$ by Alber et al.~\cite{AlberBFKN02}, who obtained a subexponential parameterized algorithm  for deciding whether a given $n$-vertex planar 
graph contains a dominating set of size $k$ in time $2^{\Oh(\sqrt{k})} n^{\Oh(1)}$. 
It turned out that the phenomenon is much more general. A robust framework explaining why various problems like 
\probFVS, \probVC,  \probDS or \probKPath admit subexponential parameterized algorithms on planar graphs, 
usually with running times of the form $2^{\Oh(\sqrt{k})}\cdot n^{\Oh(1)}$ or $2^{\Oh(\sqrt{k}\log k)}\cdot n^{\Oh(1)}$,
is provided by the bidimensionality theory of Demaine et al.~\cite{DemaineFHT05}. 
Roughly speaking, the idea is to exploit the relation between the treewidth of a planar graph and the size of a grid minor that can be found in it.
More precisely, the following win/win approach is implemented. Either the treewidth of the graph is $\Oh(\sqrt{k})$ and the problem can be solved using dynamic programming on a tree decomposition,
or a $c\sqrt{k}\times c\sqrt{k}$ grid minor can be found, for some large constant $c$, which immediately implies that we are working with a yes- or with a no-instance of the problem.
Furthermore, it turns out that for a large majority of problems, the running time yielded by bidimensionality are essentially optimal under ETH: no $2^{o(\sqrt{k})}\cdot n^{\Oh(1)}$-time algorithm can be expected.
We refer to the survey \cite{Demaine:2008mi}, as well as the textbook 
\cite[Chapter~7]{cygan2015parameterized} for an overview of bidimensionality and its applications.

While the requirement that the problem can be solved efficiently on bounded treewidth graphs is usually not restrictive, the assumption that uncovering {\em{any}} large grid minor provides
a meaningful insight into the instance considerably limits the applicability of the bidimensionality methodology.
Therefore, while bidimensionality can be extended to more general classes, like excluding some fixed graph as a 
minor \cite{DemaineFHT05,DemaineH08}, map graphs \cite{DemaineFHT05talg}, or unit disk graphs~\cite{FominLS12},
there are many problems that are ``$\varepsilon$-close'' to being bidimensional, and yet their parameterized complexity remains open for years.

One example where such a situation occurs is the \probDKPath problem.
While the existence of a $\sqrt{k}\times \sqrt{k}$ grid minor in an undirected graph immediately implies the existence of an undirected path on $k$ vertices, the same principle
cannot be applied in the directed setting: even if we uncover a large grid minor in the underlying undirected graph, there is no guarantee that a long directed path can be
found, because we do not control the orientation of the arcs. Thus, \probDKPath can be solved in time $2^{\Oh(k)}n^{\Oh(1)}$ on general graphs using color coding~\cite{AlonYZ}, 
but no substantially faster algorithms on planar graphs were known. On the other hand, no $2^{o(\sqrt{k})}\cdot n^{\Oh(1)}$-time algorithm on planar graphs can be expected under ETH, which leaves a large
gap between the known upper and lower bounds. Closing this embarrassing gap was mentioned as an open problem in \cite{borradaile_et_al:DR:2014:4427,Dorn:2013wd,Tazari2010,Tazari12}.
A similar situation happens for \probWKPath, where we are looking for a $k$-path of minimum weight in an edge-weighted planar graph; the question
about the complexity of this problem was raised in~\cite{borradaile_et_al:DR:2014:4427,SauT10}.
Another example is \probkCycle: deciding whether a given planar graph contains a cycle of length exactly $k$. While the property of admitting a cycle of length {\em{at least $k$}}
is bidimensional, and therefore admits a $2^{\Oh(\sqrt{k})}n^{\Oh(1)}$-time algorithm on planar graphs, the technique fails for the variant when we ask for length {\em{exactly $k$}}.
This question was asked in~\cite{borradaile_et_al:DR:2014:4427}. 
We will mention more problems with this behavior later on.

The theme of ``subexponential algorithms beyond bidimensionality'' has recently been intensively investigated, with various success.
For a number of specific problems such algorithms were found; 
these include planar variants of \probSteiner parameterized by the size of the tree~\cite{pilipczuk_et_al:LIPIcs:2013:3947,PilipczukPSL14}, \probSubTSP~\cite{KleinM14}, or \probMaxLeafOut~\cite{Dorn:2013wd}.
In all these cases, the algorithms were technically very involved and depended heavily on the combinatorics of the problem at hand.
A more systematic approach is offered by the work of Dorn et al.~\cite{Dorn:2013wd}  and  Tazari~\cite{Tazari2010,Tazari12}, who
obtained ``almost" subexponential algorithm for   \probDKPath on planar, and more generally, apex-minor-free graphs. More precisely, they proved that 
for any $\varepsilon>0$, there is $\delta$ such that the  \probDKPath  problem 
 is solvable in time $\Oh((1+\varepsilon)^k \cdot n^\delta)$ on planar directed graphs, and more generally, on directed graphs whose underlying undirected graph excludes a fixed apex graph as a minor.
This technique can be extended to other problems that can be characterized as searching for a connected pattern in a large host graph, 
which suggests that some more robust methodology is hiding just beyond the frontier of our understanding.

\myparagraph{Main result.} In this paper, we introduce a versatile technique for solving such problems in subexponential parameterized time, by proving the following theorem.

\begin{theorem}\label{thm:maintheorem}
Let $\Cc$ be a class of graphs that exclude a fixed apex graph as a minor.
Then there exists a randomized polynomial-time algorithm that, given an $n$-vertex graph $G$ from $\Cc$ and an integer $k$,
samples a vertex subset $A\subseteq V(G)$ with the following properties:
\begin{enumerate}[(P1)]
\item The induced subgraph $G[A]$ has treewidth $\Oh(\sqrt{k}\log k)$.
\item For every vertex subset $X\subseteq V(G)$ with $|X|\leq k$ that induces a connected subgraph of $G$, 
the probability that $X$ is covered by $A$, that is $X\subseteq A$, is at least $(2^{\Oh(\sqrt{k}\log^2 k)}\cdot n^{\Oh(1)})^{-1}$.
\end{enumerate}
\end{theorem}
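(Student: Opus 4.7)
The plan is to prove Theorem~\ref{thm:maintheorem} via a recursive randomized decomposition of $G$, exploiting the classical fact that in an apex-minor-free graph, the union of any $q$ consecutive BFS layers induces a subgraph of treewidth $\Oh(q)$ (this fails in general $H$-minor-free classes, which explains the apex hypothesis). A connected $X \subseteq V(G)$ with $|X| \le k$ intersects at most $k$ consecutive BFS layers from any root, so one could trivially enclose $X$ in a slab of $k$ layers and obtain treewidth $\Oh(k)$; the real task is to shave this down to $\Oh(\sqrt{k}\log k)$.

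The core building block would be a random-shift layering: pick a BFS root $r$, a period $q = \Theta(\sqrt{k})$, and a uniform shift $s \in \{0, \ldots, q-1\}$, then partition the BFS layers into slabs of $q$ consecutive layers each. Each slab induces a subgraph of treewidth $\Oh(\sqrt{k})$. If $X$ fits inside a single slab, we take that slab to be part of $A$ and are done. Whether this happens depends on the BFS-extent of $X$ from $r$; applying the planar/apex-minor-free separator theorem to $X$ itself yields a balanced separator of size $\Oh(\sqrt{k})$ inside $X$, which heuristically suggests that $X$ can be covered by $\Oh(\log k)$ BFS-balls of radius $\Oh(\sqrt{k})$, each confined to $\Oh(\sqrt{k})$ consecutive BFS layers of $G$.

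Turning this intuition into an algorithm, I would recurse: at each level we either trap the current sub-pattern within a single slab, or split it via a small separator (found by exploiting the random root) and recurse into the pieces. The recursion has depth $\Oh(\log k)$ because the unresolved portion of $X$ shrinks geometrically; across all levels the per-slab treewidth contributions sum to $\Oh(\sqrt{k}\log k)$, the random binary decisions (shifts plus separator guessing encoded via the random root choice at each level) sum to $\Oh(\sqrt{k}\log^2 k)$, and the random root choices contribute the polynomial factor $n^{-\Oh(1)}$.

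The hard part will be controlling the interaction between recursion levels: $A$ ends up being a union of slabs chosen at different levels and from different BFS trees, and it is not automatic that $G[A]$ has treewidth $\Oh(\sqrt{k}\log k)$, since an additive treewidth bound for unions of subgraphs does not hold in general. I expect the proof to rely on a torso/gluing argument that treats each level's slab system as a tree-decomposition skeleton refined by the next level, together with a uniform probability analysis showing that the ``bad events'' at each level occur with bounded probability irrespective of the shape of $X$, so that the success probabilities multiply into the promised $2^{-\Oh(\sqrt{k}\log^2 k)} \cdot n^{-\Oh(1)}$.
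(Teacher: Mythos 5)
The central claim on which your plan rests is false, and this is the crux of the whole difficulty. You write that the separator theorem applied to $X$ ``heuristically suggests that $X$ can be covered by $\Oh(\log k)$ BFS-balls of radius $\Oh(\sqrt{k})$, each confined to $\Oh(\sqrt{k})$ consecutive BFS layers of $G$.'' Take $G$ planar and $X$ an induced path on $k$ vertices with no shortcuts in $G$, and root the BFS at one end of the path: then the BFS-extent of $X$ is $\Theta(k)$, so any partition of the layers into slabs of width $\Theta(\sqrt{k})$ intersects $X$ in $\Theta(\sqrt{k})$ slabs, not $\Oh(\log k)$. The internal separator structure of $X$ (balanced separator of size $\Oh(\sqrt{k})$ inside $X$) says nothing about how $X$ sits relative to BFS layers of $G$, which is what a random-shift layering is sensitive to. So ``trap $X$ in a single slab'' fails generically, and your claimed $\Oh(\log k)$ recursion depth is unsupported: you would need to split $X$ roughly $\sqrt{k}$ times before the pieces fit, and the algorithm must do this without seeing $X$, which is exactly the difficulty a single random-shifted slab system does not resolve. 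The second gap you flag yourself---that the union of slabs from different levels and different BFS trees need not have small treewidth---is also real and not a technicality; a union of $\Theta(\sqrt{k})$ subgraphs each of treewidth $\Oh(\sqrt{k})$ can have treewidth $\Theta(k)$, and nothing in the sketch prevents this.

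The paper's actual proof is structurally quite different and designed precisely to avoid both obstacles. It runs a recursive decomposition on subinstances carrying light/heavy terminals and ``ghost'' vertices, with invariants bounding the terminal count by $\Ohtilde(\sqrt{k})$. In each step it takes the \emph{margin} $M$ (vertices within distance $\Ohtilde(\sqrt{k})$ of the light terminals), applies a randomized clustering (Theorem~\ref{thm:clust}) to the ``islands'' $G-M$ so that each island has small radius, contracts islands to get a graph of diameter $\Ohtilde(\sqrt{k})$ and hence treewidth $\Ohtilde(\sqrt{k})$, and extracts a balanced separator $W$ from that. With probability $1-1/k$ it guesses $W$ avoids $X$ outside $M$ and recurses with halved graph potential; otherwise it uses the min-cost-flow duality of Theorem~\ref{thm:duality} to obtain either a chain of $\Ohtilde(\sqrt{k})$ disjoint separators (among which a ``sparse'' one must cut $X$ in a balanced way, by a $(1+1/\sqrt{k})^i$ growth argument, Claim~\ref{cl:balanced-index}) or a family of nearly-disjoint radial paths (one of which is private-vertex-disjoint from $X$ and can be contracted, dropping $\Ohtilde(\sqrt{k})$ far vertices, Claim~\ref{cl:less-far}). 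Treewidth control of the output comes not from a union bound over slabs but from explicitly gluing the recursively produced decompositions along terminal sets that stay in the root bag. The amortization is done via three potentials ($\patsiz$, $\grasiz$, $\dstpot$), not a geometrically shrinking pattern. None of this machinery---margin, clustering, islands, duality, separator chains, potential accounting---appears in your sketch, and the one mechanism you propose (random shifts with $\Oh(\log k)$ recursion) does not supply the balanced progress needed without it.
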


Here, by an apex graph we mean a graph that can be made planar by removing one vertex. Note that Theorem~\ref{thm:maintheorem} in particular applies to planar graphs, and to graphs
embeddable in a fixed surface. 

%
%
%
 
\myparagraph{Applications.} 
Similarly as in the case of bidimensionality, Theorem~\ref{thm:maintheorem} provides a simple recipe for obtaining subexponential parameterized algorithms: 
check how fast the considered problem can be solved on graphs of bounded treewidth, and then combine the treewidth-based algorithm with Theorem~\ref{thm:maintheorem}. 
We now show how Theorem~\ref{thm:maintheorem} can be used to obtain randomized subexponential parameterized algorithms for a variety of problems on apex-minor-free classes;
for these problems, the existence of such algorithms so far was open even for planar graphs. We only list the most interesting examples to showcase possible applications.

\medskip
\noindent\emph{Directed and weighted paths and cycles.}  
As mentioned earlier, the question about the existence of subexponential parameterized algorithms for \probDKPath and \probWKPath on planar graphs 
was asked in~\cite{borradaile_et_al:DR:2014:4427,Dorn:2013wd,Tazari2010,Tazari12}. 
Let us observe that on a graph of treewidth\footnote{For \probDKPath we speak about the treewidth of the underlying undirected graph.} $t$, both 
\probDKPath and \probWKPath, as well as their different combinations, like finding a maximum or minimum weight directed path or cycle on $k$ vertices, 
are solvable in time $2^{\Oh(t\log{t})}n^{\Oh(1)}$ by the standard dynamic programming; see e.g.~\cite[Chapter~7]{cygan2015parameterized}. 
This running time can be improved to single-exponential time $2^{\Oh(t)}n^{\Oh(1)}$~\cite{BodlaenderCKN15,DornFT12,FominLS14}.

In order to obtain a subexponential parameterized algorithm for, say, \probDKPath on planar directed graphs, we do the following.
Let $G$ be the given planar directed graph, and let $U(G)$ be its underlying undirected graph. 
Apply the algorithm of Theorem~\ref{thm:maintheorem} to $U(G)$, which in polynomial time samples a subset $A\subseteq V(U(G))$ such that $G[A]$ has treewidth at most $\Oh(\sqrt{k}\log k)$,
and the probability that $A$ covers some directed $k$-path in $G$, provided it exists, is at least $(2^{\Oh(\sqrt{k}\log^2 k)}\cdot n^{\Oh(1)})^{-1}$.
Then, we verify whether $G[A]$ admits a directed $k$-path using standard dynamic programming in time $2^{\Oh(\sqrt{k}\log k)}\cdot n^{\Oh(1)}$.
Provided some directed $k$-path exists in the graph, this algorithm will find one with probability at least $(2^{\Oh(\sqrt{k}\log^2 k)}\cdot n^{\Oh(1)})^{-1}$.
Thus, by making $2^{\Oh(\sqrt{k}\log^2 k)}\cdot n^{\Oh(1)}$ independent runs of the algorithm, we can reduce the error probability to at most $1/2$.
All in all, the obtained algorithm runs in time $2^{\Oh(\sqrt{k}\log^2 k)}\cdot n^{\Oh(1)}$ and can only report false negatives with probability at most $1/2$

Note that in order to apply the dynamic programming algorithm, 
we need to construct a suitable tree decomposition of $G[A]$. However, a variety of standard algorithms, e.g., the classic $4$-approximation of Robertson and Seymour~\cite{RobertsonS-GMXIII},
can be used to construct such an approximate tree decomposition within the same asymptotic running time.
Actually, a closer look into the proof of Theorem~\ref{thm:maintheorem} reveals that the algorithm can construct, within the same running time, 
a tree decomposition of $G[A]$ certifying the upper bound on its treewidth.

Observe that the same approach works also for any apex-minor-free class $\Cc$, and can be applied also to \probWKPath and  \probkCycle. 
We obtain the following corollary. 
\begin{corollary}\label{thm:subexpdirectedpath_apps1}
Let $\Cc$ be a class of graphs that exclude some fixed apex graph as a minor.  
Then problems: \probWKPath, \probkCycle, and \probDKPath, are all solvable in randomized time $2^{\Oh(\sqrt{k} \log^2 k)}\cdot n^{\Oh(1)}$ on graphs from $\Cc$.
In case of \probDKPath, we mean that the underlying undirected graph of the input graph belongs to $\Cc$.
\end{corollary}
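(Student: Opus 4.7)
The plan is to apply Theorem~\ref{thm:maintheorem} as a black box and compose it with a standard treewidth-based dynamic program for each of the three problems, exactly as sketched for \probDKPath in the paragraph preceding the statement. Given input $(G,k)$ with $G$ --- or, in the case of \probDKPath, its underlying undirected graph $U(G)$ --- lying in $\Cc$, the first step is to invoke Theorem~\ref{thm:maintheorem} on that undirected graph; this samples in polynomial time a set $A\subseteq V(G)$ with $\tw(G[A])=\Oh(\sqrt{k}\log k)$, such that every connected $X\subseteq V(G)$ with $|X|\le k$ is contained in $A$ with probability at least $p$, where $p^{-1}=2^{\Oh(\sqrt{k}\log^2 k)}\cdot n^{\Oh(1)}$.

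The second step is to run a standard bounded-treewidth dynamic program on the induced subgraph $G[A]$, retaining the original orientations and weights, to solve the respective problem restricted to $G[A]$: a directed $k$-vertex path for \probDKPath, a minimum/maximum weight $k$-vertex path for \probWKPath, and a cycle of length exactly $k$ for \probkCycle. Each of these admits a $2^{\Oh(t\log t)}n^{\Oh(1)}$-time algorithm on a graph given with a width-$t$ tree decomposition (in fact $2^{\Oh(t)}n^{\Oh(1)}$ via the single-exponential machinery of~\cite{BodlaenderCKN15,DornFT12,FominLS14}), so plugging in $t=\Oh(\sqrt{k}\log k)$ yields running time $2^{\Oh(\sqrt{k}\log^2 k)}n^{\Oh(1)}$ for the DP phase. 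An adequate tree decomposition of $G[A]$ is computed either by the constant-factor approximation of Robertson and Seymour~\cite{RobertsonS-GMXIII} or directly by the algorithm underlying Theorem~\ref{thm:maintheorem}.

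Correctness hinges on the observation that in all three problems the vertex set of any solution induces a connected subgraph on at most $k$ vertices of the underlying undirected graph --- a $k$-path and a $k$-cycle are connected by definition. Hence, whenever a solution exists, with probability at least $p$ the sampled $A$ covers the vertex set of some solution, in which case the DP run on $G[A]$ finds it (or an equally good one). Executing $\Theta(1/p)=2^{\Oh(\sqrt{k}\log^2 k)}n^{\Oh(1)}$ independent trials and returning the best outcome drives the probability of missing a yes-instance below $1/2$ while preserving the asymptotic running time. The main conceptual point to verify --- and the only mild obstacle --- is that each problem truly fits the \emph{small connected pattern} template of Theorem~\ref{thm:maintheorem} and that restricting the DP to $G[A]$ with its inherited orientations and weights does expose every covered solution; once this is checked, the corollary follows by a direct plug-and-play.
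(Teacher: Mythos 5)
Your proof is correct and takes essentially the same approach as the paper: invoke Theorem~\ref{thm:maintheorem} on the underlying undirected graph, run a standard bounded-treewidth dynamic program on $G[A]$ (using, e.g., the Robertson--Seymour approximation to obtain the decomposition), and amplify the success probability by $\Theta(1/p)$ independent repetitions. The key observation that a $k$-path or $k$-cycle is a connected pattern on at most $k$ vertices is exactly what the paper relies on.
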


Note here that the approach presented above works in the same way for various combinations and extensions of problems in Corollary~\ref{thm:subexpdirectedpath_apps1}, 
like weighted, colored, or directed variants, possibly with some constraints on in- and out-degrees, etc.
In essence, the only properties that we need is that the sought pattern persists in the subgraph induced by the covering set $A$, 
and that it can be efficiently found using dynamic programming on a tree decomposition.
To give one more concrete example, Sau and Thilikos in~\cite{SauT10} studied the problem of finding a connected $k$-edge subgraph with all vertices of degree at most some integer $\Delta$; 
for $\Delta=2$ this corresponds to finding a $k$-path or a $k$-cycle. 
For fixed $\Delta$ they gave a subexponential algorithm on (unweighted) graphs excluding some fixed graph as a minor, and asked if the weighted version of this problem can be solved in subexponential parameterized time. 
Theorem~\ref{thm:maintheorem} immediately implies that for fixed $\Delta$, the weighted variant of the problem is solvable in randomized time $2^{\Oh(\sqrt{k} \log^2 k)}\cdot n^{\Oh(1)}$ on apex-minor-free graphs. 

 \medskip
\noindent\emph{Subgraph Isomorphism.} \probSI  is a fundamental problem, where we are given two graphs: 
an $n$-vertex \emph{host} graph $G$ and a $k$-vertex \emph{pattern} graph $P$. 
The task is to decide whether $P$ is isomorphic to a subgraph of $G$.
Eppstein~\cite{Eppstein99} gave an algorithm solving \probSI on planar graphs in time $k^{\Oh(k)} n$, which was subsequently improved by
Dorn~\cite{dorn:LIPIcs:2010:2460} to $2^{\Oh(k)} n$. 
The first implication of our main result for \probSI concerns the case when the maximum degree of $P$ is bounded by a constant.
 Matou\v{s}ek and Thomas~\cite{MatousekT92} proved that if a tree decomposition of the host graph $G$ of width $t$ is given,
and the pattern graph $P$ is connected and of maximum degree at most some constant $\Delta$,
then deciding whether $P$ is isomorphic to a subgraph of $G$ can be done in time $\Oh(k^{t+1} n)$. 
By combining this with Theorem~\ref{thm:maintheorem} as before, we obtain the following.

\begin{corollary}\label{thm:subgraphisodegree_apps1}
Let $\Cc$ be a class of graphs that exclude some fixed apex graph as a minor, and let $\Delta$ be a fixed constant.
Then, given a connected graph $P$ with at most $k$ vertices and maximum degree not exceeding $\Delta$, and a graph $G\in \Cc$ on $n$ vertices,
it is possible to decide whether $P$ is isomorphic to a subgraph of $G$ in randomized time $2^{\Oh(\sqrt{k} \log^2 k)}\cdot n^{\Oh(1)}$.
\end{corollary}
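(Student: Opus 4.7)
The plan is to apply the same recipe demonstrated for \probDKPath and \probWKPath in the discussion preceding Corollary~\ref{thm:subexpdirectedpath_apps1}, combining Theorem~\ref{thm:maintheorem} with a treewidth-parametrized algorithm for subgraph isomorphism---here, the Matou\v{s}ek--Thomas algorithm. First, I would invoke Theorem~\ref{thm:maintheorem} on the host graph $G\in\Cc$ with parameter $k$ to randomly sample a subset $A\subseteq V(G)$ such that $G[A]$ has treewidth $t = \Oh(\sqrt{k}\log k)$, and along with it a tree decomposition of $G[A]$ of this width (either read off from the construction behind Theorem~\ref{thm:maintheorem}, or produced by a constant-factor treewidth approximation such as the algorithm of Robertson and Seymour, within the same asymptotic running time).

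The correctness argument is then immediate from property (P2): if $P$ is isomorphic to a subgraph $H$ of $G$, then $H$ is connected and $|V(H)|\leq k$, so $V(H)\subseteq A$ with probability at least $p:=(2^{\Oh(\sqrt{k}\log^2 k)}\cdot n^{\Oh(1)})^{-1}$, and whenever this happens $P$ is isomorphic to a subgraph of $G[A]$. I would then run the Matou\v{s}ek--Thomas algorithm on $G[A]$ together with its tree decomposition; its running time is $\Oh(k^{t+1}\cdot n)$, and since $k^{t+1}=2^{(t+1)\log k}=2^{\Oh(\sqrt{k}\log^2 k)}$, the whole subroutine runs in $2^{\Oh(\sqrt{k}\log^2 k)}\cdot n^{\Oh(1)}$ time. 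Repeating the entire procedure $\Oh(1/p) = 2^{\Oh(\sqrt{k}\log^2 k)}\cdot n^{\Oh(1)}$ times independently and answering \emph{yes} if any run finds a copy of $P$ brings the false-negative probability down to at most $1/2$, while keeping the total running time within the claimed bound.

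There is essentially no combinatorial obstacle, since Theorem~\ref{thm:maintheorem} does all the heavy lifting; the only point that needs verification is the arithmetic $(t+1)\log k = \Oh(\sqrt{k}\log^2 k)$ ensuring that the exponent blow-up from the Matou\v{s}ek--Thomas dependence $k^{t+1}$ is absorbed into the target bound. The hypothesis that $\Delta$ is a fixed constant enters only through the Matou\v{s}ek--Thomas algorithm, which requires $\Delta=\Oh(1)$ to get $k^{\Oh(t)}$ rather than a worse dependence; connectivity of $P$ is needed both by Matou\v{s}ek--Thomas and by the covering guarantee in property (P2).
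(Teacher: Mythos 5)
Your proposal is correct and is essentially identical to the paper's intended argument: the paper explicitly states that the corollary follows by combining the Matou\v{s}ek--Thomas $\Oh(k^{t+1}n)$ subgraph-isomorphism algorithm on bounded-treewidth graphs with Theorem~\ref{thm:maintheorem} ``as before,'' i.e.\ via the same sample-then-DP-then-repeat template spelled out for \probDKPath. Your verification of the arithmetic $(t+1)\log k=\Oh(\sqrt{k}\log^2 k)$ and your identification of exactly where the bounded-degree and connectivity hypotheses are consumed are both accurate and complete.
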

 
In a very recent, unpublished work, Bodlaender et al.~\cite{Bod-private} proved  that \probSI on planar graphs cannot be solved in time $2^{o(n/\log{n})}$ unless ETH fails.
The lower bound  of Bodlaender et al. holds for two very special cases. 
The first case is when the pattern graph $P$ is a tree and has only one vertex of super-constant degree. 
Then second case is  when $P$ is not connected, but its maximum degree is a constant. 
Thus, the results of Bodlaender et al. show that both the connectivity and the bounded degree constraints on pattern $P$ in Corollary~\ref{thm:subgraphisodegree_apps1} 
are necessary to keep the  square root dependence on  $k$ in the exponent. 
However, a possibility of solving \probSI in time $2^{\Oh(k/\log{k})}\cdot n^{\Oh(1)}$, which is still parameterized subexponential, is not ruled out by the work of Bodlaender et al. 
Interestingly enough, Bodlaender et al.~\cite{Bod-private}, also give a matching dynamic programming algorithm that can be combined with our theorem.

\begin{theorem}[\cite{Bod-private}]\label{thm:bod-private}
Let $H$ be a fixed graph, and let us fix any $\varepsilon>0$.
Given a pattern graph $P$ on at most $k$ vertices, and an $H$-minor-free host graph $G$ of treewidth at most $\Oh(k^{1-\varepsilon})$,
it is possible to decide whether $P$ is isomorphic to a subgraph of $G$ in time  $2^{\Oh(k/\log{k})}\cdot n^{\Oh(1)}$.
\end{theorem}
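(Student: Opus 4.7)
The plan is to run a dynamic programming algorithm over a tree decomposition of the host graph $G$, with a carefully designed state space that exploits both the low treewidth bound and the $H$-minor-free structure.

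First, I would compute, in polynomial time, a tree decomposition of $G$ of width $t = \Oh(k^{1-\varepsilon})$; such a decomposition can be obtained by a constant-factor approximation for $H$-minor-free graphs (e.g., via the Robertson--Seymour algorithm, or the faster approximations in the bidimensionality literature). Observe that since we only need to decide whether $P$ embeds as a subgraph, we may assume $P$ is itself $H$-minor-free (otherwise we immediately reject), and therefore by the Alon--Seymour--Thomas separator theorem $P$ has balanced separators of size $\Oh(\sqrt{k})$ and treewidth $\Oh(\sqrt{k})$.

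The natural DP state at each bag $B$ of the host's decomposition is an injective partial map $\phi\colon B \to V(P)\cup\{\star\}$, recording which pattern vertex (if any) each bag vertex plays the role of. The number of such maps is at most $(k+1)^{|B|} = 2^{\Oh(t\log k)}$. The crucial calculation is that, for $k$ large enough so that $\log^2 k \le k^\varepsilon$, we have $t\log k = \Oh(k^{1-\varepsilon}\log k) = \Oh(k/\log k)$, so this already matches the target running time, provided the remaining bookkeeping costs only an $n^{\Oh(1)}$ overhead. The standard transitions at introduce/forget/join nodes can be performed in polynomial time per pair of neighboring states using the fact that $H$-minor-freeness implies $G$ is $K_r$-free for $r=|V(H)|$, which bounds the number of compatible pattern neighborhoods that have to be inspected inside a bag.

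The main obstacle, and the part where the $H$-minor-free hypothesis is genuinely needed, is the global injectivity of $\phi$: naively, one has to remember which pattern vertices have already been committed below the current bag, which introduces a $2^k$ factor. The plan is to eliminate this factor by a color-coding / rank-based argument tailored to subgraph isomorphism: randomly assign each vertex of $V(G)$ an identifier from a universe of size $\Oh(k)$, enforce color-compatibility along the embedding, and use the fact that an $H$-minor-free graph has $\Oh(k)$ edges on $k$ vertices so the number of ``compatible'' partial matchings of edges across a bag is polynomially bounded; a Bodlaender--Cygan--Kratsch--Nederlof-style representative-sets pruning then keeps the table size at $2^{\Oh(t\log k)}$ without needing to remember which pattern vertices have already been consumed. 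Once this reduction is in place, summing over all bags gives the claimed bound $2^{\Oh(k/\log k)}\cdot n^{\Oh(1)}$. The remaining delicate point is handling join nodes: the two partial maps being joined must be combined without double-counting any pattern vertex, and this is precisely where the rank-based representation, rather than a naive product of states, is required.
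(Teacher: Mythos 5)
The paper itself offers no proof of this statement: Theorem~\ref{thm:bod-private} is quoted from Bodlaender et al.~\cite{Bod-private}, and the surrounding text merely remarks that the parameterized running time bound follows by inspecting the dynamic-programming subroutine of that (unpublished) work. So there is no ``paper's own proof'' against which to compare; I will instead comment on the internal correctness of your attempt.

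Your opening steps are fine: subgraphs of $H$-minor-free graphs are $H$-minor-free, so one may assume $P$ is $H$-minor-free with treewidth $\Oh(\sqrt{k})$; and the arithmetic $t\lg k = \Oh(k^{1-\varepsilon}\lg k) = \Oh(k/\lg k)$ is correct. You also correctly isolate the genuine difficulty: a partial map on a bag of size $t$ gives $2^{\Oh(t\lg k)}$ states, but maintaining global injectivity naively requires remembering the set of pattern vertices already consumed, which contributes a $2^{k}$ factor.

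The gap is in the step you propose to eliminate that $2^k$ factor. Color-coding with a universe of size $\Oh(k)$ makes a fixed embedding of all $k$ pattern vertices colorful with probability only $2^{-\Theta(k)}$, so one would need $2^{\Theta(k)}$ independent trials --- this is single-exponential, not $2^{\Oh(k/\lg k)}$. Shrinking the universe below $k$ colors makes a colorful embedding impossible. And the Bodlaender--Cygan--Kratsch--Nederlof rank-based pruning does not apply here either: that technique succeeds for connectivity-type constraints precisely because the associated compatibility matrix has subexponential rank, whereas for the constraint relevant to subgraph isomorphism (``the two sides consume disjoint subsets of $V(P)$'') the compatibility matrix is the set-disjointness matrix on $2^{[k]}$, which has full rank $2^k$ over every field. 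So representative-set pruning cannot shrink the table to $2^{\Oh(t\lg k)}$. Consequently, the argument does not bound the table size and the claimed running time is not established. What is actually exploited in the Bodlaender et al.~dynamic program is the structure of the $H$-minor-free pattern on the pattern side, not the host side: $P$ admits a recursively balanced, logarithmic-depth separator decomposition, and one tracks a compressed description of the ``used'' subset of $V(P)$ as a union of pieces delimited by $\Oh(\lg k)$ nested separators, yielding $2^{\Oh(t\lg k)}$ states without ever storing an arbitrary subset of $V(P)$. Your proof sketch neither invokes this structure nor supplies a working substitute for it.
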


We remark that Bodlaender et al.~\cite{Bod-private} claim only a subexponential exact algorithm with running time $2^{\Oh(n/\log{n})}$, but the dynamic programming subroutine underlying this result actually
gives the stronger algorithmic result as stated above.

By combining Theorem~\ref{thm:bod-private} with Theorem~\ref{thm:maintheorem} in the same way as before, we obtain the following.  
\begin{corollary}\label{thm:SItwo}
Let $\Cc$ be a class of graphs that exclude some fixed apex graph as a minor.
Then, given a connected graph $P$ with at most $k$ vertices, and a graph $G\in \Cc$ on $n$ vertices,
it is possible to decide whether $P$ is isomorphic to a subgraph of $G$ in randomized time $2^{\Oh(k/\log{k})}\cdot n^{\Oh(1)}$.
\end{corollary}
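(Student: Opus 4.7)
The plan is to combine Theorem~\ref{thm:maintheorem} with the dynamic-programming subroutine of Theorem~\ref{thm:bod-private}, following the same template that yielded Corollaries~\ref{thm:subexpdirectedpath_apps1} and~\ref{thm:subgraphisodegree_apps1}, and then boost the success probability by independent repetitions. The key observation is that the treewidth bound $\Oh(\sqrt{k}\log k)$ guaranteed by Theorem~\ref{thm:maintheorem} is asymptotically much smaller than any $\Oh(k^{1-\varepsilon})$ for, say, $\varepsilon = 1/3$, so the hypothesis of Theorem~\ref{thm:bod-private} is comfortably satisfied on $G[A]$.

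Concretely, I would first invoke Theorem~\ref{thm:maintheorem} on the host graph $G \in \Cc$ with parameter $k$ to sample in polynomial time a set $A \subseteq V(G)$ such that $G[A]$ has treewidth $\Oh(\sqrt{k}\log k)$ and, if $P$ is isomorphic to a subgraph of $G$, the image $X \subseteq V(G)$ of the embedding (which has at most $k$ vertices and induces a connected subgraph of $G$ since $P$ is connected) is contained in $A$ with probability at least $p := (2^{\Oh(\sqrt{k}\log^2 k)} n^{\Oh(1)})^{-1}$. Next, since $\Cc$ is apex-minor-free and hence $H$-minor-free for the fixed apex graph $H$, the induced subgraph $G[A]$ is also $H$-minor-free, so Theorem~\ref{thm:bod-private} applies to $G[A]$ with $\varepsilon = 1/3$; this yields, in time $2^{\Oh(k/\log k)} \cdot n^{\Oh(1)}$, a decision whether $P$ is isomorphic to a subgraph of $G[A]$. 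If the event ``$X \subseteq A$'' occurs, then such a subgraph exists in $G[A]$ and will be found; a positive answer is always correct, since any subgraph of $G[A]$ isomorphic to $P$ is also a subgraph of $G$.

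To get constant error probability, I would repeat the whole procedure $\Oh(1/p) = 2^{\Oh(\sqrt{k}\log^2 k)} \cdot n^{\Oh(1)}$ times independently and answer yes iff some run answers yes. Because $\sqrt{k}\log^2 k = o(k/\log k)$, the repetition factor is absorbed by the dynamic-programming cost, so the total running time remains $2^{\Oh(k/\log k)} \cdot n^{\Oh(1)}$. One small technicality, identical to the one mentioned after Corollary~\ref{thm:subexpdirectedpath_apps1}, is that Theorem~\ref{thm:bod-private} is stated assuming a bound on the treewidth of the input, so we should either feed it a tree decomposition obtained from the construction underlying Theorem~\ref{thm:maintheorem}, or compute one on $G[A]$ using a standard constant-factor approximation within the same asymptotic budget.

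The main obstacle is essentially bookkeeping: checking that the treewidth produced by Theorem~\ref{thm:maintheorem} fits into the regime $\Oh(k^{1-\varepsilon})$ required by Theorem~\ref{thm:bod-private}, and verifying that the sub-polynomial term $\sqrt{k}\log^2 k$ from the covering probability does not dominate $k/\log k$; both are routine asymptotic comparisons. There is no genuine difficulty beyond plugging one black-box into the other.
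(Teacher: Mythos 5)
Your proposal is correct and is essentially identical to the paper's intended argument: the paper explicitly states that the corollary follows ``by combining Theorem~\ref{thm:bod-private} with Theorem~\ref{thm:maintheorem} in the same way as before,'' which is precisely the sample-then-DP-then-repeat template you carry out. Your bookkeeping is also right: $\sqrt{k}\log k = \Oh(k^{1-\varepsilon})$ so Theorem~\ref{thm:bod-private} applies, $G[A]$ inherits $H$-minor-freeness as an induced subgraph, and the repetition factor $2^{\Oh(\sqrt{k}\log^2 k)} n^{\Oh(1)}$ is dominated by $2^{\Oh(k/\log k)} n^{\Oh(1)}$ since $\sqrt{k}\log^2 k = o(k/\log k)$.
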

Let us stress here that the lower bounds of Bodlaender et al.~\cite{Bod-private} show that the running time given by Corollary~\ref{thm:SItwo} is tight: 
no $2^{o(k/\log{k})}\cdot n^{\Oh(1)}$-time algorithm can be expected under ETH.

  \medskip
\noindent\emph{Local search.} Fellows et al.~\cite{FellowsFLRSV12} studied the following parameterized local search problem on apex-minor-free graphs. 
In the \probLSVC problem we are given an $n$-vertex graph $G$, a vertex cover $S$ in $G$, and an integer $k$. 
The task is to decide whether $G$ contains a vertex cover $S'$, such that $|S'|< |S|$ and the Hamming distance $|S\triangle S'|$ between sets $S$ and $S'$ is at most $k$. 
In other words, for a given vertex cover, we ask if there is a smaller vertex cover which is $k$-close to the given one, in terms of Hamming (edit) distance. 
Fellows et al.~\cite{FellowsFLRSV12} gave an algorithm solving \probLSVC in time $2^{\Oh(k)}\cdot n^{\Oh(1)}$ on planar graphs.
The question whether this can be improved to subexponential parameterized time was raised in~\cite{demaine_et_al:DR:2013:4013,FellowsFLRSV12}.

The crux of the approach of  Fellows et al.~\cite{FellowsFLRSV12} is the following observation. 
If there is solution to  \probLSVC, then there is a solution $S'$, such that  $S\triangle S'$ induces a connected subgraph in $G$.
Since $S\triangle S'$ contains at most $k$ vertices and is connected, 
our Theorem~\ref{thm:maintheorem} can be used to sample a vertex subset $A$ that induces a subgraph of treewidth $\Oh(\sqrt{k}\log k)$ and covers $S\triangle S'$ with high probability.
Thus, by applying the same principle of independent repetition of the algorithm, we basically have search for suitable sets $S\setminus S'$ and $S'\setminus S$ in the subgraph of $G$ induced by $A$. 
We should be, however, careful here: there can be edges between $A$ and its complement, and these edges also need to be covered by $S'$, so we cannot just restrict our attention to $G[A]$.
To handle this, we apply the following preprocessing. 
For every vertex $v\in A$, if $v$ is adjacent to some vertex outside of $A$ that is not included $S$, then $v$ must be in $S$ and needs also to remain in $S'$. 
Hence, we delete all such vertices from $G[A]$, and it is easy to see that now the problem boils down to looking for feasible $S\setminus S'$ and $S'\setminus S$ within the obtained induced subgraph.
This can be easily done in time $2^{\Oh(t)}\cdot n^{\Oh(1)}$, where $t\leq \Oh(\sqrt{k}\log k)$ is the treewidth of this subgraph; hence we obtain the following:
\begin{corollary}\label{thm:ls}
Let $\Cc$ be a class of graphs that exclude some fixed apex graph as a minor.
Then \probLSVC on graphs from $\Cc$ can be solved in randomized time $2^{\Oh(\sqrt{k} \log^2 k)}\cdot n^{\Oh(1)}$.
\end{corollary}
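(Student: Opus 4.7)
The plan is to follow the ``sample-plus-DP'' recipe used earlier in this section for problems like \probDKPath, combined with the structural observation of Fellows et al.~\cite{FellowsFLRSV12}. First, recall that for any yes-instance one may restrict attention to solutions $S'$ such that $G[S\triangle S']$ is connected; this observation is what turns the sought modification $X:=S\triangle S'$ into a connected vertex set of size at most $k$, exactly the type of object handled by Theorem~\ref{thm:maintheorem}. A single invocation of that theorem yields, in polynomial time, a set $A\subseteq V(G)$ with $\tw(G[A])=\Oh(\sqrt{k}\log k)$ such that, with probability at least $(2^{\Oh(\sqrt{k}\log^2 k)}\cdot n^{\Oh(1)})^{-1}$, we have $X\subseteq A$.

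The main obstacle is that even conditional on $X\subseteq A$, one cannot blindly solve the problem on $G[A]$: every edge $uv$ with $v\in A$ and $u\notin A$ still has to be covered by $S'$, and this interacts with the status of $v$. The key simplifying observation is that if $u\notin A$, then $u\notin X$, and hence $u\in S \iff u\in S'$. Therefore, any $v\in A$ with a neighbor $u\notin A$ that is not in $S$ must itself belong to $S'$, and by the same token must already belong to $S$. I would mark all such vertices as ``frozen'' in the cover; if one of them happens not to lie in $S$, the current sample cannot witness the desired solution and we reject it. Once the frozen vertices are removed, together with their incident edges (which they cover on both sides), every remaining cross-edge between $A$ and $V(G)\setminus A$ has its outside endpoint in $S$, and is therefore automatically covered by $S'$ independently of our choices inside $A$.

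Thus the problem reduces to a constrained local search on the induced subgraph $G'$ of $G[A]$ obtained by deleting the frozen vertices: find a vertex cover of $G'$ that differs from $S\cap V(G')$ by at most $k$ vertices and yields, together with the unchanged part of $S$ outside $A$ and the frozen set, a cover of $G$ strictly smaller than $S$. Since $G'\subseteq G[A]$, its treewidth is at most $t=\Oh(\sqrt{k}\log k)$, and an $\Oh(1)$-approximate tree decomposition can be computed within the target running time via, e.g., the Robertson--Seymour approximation. Standard bounded-treewidth dynamic programming, in which each bag state tracks the partial cover together with the number of flips made so far, then solves the problem in time $2^{\Oh(t)}\cdot n^{\Oh(1)}=2^{\Oh(\sqrt{k}\log k)}\cdot n^{\Oh(1)}$. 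Running the entire sample-preprocess-DP procedure $2^{\Oh(\sqrt{k}\log^2 k)}\cdot n^{\Oh(1)}$ times independently drives the one-sided error probability below $1/2$, giving the claimed total running time.
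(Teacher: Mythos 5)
Your proposal matches the paper's proof step-for-step: the Fellows et al.\ observation that one may assume $S\triangle S'$ is connected, sampling $A$ via Theorem~\ref{thm:maintheorem}, deleting from $G[A]$ every vertex with a neighbor outside $A$ that is not in $S$, and running bounded-treewidth dynamic programming on what remains. (The only superfluous step is the explicit ``reject if a frozen vertex is not in $S$'' check --- it is vacuous, since $S$ is already a vertex cover of $G$, so any $v$ with a neighbor $u\notin S$ automatically lies in $S$.)
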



\medskip
\noindent\emph{Steiner tree.} \probSteiner is a fundamental network design problem: 
for a graph $G$ with a prescribed set of terminal vertices $S$, and an integer $k$, 
we ask whether there is a tree on at most $k$ edges that spans all terminal vertices. 
Pilipczuk et al.~\cite{pilipczuk_et_al:LIPIcs:2013:3947} gave an algorithm for this problem with running time $2^{\Oh((k\log{k})^{2/3})}\cdot n$ on planar graphs and on graphs of bounded genus. 
With much more additional work, the running time was improved to $2^{\Oh(\sqrt{k\log{k}})} \cdot n$ in~\cite{PilipczukPSL14}. 

Again, by combining the standard dynamic programming solving \probSteiner on graphs of treewidth $t$ in time $2^{\Oh(t\log{t})}n^{\Oh(1)}$, see e.g. \cite{ChimaniMZ12}, with Theorem~\ref{thm:maintheorem},
we immediately obtain the following. 

\begin{corollary}\label{thm:steiner}
Let $\Cc$ be a class of graphs that exclude some fixed apex graph as a minor.
Then \probSteiner on graphs from $\Cc$ can be solved in randomized time $2^{\Oh(\sqrt{k} \log^2 k)}\cdot n^{\Oh(1)}$.
\end{corollary}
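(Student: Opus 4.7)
The plan is to invoke Theorem~\ref{thm:maintheorem} in essentially the same fashion as in the preceding corollaries, reducing \probSteiner on an apex-minor-free graph to \probSteiner on a subgraph of small treewidth, which is then dispatched by the standard tree-decomposition dynamic programming.

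More precisely, suppose $(G,S,k)$ is a yes-instance of \probSteiner, so that there exists a tree $T \subseteq G$ with at most $k$ edges, hence at most $k+1$ vertices, spanning the terminal set $S$. The vertex set $V(T)$ is connected in $G$ and has size at most $k+1$. The first step is to apply the algorithm of Theorem~\ref{thm:maintheorem} to $G$ with parameter $k+1$, obtaining a random vertex subset $A \subseteq V(G)$ such that $G[A]$ has treewidth $t = \Oh(\sqrt{k}\log k)$ and $\Pr[V(T) \subseteq A] \geq (2^{\Oh(\sqrt{k}\log^2 k)} \cdot n^{\Oh(1)})^{-1}$. Whenever this event occurs, all terminals of $S$ lie in $A$, and $T$ is entirely contained in $G[A]$, so $(G[A],S,k)$ is also a yes-instance.

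The second step is to test, given $A$, whether $G[A]$ admits a Steiner tree on at most $k$ edges spanning $S$. We first check that $S \subseteq A$; otherwise we abort this trial. We then construct a tree decomposition of $G[A]$ of width $\Oh(t)$ in time $2^{\Oh(t)} \cdot n^{\Oh(1)}$ using a constant-factor approximation such as the classical algorithm of Robertson and Seymour~\cite{RobertsonS-GMXIII} (alternatively, a decomposition of the claimed width can be extracted directly from the proof of Theorem~\ref{thm:maintheorem}). We then run the standard dynamic programming procedure for \probSteiner on tree decompositions~\cite{ChimaniMZ12}, which solves the problem in time $2^{\Oh(t \log t)} \cdot n^{\Oh(1)} = 2^{\Oh(\sqrt{k} \log^2 k)} \cdot n^{\Oh(1)}$. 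If it finds a Steiner tree of size at most $k$, we report yes; otherwise we declare failure for this trial.

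Finally, to boost the success probability we perform $2^{\Oh(\sqrt{k}\log^2 k)} \cdot n^{\Oh(1)}$ independent repetitions of the above procedure, answering yes if any trial succeeds, and no otherwise. The probability of not finding a hidden solution in any trial is at most $1/2$, while false positives are impossible, since any reported Steiner tree lives in $G[A] \subseteq G$. Every trial runs in time $2^{\Oh(\sqrt{k}\log^2 k)} \cdot n^{\Oh(1)}$, and so does the whole algorithm. Essentially no new ideas beyond the recipe already illustrated for \probDKPath and \probLSVC are required; the only mild subtlety is ensuring that the terminals are contained in $A$, which is automatic because any Steiner tree's vertex set includes $S$ and is the connected pattern being covered.
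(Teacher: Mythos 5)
Your proposal follows exactly the recipe the paper uses: observe that the vertex set of a Steiner tree with at most $k$ edges is a connected set of at most $k+1$ vertices, apply Theorem~\ref{thm:maintheorem} to cover it, run the standard $2^{\Oh(t\log t)}n^{\Oh(1)}$ tree-decomposition dynamic programming for \probSteiner on $G[A]$, and repeat to boost the success probability. This matches the paper's argument, which is stated in one sentence as an instance of the general pattern already worked out for \probDKPath.
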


Contrary to the much more involved algorithm of Pilipczuk et al.~\cite{PilipczukPSL14}, the algorithm above can equally easily handle various variants of the problem.
For instance, we can look for a Steiner tree on $k$ edges that minimizes the total weight of the edges, or we can ask for a Steiner out-branching in a directed graph, or 
we can put additional constraints on vertex degrees in the tree, and so on.

\myparagraph{Outline.} In Section~\ref{sec:overview} we give an informal overview of the proof of Theorem~\ref{thm:maintheorem} for the case of planar graphs. We try rather to focus on main intuitions
and concepts, than describe technical details necessary for a formal reasoning. Then, in Sections~\ref{sec:prelims} and~\ref{sec:aux-tools}
we recall the standard concepts and introduce auxiliary technical results. The full proof of Theorem~\ref{thm:maintheorem} is contained in Section~\ref{sec:proof}.
We conclude in Section~\ref{sec:conc} by listing open problems raised by our work. The appendix outlines possible extensions
to multiple components of the pattern and arbitrary proper minor-closed graph classes.

\section{Overview of the proof of Theorem~\ref{thm:maintheorem}}\label{sec:overview}

We now give an informal overview of the proof of Theorem~\ref{thm:maintheorem} in the case of planar graphs.
In fact, the only two properties of planar graphs which are essential to the proof are (a) planar graphs are minor-closed, and (b) they have locally bounded treewidth
by a linear function, that is, there exists a constant $\ctw$ such that every planar graph of radius $k$ has treewidth at most $\ctw \cdot k$.
In fact, for planar graphs one can take $\ctw = 3$~\cite{RobertsonS3}, and as shown in~\cite{DemaineH04}, the graph classes satisfying both (a) and (b) are exactly graph classes excluding a fixed apex graph as a minor.
However, in a planar graph we can rely on some topological intuition, making
the presentation more intuitive.
In the description we assume familiarity with tree decompositions; see Section~\ref{sec:prelims} for a formal definition.

\myparagraph{Locally bounded treewidth of planar graphs.}
As a warm-up, let us revisit a proof that planar graphs have locally bounded treewidth. 
The considered proof yields a worse constant than $\ctw = 3$, but it is insightful for our argumentation.
Let $G_0$ be a graph of radius $k$, that is, there exists a root vertex $r_0$
such that every vertex of $G_0$ is within distance at most $k$ from $r_0$.

As with most proofs showing that a graph in question has bounded treewidth, we will recursively construct a tree decomposition of bounded width. 
To this end, we need to carefully define the state of the recursion. 
We do it as follows: the recursive step aims at decomposing a subgraph $G$ of the input graph $G_0$, with some chosen set of terminals $T \subseteq V(G)$ on the outer face of $G$. 
The terminals $T$ represent connections between $G$ and the rest of $G_0$. 
In order to be able to glue back the obtained decompositions from the recursive step, our goal is to provide a tree decomposition of $G$ with $T$ contained in the root bag of the decomposition, 
so that later we can connect this bag to decompositions of other pieces of the graph that also contain the vertices of $T$. 
During the process, we keep the invariant that $|T| \leq 8(k+2)$, allowing us to bound the width of the decomposition.
Furthermore, the assumption that $G_0$ is of bounded radius projects onto the recursive subinstances by the following invariant: every vertex of $G$ is within distance at most $k$ from some terminal.

In the recursive step, if $T = V(G)$, $|T| < 8(k+2)$, or
$G$ is not connected, then we can perform some simple steps that we do not discuss here. The interesting case is when $|T| = 8(k+2)$.

We partition $T$ along the outer face into four parts of size $2(k+2)$ each, called north, east, south, and west terminals. 
We compute minimum vertex cuts between the north and the the south terminals, and between the east and the west terminals.
If, in any of these directions, a cut $W$ of size strictly smaller than $2(k+2)$ is found, then we can make a divide-and-conquer step:
for every connected component $D$ of $G-(T \cup W)$ we recurse on the graph $G[N[D]]$ with terminals $N(D)$, obtaining a tree decomposition $\mathcal{T}_D$.
Finally, we attach all the obtained tree decompositions below a fresh root node with bag $T \cup W$,
which is of size less than $10(k+2)$.

The crux is that such a separator $W$ is always present in the graph.
Indeed, otherwise there would exist $2(k+2)$ disjoint paths between the north and the south terminals
and $2(k+2)$ disjoint paths between the east and the west terminals. 
Consider the region bounded by the two middle north-south and the two middle east-west paths:
the vertices contained in this region are within distance larger than $k$ from the outer face, on which all terminals lie (see Fig.~\ref{fig-over:ctw}).
This contradicts our invariant.

\begin{figure}[tb]
\centering
\begin{subfigure}{.49\textwidth}
\centering
\includegraphics[width=.5\linewidth]{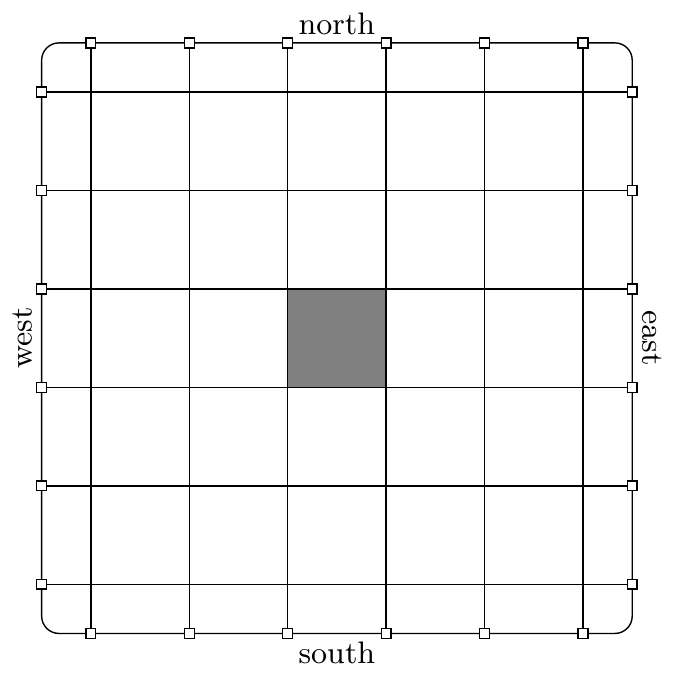}
\caption{}
\label{fig-over:ctw}
\end{subfigure}%
\begin{subfigure}{.49\textwidth}
 \centering
 \includegraphics[width=.6\linewidth]{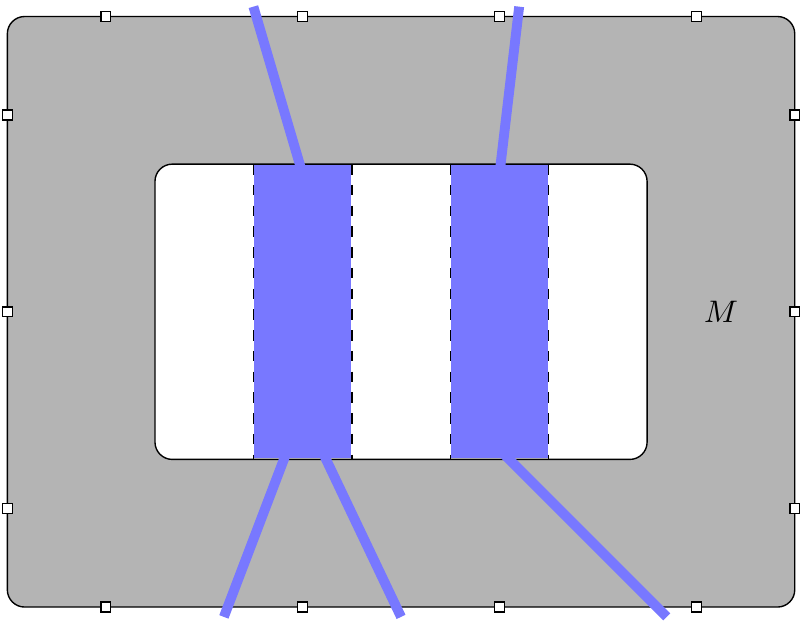}
 \caption{}\label{fig-over:std-sep}
\end{subfigure}\\[2mm]
\begin{subfigure}{.49\textwidth}
 \centering
 \includegraphics[width=.6\linewidth]{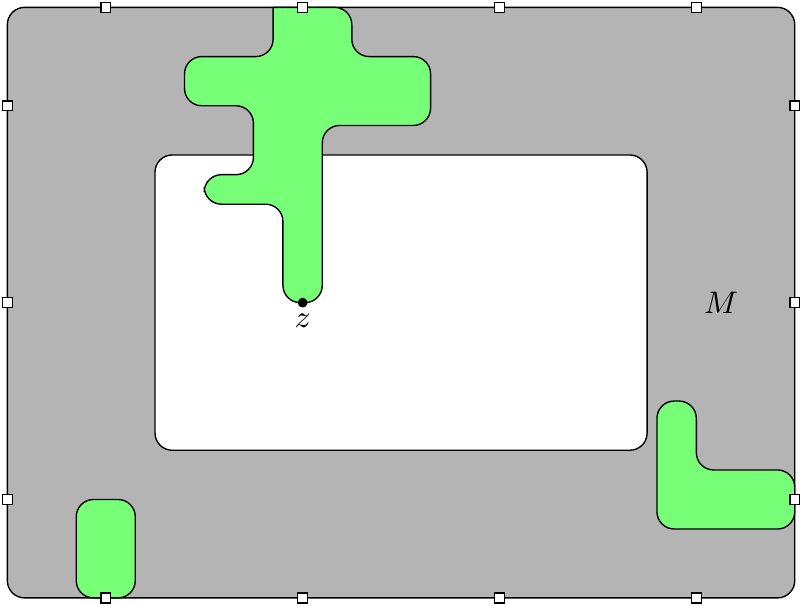}
 \caption{}\label{fig-over:stretched}
\end{subfigure}%
\begin{subfigure}{.49\textwidth}
 \centering
 \includegraphics[width=.6\linewidth]{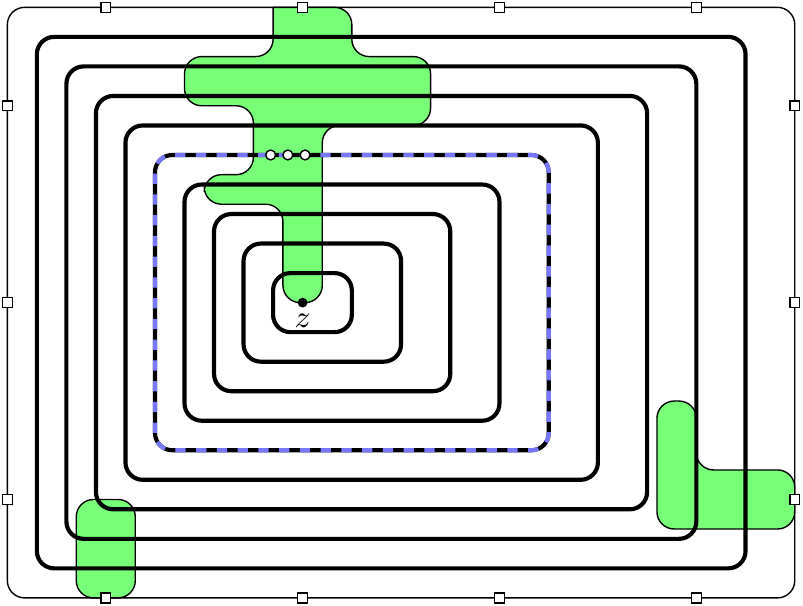}
 \caption{}\label{fig-over:cycles}
\end{subfigure}\\[2mm]
\begin{subfigure}{.98\textwidth}
 \centering
 \includegraphics[width=.7\linewidth]{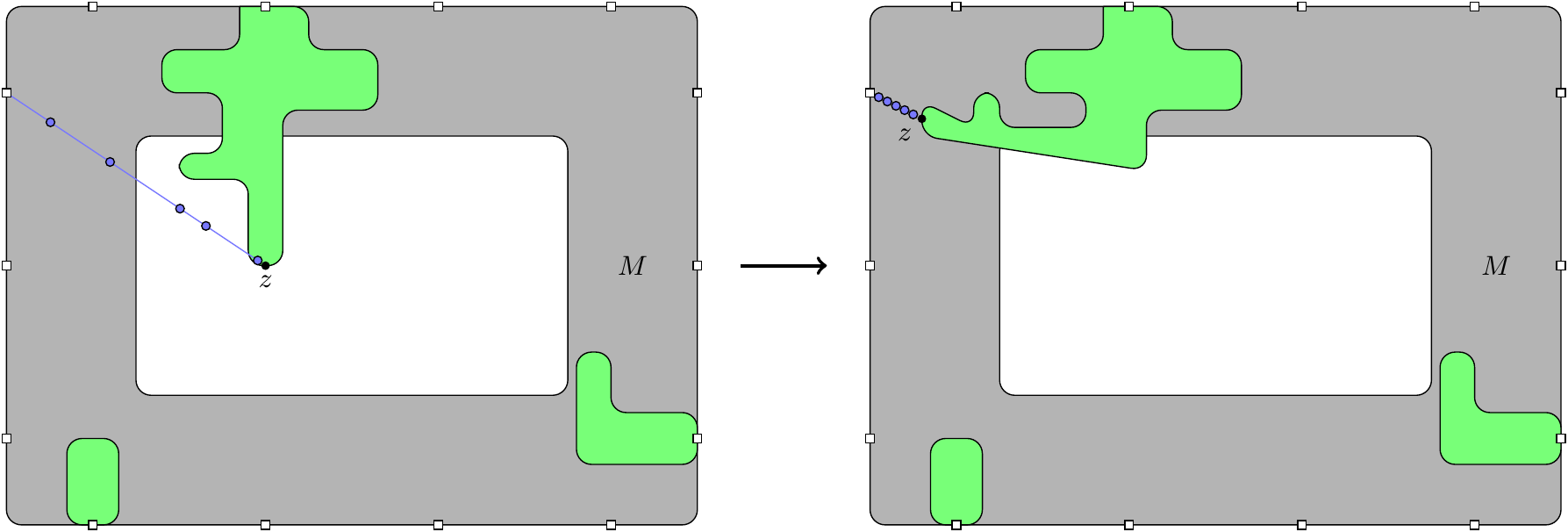}
 \caption{}\label{fig-over:contract}
\end{subfigure}%
\caption{Illustrations for Section~\ref{sec:overview}.
(a) The proof that planarity implies locally bounded treewidth.
The vertices in the gray area are too far from the terminals.
(b) Standard partitioning step. The margin is gray and the islands
are separated by dashed lines. The blue separator consists of $\Ohtilde(\sqrt{k})$ islands and
vertices of the margin. If the blue islands are disjoint from the solution, we delete them and obtain a balanced separator of size $\Ohtilde(\sqrt{k})$.
(c) The situation if the standard partitioning step cannot be applied: we have
a component of the pattern (green) stretched between a light terminal and a vertex
$z$ inside the margin.
(d) Chain of $z$-$\light$ separators: a sparse separator
that partitions the pattern in a balanced fashion is highlighted.
(e) Contraction of a path $P_i$ (blue) onto
its public vertices (blue circles). A significant number of
the vertices of the pattern become much closer to the light terminals.}
\label{fig-over}
\vspace{-0.3cm}
\end{figure}

\myparagraph{Our recursion.}
In our case, we use a similar, but much more involved recursion scheme. In the recursive step,
we are given a \emph{minor} $G$ of the input graph $G_0$, with some \emph{light}
terminals $\light \subseteq V(G)$ on the outer face, and some \emph{heavy} terminals $\heavy \subseteq V(G) \setminus \light$ lying anywhere in the graph.
As before, the terminals represent connections to the other parts of the graph.
We require that the terminals $\trms := \light \cup \heavy$
need to be contained in the root bag of the tree decomposition that is going to be constructed in this recursion step. 
Moreover, we maintain an invariant that $|\trms| = \Ohtilde(\sqrt{k})$, in order to bound the width of the decomposition. 
The graph $G$ is a minor of the input graph $G_0$, since
we often prefer to contract some edges instead of deleting them; thus we maintain
some distance properties of $G$.

In our recursion, the light terminals originate from cutting the graph
in a similar fashion as in the proof that planar graphs have locally bounded treewidth, presented above.
Hence, we keep the invariant that light terminals lie on the outer face. We sometimes need to cut deeply
inside~$G$. The produced terminals are heavy, but every such step 
corresponds to a significant progress in detecting the pattern, and hence
such steps will be rare. In every such step, we artificially 
provide connectivity of the subinstances through the heavy terminals;
this is technical and omitted in this overview.

Recall that our goal is to preserve a connected $k$-vertex pattern from 
the input graph. Here, the pattern can become disconnected by recursing on subsequent separations, but such cuttings will always be along light terminals.
Therefore, we define a {\em{pattern}} in a subinstance $(G,\light,\heavy)$ solved in the recursion as 
a set $X \subseteq V(G)$ of size at most $k$ such that every connected component
of $G[X]$ contains a light terminal.
Hence, compared to the presented proof of planarity implying locally bounded treewidth,
we aim at more restricted width of the decomposition, namely $\Ohtilde(\sqrt{k})$, 
but we can contract or delete parts of $G$, as long as the probability of spoiling a fixed, but unknown $k$-vertex pattern $X$ remains inverse subexponential in $k$.

\myparagraph{Clustering.}
Upon deleting a vertex or an edge,
some distance properties that we rely on can be broken. We need
the following sampling procedure that partitions the graph into connected
components of bounded radii, such that the probability of spoiling a particular
pattern is small.

\begin{theorem}\label{thm:clust-over}
There exists a randomized polynomial-time algorithm that, given a graph $G$ on $n>1$ vertices and a positive integer $k$, returns a vertex subset $B\subseteq V(G)$ with the following properties:
\begin{enumerate}[(a)]
\item The radius of each connected component of $G[B]$ is less than $9k^2\log n$.
\item For each vertex subset $X\subseteq V(G)$ of size at most $k$, the probability that $X \subseteq B$ is at least $1-\frac{1}{k}$.
\end{enumerate}
\end{theorem}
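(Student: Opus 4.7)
The plan is an iterative ball-carving procedure with a randomly sampled radius, in the spirit of classical low-diameter decompositions. Fix $p := 1/(3k^2)$ and $R := 9k^2\log n$, and initialise $B := V(G)$, $U := V(G)$. While $U \neq \emptyset$: pick any $v \in U$; draw a fresh $r$ from the geometric distribution with parameter $p$ truncated at $R$, that is, $\Pr[r = i] = p(1-p)^{i-1}$ for $1 \le i < R$ and $\Pr[r = R] = (1-p)^{R-1}$; compute the BFS layers $L_i := \{u \in U : \dist_{G[U]}(v,u) = i\}$; remove the boundary layer $L_r$ from $B$, and remove $L_0 \cup \cdots \cup L_r$ from $U$. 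Because $v \in L_0$ is always carved off, the loop terminates within $n$ iterations. Property~(a) then follows directly: the interior $L_0 \cup \cdots \cup L_{r-1}$ produced in one step is connected by BFS shortest paths that stay inside it, and removing $L_r$ from $B$ severs all $G$-edges between this interior and every later-processed vertex (any such edge would force its other endpoint into $L_{r-1}\cup L_r$). Hence each component of $G[B]$ is such an interior, of radius at most $r-1 < R = 9k^2\log n$.

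For property~(b), I would bound $\Pr[x \notin B]$ for a single vertex $x$ and apply a union bound over $X$. Let $\tau$ be the iteration in which $x$ is first removed from $U$; $\tau$ is finite, because if $x$ is ever chosen as the centre then $d_\tau := \dist_{G[U_\tau]}(v_\tau,x) = 0 < r_\tau$, placing $x$ in the interior. At each iteration $j$ at which $x$ is still unprocessed, the distance $d_j := \dist_{G[U_j]}(v_j,x)$ is determined before $r_j$ is drawn, and $r_j$ is independent of the entire history. The memoryless property of the geometric distribution gives
\[
\Pr[r_j = d_j \mid r_j \ge d_j] \;=\; p \qquad \text{whenever } d_j < R,
\]
while the truncation contributes $\Pr[r_j = d_j] = (1-p)^{R-1}$ in the boundary case $d_j = R$. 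Splitting the event $\{x \notin B\}$ over iterations (these subevents are disjoint), and using $\sum_j \Pr[\tau = j] = 1$ together with the trivial bound of $n$ on the number of iterations, yields
\[
\Pr[x \notin B] \;\le\; p + n(1-p)^{R-1} \;\le\; \frac{1}{3k^2} + \frac{2}{n^2},
\]
since the choice $pR = 3\log n$ gives $(1-p)^{R-1} \le 2/n^3$. A union bound over $X$ then delivers $\Pr[X \not\subseteq B] \le 1/(3k) + 2k/n^2 \le 1/k$ whenever $n \ge 2k$. The remaining regime $n < 2k$ is handled by simply returning $B := V(G)$: property~(b) then holds deterministically, and every component of $G[B] = G$ has radius at most $n-1 < 2k < 9k^2\log n$ (the last inequality holds for $n \ge 2$, $k \ge 1$).

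The main technical obstacle is that the centre $v_j$ and the distance $d_j$ depend on the entire history of earlier random choices, so a naive per-vertex union bound over iterations would lose a factor equal to the number of iterations. The decisive tool is the memoryless property of the geometric distribution: once conditioned on $x$ being unprocessed at the start of iteration $j$ and on the value of $d_j$, the ratio $\Pr[r_j = d_j]/\Pr[r_j \ge d_j]$ collapses to the clean constant $p$ as long as $d_j < R$. This lets the per-iteration removal charges telescope against the total first-processing probability, which is $1$, while the residual contribution at the truncation point $d_j = R$ becomes an inverse-polynomial tail, dominated by the choice $pR = 3\log n$.
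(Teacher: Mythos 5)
Your proof is correct and takes essentially the same approach as the paper's: iterative ball-carving with geometrically distributed radii, analyzed per vertex via the memoryless property. The only difference is cosmetic --- the paper draws an untruncated geometric radius and returns $B=\emptyset$ if any radius exceeds $9k^2\lg n$, whereas you truncate the distribution at $R$, making property~(a) hold deterministically at the cost of a small per-vertex tail term (handled by the choice $pR = 3\log n$ and a separate fallback for $n<2k$).
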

\begin{proof}[Proof sketch]
Start with $H := G$ and iteratively, as long as $V(H) \neq \emptyset$, perform the following procedure. Pick arbitrary $v \in V(H)$, and choose a radius $r$ as follows.
Start with $r = 1$ and iteratively, given current radius $r$, with probability $p := (2k^2)^{-1}$ accept $r$, 
and with probability $1-p$ increase it by one and continue (i.e., choose $r$ according to the geometric distribution with success probability $p$). 
Given an accepted radius $r$, put all vertices within distance \emph{less} than $r$ from $v$ into $B$, and delete from $H$ all vertices within distance \emph{at most} $r$ from $v$.

Since the procedure performs at most $n$ steps, by union bound
the probability of some radius exceeding $9k^2 \lg n$ is at most $\frac{1}{2k}$.
Fix a vertex $x \in V(G)$. We have $x \notin B$ only if at some point $x \in V(H)$
and the distance between $v$ and $x$ in $H$ is exactly $r$ when the radius $r$ gets accepted. However, in this case, if the radius $r$ is increased, $x$ is put into $B$ regardless of subsequent random choices. Consequently, for a fixed vertex $x \in V(G)$, the probability that $x \notin B$ is at most $p$. By union bound, the probability that $X \not\subseteq B$ is at most $|X|p$, which is at most $\frac{1}{2k}$ for $|X| \leq k$.
\end{proof}

\myparagraph{Standard divide\&conquer step.}
We would like to apply a similar divide\&conquer step as in the
presented proof that planar graphs have locally bounded treewidth. The problem is
that we can only afford a separator $W$ of size $\Ohtilde(\sqrt{k})$, however the radius of the graph can be much larger.

Let us define the {\em{margin}} $M$ to be the set of vertices within distance
at most $2000 \sqrt{k} \lg k = \Ohtilde(\sqrt{k})$ from any light terminal.
Intuitively, our case should be easy if every vertex of the pattern $X$
is in the margin: we could then just throw away all vertices of $G-M$
and use locally bounded treewidth, as the light terminals lie on the outer face. However, we cannot just branch (guess)
whether this is the case: the information that $G-M$ contains a vertex of the pattern is not directly useful.

Instead, we make a localized analog of this guess: we identify a relatively compact set of vertices of $G-M$ that prohibit us from making
a single step of the recursion sketched above.
First, we apply the clustering procedure 
(Theorem~\ref{thm:clust-over})
to the graph $G-M$, so that we can assume that every connected component of $G-M$,
henceforth called \emph{an island}, is of radius
bounded polynomially in $k$ and $\lg n$. Second, we construct an auxiliary
graph $H$ by contracting every island $C$ into a single vertex $u_C$.
Note that now in $H$ every vertex is within distance at most 
$2000\sqrt{k} \lg k + 1 = \Ohtilde(\sqrt{k})$ from a light terminal.
Thus $H$ has treewidth $\Ohtilde(\sqrt{k})$.
By standard arguments,
we can find a balanced separator $W_H$ in $H$, that is, a separator
of size $\Ohtilde(\sqrt{k})$ such that every connected component of $H-W_H$, after lifting it back to $G$ by reversing contractions,
contains (a) at most $|\trms|/2$ terminals from $G$, and 
(b) at most $|V(G) \setminus \trms|/2$ non-terminal vertices of $G$.

The separator $W_H$ can be similarly lifted to a separator $W$ in $G$ that corresponds
to $\Ohtilde(\sqrt{k})$ vertices of $M$ and $\Ohtilde(\sqrt{k})$ islands.
Now it is useful to make a guess if some vertex of an island in $W$
(i.e., a vertex of $W \setminus M$) belongs to the solution:
if this is not the case, we can delete the whole $W \setminus M$ from the graph, and apply the procedure recursively 
to connected components of $G-W$; if this is the case, the area to search
for such a vertex of the pattern is limited to $\Ohtilde(\sqrt{k})$ components
of radius polynomial in $k$ and $\lg n$.
Therefore, with some probability $q$ we decide to assume that $W\setminus M$ contains
a vertex of the pattern, and with the remaining probability $1-q$ we decide
that this is not the case. In the latter case, we remove $W \setminus M$ from the
graph, and recurse using $W \cap M$, which has size $\Ohtilde(\sqrt{k})$, as a separator; see Fig.~\ref{fig-over:std-sep}.
 The fact that
every connected component of $G-W$ contains at most $|\trms|/2$ terminals
allows us to keep the invariant that $|\trms| = \Ohtilde(\sqrt{k})$.

Let us now analyze what probability $q$ we can afford.
Observe that in every subinstance solved recursively, the number of nonterminal
vertices is halved. Thus, every vertex $x$ of the pattern $X$ is contained in $G$
only in $\Oh(\lg n)$ subinstances in the whole recursion tree; here we exclude the subinstances where $x$ is a light terminal, because
then its treatment is determined by the output specification of the recursive procedure. Consequently, we care about correct
choices only in $\Oh(k \lg n)$ steps of the recursion. 
In these steps, we want not to make a mistake during the clustering procedure
($1/k$ failure probability) and to correctly guess that $W \setminus M$
is disjoint with the pattern, provided this is actually the case ($q$ failure probability).
Thus, if we put $q = 1/k$, then the probability that we succeed in all $\Oh(k \lg n)$ steps we care about is inverse-polynomial in $n$; this is sufficient for our~needs.

\myparagraph{Island with a vertex of the pattern.}
We are left with the second case, where some island $C \subseteq W$
intersects the pattern. We have
$q = 1/k$ probability of guessing correctly that this is the case,
and independently we have $(1-1/k)$ probability
of not making a mistake in the clustering step.

The bound on the radii of the islands, as well as the fact that only
$\Ohtilde(\sqrt{k})$ islands are contained in $W$, allow us to localize
this vertex of the pattern even closer.
Recall that the radius of each island is bounded by $9k^2\lg n$.
For the rest of this overview we assume that $\lg n$ is bounded polynomially in $k$, and hence the radius of each island is polynomial in $k$.
Intuitively, this is because if, say, we had $\lg n > 100 \cdot k^{100}$, then $n > 2^{100 k^{100}}$ and
with $2^{100 k^{100}}$ allowed factor in the running time bound we can apply
a variety of other algorithmic techniques. 
More formally, we observe that $(\lg n)^{\Ohtilde(\sqrt{k})}$ is bounded by $2^{\Ohtilde(\sqrt{k})}\cdot n^{o(1)}$, which is sufficient 
to make sure that all the experiments whose success probability depend on $\lg n$ succeed simultaneously with probability at least $(2^{\Ohtilde(\sqrt{k})}\cdot n^{o(1)})^{-1}$.

We first guess (by sampling at random) an island $C \subseteq W$ that contains a vertex of the pattern.
Then, we pick an arbitrary vertex
$z \in C$, and guess (by sampling at random) the distance $d$ in $C$ between $z$ and the closest vertex
of the pattern in $C$. By contracting all vertices within distance less than $d$
from $z$, with success
probability inverse-polynomial in $k$, we arrive at the following situation:
(see Fig.~\ref{fig-over:stretched})
\begin{equation*}
\textrm{we have a vertex }z \notin M\textrm{ such that either }z\textrm{ or a neighbor of }z\textrm{ belongs to the pattern }X.
\end{equation*}

\myparagraph{Chain of separators.}
Hence, one of the components of $G[X]$ is stretched across the margin $M$,
between a light terminal on the outer face and the vertex $z$ inside the margin.
Our idea now is to use this information to cut $X$ in a balanced fashion.
Note that we have already introduced an inverse-polynomial in $k$ multiplicative factor in the success probability.
Hence, to maintain the overall inverse-subexponential dependency on $k$ in the success
probability, we should aim at a progress that will allow us
to bound the number of such steps by $\Ohtilde(\sqrt{k})$.

Unfortunately, it is not obvious how to find such a separation.
It is naive to hope for a $z$-$\light$ separator of size $\Ohtilde(\sqrt{k})$,
and a larger separator seems useless, if there is only one. However, we can aim at a Baker-style argument: if we find a chain of $p$ pairwise disjoint $z$-$\light$ separators $C_1,C_2,\ldots,C_p$ 
(see Fig.~\ref{fig-over:cycles}), we could guess a ``sparse one'', and separate along it.
Since the separators are pairwise disjoint, there exists a ``sparse'' separator $C_i$ with at most $k/p$ vertices of the pattern. 
On the other hand, since the pattern contains a component stretched from $z$ to a light terminal, every $C_i$ intersects the pattern. If we omit the first and the last $p/4$ separators, and look for a sparse separator in between with at most $2k/p$
by means of guessing only $2k/p$ vertices of $X$.
If all separators are of size bounded polynomially in $k$,
the optimal choice is $p \sim k^{2/3}$, which 
leads to success probability inverse in~$2^{\Ohtilde(k^{2/3})}$.

However, we can apply a bit smarter counting argument. Take $p = 120 \sqrt{k} \lg k$.
Look at $C_{p/2}$ and assume that at most half of the vertices of $X$
lie on the side of $C_{p/2}$ with separators $C_i$ for $i < p/2$;
the other case is symmetric.
The crucial observation is the following:
there exists an index $i \leq p/2$ such that
if $|C_i \cap X| = \alpha$ then $C_i$ partitions $X$ into two parts
of size at least $\alpha \sqrt{k}/10$ each. Indeed, otherwise
we have that for every $i \leq p/2$ it holds that
$$|X \cap C_i| \geq \frac{10}{\sqrt{k}} \cdot \sum_{j < i} |X \cap C_j|.$$
This implies $|X \cap \bigcup_{j \leq i} C_j| \geq (1+10/\sqrt{k})^i$, and $|X \cap \bigcup_{j \leq p/2} C_{j}| > k$
for $p = 120\sqrt{k}\lg k$.

Hence, we guess (by sampling at random) such an index $i$, the value of $\alpha = |X \cap C_i|$, and
the set $X \cap C_i$. If the size of $C_i$ is bounded polynomially in $k$,
with success probability $k^{-\Oh(\alpha)}$ we partition the pattern
into two parts of size at least $\alpha \sqrt{k}/10$ each.
A simple amortization argument shows that all these guessings incur only 
the promised $2^{-\Ohtilde(\sqrt{k})}$ multiplicative factor in the overall success probability.
Furthermore, as such a step creates $\alpha$ heavy terminals, it can be easily seen that the total number
of heavy terminals will never grow beyond $\Ohtilde(\sqrt{k})$.

However, the above argumentation assumes we are given such a chain of separators
$C_i$: they are not only pairwise disjoint, but also of size polynomial in $k$.
Let us now inspect how to find them.

\myparagraph{Duality.}
In the warm-up proof of planar graphs having locally bounded treewidth, 
the separator $W$ is obtained from the classic Menger's maximum flow/minimum cut duality. 
Here, we aim at a chain of separators, but we require that their sizes
are polynomial in $k$. It turns out that we can find such a chain by formulating a maximum flow of minimum cost problem, 
and extracting the separator chain in question from the optimum solution to its (LP) dual.

\begin{theorem}\label{thm:duality-over}
There is a polynomial-time algorithm that given
a connected graph $G$, a pair $s,t\in V(G)$ of different vertices, and positive integers $p,q$,
outputs one of the following structures in $G$:
\begin{enumerate}[(a)]
\item A chain $(C_1,\ldots,C_p)$ of $(s,t)$-separators with $|C_j|\leq 2q$ for each $j\in [p]$.
\item A sequence $(P_1,\ldots,P_q)$ of $(s,t)$-paths with $|(V(P_i)\cap \bigcup_{i'\neq i} V(P_{i'}))\setminus \{s,t\}|\leq 4p$ for each~$i\in [q]$.
\end{enumerate}
\end{theorem}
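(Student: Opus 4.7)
I would prove this Menger-type dichotomy by setting up a vertex-capacitated min-cost flow problem on $G$ and extracting either the separator chain in (a) or the path family in (b) from the primal/dual optima via LP duality.

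First, construct the standard vertex-split auxiliary digraph $G'$: each non-terminal $v \in V(G)\setminus\{s,t\}$ is split into two copies $v^{\pin}, v^{\pout}$ joined by an internal arc of capacity $2q$ and unit cost, and each edge $uv \in E(G)$ becomes two zero-cost, infinite-capacity arcs $(u^{\pout}, v^{\pin})$ and $(v^{\pout}, u^{\pin})$. Since $G$ is connected and internal capacities are $2q$, routing $q$ units of flow from $s$ to $t$ is feasible; let $\Phi$ be the optimum cost of such a flow. The dichotomy in the statement corresponds to $\Phi \le 2pq$ versus $\Phi > 2pq$.

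If $\Phi \leq 2pq$, an optimum integral flow (which exists because all capacities are integer) decomposes via standard flow decomposition into $q$ integral $(s,t)$-paths $P_1,\ldots,P_q$, with total vertex-usage $\sum_v (\text{flow through } v) = \Phi \leq 2pq$. The aggregate quantity $\sum_v (\text{flow through } v) - q$ measures total overlap among the $P_i$'s, and is thus at most $2pq - q$. An iterative rebalancing step on the residual digraph, swapping segments of paths along alternating cycles through vertices that carry two or more units of flow, redistributes the overlap evenly to yield the per-path bound $|(V(P_i)\cap\bigcup_{i'\neq i}V(P_{i'}))\setminus\{s,t\}|\leq 4p$ for every $i \in [q]$, giving option (b). If instead $\Phi > 2pq$, I would work with an optimum dual solution: vertex potentials $\pi : V(G') \to \mathbb{R}$ and slacks $z_{a_v}\geq 0$ on internal arcs satisfying $\pi_{v^{\pin}} - \pi_{v^{\pout}} \le 1 + z_{a_v}$ and $\pi_{u^{\pout}} \le \pi_{v^{\pin}}$ along every free arc. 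Normalizing $\pi_s = 0$, the LP duality identity $q\pi_t - 2q\sum_v z_{a_v} = \Phi > 2pq$ lets me select a basic dual optimum with integer-valued potentials and $\pi_t \ge p$ (after re-allocating slack mass). The integer level sets $L_j := \{v \in V(G) : \pi_{v^{\pin}} = j\}$ for $j = 1,\ldots,p$ are then $(s,t)$-separators of $G$ by the $1$-Lipschitz property, and a pigeonhole on the level volumes against the LP optimum shows that $p$ of these level sets have size at most $2q$, yielding the chain $(C_1,\ldots,C_p)$ of (a).

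The chief obstacle I anticipate lies in the low-cost case: transforming the aggregate vertex-usage bound supplied by $\Phi$ into a \emph{per-path} overlap bound of $4p$, rather than merely an average, requires a delicate rebalancing argument, since the paths in an arbitrary decomposition might absorb very unequal amounts of overlap. My approach will be to iteratively swap path segments along residual alternating cycles between the most-overloaded and least-overloaded paths, using the slack built into the constant factor $4$ (versus the natural factor $2$ from the average) to guarantee termination; a secondary obstacle, easier to handle but still requiring attention, is to pick a basic optimal dual whose level sets are integer-valued so that the pigeonhole argument in the high-cost case produces honest $(s,t)$-separators of the original graph~$G$.
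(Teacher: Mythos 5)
Your approach is structurally similar to the paper's — LP duality for a min-cost vertex-capacitated flow, with primal paths in the low-cost case and dual potentials giving separator level sets in the high-cost case — but there are two concrete errors that prevent the argument from closing, and the first one is fatal as written.

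\textbf{The cost function does not measure overlap.} You put unit cost on every internal arc $(v^{\pin},v^{\pout})$, so the total cost $\Phi$ of the flow is simply $\sum_v h_v$ where $h_v$ is the number of flow paths through $v$; that is, $\Phi$ equals the total number of internal-vertex traversals, i.e.\ the aggregate path length. It has nothing to do with how much the paths \emph{overlap}: $q$ vertex-disjoint paths of length $n/q$ each already give $\Phi\approx n$, so the dichotomy $\Phi \le 2pq$ versus $\Phi>2pq$ fails to separate the two outcomes of the theorem (and the claim that $\sum_v(\text{flow through }v)-q$ ``measures total overlap'' is not correct). What the paper does instead is duplicate each vertex into a copy of capacity $1$ and cost $0$ and a second copy of infinite capacity and cost $1$: the first path through a vertex is free, and every additional path pays $1$. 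Under that cost function, $C=\sum_{v:h_v>1}(h_v-1)$, which is exactly the excess-usage quantity you want, and the inequality $\sum_i c(P_i)\le 2C$ (where $c(P_i)$ is the number of shared internal vertices) becomes a one-line calculation because $h\le 2(h-1)$ for $h\ge 2$. Without the ``first copy free'' device your $\Phi$ controls nothing relevant, and both halves of your argument (the per-path bound and the $|L_j|\le 2q$ pigeonhole) lose their footing.

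\textbf{Routing $q$ paths forces a rebalancing step that is not needed and not established.} Even granting a corrected cost function, averaging over $q$ paths with total overlap $\Oh(pq)$ gives only an average of $\Oh(p)$, not a per-path bound, and you acknowledge you would need a ``delicate rebalancing'' along residual alternating cycles to convert the average into a maximum. No termination or correctness argument is supplied for this step, and I do not think it is routine: the structure of the graph can force a given flow path to absorb many shared vertices, and rerouting conserves total cost but not obviously the distribution. The paper avoids the issue entirely by routing $2q$ units instead of $q$: then $\sum_{i=1}^{2q}c(P_i)\le 2C\le 4pq$ and Markov's inequality alone guarantees that at least $q$ of the $2q$ paths have $c(P_i)\le 4p$, which is exactly what is needed. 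Relatedly, in the high-cost case the bound $|C_j|\le 2q$ does not come from a ``pigeonhole on level volumes'' (a level set can be arbitrarily large); it comes from complementary slackness together with the fact that each of the $2q$ flow paths crosses each separator $C_j$ in exactly one vertex, so $C_j$ is exactly the set of those $\le 2q$ crossing points. You would need both the corrected cost structure and the $2q$-path device to run that argument.
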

\begin{proof}[Proof sketch.]
We formulate the second outcome as a maximum flow of minimum cost problem in an auxiliary
graph, where every vertex $v \in V(G) \setminus \{s,t\}$ is duplicated into two copies: one of cost $0$ and capacity $1$, and one of cost $1$ and infinite capacity. We ask for a minimum-cost flow of size $2q$ from $s$ to $t$.
If the cost of such flow is at most $2pq$, the projection onto $G$ of the
cheapest $q$ flow paths gives the second output. Otherwise, we read the desired chain of separators $(C_1,\ldots,C_p)$ as distance layers from $s$ in the graph with distances
imposed by the solution to the dual linear program.
\end{proof}

Since all light terminals lie on the outer face, we can attach an auxiliary root
vertex $r_0$ adjacent to all light terminals, and apply Theorem~\ref{thm:duality-over} to $(s,t) = (r_0,z)$, $p = 120\sqrt{k} \lg k$ and $q = \textrm{poly}(k)$.
If the algorithm of Theorem~\ref{thm:duality-over} returns a chain of separators,
we proceed as described before. Thus, we are left with the second output:
$q = \textrm{poly}(k)$ nearly-disjoint paths from $\light$ to $z$.

\myparagraph{Nearly-disjoint paths.}
The vertex set of every path $P_i$ can be partitioned into \emph{public} vertices $\Pub(P_i)$, the ones used also by other paths, and the remaining \emph{private} vertices $\Prv(P_i)$. We have $|\Pub(P_i)| \leq 4p = 480\sqrt{k}\lg k$,
and the sets $\Prv(P_i)$ are pairwise disjoint.
We can assume $q > k$, so there exists a path $P_i$ such that $\Prv(P_i)$
is disjoint with the pattern $X$.
By incurring an additional $1/k$ multiplicative factor in the success probability,
we can guess, by sampling at random, such~index~$i$.

How can we use such a path $P_i$? Clearly, we can delete the private vertices
of $P_i$, because they can be assumed not to be used by the patter $X$. 
However, we choose a different way: we contract them onto neighboring public vertices along $P_i$, reducing $P_i$ to a path with vertex set $\Pub(P_i)$. 
Observe that by this operation the vertex $z$ changes its location in $G$:
from a vertex deeply inside $G$, namely not within the margin $M$, it is moved
to a place within distance $|\Pub(P_i)| \leq 480\sqrt{k} \lg k$ from the light terminals,
which is less than a quarter of the width of the margin; see Fig.~\ref{fig-over:contract}.

Furthermore, by the connectivity assumptions on the pattern $X$, the vertex $z$
drags along a number of vertices of $X$ that are close to it. More precisely, if $Q$
is a path in $G[X]$ connecting $z$ or a neighbor of $z$ with a light terminal, then
the first $500 \sqrt{k}\lg k$ vertices on $Q$ are moved from being within distance
at least $1500\sqrt{k}\lg k$ from all light terminals, to being within distance
at most $1000 \sqrt{k}\lg k$ from some light terminal. 
Hence, if we define that a vertex $x \in X$ is \emph{far} if it is within
distance larger than $1000\sqrt{k}\lg k$ (i.e., half of the width of the margin)
from all light terminals, and \emph{close} otherwise, then
by contracting the private vertices of $P_i$ as described above, at least $500\sqrt{k}\lg k$
vertices of $k$ change their status from far to close. 

By a careful implementation of all separation steps, we can ensure that no close vertex of $X$ becomes far again. 
Consequently, we ensure that the above step can happen only $\Ohtilde(\sqrt{k})$ times. 
Since the probability of succeeding in all guessings within this step is inverse-polynomial in $k$, this incurs only a $2^{-\Ohtilde(\sqrt{k})}$ multiplicative factor in the overall success probability.

This finishes the overview of the proof of Theorem~\ref{thm:maintheorem}. We invite the reader to the next sections for a fully formal proof, which is moreover conducted for an arbitrary apex-minor-free class.

\section{Preliminaries}\label{sec:prelims}

\myparagraph{Notation.} We use standard graph notation; see, e.g.,~\cite{cygan2015parameterized} for a reference.
All graphs considered in this paper are undirected and simple (without loops or multiple edges), unless explicitly stated.
For a vertex $u$ of a graph $G$, by $N_G(u)=\{v\colon uv\in E(G)\}$ and $N_G[u]=\{u\}\cup N_G(u)$ we denote the open and closed neighborhood of $u$, respectively.
Similarly, for a vertex subset $X\subseteq V(G)$, by $N_G[X]=\bigcup_{u\in X} N_G[u]$ and $N_G(X)=N_G[X]\setminus X$ we denote the closed and open neighborhood of $X$, respectively.
The subscript is dropped whenever it is clear from the context.

For an undirected graph $G$ and an edge $uv\in E(G)$, by {\em{contracting}} $uv$ we mean the following operation:
remove $u$ and $v$ from the graph, and replace them with a new vertex that is adjacent to exactly those vertices that were neighbors of $u$ or $v$ in $G$.
Note that this definition preserves the simplicity of the graph. 
By {\em{contracting $v$ onto $u$}} we mean the operation of contracting the edge $uv$ and renaming the obtained vertex as $u$.
More generally, if $X$ is a subset of vertices with $G[X]$ being connected, and $u\notin X$ is such that $u$ has a neighbor in $X$, 
then by {\em{contracting $X$ onto $u$}} we mean the operation of exhaustively contracting a neighbor of $u$ in $X$ onto $u$ up to the point when $X$ becomes empty.
Note that due to the connectivity of $G[X]$ such outcome will always be achieved.

We say that $H$ is a {\em{minor}} of $G$ if $H$ can be obtained from $G$ by means of vertex deletions, edge deletions, and edge contractions.
An {\em{apex graph}} is a graph that can be made planar by removing one of its vertices.

For a positive integer $k$, we denote $[k]=\{1,\ldots,k\}$. We denote $\lg x = \log_2 x$. Notation $\log$ is used only under the $\Oh(\cdot)$-notation, 
where multiplicative constants are hidden anyways. We also denote $\exp[t]=e^t$, where $e$ is the base of the natural logarithm. 

\myparagraph{Tree decompositions.} Let $G$ be an undirected graph. A {\em{tree decomposition}} $\Tt$ of $G$ is a rooted tree $T$ with a {\em{bag}} $\beta(x)\subseteq V(G)$ associated with each its node $x$,
which satisfies the following conditions:
\begin{enumerate}[(T1)]
\item For each $u\in V(G)$ there is some $x\in V(T)$ with $u\in \beta(x)$.
\item For each $uv\in E(G)$ there is some $x\in V(T)$ with $\{u,v\}\subseteq \beta(x)$.
\item For each $u\in V(G)$, the node subset $\{x\in V(T)\colon u\in \beta(x)\}$ induces a connected subtree of $T$.
\end{enumerate}
The {\em{width}} of a tree decomposition $\Tt$ is $\max_{x\in V(T)} |\beta(x)|-1$, and the {\em{treewidth}} of $G$ is equal to the minimum possible width of a tree decomposition of $G$.
We assume reader's familiarity with basic combinatorics of tree decompositions, and hence we often omit a formal verification that some constructed object is indeed a tree decomposition of some graph.

The following fact about the existence of balanced separators in graphs of bounded treewidth is well-known, see e.g.~\cite[Lemma 7.19]{cygan2015parameterized}.

\begin{lemma}\label{lem:balsep}
Let $G$ be a graph of treewidth at most $t$ and let $\mathbf{w}\colon V(G)\to \mathbb{R}_{\geq 0}$ be a nonnegative weight function on vertices of $G$. 
Then there exists a subset of vertices $X\subseteq V(G)$ of size at most $t+1$ such that for every connected component $C$ of $G-X$ the following holds:
$$\mathbf{w}(V(C))\leq \mathbf{w}(V(G))/2,$$
where $\mathbf{w}(A)=\sum_{u\in A}\mathbf{w}(u)$ for any vertex subset $A$.
\end{lemma}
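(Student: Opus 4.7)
The plan is to apply the standard centroid-style argument inside a tree decomposition of $G$. Fix a tree decomposition $\Tt=(T,\beta)$ of $G$ of width at most $t$ and root $T$ at an arbitrary node $r$. By property (T3), for every $u\in V(G)$ the set of nodes whose bag contains $u$ is a connected subtree of $T$; let $\mathrm{top}(u)$ denote its node closest to $r$. This yields a partition $V(G)=\bigsqcup_{x\in V(T)} D_x$, where $D_x=\{u\in V(G):\mathrm{top}(u)=x\}$, which we transfer to $T$ by setting $\mathbf{w}'(x)=\mathbf{w}(D_x)$; in particular $\sum_{x\in V(T)}\mathbf{w}'(x)=\mathbf{w}(V(G))$.

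Next, I would find a centroid-like node $x^*\in V(T)$ by greedy descent from the root. For each node $y\in V(T)$, write $\widehat{T}_y$ for the subtree of $T$ rooted at $y$ and set $W(y)=\sum_{z\in V(\widehat{T}_y)}\mathbf{w}'(z)$. Starting at $x=r$, while some child $y$ of $x$ satisfies $W(y)>\mathbf{w}(V(G))/2$, move to such $y$. Since $T$ is finite this terminates at some node $x^*$ with $W(x^*)>\mathbf{w}(V(G))/2$ (automatic at $r$ unless $\mathbf{w}(V(G))=0$, in which case the lemma is vacuous) and $W(y)\leq \mathbf{w}(V(G))/2$ for every child $y$ of $x^*$.

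The claim is then that $X=\beta(x^*)$, which has size at most $t+1$, is the desired balanced separator. The components of $T-x^*$ are the subtrees $\widehat{T}_y$ hanging from the children of $x^*$ together with, if $x^*\neq r$, the ``upward'' component $T_0$ containing $\parent(x^*)$. For any $u\in V(G)\setminus\beta(x^*)$, property (T3) forces the (connected) set of bags containing $u$ to avoid $x^*$, and therefore to lie inside a single component of $T-x^*$; in particular $\mathrm{top}(u)$ lies in that same component. Hence any connected component $C$ of $G-\beta(x^*)$ is contained in $\bigcup_{z\in V(T_i)} D_z$ for some single component $T_i$ of $T-x^*$, so $\mathbf{w}(V(C))\leq \sum_{z\in V(T_i)}\mathbf{w}'(z)$. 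For a downward component $T_i=\widehat{T}_y$ the right-hand side is $W(y)\leq \mathbf{w}(V(G))/2$ by the stopping condition; for the upward component we use $\sum_{z\in V(T_0)}\mathbf{w}'(z)=\mathbf{w}(V(G))-W(x^*)<\mathbf{w}(V(G))/2$.

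The only mildly delicate step is the observation that a connected component of $G-\beta(x^*)$ cannot ``cross'' $x^*$ in the tree decomposition, which is precisely where the connected-subtree property (T3) is used; everything else is routine bookkeeping, so I do not anticipate any serious obstacle.
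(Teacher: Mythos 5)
Your proof is correct, and it is the standard centroid-on-a-tree-decomposition argument; the paper does not give its own proof of this lemma but simply cites it as a well-known fact (Lemma~7.19 of the cited textbook of Cygan et al.), where essentially the same argument appears. One tiny wording quibble: when $\mathbf{w}(V(G))=0$ the lemma is not \emph{vacuous} but \emph{trivially true} (nonnegativity forces every vertex weight to be $0$, so $X=\emptyset$ works), which is exactly the right conclusion but for a different reason than you state.
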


\myparagraph{Locally bounded treewidth and apex-minor-freeness.} For the whole proof we fix a class $\Cc$ of graphs that satisfies the following properties:
\begin{enumerate}[(a)]
\item\label{pr:minor-closed} $\Cc$ is closed under taking minors.
\item\label{pr:loc-bnd-tw} $\Cc$ has locally bounded treewidth. That is, there exists a function $f\colon \Nats\to \Nats$ such that any connected graph $G$ from $\Cc$ of radius $r$ has treewidth bounded by $f(r)$.
\end{enumerate}
Demaine and Hajiaghayi~\cite{DemaineH04} proved that any class $\Cc$ that satisfies properties~\eqref{pr:minor-closed} and~\eqref{pr:loc-bnd-tw} is actually apex-minor-free. 
More precisely, there exists some apex graph $H$ such that no graph from $\Cc$ admits $H$ as a minor.
However, Grohe~\cite{Grohe03} proved that every apex-minor-free class of graphs has locally bounded treewidth, where moreover the function $f$ is linear.
As observed by Demaine and Hajiaghayi~\cite{DemaineH04}, these fact combined yield the following.

\begin{proposition}\label{prop:local-tw}
If a class of graphs $\Cc$ satisfies properties~\eqref{pr:minor-closed} and~\eqref{pr:loc-bnd-tw}, then there is a constant $\ctw$ such that
for every connected graph $G$ from $\Cc$ of radius $r$, the treewidth of $G$ is bounded by $\ctw\cdot r$.
\end{proposition}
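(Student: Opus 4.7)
The plan is to combine the two external results cited immediately before the proposition, namely the characterization of Demaine and Hajiaghayi~\cite{DemaineH04} and the locally linear treewidth bound of Grohe~\cite{Grohe03}, in a straightforward two-step argument. The statement follows essentially by chaining these two facts, so the proof is short; the only ``work'' is to verify that the hypotheses match and that the constant produced is indeed uniform over $\Cc$.

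First I would invoke the result of Demaine and Hajiaghayi that characterizes minor-closed classes of locally bounded treewidth. Since $\Cc$ satisfies property~\eqref{pr:minor-closed} (minor-closed) and property~\eqref{pr:loc-bnd-tw} (locally bounded treewidth), there exists an apex graph $H$ such that every $G\in\Cc$ excludes $H$ as a minor. In particular, the whole class $\Cc$ is contained in the class of $H$-minor-free graphs for this fixed apex graph $H$.

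Next I would apply Grohe's theorem to the class of $H$-minor-free graphs. That theorem asserts that for every apex graph $H$ there is a constant $c_H$, depending only on $H$, such that every connected $H$-minor-free graph of radius $r$ has treewidth at most $c_H\cdot r$. Setting $\ctw := c_H$, and noting that every $G\in\Cc$ is $H$-minor-free by the previous step, we conclude that every connected $G\in\Cc$ of radius $r$ satisfies $\tw(G)\leq \ctw\cdot r$, as required. The constant $\ctw$ depends only on the fixed apex graph $H$ produced by the Demaine--Hajiaghayi characterization, and hence only on the class $\Cc$, not on the particular graph $G$ or its radius $r$.

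The only potential obstacle is an entirely bookkeeping one: making sure that the apex graph $H$ output by the Demaine--Hajiaghayi characterization is indeed fixed once and for all by the class $\Cc$ (so that the Grohe constant $c_H$ can be chosen uniformly), and that the ``radius'' used in Grohe's statement agrees with the notion of radius used here. Both are standard: $H$ is a function of $\Cc$ alone, and the notion of radius is the usual one (existence of a vertex within distance at most $r$ of every other vertex), matching both cited results. No new combinatorial argument is needed beyond this.
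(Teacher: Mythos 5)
Your proposal matches the paper's own argument exactly: the paper cites the Demaine--Hajiaghayi characterization to conclude that $\Cc$ is contained in some apex-minor-free class, and then cites Grohe's linear local treewidth bound for apex-minor-free classes, combining the two to obtain the proposition (and the paper explicitly attributes this combination to Demaine and Hajiaghayi). Your write-up is a correct and faithful elaboration of that chain.
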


The results of Grohe~\cite{Grohe03} in particular show that if $\Cc$ is a class of graphs that exclude a fixed apex graph as a minor, 
then the closure of $\Cc$ under taking minors satisfies properties~\eqref{pr:minor-closed} and~\eqref{pr:loc-bnd-tw}, and hence also the property implied by Proposition~\ref{prop:local-tw}.
Hence, it suffices to prove Theorem~\ref{thm:maintheorem} for any graph class $\Cc$ that satisfies properties~\eqref{pr:minor-closed} and~\eqref{pr:loc-bnd-tw}.
From now on, we fix the constant $\ctw$ yielded by Proposition~\ref{prop:local-tw} for the class $\Cc$. To avoid corner cases, we assume without loss of generality that $\ctw\geq 10$.

\section{Auxiliary tools}\label{sec:aux-tools}

In this section we introduce auxiliary technical tools that will be needed in the proof: a clustering procedure that reduces the radius of the graph, and a duality result concerning almost disjoint paths and
chains of separators between a pair of vertices.

\subsection{Clustering procedure}\label{sec:clustering}

The following result was stated as Theorem~\ref{thm:clust-over} in Section~\ref{sec:overview}.

\begin{theorem}\label{thm:clust}
There exists a randomized polynomial-time algorithm that, given a graph $G$ on $n>1$ vertices and a positive integer $k$, returns a vertex subset $B\subseteq V(G)$ with the following properties:
\begin{enumerate}[(a)]
\item The radius of each connected component of $G[B]$ is less than $9k^2\lg n$.
\item For each vertex subset $X\subseteq V(G)$ of size at most $k$, the probability that $X \subseteq B$ is at least $1-\frac{1}{k}$.
\end{enumerate}
\end{theorem}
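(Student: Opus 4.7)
The plan is a randomized ball-growing procedure that repeatedly carves a connected cluster from the graph. Initialize an auxiliary graph $H := G$, and iterate while $V(H) \neq \emptyset$: pick an arbitrary vertex $v \in V(H)$, sample a radius $r$ from the geometric distribution with success probability $p := 1/(2k^2)$ (starting with $r = 1$ and, in each step, accepting the current value with probability $p$ or incrementing it with probability $1-p$), add all vertices at $H$-distance strictly less than $r$ from $v$ to the output set $B$, and then delete all vertices at $H$-distance at most $r$ from $v$ from $H$. Return $B$ when the loop terminates. For polynomial running time, I would cap the growth of $r$ at the target bound $9k^2 \lg n$; the probability of hitting the cap will be negligible in the analysis.

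For property (a), each connected component of $G[B]$ is contained in a single cluster $\{u \in V(H) : \dist_H(u,v) < r\}$ produced during one iteration, so its radius in $G$ is strictly less than the value of $r$ sampled in that iteration. Since the tail of the geometric distribution satisfies $\mathbb{P}(r \geq T) \leq \exp[-p(T-1)]$, setting $T = 9k^2 \lg n$ and taking a union bound over the at most $n$ iterations shows that the probability of any iteration sampling $r \geq 9k^2 \lg n$ is bounded by $1/(2k)$, with a little slack in the leading constant. For property (b), fix any $x \in V(G)$ and consider the unique iteration in which $x$ is removed from $H$. Let $v$ be the center of that iteration and let $d := \dist_H(v,x)$ be the distance from $v$ to $x$ at the start of the iteration; then $x$ is removed precisely when the sampled $r$ satisfies $r \geq d$, and $x$ enters $B$ iff $r > d$. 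By the memoryless property of the geometric distribution, $\mathbb{P}(r = d \mid r \geq d) = p$, so $\mathbb{P}(x \notin B) \leq p = 1/(2k^2)$, uniformly in $d$ and in the history of earlier iterations. A union bound over the at most $k$ vertices of $X$ gives $\mathbb{P}(X \not\subseteq B) \leq 1/(2k)$, and combined with the failure bound from property (a), the total failure probability is at most $1/k$, as required.

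The step I would sanity-check most carefully is the per-vertex bound: the analysis conditions on the entire history up to the iteration that removes $x$, which fixes $v$, $d$, and the current $H$; conditional on this, the probability that $x$ lands in $B$ is exactly $1 - p$ by memorylessness. This decouples the single-iteration analysis from the global process, so no independence across iterations is needed. Once this uniform bound in $d$ is established, the two union bounds---over iterations for property (a) and over the vertices of $X$ for property (b)---are routine, and the choice $p = 1/(2k^2)$ is precisely what makes both bounds meet at $1/(2k)$.
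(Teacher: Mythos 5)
Your proposal is essentially the paper's proof: the same iterative ball-growing, the same geometric radius with success probability $p=1/(2k^2)$, the same distinction between the ball of radius $<r$ (kept in $B$) and the ball of radius $\leq r$ (deleted from $H$) that both isolates the clusters and drives the memorylessness argument, and the same two union bounds. The one cosmetic difference is the handling of oversized radii --- you cap $r$ at $9k^2\lg n$ whereas the paper returns $B=\emptyset$ whenever some $r_i$ exceeds the threshold; under your cap the assertion that $\mathbb{P}(x\notin B)\leq p$ holds ``uniformly in $d$'' is not quite literal in the boundary case $d = 9k^2\lg n$ (there the conditional probability is $1$, not $p$), but a one-line coupling with the uncapped process, or simply adopting the paper's empty-set convention, repairs this and gives the same $1/(2k)+1/(2k)=1/k$ total.
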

\begin{proof}
Consider the following iterative procedure which constructs a set $B_0$. We start with $V_0 = V(G)$.
In the step $i$, given a set $V_{i-1} \subseteq V(G)$, we terminate the procedure if $V_{i-1} = \emptyset$.
Otherwise we pick an arbitrary vertex $v_i \in V_{i-1}$ and we randomly select a radius $r_i$ according to the geometric distribution with success probability $p := \frac{1}{2k^2}$.
Given $v_i$ and $r_i$, we insert $\distball_{G[V_{i-1}]}(v_i,r_i-1)$ into $B_0$ and define $V_i := V_{i-1} \setminus \distball_{G[V_{i-1}]}(v_i,r_i)$.
That is, we delete from the graph vertices within distance at most $r_i$ from $v_i$ in $G[V_{i-1}]$, and put the vertices within distance \emph{less} than $r_i$ into $B_0$.
Finally, at the end of the procedure, if any of the selected radii $r_i$ is larger than $9k^2 \lg n$, we return $B = \emptyset$, and otherwise we return $B=B_0$.

Clearly, the procedure runs in polynomial time, as at every step at least the vertex $v_i$ is removed from $V_{i-1}$, and hence at most $n$ iterations are executed.
For the radii of the connected components of $G[B_0]$, note that the fact that we insert into $B_0$ vertices within distance \emph{less} than $r_i$ from $v_i$, but delete
from $V_i$ vertices within distance \emph{at most} $r_i$ from $v_i$, ensures that at every step $i$ of the iteration we have $N_G[B_0] \cap V_i = \emptyset$. Consequently,
$\distball_{G[V_{i-1}]}(v_i,r_i-1)$ induces a connected component of $G[B_0]$, and is of radius less than $r_i$. 
Since we return $B = \emptyset$ instead of $B=B_0$ if any of the selected radii $r_i$ exceeds $9k^2 \lg n$, the upper bound on the radii of the connected components of $G[B]$ follows.

It remains to argue that any fixed $k$-vertex subset $X \subseteq V(G)$ survives in $B$ with high probability. 
First, note that for fixed $i$ we have that 
$$\mathbb{P}(r_i > 9k^2 \lg n) \leq \left(1-p\right)^{9k^2 \lg n} \leq e^{-4.5 \lg n} < n^{-3} < \frac{1}{2kn};$$
here, we use the fact that $p=\frac{1}{2k^2}$ and the inequality $1-x\leq e^{-x}$.
Thus, as there are at most $n$ iterations, with probability less than $\frac{1}{2k}$ the algorithm returns $B = \emptyset$ because of some $r_i$ exceeding the limit of $9k^2 \lg n$.

Second, we analyze the probability that $X\subseteq B_0$. Let us fix some $x \in X$.
The only moment where the vertex $x$ could be deleted from the graph, but not put into $B_0$, is when $x$ is within distance exactly $r_i$ from the vertex $v_i$ in an iteration $i$.
It is now useful to think of the choice of $r_i$ in the iteration $i$ as follows: we start with $r_i := 1$ and then, iteratively, with probability $p$ accept the current radius,
and with probability $1-p$ increase the radius $r_i$ by one and repeat.
However, in the aforementioned interpretation of the geometric distribution, when $r_i = \distance_{G[V_{i-1}]}(v_i,x)$, with probability $p$ the radius $r_i$ is accepted
(and $x$ is deleted but not put in $B_0$), but with probability $(1-p)$ the radius $r_i$ is increased, and the vertex $x$ is included in the ball
$\distball_{G[V_{i-1}]}(v_i,r_i-1) \subseteq B_0$. Consequently, the probability that a fixed vertex $x \in X$ is not put into $B_0$ is at most $p$.
By the union bound, we infer that the probability that $X \not\subseteq B_0$ is at most $kp = \frac{1}{2k}$. Together with the $\frac{1}{2k}$ upper bound on the probability
that the maximum radius among $r_i$ exceeds $9k^2\lg n$, which results in putting $B=\emptyset$ instead of $B=B_0$, we have that the probability that $X\not\subseteq B$ is at most $\frac{1}{k}$.
This concludes the proof.
\end{proof}

\subsection{Duality}\label{sec:duality}

We start with a few standard definitions.

\begin{definition}
For a graph $H$ and its vertex $u$, by $\reach(u,H)$ we denote the set of vertices of $H$ reachable from $u$ in $H$.
Suppose $G$ is a connected graph, and $s,t$ are its different vertices.
An {\em{$(s,t)$-separator}} is a subset $C$ of vertices of $G$ such that $s,t \notin C$ and $t\notin \reach(s,G-C)$.
An $(s,t)$-separator is {\em{minimal}} if no its proper subset is also an $(s,t)$-separator.
\end{definition}

\begin{definition}
A sequence $(C_1,C_2,\ldots,C_k)$ of minimal $(s,t)$-separators is called an {\em{$(s,t)$-separator chain}} if all of them are pairwise disjoint and for each $1\leq j<j'\leq k$, the following holds:
$$C_j\subseteq \reach(s,G-C_{j'})\qquad\textrm{and}\qquad C_{j'}\subseteq \reach(t,G-C_j).$$
\end{definition}

We now state and prove the main duality result, that is, Theorem~\ref{thm:duality-over} from Section~\ref{sec:overview}.

\begin{theorem}\label{thm:duality}
There is a polynomial-time algorithm that given
a connected graph $G$, a pair $s,t\in V(G)$ of different vertices, and positive integers $p,q$,
outputs one of the following structures in $G$:
\begin{enumerate}[(a)]
\item An $(s,t)$-separator chain $(C_1,\ldots,C_p)$ with $|C_j|\leq 2q$ for each $j\in [p]$.
\item A sequence $(P_1,\ldots,P_q)$ of $(s,t)$-paths with $|(V(P_i)\cap \bigcup_{i'\neq i} V(P_{i'}))\setminus \{s,t\}|\leq 4p$ for each~$i\in [q]$.
\end{enumerate}
\end{theorem}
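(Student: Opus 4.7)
I propose to reduce Theorem~\ref{thm:duality} to a single min-cost max-flow computation in an auxiliary vertex-split digraph, and then read off either the path family from the optimal primal or the separator chain from an integer optimal dual. Form $G'$ from $G$ by splitting every $v\in V(G)\setminus\{s,t\}$ into $v_{\pin}$ and $v_{\pout}$ joined by two parallel \emph{internal} arcs---a \emph{private} arc of capacity $1$ and cost $0$, and a \emph{public} arc of capacity $2q$ and cost $1$---and replacing every edge $uv\in E(G)$ by the two zero-cost, infinite-capacity arcs $u_{\pout}\to v_{\pin}$ and $v_{\pout}\to u_{\pin}$, identifying $s$ with $s_{\pout}$ and $t$ with $t_{\pin}$. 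Compute an integral minimum-cost flow $f$ of value $2q$ from $s$ to $t$ in polynomial time; integrality is guaranteed by the total unimodularity of the constraint matrix. Let $\Gamma$ be the total cost of $f$, and branch on whether $\Gamma\leq 2pq$.

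If $\Gamma\leq 2pq$, decompose $f$ into $2q$ unit $(s,t)$-flow paths, which project to simple $(s,t)$-paths $P_1,\ldots,P_{2q}$ in $G$. For each $v$ let $k_v$ be the number of paths through $v$; since the private arc has capacity $1$, we have $\Gamma=\sum_{v:\,k_v\geq 1}(k_v-1)$, and $k_v\leq 2(k_v-1)$ whenever $k_v\geq 2$ yields
\[
\sum_{i=1}^{2q}\bigl|(V(P_i)\cap\textstyle\bigcup_{i'\neq i}V(P_{i'}))\setminus\{s,t\}\bigr|
\;=\;\sum_{v:\,k_v\geq 2}k_v
\;\leq\; 2\Gamma\;\leq\; 4pq.
\]
Hence the $(q{+}1)$-st order statistic among the shared-vertex counts $(s_i)_{i=1}^{2q}$ is at most $4p$, so the $q$ paths with the smallest $s_i$ form output~(b); sharing counts measured inside a subfamily can only be smaller than those measured inside the full family.

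If instead $\Gamma>2pq$, the chain comes from an integer optimal dual. Total unimodularity yields integer potentials $\pi\colon V(G')\to\mathbb{Z}$ with $\pi(s)=0$ and integer slacks $\mu_v\geq 0$ on the private arcs satisfying $\pi(v_{\pout})\leq \pi(v_{\pin})+\min(1,\mu_v)$ and $\pi(a)\leq \pi(b)$ across every edge arc $b\to a$, with $2q\pi(t)-\sum_v\mu_v=\Gamma$. A short optimality check shows we may assume $\delta_v:=\pi(v_{\pout})-\pi(v_{\pin})\in\{0,1\}$ and $\mu_v=\delta_v$ everywhere: the public arc makes $\delta_v>1$ strictly suboptimal, since each extra unit of $\delta_v$ costs $2q+1$ in dual slack but gains at most $2q$ in downstream potential, and any $\delta_v<0$ simply wastes potential that contributes to $\pi(t)$. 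Writing $P:=\pi(t)$, this forces $P\geq p+1$ and $\sum_v\delta_v<2q(P-p)$. For $j=0,1,\ldots,P-1$ set $C_j:=\{v\in V(G)\setminus\{s,t\}:\pi(v_{\pin})=j\text{ and }\delta_v=1\}$. The edge-arc inequalities give $|\pi(u_{\pin})-\pi(v_{\pin})|\leq 1$ for every $uv\in E(G)$, so along any $s$-$t$ path the $\pi$-values of consecutive vertices differ by at most $1$; by discrete intermediate value, such a path must contain, for each $j\in\{0,\ldots,P-1\}$, a vertex $v$ with $\pi(v_{\pin})=j$ and $\pi(v_{\pout})=j+1$, i.e.\ a vertex of $C_j$. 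Hence every $C_j$ is an $(s,t)$-separator; the $C_j$ are pairwise disjoint by construction and linearly ordered from $s$ to $t$ via $\pi$. Finally $\sum_j|C_j|=\sum_v\delta_v<2q(P-p)$, so fewer than $P-p$ level cuts have $|C_j|>2q$, leaving at least $p+1$ cuts of size at most $2q$; prune each to a minimal $(s,t)$-separator inside it (which preserves disjointness and the chain order) and keep any $p$ of them to obtain output~(a).

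The step I expect to require the most care is the integral-dual normalization so that $\delta_v\in\{0,1\}$ and $\mu_v=\delta_v$, together with the verification that the resulting level cuts inherit both the separator and the chain properties; the subsequent pigeonhole on sizes and the uncrossing to minimal separators are standard once this is in place, as are the primal flow decomposition and averaging in the low-cost case.
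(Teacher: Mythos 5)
Your approach is substantively the same as the paper's: a vertex-split auxiliary graph with a free unit-capacity channel and a unit-cost high-capacity channel per internal vertex, a min-cost flow of value $2q$, and a dichotomy on whether the cost exceeds $2pq$. The low-cost branch is fine and essentially identical to the paper's averaging argument; your pigeonhole derivation of the size bound (from $\sum_v \delta_v < 2q(P-p)$, fewer than $P-p$ of the $P$ level sets can exceed size $2q$) is a clean alternative to the paper's complementary-slackness proof that every $C_j$ has size at most $2q$, and the final pruning to minimal separators indeed preserves disjointness and the chain order, since shrinking a separator only enlarges reachability sets.

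The genuine gap is the chain property itself, which you assert as ``linearly ordered from $s$ to $t$ via $\pi$'' but do not prove, and which does \emph{not} follow from dual feasibility and the level structure alone. The needed statement is $C_j \subseteq \reach(s, G - C_{j'})$ for $j < j'$, and a vertex with $\pi(v_{\pin}) = j$ can in principle lie ``behind'' $C_{j'}$: the edge-arc inequalities only give a one-sided bound (increases are by at most one), and a feasible $\pi$ can decrease along arcs and thus place a low-level vertex beyond a high-level one. For instance on the path $s$--$a$--$b$--$c$--$d$--$t$, assigning $\pi(a_{\pin}),\dots,\pi(d_{\pin}) = 0,1,2,1$ with $\delta \equiv 1$ is dual-feasible, yet $d \in C_1$ is unreachable from $s$ in $G - C_2 = G - \{c\}$. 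What rescues the argument is dual \emph{optimality}: by complementary slackness every $v$ with $\delta_v = 1$ has $\mu_v > 0$, hence its private arc is saturated and lies on a flow path, and along flow paths the reduced costs are zero so $\pi$ is nondecreasing; the prefix of that flow path up to $v_{\pin}$ then stays at levels $\leq j$ and witnesses $v \in \reach(s, G - C_{j'})$ (and symmetrically for the suffix and $\reach(t, G-C_j)$). Alternatively, one can normalize $\pi$ to be the exact shortest-path distance from $s$ in the cost-plus-slack metric, making levels monotone along shortest paths to every $v_{\pin}$. Either way, the chain property hinges on optimality of $\pi$ rather than mere feasibility, and the write-up omits this step --- which is precisely where the paper invokes complementary slackness on the flow paths.
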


\begin{proof}
Our approach is as follows: we formulate the second output as a min-cost max-flow problem in an auxiliary graph $H$. If the cost of the computed flow
is not too large, a simple averaging argument yields the desired paths $P_i$ from the flow paths. If the cost is large, 
we look at the dual of the min-cost max-flow problem, expressed as a linear program, which is in fact a distance LP. Then we read the separators
$C_i$ as layers of distance from the vertex $s$. Let us now proceed with formal argumentation.

We define a graph $H$ as follows. Starting with $H := G$, we replace every vertex $v \in V(G) \setminus \{s,t\}$ with two copies $v_0$ and $v_1$:
the copy $v_0$ has capacity $1$ and cost $0$, while the copy $v_1$ has infinite capacity and cost $1$. The vertex $s$ is a source of size $2q$ and
cost $0$, and the vertex $t$ is a sink of size $2q$ and cost $0$. The edges of $H$ are defined naturally: every edge $uv$ of $G$ gives rise to up to four
edges in $H$, between the copies of $u$ and the copies of $v$. 

In the graph $H$, we ask for a minimum-cost vertex-capacitated flow from $s$ to $t$ of size $2q$. Clearly, such a flow exists for connected graphs $G$,
as every vertex $v_1$ is of infinite capacity. Since all the costs and capacities are integral or infinite, in polynomial time we can find a minimum-cost
solution that decomposes into $2q$ flow paths $P_1',P_2',\ldots,P_{2q}'$ that carry unit flow each.  Let $C$ be the total cost of this flow.
Every path $P_i'$ induces a walk $P_i$ in $G$: whenever $P_i'$ traverses a vertex $v_0$ or $v_1$, the path $P_i$ traverses the corresponding vertex $v$.
By shortcutting if necessary, we may assume that each $P_i$ is a path.

In this proof, we consider every path $P$ from $s$ to $t$ (either in $G$ or in $H$) as oriented from $s$ towards $t$;
thus, the notions of a predecessor/successor on $P$ or the relation of lying before/after on $P$ are well-defined.

Let us define the cost of a path $P_i$, denoted $c(P_i)$, as the number of internal vertices that $P_i$ shares with other paths. That is,
\begin{eqnarray*}
c(P_i)& := & |(V(P_i) \cap \bigcup_{j \neq i} V(P_j)) \setminus \{s,t\}|.
\end{eqnarray*} 
Observe that, due to the capacity constraints, if a vertex $v \notin \{s,t\}$
lies on $h > 1$ paths $P_{i_1},P_{i_2},\ldots,P_{i_h}$, then all but one of the paths $P_{i_j}'$ have to use the vertex $v_1$, inducing total cost
$h-1$ for the minimum-cost flow. 
Since $h-1\geq h/2$ for $h>1$, it follows that $$\sum_{i=1}^{2q} c(P_i) \leq 2C.$$
We infer that if $C \leq 2pq$, then $\sum_{i=1}^{2q} c(P_i) \leq 4pq$. A standard averaging argument now shows that
for at least $q$ paths $P_i$ we have $c(P_i) \leq 4p$. This yields the second desired output.

Thus we are left with the case when $C > 2pq$. Our goal is to find a separator chain suitable as the first desired output.
To this end, we formulate the discussed minimum-cost flow problem as a linear program, and we analyze its dual.
The precise formulations can be found in Figures~\ref{fig:duality-lp1} and~\ref{fig:duality-lp2}.

\begin{figure}[tb]
\begin{center}
\begin{minipage}{0.9\textwidth}
\begin{align*}
	\quad \min \quad & \sum_{v \in V(G) \setminus \{s,t\}}\, \sum_{a \in N_H(v_1)} f(v_1,a) &\\
	\text{s.t.} \quad & \sum_{b \in N_H(a)} f(a,b) - f(b,a) = 0 & \forall\ a \in V(H) \setminus \{s,t\}\\
	& \sum_{a \in N_H(s)} f(s,a)-f(a,s) = 2q & \\
	& \sum_{a \in N_H(t)} f(t,a)-f(a,t) = -2q & \\
	& \sum_{a \in N_H(v_0)} f(v_0,a) \leq 1 & \forall\ v \in V(G) \setminus \{s,t\} \\
        & f(a,b) \geq 0 & \forall\ ab \in E(H)\\
\end{align*}
\end{minipage}
\end{center}
\caption{The minimum-cost flow problem used in the proof of Theorem~\ref{thm:duality}.
\label{fig:duality-lp1}
In the flow problem, the variables $f(a,b)$ correspond to the amount of flow pushed from $a$ to $b$ along an edge $ab$.}
\end{figure}

\begin{figure}[tb]
\begin{center}
\begin{minipage}{0.9\textwidth}
\begin{align*}
	\quad \max \quad & 2q(y_t-y_s) - \sum_{v \in V(G) \setminus \{s,t\}} z_v &\\
	\text{s.t.} \quad & y_{v_0} \leq y_a + z_v & \forall\  v \in V(G) \setminus \{s,t\}, a \in N_H(v_0) \\
        & y_{v_1} \leq y_a + 1 & \forall\ v \in V(G) \setminus \{s,t\}, a \in N_H(v_1) \\
        & y_s \leq y_a & \forall\ a \in N_H(s) \\
        & y_t \leq y_a & \forall\ a \in N_H(t) \\
        & z_v \geq 0 & \forall\ v \in V(G) \setminus \{s,t\}\\
\end{align*}
\end{minipage}
\end{center}
\caption{The dual of the minimum-cost flow problem from Figure~\ref{fig:duality-lp1}, used in the proof of Theorem~\ref{thm:duality}.\label{fig:duality-lp2}}
\end{figure}

In the dual formulation, the value $y_a-y_s$ can be interpreted as a distance of $a$ from $s$, where 
traveling through a vertex $v_1$ costs $1$ and traveling through a vertex $v_0$ costs $z_v$.
The goal is to maximize the distance from $s$ to $t$ with weight $2q$,
while paying as little as possible in the sum $\sum_{v \in V(G) \setminus \{s,t\}} z_v$.

Let $\{z_v\colon v\in V(G);\ y_a\colon a\in V(H)\}$ be an optimum solution to the dual LP. Since the primal program is a minimum-cost flow
problem with integral coefficients, in polynomial time we can find such values $z_v$ and $y_a$ that are additionally integral.
Observe that the dual is invariant under adding a constant to every variable $y_a$; hence, we can assume $y_s = 0$.
Since traveling through a vertex $v_1$ incurs distance $1$, and $v_0$ is a twin of $v_1$, the optimum solution
never uses values $z_v$ greater than $1$; hence, $z_v \in \{0,1\}$ for every $v \in V(G) \setminus \{s,t\}$.

On one hand, $C$, as the optimum value of both the primal and the dual LP, is assumed to be larger than $2pq$.
On the other hand, we have that $z_v \geq 0$ for every $v \in V(G) \setminus \{s,t\}$.
We infer that
$$2q(y_t - y_s) \geq C > 2pq,$$
and hence $y_t > p$. We define for every $1 \leq j \leq p$ the set
$$C_j = \{v \in V(G): z_v = 1 \wedge y_{v_0} = j\}.$$
We claim that $C_1,C_2,\ldots,C_p$ is the desired separator chain.
Clearly, the sets are pairwise disjoint and do not contain neither $s$ nor $t$.
We now show that they form a separator chain.

\begin{claim}
For each $1\leq j\leq p$, the set $C_j$ is an $(s,t)$-separator.
\end{claim}
\begin{proof}
Consider a path $P$ from $s$ to $t$.
Let $P'$ be the corresponding path in $H$ that traverses a vertex $v_0$ whenever $v \in V(G) \setminus \{s,t\}$ lies on $P$.
Since $z_v \in \{0,1\}$ for every $v \in V(G) \setminus \{s,t\}$, we have $y_b \leq y_a+1$ for every $ab \in E(H)$.
As $y_t > p$, there exists a vertex $b$ on $P'$ with $y_b = j$; let $b$ be the first such vertex and let $a$ be its predecessor on $P'$.
Note that $b \neq s$ as $y_s=0$ and $b \neq t$ as $y_t > p$, hence $b = v_0$ for some $v \in V(G) \setminus \{s,t\}$.
Since $b$ is the first vertex on $P'$ with $y_b = j$, we have $y_a = j-1$, and, consequently, $z_v = 1$.
Thus $v \in C_j$. Since the choice of $P$ is arbitrary, $C_j$ is an $(s,t)$-separator, as desired.
\cqed\end{proof}

Consider a path $P_i'$. By complementary slackness conditions, whenever the path $P_i'$ traverses an edge $ab \in E(H)$ from $a$ to $b$, the corresponding
distance inequality of the dual LP is tight: $y_b = y_a + 1$ if $b = v_1$ for some $v$, $y_b = y_a + z_v$ if $b = v_0$, and
$y_b = y_a$ if $b = s$ or $b=t$. Thus, $P_i'$ is a shortest path from $s$ to $t$ in the graph $H$ with vertex weights $0$ for $s,t$, $z_v$ for every $v_0$
and $1$ for every $v_1$. In particular, always $y_a \leq y_b \leq y_a+1$ for $b$ being a successor of $a$ on $P_i'$.

\begin{claim}
For each $1\leq j\leq p$ and each $1\leq i\leq 2q$, we have that $|V(P_i) \cap V(C_j)| = 1$.
\end{claim}
\begin{proof}
We first prove that the cardinality of this intersection is at most $1$.
Assume $b \in P_i'$ such that $b = v_0$ or $b = v_1$ for some $v \in C_j$ and let $a$ be the predecessor of $b$ on $P_i'$.
Since $z_v = 1$, we have $y_b = y_a + 1 = j$, that is, $b$ must be the first vertex on $P_i'$
with $y_b = j$. Consequently, there exists at most one vertex $b$ on $P_i'$ that projects to a vertex of $C_j$ in $G$, which proves that $|V(P_i) \cap V(C_j)| \leq 1$.

To prove the converse inequality, we show that the projection onto $G$ of the first vertex $b$ on $P_i'$ such that $y_b = j$
belongs to $C_j$. First, observe that such a vertex $b$ exists since $y_s = 0$, $y_t > p$, and $y_b \leq y_a + 1$ for every $ab \in E(H)$.
Let $b = v_0$ or $b=v_1$ for some $v \in V(G) \setminus \{s,t\}$; we claim that $z_v = 1$, $y_{v_0} = j$, and hence $v \in C_j$.
If $b = v_0$, the claim is immediate as $y_b = y_a + 1$ for the predecessor $a$ of $b$ on $P_i'$.
By contradiction, let us assume $b=v_1$ but $z_v = 0$.
Then by replacing $b=v_1$ with $v_0$ on $P_i'$ we obtain a shorter path from $s$ to $t$ in $H$, contradiction to the fact
that $P_i'$ is a shortest path from $s$ to $t$.
Thus $z_v = 1$ and, consequently, $y_{v_0} = y_{v_1} = y_a+1$, so $v$ belongs to $C_j$, as claimed.
\cqed\end{proof}

Let us denote the unique vertex of $V(P_i) \cap V(C_j)$ as $w_{i,j}$.
Note that $w_{i,j}$ and $w_{i',j}$ may coincide for different indices $i,i'$.
By the complementary slackness conditions again, if $z_v = 1$ for a vertex $v \in V(G) \setminus \{s,t\}$, then there exists a flow path $P_i'$ that passes through
$v_0$. Consequently, for every $1 \leq j \leq p$ and every $v \in C_j$, there exists a flow path $P_i'$ that passes through $v_0$. It follows that
$v = w_{i,j}$, and thus $C_j = \{w_{i,j} : 1 \leq i \leq 2q\}$. In particular, we have $|C_j| \leq 2q$ for every $1 \leq j \leq p$. 
Moreover, since each vertex of $C_j$ lies on some path $P_i$, and is the unique vertex of $V(P_i) \cap V(C_j)$, we infer the $C_j$ is a minimal separator.

We are left with verifying the inclusion of reachability sets.
Since $P_i'$ is a shortest path from $s$ to $t$ in $H$, we have that the vertex $w_{i,j}$ lies before the vertex $w_{i,j'}$ on the path $P_i$, whenever $j < j'$. 
As the separators $C_j$ and $C_{j'}$ respectively consist only of vertices $w_{i,j}$ and $w_{i,j'}$, already the reachability within paths $P_i$ certifies that
that $C_j\subseteq \reach(s,G-C_{j'})$ and $C_{j'}\subseteq \reach(t,G-C_j)$ for every $1 \leq j < j' \leq p$, as requested.
This concludes the proof of the theorem.
\end{proof}

\section{Proof of the main result}\label{sec:proof}

In this section we give a formal proof of our main result, i.e. Theorem~\ref{thm:maintheorem}. 
Throughout the proof we assume that $k\geq \max(10,2^{\ctw})$, because otherwise the result is trivial:
as $k$ is bounded by a constant, we can just sample $k$ vertices of the graph uniformly and independently at random, and return them as $A$. Hence, we may assume that $k\geq 10$ and $\lg k\geq \ctw$.

\subsection{Recursive scheme and potentials}\label{ss:proof:recur}

Let $G_0\in \Cc$ be the input graph, and let $k$ be the requested upper bound on the sizes of patterns $X$ that we need to cover.
The algorithm constructs the set $A$ by means of a recursive procedure that roughly partitions the graph into smaller and smaller pieces, at each point making some random decisions.
For the analysis, we fix some pattern $X$, that is, a subset $X$ of vertices such that $G[X]$ is connected and $|X|\leq k$.
Recall that our goal is to construct $A$ in such a manner that the probability that $X$ is covered by $A$ is at least the inverse of $2^{\Oh(\sqrt{k}\log^2 k)}\cdot n^{\Oh(1)}$.
The steps taken by the algorithm obviously will not depend on $X$, 
but at each random step we argue about the {\em{success probability}}: the probability that the taken decision is
compliant with the target pattern $X$, i.e., leads to its coverage.

As usual with recursive algorithms, we need to consider a more general problem, which will be supplied with a few potential measures.
Formally, an instance of the general problem is a tuple consisting of:
\begin{enumerate}[(i)]
\item A connected graph $G$ that is a minor of the original graph $G_0$; hence in particular $G\in \Cc$.
\item A specified vertex $r\in V(G)$ called the {\em{root}}.
\item Two disjoint vertex subsets $\light,\heavy\subseteq V(G)$, called {\em{light terminals}} and {\em{heavy terminals}}, respectively. 
We require that the root vertex is a light terminal, that is, $r\in \light$. By $\trms:=\light\cup \heavy$ we will denote the sets of {\em{terminals}}.
\item A subset of non-terminals $\ghost\subseteq V(G)\setminus \trms$, called {\em{ghost vertices}}.
\item A nonnegative integer $\lambda$, called the {\em{credit}}.
\end{enumerate}
Intuitively, terminals represent the boundary via which the currently considered piece communicates with the rest of the original graph, whereas
ghost vertices represent maximal connected parts of the original graph lying outside of the currently considered piece, each contracted to one vertex.
The reader can think of ghost vertices as emulation of hyperedges on their neighborhoods, or rather as emulation of the torso operation that would turn their neighborhoods into cliques.
This torso operation cannot be performed directly because of the necessity of staying within class $\Cc$, and therefore we resort to introducing ghost vertices.

In the course of the algorithm, we shall maintain the following invariants:
\begin{enumerate}[(a)]
\item\label{inv:dist} Every light terminal is at distance at most $3$ from the root.
\item\label{inv:terminals} It holds that $|\trms|\leq 16014\ctw\sqrt{k}\lg k+\lambda$.
\end{enumerate}
We say that a subset $X\subseteq V(G)\setminus \ghost$ is a {\em{pattern}} in instance $\Ii$ if 
\begin{enumerate}[(i)]
\item The root $r$ is contained in $X$;
\item Every vertex of $X$ can be reached from $r$ by a path that traverses only vertices of $X\cup \ghost$.
\item $|X|\leq k-10\sqrt{k}\cdot \lambda$.
\end{enumerate}
In particular, every pattern has at most $k$ vertices, but we may consider only smaller patterns if the credit is positive.
Note that the ghost vertices provide free connectivity for the pattern.

For a graph $G$ equipped with ghost vertices $\ghost$, by $\torso{G}{\ghost}$ we define the graph obtained from $G$ by {\em{eliminating}} each ghost vertex, that is, 
removing it and turning its neighborhood into a clique.
Note that $\torso{G}{\ghost}$ does not necessarily belong to $\Cc$. 
For two vertices $x,y$ in $G$, we define the {\em{distance function}} $\dist_G(x,y)$ as the distance between $x$ and $y$ as the minimum possible number of non-ghost vertices
on a path connecting $x$ and $y$, minus~$1$.
In other words, when a graph is equipped with ghost vertices, non-ghost vertices have cost~$1$ of traversing them, whereas ghost vertices have cost~$0$.
Note that if $x$ and $y$ are non-ghost vertices, then $\dist_G(x,y)$ is equal to the (normal) distance between $x$ and $y$ in $\torso{G}{\ghost}$.

For an instance $\Ii$, we define the subset $\Far_\Ii(X)$ of {\em{far}} vertices as follows:
\begin{equation*}
\Far_\Ii(X):=\{u\in X\ \colon\ \dist_G(u,r_0)>1000\sqrt{k}\lg k\}.
\end{equation*}
If a vertex of $X$ is not far, it is said to be {\em{close}}. 
Obviously, by invariant~\eqref{inv:dist} we have that no far vertex is a light terminal, that is, $\Far_\Ii(X)\cap \light=\emptyset$.
For a pattern $X$, we define the following potentials.

\smallskip
\begin{center}
\begin{tabular}{ p{5cm} p{7cm} }
{\em{Pattern potential}} & $\patsiz_\Ii(X):=|X\setminus \light|$\\[0.2cm]
{\em{Graph potential}} & $\grasiz_\Ii:=|V(G)\setminus (\light\cup \ghost)|$\\[0.2cm]
{\em{Distance potential}} & $\dstpot_\Ii(X):=|\Far_\Ii(X)|$ 
\end{tabular}
\end{center}

\smallskip

\noindent We drop the subscript $\Ii$ whenever the instance $\Ii$ is clear from the context.

Our goal in the general problem is to compute a subset of non-ghost vertices $A\subseteq V(G)\setminus \ghost$ with the following properties:
\begin{enumerate}[(1)]
\item It holds that $\light\subseteq A$, and the graph $G[A]$ admits a tree decomposition of width at most $24022\ctw\sqrt{k}\lg k$, where $\trms\cap A$ is contained in the root bag.
\item For every pattern $X$ in instance $\Ii$, we require that
\begin{equation}\label{eq:main-prob}
\mathbb{P}(X\subseteq A)\geq \exp\left[-c_1\cdot\frac{\lg k+\lg\lg n}{\sqrt{k}}\cdot (\patsiz(X)\lg\patsiz(X)+\dstpot(X))\right]\cdot \left(1-\frac{1}{k}\right)^{c_2\patsiz(X)\lg \grasiz},
\end{equation}
for some constants $c_1,c_2$, where $n$ is the total number of vertices of the graph.
\end{enumerate}
The constants $c_1,c_2$ will be fixed while explaining the proof; actually, we will fix $c_1=2$ and $c_2=2$.
For convenience, by $$\Monster(n,\patsiz(X),\grasiz,\dstpot(X))$$ we denote the right-hand side of~\eqref{eq:main-prob}, regarded as a function of the potentials and the number of vertices $n$.
Note that both $X$ and $A$ reside in the graph with the ghost vertices removed. 
The intuition is that ghost vertices do not belong to the piece of the graph that we are currently decomposing, but we cannot forget about them completely because 
they provide connectivity for the pattern.

\paragraph*{Applying the general problem.}
We now argue that an algorithm for the problem stated above implies the result claimed in Theorem~\ref{thm:maintheorem}. For this, we need the following simple claim.

\begin{claim}\label{cl:loglog}
The following holds:
$$2^{\sqrt{k}\lg k\lg\lg n}\leq 2^{\sqrt{k}\lg^2 k}\cdot n^{o(1)}$$
\end{claim}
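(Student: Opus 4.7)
The plan is to split into two cases based on the relative sizes of $\lg k$ and $\lg\lg n$, and in each case bound one factor trivially.

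First, I would consider the case $\lg\lg n \leq \lg k$. Here we directly estimate
\[
\sqrt{k}\lg k \lg\lg n \;\leq\; \sqrt{k}\lg k \cdot \lg k \;=\; \sqrt{k}\lg^2 k,
\]
so $2^{\sqrt{k}\lg k \lg\lg n} \leq 2^{\sqrt{k}\lg^2 k} \leq 2^{\sqrt{k}\lg^2 k} \cdot n^{o(1)}$, which is what we want. This case is immediate.

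The complementary case $\lg\lg n > \lg k$ is the one where the bound $n^{o(1)}$ must absorb the exponent. From $\lg\lg n > \lg k$ we get $k < \lg n$, hence $\sqrt{k} < \sqrt{\lg n}$ and $\lg k < \lg\lg n$. Thus
\[
\sqrt{k}\lg k \lg\lg n \;\leq\; \sqrt{\lg n} \cdot (\lg\lg n)^2.
\]
Dividing by $\lg n$ gives $(\lg\lg n)^2 / \sqrt{\lg n}$, which tends to $0$ as $n\to\infty$, so $\sqrt{k}\lg k \lg\lg n = o(\lg n)$. Consequently
\[
2^{\sqrt{k}\lg k \lg\lg n} \;\leq\; 2^{o(\lg n)} \;=\; n^{o(1)} \;\leq\; 2^{\sqrt{k}\lg^2 k}\cdot n^{o(1)},
\]
as required. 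Combining the two cases completes the argument.

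I do not anticipate a real obstacle here; the only subtlety is making sure the $o(\lg n)$ estimate in the second case is justified, which follows from the standard fact that any polylogarithm of $\lg n$ is dominated by $\sqrt{\lg n}$ as $n \to \infty$. No monotonicity or additional properties of the functions involved are needed.
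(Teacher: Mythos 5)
Your proof is correct and is essentially identical to the paper's argument: the paper splits on $n \leq 2^k$ versus $n > 2^k$, which is exactly your split $\lg\lg n \leq \lg k$ versus $\lg\lg n > \lg k$, and in each branch applies the same trivial bound (replacing $\lg\lg n$ by $\lg k$ in the first, and $\sqrt{k}$ by $\sqrt{\lg n}$ and $\lg k$ by $\lg\lg n$ in the second). The only cosmetic difference is that the paper rewrites the left-hand side as $(\lg n)^{\sqrt{k}\lg k}$ before bounding, whereas you work directly in the exponent.
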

\begin{proof}
The left hand size is equal to $(\lg n)^{\sqrt{k}\lg k}$. Suppose first that $n\leq 2^k$. Then
$$(\lg n)^{\sqrt{k}\lg k}\leq k^{\sqrt{k}\lg k}=2^{\sqrt{k}\lg^2 k}.$$
Suppose second that $n>2^k$. Then
$$(\lg n)^{\sqrt{k}\lg k}\leq (\lg n)^{\sqrt{\lg n}\lg\lg n}=2^{\sqrt{\lg n}\cdot (\lg \lg n)^2}=n^{o(1)}.$$
Hence in both cases we are done.
\cqed\end{proof}

In order to obtain an algorithm as stated in Theorem~\ref{thm:maintheorem}, we can sample one vertex $r$ of the input graph $G_0$, define $G$ to be the connected component of $G_0$ that contains $r$,
and apply the algorithm for the general problem to the instance $\Ii_0:=(G,r,\{r\},\emptyset,\emptyset,0)$. Fix some subset $X\subseteq V(G_0)$ with $G_0[X]$ being connected and $|X|\leq k$. 
Conditioned on the event that $r\in X$, which happens with probability at least $1/|V(G_0)|$, the algorithm for the general problem returns a suitable subset $A$ that covers $X$
with probability lower bounded by $\Monster(n,\patsiz_{\Ii_0}(X),\grasiz_{\Ii_0},\dstpot_{\Ii_0}(X))$, where $n=|V(G)|$. Observe that $\patsiz_{\Ii_0}(X)\leq k$, $\dstpot_{\Ii_0}(X)\leq k$ and $\grasiz_{\Ii_0}\leq n$. 
Thus, we have that
\begin{eqnarray*}
\mathbb{P}(X\subseteq A) & \geq & \exp\left[-c_1\sqrt{k}\lg k(\lg k+\lg \lg n)\right]\cdot \left(1-\frac{1}{k}\right)^{c_2\cdot k\lg n}\\
& \geq & \exp\left[-c_1\sqrt{k}\lg^2 k-c_1\sqrt{k}\lg k\lg\lg n - \frac{2}{k}\cdot c_2k\lg n\right]\\
& \geq & \exp\left[-2c_1\sqrt{k}\lg^2 k-o(\lg n) - 2c_2\lg n\right]\\
& = & \left[2^{\Oh(\sqrt{k} \log^2 k)}\cdot n^{\Oh(1)}\right]^{-1}.
\end{eqnarray*}
Here, the second inequality follows from the fact that $1-x\geq e^{-2x}$ for $x\in [0,1/2]$, which can be checked by simple differentiation.
The third inequality follows from Claim~\ref{cl:loglog}.
Together with the $1/n$ probability that indeed $r\in X$,
we obtain the success probability as required in Theorem~\ref{thm:maintheorem}.

\subsection{Solving the general problem}

We now proceed to the description of the recursive procedure for the general problem. 
Throughout the description we fix the considered instance $\Ii_0=(G,r,\light,\heavy,\ghost,0)$ and denote $n=|V(G)|$.
We assume that $\Ii_0$ satisfies the invariants stated in the previous section.
Moreover, we will assume the following:
\begin{enumerate}[(a)]
\item\label{inv:ghost} Ghost vertices are pairwise non-adjacent and no ghost vertex is adjacent to the root.
\item\label{inv:credit} It holds that $\lambda\leq \sqrt{k}/10$.
\item\label{inv:base} There is at least one vertex that is neither a terminal nor a ghost vertex.
\item\label{inv:pat} We restrict our attention only to patterns $X$ for which it holds that $\patsiz_{\Ii_0}(X)>0$.
\end{enumerate}
Note that assumption~\eqref{inv:credit}, together with the invariant about the number of terminals, in particular implies that $|\trms|\leq 16015\ctw\sqrt{k}\lg k$.
We now explain how the assumptions above can be guaranteed.

For~\eqref{inv:ghost}, observe that whenever two ghost vertices are adjacent,
we can contract the edge connecting them without changing the family of patterns in the instance, or any of its potentials. 
The same happens when a ghost vertex is adjacent to the root: we can contract it onto the root.
Hence, whenever~\eqref{inv:ghost} is not satisfied,
we can just apply these operations exhaustively, and work further on the instance obtained in this manner.

For~\eqref{inv:credit}, observe that if the credit exceeds $\sqrt{k}/10$ in any considered instance, then the upper bound on the sizes of considered patterns becomes negative,
and the instance of the general problem may be solved trivially by returning $A=\trms$ with a trivial tree decomposition consisting of one root node with bag equal to $A$.
This decomposition has width at most $|\trms|\leq 16014\ctw\sqrt{k}\lg k+\lambda$; however, we so far do not have any concrete upper bound on $\lambda$. 
Nevertheless, it will be the case throughout the algorithm that we will apply the algorithm only to instances with credit at most $\sqrt{k}/5$.
Hence, it is the case that this decomposition has width at most $16014\ctw\sqrt{k}\lg k+\sqrt{k}/5<24022\ctw\sqrt{k}\lg k$.

For~\eqref{inv:base}, observe that in this case we may again just output $A=\trms$ with a trivial tree decomposition consisting of one root node with bag equal to $A$.
Again, this decomposition has width at most $|\trms|\leq 16014\ctw\sqrt{k}\lg k+\lambda<24022\ctw\sqrt{k}\lg k$. This provides the base case for our recursion.

Finally, for~\eqref{inv:pat}, if $\patsiz_{\Ii_0}(X)=0$, then every vertex of $X$ is a light terminal, so in particular $X\subseteq \light$. 
Hence, whatever choice the algorithm does, as long as it outputs a vertex subset $A$ that contains $\light$, we will have that $X\subseteq A$ with probability $1$.
Therefore, we can focus only on the case when $\patsiz_{\Ii_0}(X)$ is positive.

Having these assumptions in place, let us define
$$M:=\{u\in V(G)\colon \dist_G(r,u)\leq 2000\sqrt{k}\lg k\}.$$
Obviously, $M$ can be computed in linear time using a breadth-first search from $r$ (here we need a trivial modification to traverse ghost vertices at cost $0$).
Every connected component of $G-M$ will be called an {\em{island}}.
Note that, by the definition of $M$, every vertex of an island that neighbors some vertex of $M$ cannot be a ghost vertex.
Hence, in particular every island contains some non-ghost vertex.

The first step of the algorithm is to apply the clustering procedure of Theorem~\ref{thm:clust} for parameter $k$ to all the islands.
More precisely, we apply the algorithm of Theorem~\ref{thm:clust} to the graph $K:=\torso{(G-M)}{\ghost\setminus M}$, that is, to $G-M$ with all the ghost vertices eliminated.
This algorithm works in randomized polynomial time, and returns a subset $B'\subseteq V(K)$ with the following properties:
\begin{itemize}
\item each connected component of $K[B']$ has radius at most $9k^2\lg n$, and 
\item with probability at least $1-1/k$ we have that $X\setminus M$ is contained in $B'$.
\end{itemize}
Now define $B$ to be $B'$ extended by adding all ghost vertices of $\ghost\setminus M$ that have at least one neighbor in $B'$. 
Then each connected component of $G[B]$ has radius at most $9k^2\lg n$, computed according to the distance measure $\dist_G$ that treats ghost vertices as traversed for free.
Moreover, with probability at least $1-1/k$ we have that $X\setminus M\subseteq B$.
We henceforth assume that this event happens, i.e., indeed $X\subseteq M\cup B$, keeping in mind the multiplicative factor of $1-1/k$ in the success probability.

Thus, we can restrict ourselves to the graph $G'$ defined as the connected component of $G[M\cup B]$ that contains $r$; 
indeed, since each vertex of $X$ can be reached from $r$ using only ghost vertices and other vertices of $X$, by the construction of $B$ it must be entirely
contained in this connected component. Note that $G[M]$ is connected by definition, so $M\subseteq V(G')$.
Observe that as we remove only vertices at distance larger than $2000\sqrt{k}\lg k$ from $r$ in $G$, every vertex of $X$ that was close or far, remains close or far respectively.
These observations are formalized in the following claim, whose proof follows as explained above.
\begin{claim}
Consider a new instance $$\Ii=(G',r,\light\cap V(G'),\heavy\cap V(G'),\ghost\cap V(G'),\lambda).$$ Provided $X\subseteq M\cup B$, $X$ remains a pattern in $\Ii_0$ and $\Far_{\Ii}(X)=\Far_{\Ii_0}(X)$.
\end{claim}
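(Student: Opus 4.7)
The plan is to verify the three defining conditions of a pattern in the new instance $\Ii$, and then argue that the far/close classification is preserved. Two of the pattern conditions are essentially free: the size bound $|X|\le k-10\sqrt{k}\,\lambda$ and the requirement $r\in X$ carry over verbatim from $\Ii_0$ because neither $X$, $r$, nor $\lambda$ change. The real content lies in (a) showing that $X\subseteq V(G')$ together with internal connectivity, i.e.\ for every $u\in X$ there is a path in $G'$ from $r$ to $u$ using only vertices of $X\cup (\ghost\cap V(G'))$; and in (b) showing that $\dist_{G'}(r,u)=\dist_G(r,u)$ for every $u\in X$, which immediately yields $\Far_\Ii(X)=\Far_{\Ii_0}(X)$.

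For (a), I would fix $u\in X$ and take a path $P$ in $G$ from $r$ to $u$ using only vertices of $X\cup \ghost$; such $P$ exists because $X$ is a pattern in $\Ii_0$. The aim is to show $V(P)\subseteq M\cup B$: then $P$ lies in $G[M\cup B]$, and since it starts at $r$ it lies entirely in the component $G'$. Every non-ghost vertex of $P$ belongs to $X\subseteq M\cup B$ by assumption. For a ghost vertex $g\in V(P)$, if $g\in M$ we are done. Otherwise $g\in \ghost\setminus M$, and since ghost vertices are pairwise non-adjacent (invariant~\ref{inv:ghost}), every neighbor of $g$ on $P$ is non-ghost, hence lies in $X$. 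No such neighbor $v$ may lie in $M$: any shortest $r$-$v$ path extended by the edge $vg$ is a path with the same count of non-ghost vertices (since $g$ is ghost), witnessing $\dist_G(r,g)\le \dist_G(r,v)\le 2000\sqrt{k}\lg k$, which would force $g\in M$ and contradict the case assumption. Thus every neighbor of $g$ on $P$ belongs to $X\setminus M\subseteq B'$ on the event on which we conditioned, so $g\in B$ by the very construction that adjoins to $B'$ all vertices of $\ghost\setminus M$ with a neighbor in $B'$.

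For (b), the inequality $\dist_{G'}(r,u)\ge \dist_G(r,u)$ is immediate from $V(G')\subseteq V(G)$, and already suffices to lift ``far in $\Ii_0$'' to ``far in $\Ii$''. For the reverse inequality I would consider $u$ with $\dist_G(r,u)\le 1000\sqrt{k}\lg k$ and take a shortest $r$-$u$ path $Q$ in $G$. Every non-ghost $w$ on $Q$ satisfies $\dist_G(r,w)\le \dist_G(r,u)\le 1000\sqrt{k}\lg k<2000\sqrt{k}\lg k$, so $w\in M$; and every ghost $g$ on $Q$ is adjacent to such a non-ghost $w$ on $Q$, so the same inequality $\dist_G(r,g)\le \dist_G(r,w)$ used in step~(a) yields $g\in M$. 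Therefore $V(Q)\subseteq M\subseteq V(G')$, so $Q$ survives in $G'$ and $\dist_{G'}(r,u)\le \dist_G(r,u)$, meaning $u$ is close in $\Ii$ as well. Combining both directions gives $\Far_\Ii(X)=\Far_{\Ii_0}(X)$.

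The only slightly delicate point is the self-consistency of $M$ under the ghost-aware distance: because ghosts are pairwise non-adjacent and contribute $0$ to $\dist_G$, any ghost adjacent to a vertex of $M$ is itself in $M$. Once this is pinned down, both the connecting paths that certify $X$ is a pattern and the shortest paths that certify closeness transport automatically from $G$ into $G'$, and the claim follows with essentially no calculation.
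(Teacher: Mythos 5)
Your proof is correct and takes essentially the same approach the paper sketches in the two sentences immediately preceding the claim, which the paper then marks as proved ``as explained above.'' You merely supply the details the paper elides, in particular the careful treatment of ghost vertices (that a ghost adjacent to a vertex of $M$ is itself in $M$, and that a ghost on $P$ outside $M$ must have a neighbor in $X\setminus M\subseteq B'$), which is exactly the right gap to fill.
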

From now on we focus on the instance $\Ii$. Observe that, by the claim above, the following holds:
\begin{eqnarray*}
\patsiz_{\Ii_0}(X)=\patsiz_{\Ii}(X),\qquad \grasiz_{\Ii_0}\geq \grasiz_{\Ii},\qquad \dstpot_{\Ii_0}(X)=\dstpot_{\Ii}(X),
\end{eqnarray*}
so this restriction can only lower the potentials and make a better lower bound on the success probability.
By slightly abusing the notation, we redefine the islands to be the connected components of $G'-M$.
Also, if in $\Ii$ there is no non-ghost, non-terminal vertex, then we conclude by outputting $A=\trms$. 
Hence from now on we assume that $\Ii$ has a non-ghost, non-terminal vertex, which in particular implies that $\grasiz_\Ii>0$.

Construct an auxiliary graph $H$ from $G'$ by contracting each island $C\in \cc(G'-M)$ into a single (non-ghost) vertex $u_C$; let $I=\{u_C\colon C\in \cc(G'-M)\}$.
By $\iota\colon V(G')\to V(H)$ we denote the function that assigns to each vertex of $G'$ its image under the contraction; i.e., $\iota$ is identity on $M$ and $\iota(V(C))=\{u_C\}$ for each island $C$.
Obviously $H$ is a minor of $G'$, which in turn is a subgraph of $G$, and hence $H\in \Cc$. 
Moreover, each vertex of $H$ is at distance at most $2000\sqrt{k}\lg k+1\leq 2001\sqrt{k}\lg k$ from $r$.
Hence, if we measure the distance in $H$ normally, without eliminating the ghost vertices, then each vertex is at distance at most $4002\sqrt{k}\lg k$; this follows from the invariant that no two ghost vertices
are adjacent. From Proposition~\ref{prop:local-tw} we infer that the treewidth of $H$ is at most $4002\ctw\cdot \sqrt{k}\lg k$.

Define two weight functions $\mathbf{w}_1(u)$ and $\mathbf{w}_2(u)$ on $V(H)$ as follows.
For a vertex $u\notin I$, we put $\mathbf{w}_1(u)=1$ if $u\in \trms$ and $\mathbf{w}_1(u)=0$ otherwise.
However, if $u=u_C$ for some island $C$, then we put $\mathbf{w}_1(u_C)=|V(C)\cap \trms|$.
Similarly, for $u\notin I$, we put $\mathbf{w}_2(u)=1$ if $u\in V(G)\setminus (\light\cup \ghost)$ and $\mathbf{w}_2(u)=0$ otherwise,
whereas for $u=u_C$, we put $\mathbf{w}_2(u)=|V(C)\setminus (\light\cup \ghost)|$.
In other words, $\mathbf{w}_1(u)$ and $\mathbf{w}_2(u)$ are characteristic functions of $\trms$ and of $V(G)\setminus (\light\cup \ghost)$, 
where all vertices contained in one island $C$ contribute to the weight of the collapsed vertex $u_C$.

By applying Lemma~\ref{lem:balsep} to these weight functions, we infer that there exist sets $Z_1$ and $Z_2$, each of size at most $4002\ctw\cdot \sqrt{k}\lg k+1\leq 4003\ctw\cdot\sqrt{k}\lg k$,
such that each connected component of $H-Z_1$ has $\mathbf{w}_1$-weight at most $|\trms|/2$, and every connected component of $H-Z_2$ has $\mathbf{w}_2(u)$-weight at most
$|V(G)\setminus (\light\cup \ghost)|/2$. We define 
\begin{eqnarray*}
Z & := & Z_1\cup Z_2\cup \{r\}\ ;\\
W & := & \iota^{-1}(Z).
\end{eqnarray*}
Clearly, $|Z|\leq 8006\ctw \sqrt{k}\lg k+1\leq 8007\ctw\sqrt{k}\lg k$.
Also, from the definition of weight functions $\mathbf{w}_1$ and $\mathbf{w}_2$, and the properties of $Z_1$ and $Z_2$, we immediately obtain the following.

\begin{claim}\label{cl:cc-after-rem}
For each connected component $D$ of $G'-W$, we have 
$$|V(D)\setminus (\light\cup \ghost)|\leq |V(G)\setminus (\light\cup \ghost)|/2\qquad \textrm{and}\qquad |V(D)\cap \trms|\leq |\trms|/2.$$
\end{claim}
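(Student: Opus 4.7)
The plan is to unpack the definitions of $W$, $Z$, and $\iota$ to transfer the balanced-separator properties of $Z_1$ and $Z_2$ from $H$ back to $G'$. The key technical point is that contracting islands is fully reversible with respect to the removal $W = \iota^{-1}(Z)$: since $\iota$ is identity on $M$ and collapses each island $C$ to $u_C$, we have that either $u_C \in Z$ (in which case $V(C) \subseteq W$) or $u_C \notin Z$ (in which case $V(C) \cap W = \emptyset$). So every island is either entirely removed or entirely preserved, and no island gets ``split'' by the cut.

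Given this, I would first establish that $\iota$ induces a bijection between the connected components of $G' - W$ and the connected components of $H - Z$. In one direction, if $D$ is a connected component of $G' - W$, then $\iota(V(D))$ is a connected subset of $H - Z$, because adjacency is preserved by contraction and $\iota(V(D)) \cap Z = \emptyset$ by the choice of $W$. In the other direction, given a connected component $D_H$ of $H - Z$, the preimage $\iota^{-1}(V(D_H))$ is connected in $G' - W$: edges in $D_H$ between margin vertices lift directly, edges incident to some $u_C \in V(D_H)$ lift to edges to $V(C)$, and each island $C$ with $u_C \in V(D_H)$ is internally connected.

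Second, for a pair $(D, D_H)$ in this correspondence, I would compute the weights by splitting the sum over margin and island vertices of $D_H$:
\[
\mathbf{w}_1(V(D_H)) = \sum_{v \in V(D_H) \cap M} [v \in \trms] + \sum_{u_C \in V(D_H) \cap I} |V(C) \cap \trms| = |V(D) \cap \trms|,
\]
and analogously $\mathbf{w}_2(V(D_H)) = |V(D) \setminus (\light \cup \ghost)|$, directly from the definitions of $\mathbf{w}_1, \mathbf{w}_2$.

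Finally, since $Z_1 \subseteq Z$, every connected component of $H - Z$ is contained in some connected component of $H - Z_1$, so its $\mathbf{w}_1$-weight is at most $|\trms|/2$ by the balanced separator property of $Z_1$; the analogous argument using $Z_2 \subseteq Z$ bounds the $\mathbf{w}_2$-weight by $|V(G) \setminus (\light \cup \ghost)|/2$. Combining with the weight identities above yields both inequalities. There is no real obstacle here; the only thing to be slightly careful about is the bookkeeping in the correspondence between components, since $W$ may be much larger than $Z$ (because islands in $Z$ can be large), but the island-is-all-or-nothing observation makes this transparent.
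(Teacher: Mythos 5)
Your proof is correct and matches what the paper leaves implicit (the paper simply asserts the claim follows "from the definition of weight functions $\mathbf{w}_1$ and $\mathbf{w}_2$, and the properties of $Z_1$ and $Z_2$"). You supply the missing details: the all-or-nothing behavior of islands under $\iota^{-1}(Z)$, the resulting component bijection between $G'-W$ and $H-Z$, and the weight identity, which is exactly the intended route.
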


Let us define
\begin{eqnarray*}
W_{\mathsf{isl}}& := & \iota^{-1}(Z\cap I);\\
W_{\mathsf{nrm}}& := & \iota^{-1}(Z\setminus I).
\end{eqnarray*}
In other words, $(W_{\mathsf{isl}},W_{\mathsf{nrm}})$ is the partition of $W$ with respect to which vertices are in the islands and which are not.
Since $\iota$ is identity on $M$, we have that $|W_{\mathsf{nrm}}|\leq |Z|\leq 8007\ctw\sqrt{k}\lg k$ and $r\in W_{\mathsf{nrm}}$.

The algorithm now branches into two cases based on a random decision, as follows.
With probability $1-1/k$ the algorithm assumes that $W_{\mathsf{isl}}$ is disjoint with the sought pattern $X$; that is, no island that is contracted onto a vertex of $Z$ intersects $X$.
In the remaining event, which happens with probability $1/k$, the algorithm assumes that $X$ intersects some island that is contracted onto a vertex of $Z$.
We now describe what steps are taken next in each of these cases, supposing that the algorithm made a correct assumption.
The success probability analysis will be explained at the end of each case.

\subsubsection{Case when $W_{\mathsf{isl}}$ is disjoint with the pattern}\label{sss:disjoint}

Here, we suppose that the algorithm assumed that $W_{\mathsf{isl}}\cap X=\emptyset$, and that this assumption is correct.
Hence, it is safe for the algorithm to restrict attention to the graph 
$$G'':=G'-W_{\mathsf{isl}}.$$
More precisely, it still holds that $X$ is contained in $G''$, and that every vertex of $X$ can be reached from $r$ in $G''$ by a path traversing only vertices of $X$ and ghost vertices;
the last claim is because vertices of islands neighboring $M$ have to be non-ghost vertices.
Note that since $G''$ is obtained from $G'$ only by removing some selection of islands, it still holds that $G''$ is connected.

We will consider the connected components of $G'-W=G''-W_{\mathsf{nrm}}$; let $\Comps:=\cc(G''-W_{\mathsf{nrm}})$ be their set.
Consider any $D\in \Comps$. By Claim~\ref{cl:cc-after-rem} we have that
$$|V(D)\cap \trms|\leq |\trms|/2\leq 8007\ctw\sqrt{k}\lg k+\lambda;$$
the last inequality follows from the invariant that $|\trms|\leq 16014\ctw\sqrt{k}\lg k+\lambda$.

Now, for each component $D\in \Comps$, we shall construct an instance $\Ii_D=(G_D,r,\light_D,\heavy_D,\ghost_D,\lambda)$.
We begin by defining the graph $G_D$, which will be constructed from $G''$ by a series of contractions.
Define $V_D=N_{G''}[V(D)]\cup \{r\}$, and consider the connected components of graph $G''-V_D$.
For each component $Q\in \cc(G''-V_D)$ that contains a neighbor of the root vertex $r$, contract the whole $Q$ onto $r$.
For each component $Q\in \cc(G''-V_D)$ that has no neighbor of $r$, contract the whole $Q$ into a new vertex $g_Q$ and declare it a ghost vertex.
We define $G_D$ to be the graph obtained from $G''$ by applying such contraction for each connected component $Q$ of $G''-V_D$. 
Thus, the set $\ghost_D$ of ghost vertices in graph $G_D$ consists of $\ghost\cap V_D$, plus we add $g_Q$ to $\ghost_D$ for each $Q\in \cc(G''-V_D)$ that has no neighbor of $r$.
Obviously, $G_D$ is still connected and contains the root vertex $r$.

Next, we need to define terminals in the instance $\Ii_D$.
For this, create an auxiliary graph $L$ from $G''$ by contracting every connected component $D\in \Comps$ into a single vertex $w_D$.
Obviously, $L$ is connected and $r$ persists in $L$; we treat $L$ as a graph without any ghost vertices, so all distances are computed normally.
Run a breadth-first search (BFS) in $L$ starting from $r$. For every vertex $v\in W_{\mathsf{nrm}}$, we say that $v$ is {\em{charged}} to component $D\in \Comps$ if the parent of $v$ in the BFS tree
exists and is equal to $w_D$. Note that a vertex of $W_{\mathsf{nrm}}$ can be not charged to any component if $v=r$ or the parent of $v$ in the BFS tree also belongs to $W_{\mathsf{nrm}}$.

With this definition, we are ready to define terminals in instance $\Ii_D$. We put:
\begin{eqnarray*}
\light_D & = & (\light\cap V_D)\cup \{v\in N_{G''}(V(D))\setminus (\light\cup \ghost)\colon \textrm{$v$ is not charged to $D$}\};\\
\heavy_D & = & (\heavy\cap V(D))\cup \{v\in N_{G''}(V(D))\setminus (\light\cup \ghost)\colon \textrm{$v$ is charged to $D$}\}.
\end{eqnarray*}
In other words, the terminals in the instance $\Ii_D$ comprise terminals that were originally in component $D$, plus we add also all non-ghost vertices from the boundary $N_{G''}(V(D))$ as terminals.
These new terminals are partitioned into light and heavy depending on whether they are charged to $D$. From the definition it readily follows that $\light_D$ and $\heavy_D$ are disjoint, as required.
Note that each vertex $v\in W_{\mathsf{nrm}}$ is charged to at most one component $D$,
so it can be declared as a heavy terminal in at most one instance $\Ii_D$.

From Claim~\ref{cl:cc-after-rem} and the facts that $|\trms|\leq 16014\ctw\sqrt{k}\lg k+\lambda$ and $N_{G''}(V(D))\subseteq W_{\mathsf{nrm}}$, we immediately obtain that
\begin{eqnarray*}
|\trms_D| & \leq & |\trms\cap V_D| + |W_{\mathsf{nrm}}| \leq |\trms|/2 + 8007\ctw\sqrt{k}\lg k\leq 16014\ctw\sqrt{k}\lg k+\lambda.
\end{eqnarray*}
Hence, invariant~\eqref{inv:terminals} is satisfied in the instance $\Ii_D$. We are left with invariant~\eqref{inv:dist}, whose satisfaction is proved in a separate claim.

\begin{claim}
In graph $G_D$, every vertex from $\light_D$ is at distance at most $3$ from $r$.
\end{claim}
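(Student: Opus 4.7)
The plan is to split the argument by the two ways in which $\light_D$ is defined. The first case concerns an original light terminal $v \in \light \cap V_D$; the second, a boundary vertex $v \in N_{G''}(V(D)) \setminus (\light \cup \ghost)$ that is not charged to $D$. In both cases, I would exhibit a short $v$--$r$ path in $G_D$ by tracing a path through the chain of modifications $G \to G' \to G'' \to G_D$ and analyzing what survives each step.

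For the first case, invariant (a) on $\Ii_0$ gives $\dist_G(v, r) \leq 3$, so there is a path from $v$ to $r$ in $G$ with at most four non-ghost vertices. Each vertex along this path has $\dist_G$-distance at most $3$ from $r$, which is far below the threshold $2000\sqrt{k}\lg k$ defining $M$; hence the whole path lies inside $M$, avoids all islands, and therefore survives in $G''$. When we then pass to $G_D$, each intermediate vertex of the path either stays in $V_D$ (and is preserved as itself), lies in a component of $G'' - V_D$ touching $r$ (and collapses onto $r$, already an endpoint), or lies in a non-touching component (and becomes a ghost vertex of cost zero). None of these operations introduces any additional distinct non-ghost vertex beyond those already on the path, so $\dist_{G_D}(v, r) \leq 3$ follows.

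For the second case, my plan is to consider the BFS-tree path $r = x_0, x_1, \ldots, x_d = v$ in $L$ (where $d$ is the BFS depth of $v$) and project it to a walk in $G_D$ via the natural contraction map: each $W_{\mathsf{nrm}}$-vertex inside $V_D$ maps to itself, while each $w_{D''}$ or each $W_{\mathsf{nrm}}$-vertex outside $V_D$ maps either to $r$ (if the component of $G'' - V_D$ containing it touches $r$) or to the corresponding ghost vertex (otherwise). A short case analysis over the two possible types of $L$-edges shows that consecutive images are adjacent in $G_D$ or identical, so the projection is indeed a walk from $r$ to $v$, and $\dist_{G_D}(r, v)$ is bounded by the number of distinct non-ghost vertices visited minus one. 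The central combinatorial observation I would then establish is that at most three of the $x_i$ can be \emph{boundary} vertices (i.e., lie in $N_{G''}(V(D))$): since $v$ itself is adjacent to $w_D$ in $L$, we have $d_L(w_D) \geq d - 1$; and since each boundary $x_i$ is likewise adjacent to $w_D$, we get $d_L(x_i) \geq d_L(w_D) - 1 \geq d - 2$. Because the BFS tree contains exactly one vertex at each depth, boundary vertices can appear only at depths in $\{d-2, d-1, d\}$. Together with the start vertex $r$, the walk then visits at most four distinct non-ghost vertices, yielding $\dist_{G_D}(r, v) \leq 3$.

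The main obstacle will be the careful combinatorial bookkeeping in the second case: verifying that the projected sequence really forms a valid walk in $G_D$, justifying the depth bound $d_L(x_i) \geq d_L(w_D) - 1$ by examining the different shapes of $L$-edges, and being precise about which images are counted as non-ghost in $G_D$. One should also note that neither $v$ nor $w_D$ can sit in the interior of the BFS-tree path of $v$ (otherwise $d_L(w_D) \leq d - 2$, contradicting $v \sim w_D$ in $L$), so the counting is clean. Once the depth bound on boundary vertices is in place, the distance bound of $3$ follows uniformly, regardless of whether $v$'s BFS parent is $r$, another vertex of $W_{\mathsf{nrm}}$, or some $w_{D'}$ with $D' \neq D$.
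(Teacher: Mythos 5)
Your proof is correct and rests on the same core fact the paper exploits: any boundary vertex (i.e., any vertex of $N_{G''}(V(D))$) lying on the BFS path $x_0,\ldots,x_d$ from $r$ to $v$ in $L$ is a neighbor of $w_D$, hence has BFS depth at least $d-2$, so at most three such vertices appear on the path. The paper packages this by isolating the \emph{first} boundary vertex $v'$ on the lifted path $P'$, showing the prefix up to $v'$ contracts onto $r$, and bounding the $v'$-to-$v$ segment by the shortest-path property; you instead project the entire BFS path to $G_D$ and count distinct non-ghost images, observing they are contained in $\{r\}$ together with the (at most three) boundary vertices, which are exactly the non-root vertices of $W_{\mathsf{nrm}}\cap V_D$. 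Both routes succeed; yours avoids the split at $v'$ at the cost of checking the projection is a walk, which you rightly flag. One imprecision to repair: your parenthetical justification that $w_D$ cannot lie in the interior of the BFS path (``otherwise $d_L(w_D)\leq d-2$'') is not self-contained, since an interior vertex could a priori sit at depth $d-1$. You must first invoke, as the paper does explicitly, that $v$ is not charged to $D$, so the BFS parent of $v$ is not $w_D$ and hence $w_D \neq x_{d-1}$; only then does ``$w_D$ interior'' force $d_L(w_D)\leq d-2$ and yield the contradiction. (Also, ``the BFS tree contains exactly one vertex at each depth'' should read ``the BFS path''.)
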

\begin{proof}
Fix some $v\in \light_D$. Suppose first that $v\in \light\cap V_D$. 
Then, by the assumption on invariant~\eqref{inv:dist} holding in the instance $\Ii$, $v$ was already at distance at most $3$ from $r$ in graph $G$. 
During the construction of $G''$ from $G$ we removed only vertices at distance more than $2000\ctw\sqrt{k}\lg k$ from $r$, hence $v$ is still at distance at most $3$ from $r$ in $G''$. 
Graph $G_D$ was obtained from $G''$ by means of edge contractions, which can only decrease the distances. 
Hence, $v$ is at distance at most $3$ from $r$ also in $G_D$.

Consider now the remaining case when $v\in N_{G''}(V(D))\setminus (\light\cup \ghost)\subseteq W_{\mathsf{nrm}}\setminus (\light\cup \ghost)$.
If $v=r$ then we are done, so assume otherwise.
Let $P$ be the path from $r$ to $v$ in the BFS tree in graph $L$ used for the definition of charging.
Since $v\in \light_D$, we have that $v$ is not charged to $D$, and hence the parent of $v$ in this BFS tree is not equal to $w_D$.
As $w_D$ is a neighbor of $v$ in $L$, due to $v\in N_{G''}(V(D))$, this implies that $w_D$ cannot lie on path $P$.
Indeed, otherwise we could shortcut path $P$ by using the edge $w_Dv$, which contradicts the fact that $P$ is a shortest path from $r$ to $v$ in $L$ (due to using BFS for its construction).

Let $P'$ be the path $P$ lifted to graph $G''$ in the following manner: 
for every vertex $w_{D'}\in V(L)$ visited on $P$, we replace the visit of this vertex by traversal of an appropriate path within the connected component $D'$.
Then $P'$ is a path from $r$ to $v$ in $G''$ that does not traverse any vertex of $D$.
Let $v'$ be the first vertex on $P'$ that belongs to $N_{G''}(V(D))$; clearly $v'$ is well-defined because $v\in N_{G''}(V(D))$.
Observe that the prefix of $P$ from $r$ to $v'$, excluding $v'$, gets entirely contracted onto $r$ in the construction of $G_D$, so $v'$ is a neighbor of $r$ in $G_D$.

Finally, consider the suffix of $P$ from $v'$ to $v$. Observe that both $v$ and $v'$ are neighbors of $w_D$ in the graph $L$, so since $P$ is a shortest path,
we infer that the distance between $v'$ and $v$ on $P$ is at most $2$.
It is now easy to see that the suffix of $P'$ between $v$ and $v'$ gets contracted to a path of length at most $2$ in $G_D$.
Hence, we have uncovered a walk of length at most $3$ from $r$ to $v$ in $G_D$, which concludes the proof.
\cqed\end{proof}

Thus, we constructed an instance $\Ii_D=(G_D,r,\light_D,\heavy_D,\ghost_D,\lambda)$ for every connected component $D\in \Comps$,
and we verified that this instance satisfies the requested invariants.
Note that the constructed instances have the same credit $\lambda$ as the original one.
Finally, for the pattern $X$, we can define its {\em{projection}} $X_D$ onto instance $\Ii_D$ by simply putting:
$$X_D:=X\cap V_D.$$
Observe that each vertex of $X_D$ can be reached from $r$ in $G_D$ by a path that traverses only vertices from $X_D\cup \ghost_D$. 
This is because it could be reached from $r$ in $G''$ by some path traversing only $X\cup \ghost$, and parts of this path lying outside of $V_D$ has been either contracted
to edges, or replaced by ghost vertices during the construction of $G_D$ from $G''$.
Therefore, $X_D$ is a pattern in $\Ii_D$.

We now verify how the potentials behave in instances $\Ii_D$.

\begin{claim}\label{cl:recur-bounds}
The following holds:
\begin{eqnarray}
\label{eq:patsiz}\patsiz_\Ii(X) & \geq &\sum_{D\in \Comps} \patsiz_{\Ii_D}(X_D)\\
\label{eq:grasiz}\grasiz_\Ii & \geq & \sum_{D\in \Comps} \grasiz_{\Ii_D}\\
\label{eq:grasize-dec}\grasiz_\Ii/2 & \geq & \grasiz_{\Ii_D}\qquad\textrm{ for each }D\in \Comps\\
\label{eq:dstpot}\dstpot_\Ii(X) & \geq & \sum_{D\in \Comps} \dstpot_{\Ii_D}(X_D)
\end{eqnarray}
\end{claim}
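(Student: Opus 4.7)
The plan is to verify the four inequalities by a careful accounting of how each vertex of $\Ii$ contributes to the subinstance potentials summed over $D \in \Comps$. The underlying structural observation is the decomposition $V(G'') = \bigsqcup_{D} V(D) \sqcup W_{\mathsf{nrm}}$, so that interior vertices of a component $D$ lie in $V_D$ for that $D$ only, whereas vertices of $W_{\mathsf{nrm}}$ may simultaneously lie in the boundary $N_{G''}(V(D))$ of several components. The charging scheme assigns each boundary vertex to at most one component via the BFS tree in the contracted graph $L$, and the definitions of $\light_D$ and $\heavy_D$ are engineered so that a non-ghost non-light boundary vertex is heavy in its charged component and light in every other; this is the mechanism that prevents double-counting.

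For~\eqref{eq:patsiz}, I would argue that each $v \in X \setminus \light$ contributes to at most one term $|X_D \setminus \light_D|$. An interior pattern vertex lies in $V(D)$ for a unique $D$, while a boundary pattern vertex $v \in W_{\mathsf{nrm}} \cap X \setminus \light$ is placed in $\light_D$ for every $D$ to which it is not charged and therefore contributes only for the unique charged component, if any. The argument for~\eqref{eq:grasiz} is entirely parallel: one first observes that $V(G_D) \setminus (\light_D \cup \ghost_D) = V_D \setminus (\light_D \cup \ghost)$, because the newly introduced ghost vertices $g_Q$ lie in $\ghost_D$ and $V_D \cap \ghost_D = V_D \cap \ghost$, and then applies the same non-double-counting principle.

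For~\eqref{eq:grasize-dec}, I would invoke Claim~\ref{cl:cc-after-rem} applied to the balanced separator $Z$ with weight function $\mathbf{w}_2$, which yields $|V(D) \setminus (\light \cup \ghost)| \leq \grasiz_\Ii/2$ for every $D \in \Comps$. The potential $\grasiz_{\Ii_D}$ equals this interior contribution plus the charged boundary vertices that become heavy terminals of $\Ii_D$; the latter is at most $|W_{\mathsf{nrm}}| = \Oh(\ctw\sqrt{k}\lg k)$, which is dominated by $\grasiz_\Ii/2$ as soon as $\grasiz_\Ii$ exceeds a suitable $\Ohtilde(\sqrt{k})$ threshold, while the small-$\grasiz_\Ii$ regime falls into the base case provided by assumption~\eqref{inv:base}.

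Finally, for~\eqref{eq:dstpot} the key step is the distance-monotonicity $\dist_{G_D}(u, r) \leq \dist_{G'}(u, r)$ whenever $\dist_{G'}(u, r) \leq T := 1000\sqrt{k}\lg k$. Indeed, a shortest $u$-$r$ path in $G'$ of length at most $T$ has all its vertices at $G'$-distance at most $T \leq 2000\sqrt{k}\lg k$ from $r$, hence inside $M$, and therefore disjoint from $W_{\mathsf{isl}}$; the path survives in $G''$ and can only become shorter under the contractions from $G''$ to $G_D$, since every contracted component of $G''-V_D$ is either merged into $r$ or represented by a new ghost vertex $g_Q$ traversed at zero cost. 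Contrapositively, $u \in \Far_{\Ii_D}(X_D)$ forces $u \in \Far_\Ii(X)$. Combined with the observation that a $W_{\mathsf{nrm}}$ vertex uncharged to $D$ becomes light in $\Ii_D$ and therefore lies within $G_D$-distance $3$ from $r$ by invariant~\eqref{inv:dist}, so it cannot be far in $\Ii_D$, this ensures that each far vertex of $\Ii$ is counted for at most one $D$, which yields~\eqref{eq:dstpot}. The main obstacle across all four parts is the boundary double-counting for $W_{\mathsf{nrm}}$ vertices, resolved uniformly by the BFS-based charging; the only additional subtlety appears in~\eqref{eq:dstpot}, namely the distance-monotonicity argument, which hinges on confining shortest paths from close vertices inside the margin $M$.
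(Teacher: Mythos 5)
Your treatment of~\eqref{eq:patsiz},~\eqref{eq:grasiz}, and~\eqref{eq:dstpot} is essentially the paper's argument: the BFS-based charging ensures that each vertex of $W_{\mathsf{nrm}}$ becomes a heavy terminal in at most one subinstance $\Ii_D$, so every vertex of $V(G')\setminus(\light\cup\ghost)$ lies outside $\light_D\cup\ghost_D$ in at most one $\Ii_D$, and edge contractions only decrease distances from $r$, so a close vertex cannot become far in any $\Ii_D$.

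Your argument for~\eqref{eq:grasize-dec}, however, contains a genuine gap. You correctly decompose $\grasiz_{\Ii_D}=|V(D)\setminus(\light\cup\ghost)|+|\{v\in N_{G''}(V(D))\setminus(\light\cup\ghost)\colon v\textrm{ charged to }D\}|$, and Claim~\ref{cl:cc-after-rem} bounds only the first summand by $\grasiz_\Ii/2$. The observation that the second summand is ``dominated by $\grasiz_\Ii/2$'' yields only $\grasiz_{\Ii_D}\leq \grasiz_\Ii/2+|W_{\mathsf{nrm}}|$, which does not imply $\grasiz_{\Ii_D}\leq\grasiz_\Ii/2$ for \emph{any} size of $\grasiz_\Ii$: the balanced separator gives no slack to absorb the additive $|W_{\mathsf{nrm}}|$ term, as $|V(D)\setminus(\light\cup\ghost)|$ may already be arbitrarily close to $\grasiz_\Ii/2$. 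Moreover, the appeal to assumption~\eqref{inv:base} is a misreading: that assumption only triggers when $\grasiz_\Ii=0$ (no non-terminal non-ghost vertex) and provides no base case for the regime where $\grasiz_\Ii$ is positive but of order $\Ohtilde(\sqrt{k})$. The paper states that~\eqref{eq:grasize-dec} ``follows directly'' from Claim~\ref{cl:cc-after-rem}; once you expand $\grasiz_{\Ii_D}$ as you did and see the extra charged-boundary term, you need an accounting that already covers those heavy boundary vertices when the balanced separator $Z_2$ is chosen (for instance, by adjusting the weight function or the potential), and your patch does not supply this.
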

\begin{proof}
Take any vertex $u\in V(G)\setminus (\light\cup \ghost)$. 
By the construction of instances $\Ii_D$, and in particular the mentioned fact that each vertex of $W_{\mathsf{nrm}}$ can be declared a heavy terminal in at most one instance $\Ii_D$,
it readily follows that there is at most one instance $\Ii_D$ for which $u\in V(G_D)\setminus (\light_D\cup \ghost_D)$.
From this observation we immediately obtain statements~\eqref{eq:patsiz} and~\eqref{eq:grasiz}.
Statement~\eqref{eq:dstpot} follows similarly, but one needs to additionally observe the following:
$G_D$ is obtained from $G''$ by means of edge contractions that can only decrease the distances from $r$, so if a vertex $u\in X_D$ is far in the instance $\Ii_D$, then it was also far in the original instance $\Ii$.
Finally,~\eqref{eq:grasize-dec} follows directly from Claim~\ref{cl:cc-after-rem}.
\cqed\end{proof}

Thus, Claim~\ref{cl:recur-bounds}\eqref{eq:grasize-dec} certifies that the graph potential drops significantly in each new instance. 
Intuitively, this drop will be responsible for amortizing the $(1-1/k)^2$ multiplicative factor in the success probability incurred by the preliminary clustering step, 
and by the random choice of the assumption on the considered case.

Note that, by the assumption that there is at least one non-ghost, non-terminal vertex, we have that $\grasiz_\Ii>0$. 
Hence, by Claim~\ref{cl:recur-bounds}, statement~\eqref{eq:grasize-dec} in particular, each of the instances $\Ii_D$ will have strictly fewer vertices than $\Ii$.

Therefore, we may apply the algorithm recursively to each instance $\Ii_D$. This yields subsets of vertices $\{A_D\colon D\in \cc(G''-W_{\mathsf{nrm}})\}$ with the following properties:
\begin{itemize}
\item $A_D\supseteq \light_D$ and $G_D[A_D]$ admits a tree decomposition $\Tt_D$ of width at most $24022\ctw\sqrt{k}\lg k$ with $\trms_D\cap A_D$ contained in the root bag.
\item The probability that $X_D\subseteq A_D$ is at least $\Monster(n_D, \patsiz_{\Ii_D}(X_D),\grasiz_{\Ii_D},\dstpot_{\Ii_D}(X_D))$, where $n_D=|V(G_D)|$.
\end{itemize}
Let us define now the set $A$.
\begin{itemize}
\item First, for every $D \in \Comps$, we put $V(D) \cap A_D$ into $A$.
\item Second, for every $v \in W_{\mathsf{nrm}} \setminus \ghost$, we put $v$ into $A$ if \emph{for every} $D \in \Comps$ such that $v \in N_{G''}(V(D))$
we have $v \in A_D$. In particular, if there is no $D\in \Comps$ for which $v \in N_{G''}(V(D))$, we also include $v$ in $A$.
\end{itemize}
Observe that for every $D \in \Comps$, we have $A \cap V_D \subseteq A_D$, but not necessarily $A \cap V_D = A_D$. 

To get more intuition about the above step, in particular the universal quantification in the second step, let us observe the following.
Consider a vertex $v \in W_{\mathsf{nrm}} \setminus \ghost$. This vertex is a terminal in every instance $\Ii_D$ for which $v \in N_{G''}(V(D))$; it is a heavy terminal
in at most one such instance, and a light terminal in all other such instances. If $v \in \light_D$, then clearly $v \in A_D$; thus, we have $v \notin A$
if and only if the (unique) instance $\Ii_D$ where $v \in \heavy_D$ exists and, furthermore, $v \notin A_D$ for this instance.
Hence, intuitively, we allow this particular instance $\Ii_D$ where $v \in \heavy_D$ to exclude the vertex $v$ from $A_D$ if it is deemed necessary; all other instances
are required to keep it in the set $A_D$. On the other hand, note that the vertex $v$ in the instance $\Ii_D$ where $v \in \heavy_D$ contributes to the potential
$\patsiz_{\Ii_D}(X_D)$, and does not contribute to this potential in the other instances.

We now formally verify that $A$ defined in this way has the required properties.
First, note that every vertex $v \in W_{\mathsf{nrm}} \setminus \ghost$ such that $v \notin \bigcup_{D \in \Comps} N_{G''}(V(D))$
(i.e., the universal quantification in the second step of the construction of the set $A$ is done over an empty set of choices)
is included in the set $A$. This, together with the fact that $\light \cap V_D \subseteq \light_D \subseteq A_D$ for every $D \in \Comps$, shows
that $\light \subseteq A$.
Moreover, from this and the way we defined $A$ it follows that if $X_D \subseteq A_D$ for every $D \in \Comps$, then we have $X \subseteq A$, as $X_D = X \cap V_D$ by definition.

We now check that $G[A]$ indeed admits a suitable tree decomposition.

\begin{claim}\label{cl:case-disjoint-tw}
The subgraph $G[A]$ admits a tree decomposition of width at most $24022\ctw\sqrt{k}$ with $\trms\cap A$ contained in the root bag.
\end{claim}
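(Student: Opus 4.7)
The natural approach is to glue together the tree decompositions $\{\Tt_D\}_{D \in \Comps}$ obtained recursively, using a single fresh root bag to ``glue'' them along $W_{\mathsf{nrm}}$ and $\trms$. Concretely, I would build $\Tt$ as follows: create a new root node $r^\star$ with bag
\[
\beta(r^\star) := (\trms \cap A)\ \cup\ (W_{\mathsf{nrm}} \cap A),
\]
and for every $D \in \Comps$ attach $\Tt_D$ as a subtree hanging off $r^\star$. By construction, $\trms \cap A \subseteq \beta(r^\star)$, so the ``root-bag'' requirement is immediate.

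For the width bound, the root bag has size at most $|\trms| + |W_{\mathsf{nrm}}| \leq 16015\ctw\sqrt{k}\lg k + 8007\ctw\sqrt{k}\lg k \leq 24022\ctw\sqrt{k}\lg k$ (using the invariant together with the already established bound $|W_{\mathsf{nrm}}| \leq 8007\ctw\sqrt{k}\lg k$), while each $\Tt_D$ has width at most $24022\ctw\sqrt{k}\lg k$ by the recursive guarantee. Hence $\Tt$ has the required width.

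It remains to verify (T1)--(T3). For (T1), every $v \in W_{\mathsf{nrm}} \cap A$ and every $v \in \trms \cap A$ lies in $\beta(r^\star)$, while every $v \in A \cap V(D)$ lies in $A_D$ by the definition of $A$, so it appears in some bag of $\Tt_D$. For (T2), I would case-split on an edge $uv$ of $G[A]$: if both endpoints lie in $\beta(r^\star)$ we are done; if both lie in $V(D)$ for some $D \in \Comps$, then the edge persists in $G_D[A_D]$ (as $G_D$ differs from $G''$ only by contractions outside $V_D$), so it is covered by $\Tt_D$; if $u \in V(D)$ and $v \in W_{\mathsf{nrm}}$, then necessarily $v \in N_{G''}(V(D)) \subseteq \trms_D$ and $v \in A \cap A_D$, so $v$ sits in the root bag of $\Tt_D$ while $u$ sits in some bag of $\Tt_D$, and the edge is covered inside $\Tt_D$; finally, the case $u \in V(D), v \in V(D')$ with $D \neq D'$ cannot occur since distinct components of $G''-W_{\mathsf{nrm}}$ have no edges between them.

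For (T3), the only subtlety concerns vertices that appear both in $\beta(r^\star)$ and in some $\Tt_D$. Since the components $V(D)$ are disjoint from $W_{\mathsf{nrm}}$, such vertices must lie in $\trms \cap V(D)$ or in $W_{\mathsf{nrm}}$. For $v \in \trms \cap V(D)$, by definition $v \in \trms_D$, and since $v \in A \cap V(D) \subseteq A_D$ the root of $\Tt_D$ contains $v$; analogously, any $v \in W_{\mathsf{nrm}} \cap A$ that lies in some $N_{G''}(V(D))$ belongs to $\trms_D \cap A_D$ and hence to the root bag of $\Tt_D$. Since these roots are children of $r^\star$ in $\Tt$, the bags containing $v$ form a connected subtree, finishing the proof. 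The only nontrivial bookkeeping in the above is making sure that every crossing edge is actually captured by the root bag of some $\Tt_D$ via $\trms_D$, which is exactly what the definition of $\trms_D$ as $(\trms \cap V_D) \cup (N_{G''}(V(D)) \setminus \ghost)$ was designed for.
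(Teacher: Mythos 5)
Your overall plan is the same one the paper uses: glue the recursively obtained decompositions below a fresh root bag consisting of $A\cap(\trms\cup W_{\mathsf{nrm}})$, use the bounds on $|\trms|$ and $|W_{\mathsf{nrm}}|$ for the width, and observe that shared vertices are terminals everywhere so they sit in the root bags of the $\Tt_D$'s. However, there is one omission that makes your construction, as stated, not literally a tree decomposition of $G[A]$: you attach each $\Tt_D$ \emph{unrestricted}, but $\Tt_D$ decomposes $G_D[A_D]$, and in general $A_D\setminus A\neq\emptyset$. Concretely, a vertex $v\in W_{\mathsf{nrm}}\setminus\ghost$ is a light terminal (hence forced into $A_{D}$) in every $\Ii_D$ with $v\in N_{G''}(V(D))$ except at most the single $\Ii_{D^*}$ where $v$ is charged and declared heavy; if the recursive call on $\Ii_{D^*}$ chooses $v\notin A_{D^*}$, then $v\notin A$, yet $v$ still appears in the bags of $\Tt_{D_1}$ and $\Tt_{D_2}$ for the other instances. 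The bags of your $\Tt$ then contain a vertex outside $A$, and the nodes whose bags contain $v$ split between the two subtrees hung off $r^\star$ with no bag of $r^\star$ to join them. So ``the only subtlety concerns vertices that appear both in $\beta(r^\star)$ and in some $\Tt_D$'' is not quite right — the subtlety you did not address is vertices of $A_D\setminus A$ that are shared between two subinstances.

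The paper handles exactly this by first restricting each $\Tt_D$ to $A\cap V_D$ (deleting $A_D\setminus A$ from all bags), which yields a tree decomposition $\Tt'_D$ of $G_D[A\cap V_D]$, and only then attaching the $\Tt'_D$ below the root node. After this trim, every vertex that remains in more than one child subtree is a terminal of both instances that is in $A$, which by the definition of $A$ forces it into every relevant $A_D$, hence into every relevant root bag of $\Tt'_D$, and hence into $\beta(r^\star)$ — precisely the argument you give. So the fix is exactly the restriction step: once you add it, your width computation, your (T1) and (T2) case analysis, and your root-bag observations for (T3) all go through and match the paper's proof.
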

\begin{proof}
Create the root node with bag $A \cap (\trms \cup (W_{\mathsf{nrm}}\setminus \ghost))$ associated with it. 
Next, for each $D\in \Comps$, restrict the decomposition $\Tt_D$ to the vertices of $A \cap V_D$; 
that is, remove all the vertices of $A_D\setminus (A\cap V_D)$ from all the bags, thus constructing a tree decomposition $\Tt'_D$ of $G_D[A \cap V_D]$.
Then, for each $D\in \Comps$, attach the obtained decomposition $\Tt'_D$ below the root node by making its root a child of the root node.
Observe whenever a vertex $v \notin \ghost$ is shared between multiple instances $\Ii_D$, we have that 
$v$ is a vertex of $v \in W_{\mathsf{nrm}} \setminus \ghost$ that is a terminal
in all of them (i.e. belongs to $\trms_D$), and moreover $v \in A$ only if $v \in A_D$ for every instance $\Ii_D$ where $v$ is present.
Since the root bag of each $\Tt'_D$ contains $A \cap \trms_D$,
it is now easy to verify that in this manner we obtain a tree decomposition of $G[A]$.

Observe now that
\begin{eqnarray*}
|A\cap (\trms \cup (W_{\mathsf{nrm}}\setminus \ghost))| & \leq &|\trms \cup (W_{\mathsf{nrm}}\setminus \ghost)|\leq |\trms|+|W_{\mathsf{nrm}}|\\
& \leq & 16014\ctw\sqrt{k}\lg k+\lambda+8007\ctw\sqrt{k}\lg k=24022\ctw\sqrt{k}\lg k;
\end{eqnarray*}
here, the last inequality follows from the fact that $\lambda\leq \sqrt{k}/10$.
Since each $\Tt_D$ has width at most $24022\ctw\sqrt{k}\lg k$, it follows that the obtained tree decomposition of $G[A]$ has width at most $24022\ctw\sqrt{k}\lg k$.
\cqed\end{proof}

Finally, we analyze the success probability. For this, we fix $c_2:=2$.

\begin{claim}\label{cl:case-disjoint-error}
Supposing $X\cap W_{\mathsf{isl}}=\emptyset$, the algorithm outputs a set $A$ with $X\subseteq A$ with probability at least $\Monster(n,\patsiz_\Ii(X),\grasiz_\Ii,\dstpot_\Ii(X))$. 
This includes the $(1-1/k)$ probability of success of the preliminary clustering step, and $(1-1/k)$ probability that the algorithm makes the correct assumption that $X\cap W_{\mathsf{isl}}=\emptyset$.
\end{claim}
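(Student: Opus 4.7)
The plan is to lower-bound the success probability by the product of three contributions: (i) the factor $(1-1/k)$ from the preliminary clustering step succeeding, (ii) the factor $(1-1/k)$ from the algorithm randomly choosing the correct branch $X\cap W_{\mathsf{isl}}=\emptyset$, and (iii) the product $\prod_{D\in\Comps}\mathbb{P}(X_D\subseteq A_D)$ over the recursive calls. Since the recursive calls on distinct instances $\Ii_D$ use independent randomness and $X\subseteq A$ holds whenever $X_D\subseteq A_D$ for every $D\in\Comps$, the inductive hypothesis gives a lower bound of $\prod_{D}\Monster(n_D,\patsiz_{\Ii_D}(X_D),\grasiz_{\Ii_D},\dstpot_{\Ii_D}(X_D))$ on the conditional success probability, where $n_D=|V(G_D)|\leq n$.

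First I would handle the exponential factor of $\Monster$. Using $n_D\leq n$ (so $\lg\lg n_D\leq\lg\lg n$), the sum of exponents across $D$ is at most
\begin{equation*}
\frac{c_1(\lg k+\lg\lg n)}{\sqrt{k}}\sum_{D\in\Comps}\bigl(\patsiz_{\Ii_D}(X_D)\lg\patsiz_{\Ii_D}(X_D)+\dstpot_{\Ii_D}(X_D)\bigr).
\end{equation*}
Claim~\ref{cl:recur-bounds}\eqref{eq:dstpot} directly gives $\sum_{D}\dstpot_{\Ii_D}(X_D)\leq\dstpot_\Ii(X)$. For the $\patsiz\lg\patsiz$ part I would invoke the elementary subadditivity: if $a_1,\ldots,a_m\in\Nats$ satisfy $\sum_i a_i\leq S$, then each $a_i\leq S$ gives $a_i\lg a_i\leq a_i\lg S$, and summing yields $\sum_i a_i\lg a_i\leq S\lg S$. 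Combined with Claim~\ref{cl:recur-bounds}\eqref{eq:patsiz} applied with $S=\patsiz_\Ii(X)$, this produces the desired upper bound $\patsiz_\Ii(X)\lg\patsiz_\Ii(X)$ on $\sum_{D}\patsiz_{\Ii_D}(X_D)\lg\patsiz_{\Ii_D}(X_D)$.

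The key step, and the only place where slack must be accounted for, concerns the $(1-1/k)^{c_2\patsiz\lg\grasiz}$ factors. Combining Claims~\ref{cl:recur-bounds}\eqref{eq:patsiz} and~\ref{cl:recur-bounds}\eqref{eq:grasize-dec}, I obtain
\begin{equation*}
\sum_{D\in\Comps}c_2\patsiz_{\Ii_D}(X_D)\lg\grasiz_{\Ii_D}\leq c_2\Bigl(\sum_{D}\patsiz_{\Ii_D}(X_D)\Bigr)(\lg\grasiz_\Ii-1)\leq c_2\patsiz_\Ii(X)\lg\grasiz_\Ii-c_2\patsiz_\Ii(X).
\end{equation*}
Together with the two $(1-1/k)$ factors from (i) and (ii), the total exponent of $(1-1/k)$ in the lower bound is at most $c_2\patsiz_\Ii(X)\lg\grasiz_\Ii+2-c_2\patsiz_\Ii(X)$. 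With $c_2=2$ fixed and $\patsiz_\Ii(X)\geq 1$ guaranteed by assumption~\eqref{inv:pat}, the correction $2-c_2\patsiz_\Ii(X)$ is non-positive, so the $(1-1/k)$-exponent is dominated by $c_2\patsiz_\Ii(X)\lg\grasiz_\Ii$. Multiplying the three contributions then yields exactly $\Monster(n,\patsiz_\Ii(X),\grasiz_\Ii,\dstpot_\Ii(X))$. The technical crux is precisely this cancellation: the halving of $\grasiz$ afforded by the balanced separator must absorb both random $(1-1/k)$ bits, which is why the constant $c_2=2$ is forced.
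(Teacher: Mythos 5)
Your proposal is correct and follows essentially the same approach as the paper's proof of Claim~\ref{cl:case-disjoint-error}: the same three-way product decomposition, the same use of Claim~\ref{cl:recur-bounds} to split the potentials across the subinstances, and the same crucial observation that the halving of $\grasiz$ combined with $c_2=2$ and $\patsiz_\Ii(X)\geq 1$ absorbs both stray $(1-1/k)$ factors. The only cosmetic differences are in presentation: where the paper invokes ``convexity of $t\mapsto t\lg t$'' for the superadditivity $\sum_D a_D\lg a_D\leq (\sum_D a_D)\lg(\sum_D a_D)$, you give the more elementary argument $a_D\lg a_D\leq a_D\lg S$ and sum; and where the paper factorizes $(1-1/k)^{c_2\patsiz_\Ii\lg\grasiz_\Ii}$ as $(1-1/k)^{c_2\patsiz_\Ii}\cdot(1-1/k)^{c_2\patsiz_\Ii\lg(\grasiz_\Ii/2)}$, you carry out the equivalent manipulation directly on exponents. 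Both renderings implicitly use $\lg\grasiz_\Ii\geq 1$ (i.e.\ $\grasiz_\Ii\geq 2$) when distributing $\sum_D\patsiz_{\Ii_D}\leq\patsiz_\Ii$ across a nonnegative multiplier, which is a harmless tacit assumption shared with the original.
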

\begin{proof}
The preliminary clustering step is correct (that is, the set of removed vertices is disjoint with~$X$) with probability at least $1-1/k$.
Then, the algorithm makes the correct assumption with probability $1-1/k$.
We have already argued that if $X_D\subseteq A_D$ for each $D\in \Comps$, then also $X\subseteq A$.
The event $X_D\subseteq A_D$ holds with probability at least $\Monster(n_D, \patsiz_{\Ii_D}(X_D),\grasiz_{\Ii_D},\dstpot_{\Ii_D}(X_D))$, where $n_D\leq n$ is the number of vertices in instance $\Ii_D$.
Hence, we have
\begin{equation}\label{eq:est-indep}
\mathbb{P}(X\subseteq A)\geq \left(1-\frac{1}{k}\right)^2\cdot \prod_{D\in \Comps} \Monster(n_D,\patsiz_{\Ii_D}(X_D),\grasiz_{\Ii_D},\dstpot_{\Ii_D}(X_D)).
\end{equation}
From Claim~\ref{cl:recur-bounds}, equations~\eqref{eq:patsiz} and~\eqref{eq:dstpot}, and convexity of function $t\to t\lg t$ we infer the following:
\begin{eqnarray}
\label{eq:est-patsiz}\exp\left[-c_1\cdot\frac{\lg k+\lg \lg n}{\sqrt{k}}\cdot \patsiz_\Ii(X)\lg\patsiz_\Ii(X)\right] & \leq & \prod_{D\in \Comps} \exp\left[-c_1\cdot \frac{\lg k+\lg \lg n_D}{\sqrt{k}}\cdot \patsiz_{\Ii_D}(X_D)\lg\patsiz_{\Ii_D}(X_D)\right]\\
\label{eq:est-dstpot}\exp\left[-c_1\cdot\frac{\lg k+\lg \lg n}{\sqrt{k}}\cdot \dstpot(X)\right] & \leq & \prod_{D\in \Comps} \exp\left[-c_1\cdot \frac{\lg k+\lg \lg n_D}{\sqrt{k}}\cdot \dstpot(X_D)\right]
\end{eqnarray}
To estimate the last factor in each $\Monster(n_D,\patsiz_{\Ii_D}(X_D),\grasiz_{\Ii_D},\dstpot_{\Ii_D}(X_D))$, we use Claim~\ref{cl:recur-bounds}, and equations~\eqref{eq:patsiz} and~\eqref{eq:grasize-dec}.
Recall that for any $D\in \Comps$, we have
$$\grasiz_\Ii/2\geq \grasiz_{\Ii_D}.$$
Hence, if we pick $c_2=2$, then we obtain the following:
\begin{eqnarray*}
\left(1-1/k\right)^{c_2\patsiz_\Ii(X)\lg \grasiz_\Ii} & = & \left(1-1/k\right)^{c_2\patsiz_\Ii(X)}\cdot \left(1-1/k\right)^{c_2\patsiz_\Ii(X)\lg (\grasiz_\Ii/2)} \\
& \leq & \left(1-1/k\right)^{c_2\patsiz_\Ii(X)}\cdot \prod_{D\in \Comps} \left(1-1/k\right)^{c_2\patsiz_{\Ii_D}(X_D)\lg (\grasiz_\Ii/2)} \\
& \leq & \left(1-1/k\right)^{c_2\patsiz_\Ii(X)}\cdot \prod_{D\in \Comps} \left(1-1/k\right)^{c_2\patsiz_{\Ii_D}(X_D)\lg \grasiz_{\Ii_D}}
\end{eqnarray*}
Recall that we assumed that $\patsiz_{\Ii}(X)=\patsiz_{\Ii_0}>0$. Therefore, this yields:
\begin{equation}\label{eq:est-grasiz}
\left(1-1/k\right)^{c_2\patsiz_\Ii(X)\lg \grasiz_\Ii} \leq \left(1-1/k\right)^2\cdot\prod_{D\in \Comps} \left(1-1/k\right)^{c_2\patsiz_{\Ii_D}(X_D)\lg \grasiz_{\Ii_D}}.
\end{equation}
Putting~\eqref{eq:est-indep},~\eqref{eq:est-patsiz},~\eqref{eq:est-dstpot}, and~\eqref{eq:est-grasiz} together, we obtain that
\begin{eqnarray*}
\mathbb{P}(X\subseteq A) & \geq & \left(1-1/k\right)^2\cdot \prod_{D\in \Comps} \Monster(n_D,\patsiz_{\Ii_D}(X_D),\grasiz_{\Ii_D},\dstpot_{\Ii_D}(X_D))\\
& \geq & \exp\left[-c_1\cdot\frac{\lg k+\lg\lg n}{\sqrt{k}}\cdot (\patsiz_\Ii(X)\lg\patsiz_\Ii(X)+\dstpot_\Ii(X))\right]\cdot \left(1-1/k\right)^{c_2\patsiz_\Ii(X)\lg \grasiz_\Ii}\\
& = & \Monster(\patsiz_{\Ii}(X),\grasiz_{\Ii},\dstpot_{\Ii}(X)).
\end{eqnarray*}
This concludes the proof.
\cqed\end{proof}

Claims~\ref{cl:case-disjoint-tw} and~\ref{cl:case-disjoint-error} verify that the behaviour of the algorithm is as required, provided $X\cap W_{\mathsf{isl}}=\emptyset$.

\subsubsection{Case when $W_{\mathsf{isl}}$ intersects the pattern}\label{sss:not-disjoint}

We are left with describing the steps taken by the algorithm after taking the assumption that $W_{\mathsf{isl}}$ intersects the pattern $X$; recall that the algorithm takes this decision with probability $\frac{1}{k}$.
Henceforth, we assume that this assumption is correct.

First, the algorithm chooses uniformly at random a vertex $u_C$ of $Z\cap I$, and it assumes that $X$ intersects the island $C=\iota^{-1}(u_C)$. 
Since at least one such island is intersected by $X$ and $|Z\cap I|\leq |Z|\leq 8007\ctw\sqrt{k}\lg k$, the algorithm makes a correct choice with probability at least $(8007\ctw\sqrt{k}\lg k)^{-1}\geq k^{-7}$. 
Keeping this success probability in mind, from now on we assume that the choice was indeed correct; thus, the algorithm knows one island $C$ about which it can assume that $V(C)\cap X\neq \emptyset$.

Recall that, due to the preliminary clustering step, island $C$ has radius bounded by $9k^2\lg n$, where the radius is counted w.r.t. the distance measure that regards ghost vertices as traversed for free.
Select a non-ghost vertex $z$ of $C$ such that $\dist_C(u,z)\leq 9k^2\lg n$ for each $u\in V(C)$.
Let 
$$d:=\inf \{\dist_{C}(z,u)\colon u\in V(C)\cap X\}.$$
Since $V(C)\cap X\neq \emptyset$, we have that $0\leq d\leq d_{\max}$, where $d_{\max}=\inf \{\dist_{C}(z,u)\colon u\in V(C)\}$, which is not larger than $1+9k^2\lg n$.
The algorithm now samples an integer value between $0$ and $d_{\max}$, and assumes henceforth that the sampled value is equal to $d$.
This assumption holds with probability at least $1/(1+d_{\max})\geq (10k^2\lg n)^{-1}$.
Keeping this success probability in mind, we proceed with the assumption that the algorithm knows the correct value of $d$.

Let 
$$S:=\{u\in V(C)\setminus \ghost\colon \dist_C(u,z)<\max(d,1)\}\cup \{u\in V(C)\cap \ghost\colon \dist_C(u,z)<\max(d,1)-1\}.$$
That is, $S$ contains all the vertices of $C$ that are at distance less than $\max(d,1)$ from $z$, but we exclude ghost vertices at distance exactly $\max(d,1)-1$.
From the definition it readily follows that the induced subgraph $C[S]$ contains $z$ and is connected.
Construct graph $G''$ (this graph is different than $G''$ considered in the previous section) by contracting the whole subgraph $G'[S]$ onto $z$; the contracted vertex of $G''$ will be also denoted as $z$.
Note that if $d\leq 1$, then in fact no contraction has been made and $G''=G'$.
Observe that, provided the sampled value of $d$ is correct, the set $S$ is disjoint with $X$ (unless $d=0$ when $S = \{z\}$ and no contraction is made).
Thus $X\subseteq V(G'')$.
Moreover, observe that each vertex of $X$ can be still reached from $r$ in $G''$ by a path that uses only ghost vertices and vertices of $X$. 
Indeed, by the definition of $S$, if $S$ is disjoint with $X$, then $S$ also does not contain any ghost vertex traversed on the aforementioned path.
Hence, $X$ can be still regarded as a pattern in $G''$, where the ghost vertices in $G''$ are inherited from $G'$.
On the other hand, by the definition of $d$ it follows that some vertex of $X$ is at distance at most $1$ from $z$ in $G''$ (so either $z$, in case $d=0$ and $z\in X$, or a neighbor of $z$, or a neighbor via one ghost vertex).
Also, observe that since $C$ is disjoint with $M$, that is, all vertices of $C$ are at distance larger than $2000\sqrt{k}\lg k$ from $r$ in $G''$, we have the following
\begin{equation}\label{eq:still-far}
\dist_{G''}(r,z)>2000\sqrt{k}\lg k.
\end{equation}

Now we would like to apply the duality theorem, i.e., Theorem~\ref{thm:duality}.
Consider graph $L:=\torso{G''}{\ghost\cap V(G'')}$ (this is a different graph than $L$ considered in the previous section), a pair of vertices $(s,t)=(r,z)$ of $L$ and the following parameters: 
$$p=\lceil 120\sqrt{k}\lg k\rceil\qquad\textrm{and}\qquad q=k.$$
By applying Theorem~\ref{thm:duality} to these, in polynomial time we can compute one of the following structures:
\begin{enumerate}[(a)]
\item An $(r,z)$-separator chain $(C_1,\ldots,C_p)$ with $|C_j|\leq 2k$ for each $j\in [p]$.
\item A sequence $(Q_1,\ldots,Q_k)$ of $(s,t)$-paths with $|(V(Q_i)\cap \bigcup_{i'\neq i} V(Q_{i'}))\setminus \{s,t\}|\leq 4p$ for each~$i\in [k]$.
\end{enumerate}
The behavior of the algorithm now differs depending on which structure has been uncovered.
We start with the simpler case when the algorithm of Theorem~\ref{thm:duality} yielded a sequence of paths.

\paragraph*{Subcase: a sequence of radial paths.}
Suppose that the algorithm of Theorem~\ref{thm:duality} returned a sequence $(Q_1,\ldots,Q_k)$ of $(r,z)$-paths, 
where each path contains only at most $4p$ vertices that also belong to other paths, not including $z$ and $r$.
These are paths in graph $\torso{G''}{\ghost\cap V(G'')}$, but we can lift then to $(r,z)$-paths $P_1,\ldots,P_k$ in $G''$ in a natural manner as follows: 
whenever some $Q_i$ traverses an edge obtained from eliminating some vertex $g$, we replace the usage of this edge by a path of length $2$ traversing $g$.
If we obtain a walk in this manner, i.e., some ghost vertex is used more than once, we shortcut the segment of the walk between the visits of this ghost vertex; thus we obtain again a simple path.
All in all, we are obtain $(r,z)$-paths $P_1,\ldots,P_k$ in $G''$ with the following property:  for each~$i\in [k]$, there can be at most $4p$ non-ghost vertices on $P_i$ that are
traversed by some other paths $P_{i'}$ for $i'\neq i$. Note that every ghost vertex can be used by many paths.

By~\eqref{eq:still-far}, we have that each $P_i$ has length larger than $2000\sqrt{k}\lg k$ (measured with ghost vertices contributing $0$ to the length).
For $i\in [k]$, define
$$\Pub(P_i)=(V(P_i)\cap \bigcup_{i'\neq i} V(P_{i'}))\setminus (\ghost\cup \{r,z\})\qquad\textrm{and}\qquad\Prv(P_i)=(V(P_i)\setminus \bigcup_{i'\neq i} V(P_{i'}))\setminus (\ghost\cup \{r,z\}).$$
We have that $|\Pub(P_i)|\leq 4p$ for all $i\in [k]$ and, by definition, sets $\Prv(P_i)$ are pairwise disjoint.
Pattern $X$ has at most $k$ vertices, out of which one is equal to $r$.
Hence, there is at least one index $i\in [k]$ for which $\Prv(P_i)$ is disjoint with $X$.
The algorithm samples one index $i$ from $[k]$ and assumes henceforth that the sampled index has the property stated above.
Note that this holds with probability at least $1/k$; keeping this success probability in mind, we proceed with the assumption that the algorithm made a correct choice of $i$.

Now that $\Prv(P_i)$ is assumed to be disjoint with the sought pattern $X$, we can get rid of it in the following manner.
Consider the consecutive vertices of $P_i$, traversed in the direction from $r$ to $z$.
Let $v_0$ be the last light terminal on $P_i$; vertex $v_0$ is well defined because $r$ is a light terminal itself.
Let $P'$ be the suffix of $P_i$ from $v_0$ to $z$ (both inclusive).
Observe that since $v_0$, being a light terminal, is at distance at most $3$ from $r$, whereas $z$ is at distance more than $2000\sqrt{k}\lg k$ from $r$,
we have that $P'$ has length at most $2000\sqrt{k}\lg k-3\geq 1997\sqrt{k}\lg k$ (here, the distance is measured as the number of non-ghost vertices traversed, minus 1).

Let $v_0,v_1,\ldots,v_\ell=z$ be the vertices of $(\Pub(P_i)\cap V(P'))\cup \{v_0,z\}$ in the order of their appearance on $P'$.
Then clearly $\ell\leq |\Pub(P_i)|+1\leq 485\sqrt{k}\lg k$. 
For each $j=0,1,\ldots,\ell-1$, inspect the segment of $P'$ lying between $v_j$ and $v_{j+1}$.
If this segment contains some ghost vertex $g_j$, then contract it entirely onto $g_j$; in case there are multiple ghost vertices in the segment, select any of them as $g_j$.
Otherwise, if there are no ghost vertices within the segment, contract this whole segment onto vertex $v_{j}$; see Fig.~\ref{fig:contr} for a visualization.
Observe that, by the definition of $v_0$, no light terminal gets contracted in this manner.

\begin{figure}[htbp]
\begin{center}
\def\svgwidth{0.66\textwidth}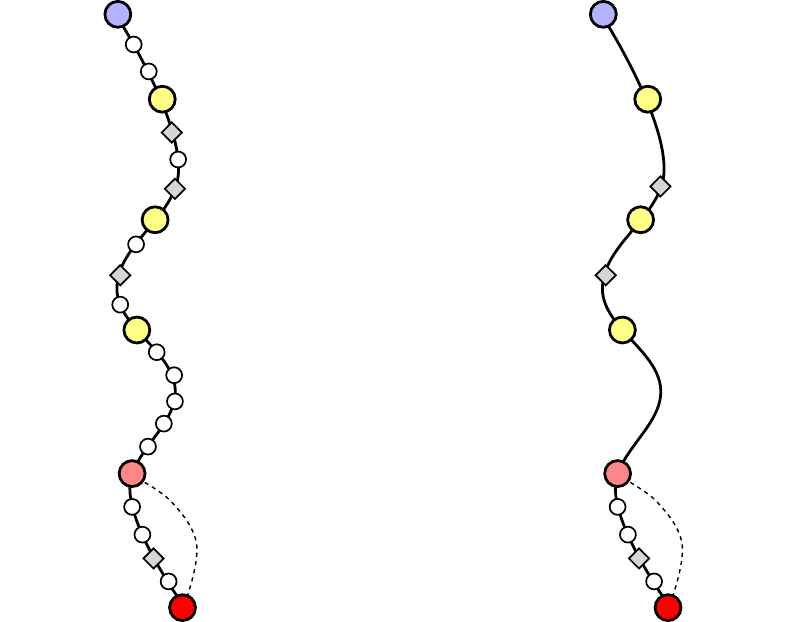
\end{center}
\caption{The contraction procedure applied on the path $P_i$. Vertices $v_1,v_2,\ldots,v_{\ell-1}$ are depicted in yellow, and small gray rhombi depict ghost vertices.}\label{fig:contr}
\end{figure}

Denote the obtained graph by $H$; this graph is equipped with a set $\ghost_H=\ghost\cap V(H)$ of ghost vertices naturally inherited from $G''$.
Since we assume that $X$ is disjoint with $\Prv(P_i)$, no contracted vertex belonged to $X$. 
Hence, it can be easily seen that $X$ is still a pattern in $H$.

Observe that $H$ has strictly less vertices than $G$ for the following reason: 
path $P'$ had at least $1997\sqrt{k}\lg k$ non-ghost vertices at the beginning, but after the contraction it got shortened to at most $485\sqrt{k}\lg k$ non-ghost vertices.
Observe also that, since $v_0$ is at distance at most $3$ from $r$ as a light terminal, we have that the distance between $r$ and $z$ in $H$ is at most $3+485\sqrt{k}\lg k\leq 488\sqrt{k}\lg k$.

Define a new instance $\Ii'=(H,r,\light_H,\heavy_H,\ghost_H,\lambda)$ by giving it the same credit $\lambda$, and taking $\light_H=\light$ and $\heavy_H=\heavy\cap V(H)$.
Observe here that we use the fact that during the construction of $H$ we did not contract any light terminal, and hence all vertices of $\light$ persist in $H$.
Clearly $\light_H$ and $\heavy_H$ are disjoint, and observe that invariants~\eqref{inv:dist} and~\eqref{inv:terminals} trivially hold for $\Ii'$, because $H$ was obtained from $G''$ by means of edge contractions.
Since no contracted vertex belongs to $X$, we have that
\begin{equation}\label{eq:case-intersect-paths-patsiz}
\patsiz_{\Ii'}(X)=\patsiz_{\Ii}(X).
\end{equation}
Observe that during the construction of $H$ we contracted at least $1511\sqrt{k}\lg k$ non-ghost vertices.
This means that the total number of vertices that are not ghost or light terminals, strictly decreases, so
\begin{equation}\label{eq:case-intersect-paths-grasiz}
\grasiz_{\Ii'}<\grasiz_{\Ii}.
\end{equation}
This also implies that $n_H<n$, where $n_H$ is the number of vertices of $H$.

We are left with analyzing the distance potential, which is factor on which we gain in this step.
More precisely, the crucial observation is that the performed contraction significantly reduces the number of far vertices.

\begin{claim}\label{cl:less-far}
The following holds:
$$\Far_{\Ii'}(X)\subseteq \Far_{\Ii}(X)\qquad\textrm{and}\qquad |\Far_{\Ii}(X)\setminus \Far_{\Ii'}(X)|\geq 511\sqrt{k}\lg k.$$
\end{claim}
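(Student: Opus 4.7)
The first inclusion $\Far_{\Ii'}(X) \subseteq \Far_\Ii(X)$ will be immediate: $H$ is obtained from $G''$ by a sequence of edge contractions, which does not increase the distance measure $\dist$, and every vertex of $X$ survives in $H$ (since by assumption $X \cap \Prv(P_i) = \emptyset$ and no light terminal is contracted). Hence $\dist_H(r,u) \leq \dist_{G''}(r,u)$ for every $u \in X$, so a close vertex in $\Ii$ stays close in $\Ii'$, which is exactly the desired contrapositive.

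For the quantitative part, the plan is to exhibit a long ``witness'' of $X$-connectivity between $r$ and a neighborhood of $z$, which is forced to be stretched across the distance threshold in $\Ii$ but folds down nicely after the contraction. Since $X$ is a pattern in $\Ii$, there is a path $Q$ from $r$ to some $x_0 \in X$ in $G''$ that uses only vertices of $X \cup \ghost$, and, as established in the preceding discussion, we may take $x_0$ with $\dist_{G''}(x_0,z) \leq 1$. Let $q_0 = r, q_1, \ldots, q_m = x_0$ enumerate the non-ghost vertices of $Q$ in order; all lie in $X$, are pairwise distinct, and remain non-ghost vertices of $H$. Because consecutive $q_i, q_{i+1}$ are joined on $Q$ by an edge or by a ghost-vertex detour, we have $\dist_{G''}(q_i, q_{i+1}) \leq 1$, and the same holds in $H$ as contraction preserves reachability and the $\dist$ measure ignores ghosts.

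Recall that by \eqref{eq:still-far} we have $\dist_{G''}(r,z) > 2000\sqrt{k}\lg k$, while the construction of $H$ provides $\dist_H(r,z) \leq 488\sqrt{k}\lg k$. Summing the per-step bounds along $Q$ and using triangle inequalities yields
\begin{equation*}
\dist_{G''}(r,q_i) \;\geq\; \dist_{G''}(r,z) - 1 - (m-i), \qquad \dist_H(r,q_i) \;\leq\; \dist_H(r,z) + (m-i) + 1.
\end{equation*}
Restricting to the $\lceil 511\sqrt{k}\lg k \rceil$ indices $i$ with $m - i < 511\sqrt{k}\lg k$, both estimates give what we need: $\dist_{G''}(r,q_i) > 2000\sqrt{k}\lg k - 1 - 511\sqrt{k}\lg k > 1000\sqrt{k}\lg k$, so $q_i \in \Far_\Ii(X)$; and $\dist_H(r,q_i) \leq 488\sqrt{k}\lg k + 512\sqrt{k}\lg k = 1000\sqrt{k}\lg k$, so $q_i \notin \Far_{\Ii'}(X)$. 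This produces the required $511\sqrt{k}\lg k$ distinct vertices in $\Far_\Ii(X) \setminus \Far_{\Ii'}(X)$.

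The main (mild) point to verify is that the path $Q$ really does project faithfully into $H$: non-ghost vertices of $X$ on $Q$ are untouched by the contraction (they are neither in $\Prv(P_i)$ nor do they get merged, since each $v_j$ and $g_j$ is an endpoint of a contracted segment rather than an internal vertex), and ghost vertices on $Q$ either survive or are absorbed onto other vertices in a way that only decreases $\dist_H$. Once this is in place, the two estimates above close the claim without further calculation.
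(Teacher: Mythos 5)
Your proposal is correct and follows essentially the same route as the paper: the first inclusion is argued via the fact that $H$ arises from $G'$ by a chain of contractions that only decrease $\dist$, and the second part is argued by taking the tail of a witnessing path in $G''[X\cup\ghost]$ from $r$ to a vertex $x_0\in X$ near $z$ and applying triangle inequalities on both sides. Your write-up is in fact a bit more explicit than the paper's, spelling out the two distance estimates and the index window $m-i<511\sqrt{k}\lg k$; the only small thing you leave implicit is that there really are that many indices available, which follows from $m\geq\dist_{G''}(r,x_0)\geq\dist_{G''}(r,z)-1>1999\sqrt{k}\lg k$.
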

\begin{proof}
Observe that each close vertex in $\Ii$ is contained in $M$, which remains intact in $G''$. 
Moreover, $H$ is obtained from $G''$ only by means of edge contractions, which can only decrease the distances.
Hence, every vertex of $X$ that is close in $\Ii$, remains close in $\Ii'$.
It follows that $\Far_{\Ii'}(X)\subseteq \Far_{\Ii}(X)$.

For the second claim, recall that in $G''$, there is some vertex $z'$ that belongs to $X$ and is at distance at most $1$ from $z$ (possibly $z'=z$). 
On the other hand, by~\eqref{eq:still-far} we have $\dist_{G''}(r,z)>2000\sqrt{k}\lg k$, so $\dist_{G''}(r,z')>1999\sqrt{k}\lg k$
Let $P$ be a path from $r$ to $z'$ whose vertices belong to $X$ or $\ghost$.
Since $\dist_{G''}(r,z')>1999\sqrt{k}\lg k$, $P$ contains at least $1999\sqrt{k}\lg k$ vertices of $X$, among which the last $511\sqrt{k}\lg k$ vertices have to belong to $\Far_{\Ii}(X)$ for the following reason:
their distance from $r$ is larger than $1999\sqrt{k}\lg k-511\sqrt{k}\lg k>1000\sqrt{k}\lg k$, by the triangle inequality.
However, in $\Ii'$ we have that the distance between $r$ and $z$ is shortened to at most $488\sqrt{k}\lg k$, and hence all of them become close.
\cqed\end{proof}

It follows that
\begin{equation}\label{eq:case-intersect-paths-dstpot}
\dstpot_{\Ii'}(X)\leq \dstpot_{\Ii'}(X)-511\sqrt{k}\lg k.
\end{equation}
Having analyzed the decrease in all the potentials, we are ready to finalize the case.

Apply the algorithm recursively to the instance $\Ii'=(H,r,\light_H,\heavy_H,\ghost_H,\lambda)$. 
As discussed earlier, $\Ii'$ satisfies the requested invariants and has strictly fewer vertices, so this recursive call is correctly defined.
The application of the algorithm yields a subset $A'$ of vertices of $H$ with the following properties:
\begin{itemize}
\item $A'\supseteq \light_H=\light$ and $H[A']$ admits a tree decomposition $\Tt'$ of width at most $24022\ctw\sqrt{k}\lg k$ with $\trms_H\cap A'$ contained in the root bag;
\item the probability that $X\subseteq A'$ is at least $\Monster(n_H,\patsiz_{\Ii'}(X),\grasiz_{\Ii'},\dstpot_{\Ii'}(X))$.
\end{itemize}
The algorithm returns the set $A:=A'$; we now verify that $A$ has all the required properties. 
Clearly we already have that $A=A'\supseteq \light_H=\light$, so let us check that $G[A]$ admits a suitable tree decomposition.

\begin{claim}\label{cl:case-intersect-paths-tw}
The subgraph $G[A]$ admits a tree decomposition of width at most $24022\ctw\sqrt{k}\lg k$ with $A\cap \trms$ contained in the root bag.
\end{claim}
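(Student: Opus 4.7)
The plan is to show that the tree decomposition $\Tt'$ returned by the recursive call on $\Ii'$ serves, without any modification, as the required tree decomposition of $G[A]$. Recall that $\Tt'$ is a tree decomposition of $H[A']$ of width at most $24022\ctw\sqrt{k}\lg k$ in which $\trms_H \cap A'$ is contained in the root bag, and that we set $A := A'$. The task therefore splits into two parts: reconciling the graph $H$ with the original $G$, and reconciling the terminals $\trms_H$ with $\trms$.

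First I would argue that $V(H)$ is naturally a subset of $V(G)$: in going from $G'$ to $G''$ we contracted $G'[S]$ onto the specific vertex $z$, and in going from $G''$ to $H$ we contracted each path segment of $P'$ onto one of its endpoints $v_j$ or $g_j$. Each contraction merges a branch set onto a vertex that already existed in $G'$, so every vertex of $H$ can be identified with a vertex of $V(G') \subseteq V(G)$. Hence $A \subseteq V(H) \subseteq V(G)$, and $G[A]$ is well-defined.

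The key observation is then that $G[A]$ is a spanning subgraph of $H[A]$. Indeed, $G'$ is the induced subgraph of $G$ on $V(G')$, so any edge $uv \in E(G)$ with $u,v \in V(H) \subseteq V(G')$ lies in $E(G')$. Both contractions used to obtain $H$ preserve edges between surviving vertices: contracting $G'[S]$ onto $z$ only adds edges incident to $z$ (the edges of vertices in $S$ are transferred to $z$), and analogously for the segment contractions $v_j \leftarrow \text{segment}$. Thus $uv$ survives as an edge of $H$, and we have $E(G[A]) \subseteq E(H[A])$. Since removing edges from a graph never invalidates a tree decomposition — conditions (T1) and (T3) are edge-independent, and (T2) only becomes easier — the decomposition $\Tt'$ is also a tree decomposition of $G[A]$, with the same width bound $24022\ctw\sqrt{k}\lg k$.

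Finally, I would check the terminal condition. By construction $\light_H = \light$, because $v_0$ was selected as the last light terminal on $P_i$ and so no light terminal was among the contracted private path vertices, while $\heavy_H = \heavy \cap V(H)$ by definition. Hence $\trms_H = \trms \cap V(H)$, and since $A \subseteq V(H)$ we have $\trms \cap A = \trms_H \cap A$, which is already contained in the root bag of $\Tt'$. The only step requiring care is the edge-preservation argument in the second paragraph: it relies on correctly tracking the identification $V(H) \hookrightarrow V(G)$ through two successive contractions, and on the fact that we only contract (never delete) along the way.
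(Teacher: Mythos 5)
Your proof is correct and follows essentially the same approach as the paper's: both observe that $\Tt := \Tt'$ works unchanged because $G[A]$ and $H[A]$ have the same vertex set and every edge of $G[A]$ is also an edge of $H[A]$ (so $\Tt'$ remains a valid tree decomposition of $G[A]$), and both note that $A \cap \trms_H = A \cap \trms$ since $A \subseteq V(H)$. Your write-up is just a more detailed spelling-out of the paper's terse ``it can be easily verified'' step.
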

\begin{proof}
We observe that decomposition $\Tt:=\Tt'$ is suitable. 
First, it can be easily verified that it is also a tree decomposition of $G[A]$, not only $H[A]$, because $G[A]$ is a graph on the same vertex set as $H[A]$ and every edge of $G[A]$ is also present in $H[A]$.
Second, from the recursive call we have that the root bag of $\Tt'$ contains $A\cap \trms_H$, but $A\cap \trms_H=A\cap \trms$, because $A$ contains only vertices that are present in $H$.
Consequently, the root bag of $\Tt$ contains $A\cap \trms$. Finally, from the recursive call we obtain that the width of $\Tt=\Tt'$ is at most $24022\sqrt{k}\lg k$.
\cqed\end{proof}

We are left with analyzing the success probability. For this, we assume $c_1\geq 1$.

\begin{claim}\label{cl:case-intersect-paths-error}
Assume $c_1\geq 1$. Supposing $X\cap W_{\mathsf{isl}}\neq \emptyset$ and the subroutine of Theorem~\ref{thm:duality} returned a sequence of paths, 
the algorithm outputs a set $A$ with $X\subseteq A$ with probability at least $\Monster(n,\patsiz(X),\grasiz,\dstpot(X))$. 
This includes the $(1-1/k)$ probability of success of the preliminary clustering step, the $1/k$ probability that the algorithm makes the correct assumption that $X\cap W_{\mathsf{isl}}=\emptyset$,
the $k^{-7}$ probability of correctly choosing the island $C$ that intersects the pattern,
the $(10k^2\lg n)^{-1}$ probability of choosing the right distance $d$, and the $1/k$ probability of choosing the right path index $i$.
\end{claim}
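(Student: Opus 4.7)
The plan is to multiply the five independent success-probability factors listed in the claim together with the inductive lower bound from the recursive call, and then show that the resulting expression dominates $\Monster(n, \patsiz(X), \grasiz, \dstpot(X))$. Since all five random experiments (clustering, branch choice, island choice, distance $d$, path index $i$) are independent of each other and of the recursion, and the recursive call on $\Ii'$ satisfies $\mathbb{P}(X \subseteq A') \geq \Monster(n_H, \patsiz_{\Ii'}(X), \grasiz_{\Ii'}, \dstpot_{\Ii'}(X))$ by the recursive specification, I would first write
\[
\mathbb{P}(X \subseteq A) \;\geq\; \frac{1 - 1/k}{10\, k^{11} \lg n}\cdot \Monster(n_H, \patsiz_{\Ii'}(X), \grasiz_{\Ii'}, \dstpot_{\Ii'}(X)).
\]

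The next step is to simplify the $\Monster$ factor using the parameter bounds proved just before the claim. Reading off the formula defining $\Monster$, it is nonincreasing in $n$ (through $\lg\lg n$ sitting in the negative exponent), nonincreasing in $\grasiz$ (through the factor $(1-1/k)^{c_2\patsiz\lg\grasiz}$, whose base is below $1$), and nonincreasing in $\dstpot$. Applying $n_H \leq n$, $\grasiz_{\Ii'} \leq \grasiz$ from \eqref{eq:case-intersect-paths-grasiz}, and the equality $\patsiz_{\Ii'}(X) = \patsiz(X)$ from \eqref{eq:case-intersect-paths-patsiz} reduces the situation to comparing the $\dstpot$ coordinates, where by Claim~\ref{cl:less-far} and \eqref{eq:case-intersect-paths-dstpot} we save at least $511\sqrt{k}\lg k$:
\[
\frac{\Monster(n_H, \patsiz_{\Ii'}(X), \grasiz_{\Ii'}, \dstpot_{\Ii'}(X))}{\Monster(n, \patsiz(X), \grasiz, \dstpot(X))} \;\geq\; \exp\!\left[\tfrac{c_1(\lg k + \lg\lg n)}{\sqrt{k}}\cdot 511\sqrt{k}\,\lg k\right] \;=\; \exp\!\left[511\,c_1 \lg k\,(\lg k + \lg\lg n)\right].
\]

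Thus the whole claim reduces to checking one numerical inequality, namely $(1 - 1/k)/(10\,k^{11}\lg n) \cdot \exp[511\,c_1\,\lg k\,(\lg k + \lg\lg n)] \geq 1$. Taking natural logarithms and bounding $\ln k \leq \lg k$ and $\ln\ln n \leq \lg\lg n$, this becomes
\[
511\,c_1 \lg^2 k + 511\,c_1 \lg k \cdot \lg\lg n \;\gtrsim\; 11\lg k + \lg\lg n + O(1),
\]
which is immediate for $c_1 \geq 1$ and $k \geq 10$ (our global assumption), since $511 \lg^2 k$ dwarfs $11 \lg k + O(1)$ and $511 \lg k \cdot \lg\lg n$ dwarfs $\lg\lg n$. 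I expect the main obstacle---such as it is---to be purely constant arithmetic: verifying that the concrete distance-potential drop of $511\sqrt{k}\lg k$ from Claim~\ref{cl:less-far} is large enough that, after multiplying by the coefficient $c_1(\lg k + \lg\lg n)/\sqrt{k}$ appearing in the first factor of $\Monster$, the resulting gain absorbs simultaneously the clustering loss $(1-1/k)$ and the combined random-choice loss $(1/k)\cdot k^{-7}\cdot(10k^2\lg n)^{-1}\cdot(1/k)$, with some slack left. No qualitatively new step beyond this bookkeeping is needed.
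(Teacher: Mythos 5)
Your proposal is correct and takes essentially the same approach as the paper: both bound the probability loss from the five sampling steps by a factor polynomial in $k$ and $\lg n$, and then show that the drop of $511\sqrt{k}\lg k$ in the distance potential from Claim~\ref{cl:less-far} (via the coefficient $c_1(\lg k+\lg\lg n)/\sqrt{k}$ in $\Monster$) amplifies into a gain of $\exp[511c_1\lg k(\lg k+\lg\lg n)]$, which dominates the loss with enormous slack. The only difference is bookkeeping: the paper splits $\Monster$ into three factors and bounds each separately (equations~\eqref{eq:est2-patsiz}--\eqref{eq:est2-grasiz}), whereas you compute the ratio of the two $\Monster$ values in one step using monotonicity; both are fine, and your accounting of the random-choice factors as $(1-1/k)/(10k^{11}\lg n)$ is in fact slightly more careful than the paper's $(1-1/k)k^{-8}(10k^2\lg n)^{-1}$, though the discrepancy is immaterial given the slack.
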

\begin{proof}
By the bound on the success probability of the recursive call, and the assumption that $k\geq 10$, we have that
\begin{eqnarray}
\mathbb{P}(X\subseteq A) & \geq & \left(1-\frac{1}{k}\right)\cdot k^{-8}\cdot (10k^2\lg n)^{-1}\cdot \Monster(n',\patsiz_{\Ii'}(X),\grasiz_{\Ii'},\dstpot_{\Ii'}(X))\nonumber \\
& \geq & k^{-12}\cdot (\lg n)^{-1}\cdot \Monster(n',\patsiz_{\Ii'}(X),\grasiz_{\Ii'},\dstpot_{\Ii'}(X)).\label{eq:to-subs}
\end{eqnarray}
By~\eqref{eq:case-intersect-paths-patsiz} and the fact that $n_H<n$, we have:
\begin{equation}\label{eq:est2-patsiz}
\exp\left[-c_1\cdot\frac{\lg k+\lg\lg n}{\sqrt{k}}\cdot \patsiz_\Ii(X)\lg\patsiz_\Ii(X)\right] \leq \exp\left[-c_1\cdot\frac{\lg k+\lg\lg n_H}{\sqrt{k}}\cdot \patsiz_{\Ii'}(X)\lg\patsiz_{\Ii'}(X)\right].
\end{equation}
By~\eqref{eq:case-intersect-paths-dstpot} and the facts that $c_2\geq 1$ and $n_H<n$, we have
\begin{eqnarray}
\exp\left[-c_1\cdot \frac{\lg k+\lg\lg n}{\sqrt{k}}\cdot \dstpot_\Ii(X)\right] & \leq & \exp\left[-c_1\cdot \frac{\lg k+\lg \lg n_H}{\sqrt{k}}\cdot \dstpot_{\Ii'}(X)\right]\cdot\nonumber \\
& & \exp\left[-c_1\cdot 511\lg k(\lg k+\lg \lg n)\right]\nonumber\\
& \leq & \exp\left[-c_1\cdot \frac{\lg k+\lg \lg n_H}{\sqrt{k}}\cdot \dstpot_{\Ii'}(X)\right]\cdot k^{12}\cdot \lg n.\label{eq:est2-dstpot}
\end{eqnarray}
Finally, by~\eqref{eq:case-intersect-paths-patsiz} and~\eqref{eq:case-intersect-paths-grasiz} we infer that
\begin{eqnarray}\label{eq:est2-grasiz}
\left(1-\frac{1}{k}\right)^{c_2\patsiz_\Ii(X)\lg \grasiz_\Ii} & \leq & 
\left(1-\frac{1}{k}\right)^{c_2\patsiz_{\Ii'}(X)\lg \grasiz_{\Ii'}}.
\end{eqnarray}
By multiplying~\eqref{eq:est2-patsiz},~\eqref{eq:est2-dstpot}, and~\eqref{eq:est2-grasiz}, and applying the obtained bound in~\eqref{eq:to-subs}, we infer that:
\begin{eqnarray*}
\mathbb{P}(X\subseteq A) & \geq & k^{-12}\cdot (\lg n)^{-1}\cdot \Monster(n,\patsiz_{\Ii}(X),\grasiz_{\Ii},\dstpot_{\Ii}(X))\cdot k^{12}\cdot \lg n\\
& = &\Monster(n,\patsiz_{\Ii}(X),\grasiz_{\Ii},\dstpot_{\Ii}(X)).
\end{eqnarray*}
This concludes the proof.
\cqed\end{proof}

Claim~\ref{cl:case-intersect-paths-tw} ensures that the output of the algorithm has the required properties, whereas Claim~\ref{cl:case-intersect-paths-error} yields the sought lower bound on the success probability.

\paragraph*{Subcase: nested chain of circular separators.}

In this case, the algorithm of Theorem~\ref{thm:duality} returned an $(r,z)$-separator chain $(C_1,\ldots,C_p)$ in $L=\torso{G''}{\ghost\cap V(G'')}$, 
where $p=\lceil 120\sqrt{k}\lg k\rceil$ and $|C_i|\leq 2k$ for each $i\in [p]$.
Obviously, by the definition of $L$ we have that $(C_1,\ldots,C_p)$ is also an $(r,z)$-separator chain in $G''$, and no vertex of any $C_i$ is a ghost vertex.
Recall that this means that all separators $C_i$ are pairwise disjoint and $\reach(r,G''-C_i)\subseteq \reach(r,G''-C_j)$ whenever $1\leq i<j\leq p$.
By invariant~\eqref{inv:dist}, at most $3$ first separators may include a light terminal, hence after excluding them we are left with at least $\lceil 117\sqrt{k}\lg k\rceil$ separators without any light terminals.
We restrict our attention to these separators.
Thus, by slightly abusing the notation, from now on we work with a $(r,z)$-separation chain $(C_1,\ldots,C_{p'})$, where $p'=\lceil 117\sqrt{k}\lg k\rceil$ 
such that each $C_i$ is disjoint with $\light\cup \ghost$ and, in fact, $\light\subseteq \reach(r,G''-C_1)$.

For $i\in [p']$, we define the following sets:
\begin{equation*}
V^\pin_i = \reach(r,G''-C_i) \qquad \textrm{and} \qquad V^\pout_i = V(G'')\setminus (C_i\cup V^\pin_i).
\end{equation*}
Thus, $(V^\pin_i,C_i,V^\pout_i)$ is a partition of $V(G'')$.
Without loss of generality we can assume that each separator $C_i$ is inclusion-wise minimal, which implies that each vertex of $C_i$ has a neighbor in $V^\pin_i$ and a neighbor in $V^\pout_i$.

We now prove that one of the separators $C_i$ has the property that it splits $X$ in a balanced way, relatively to the number of vertices of $X$ it contains.

\begin{claim}\label{cl:balanced-index}
There is an index $i\in [p']$ for which the following holds:
\begin{equation*}
10\sqrt{k}\cdot |X\cap C_i|\leq \min(|(X\cap V^\pin_i)\setminus \light|,|(X\cap V^\pout_i)\setminus \light|).
\end{equation*}
\end{claim}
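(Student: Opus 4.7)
The plan is to first verify that the pattern intersects every separator in the chain, and then use an amortization argument to locate an index $i$ where $C_i$ is ``sparse relative to its two sides.''

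First, I will establish that $X \cap C_i \neq \emptyset$ for every $i \in [p']$. Recall from the setup that there exists $z' \in X$ at distance more than $1999\sqrt{k}\lg k$ from $r$ in $G''$, where $z'$ is either $z$ itself, a neighbor of $z$ in $G''$, or a vertex joined to $z$ via a single ghost vertex. Since each $C_i$ is an $(r,z)$-separator that contains no ghost vertices, a short case analysis shows that $z' \in V^\pout_i \cup C_i$: placing $z' \in V^\pin_i$ would force a length-one or length-two path between $V^\pin_i$ and $V^\pout_i$ whose intermediate vertex, if any, would be a ghost vertex and hence not allowed in $C_i$. If $z' \in C_i$ we are done; otherwise $z' \in V^\pout_i$, and since $X$ connects $r \in V^\pin_i$ to $z' \in V^\pout_i$ using only $X \cup \ghost$ vertices, any such path must cross $C_i$ at a non-ghost vertex, which therefore belongs to $X \cap C_i$.

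Next, I would set $\alpha_i := |X \cap C_i|$, $\beta_i := |(X \cap V^\pin_i)\setminus \light|$, and $\gamma_i := |(X \cap V^\pout_i)\setminus \light|$, pick the middle index $i_0 := \lceil p'/2\rceil$, and assume without loss of generality that $\beta_{i_0} \leq \gamma_{i_0}$ (the opposite case being symmetric, where one argues over indices $i \geq i_0$ with the roles of $\beta$ and $\gamma$ swapped). The monotonicity of the chain then implies $\min(\beta_i,\gamma_i) = \beta_i$ for all $i \leq i_0$. I will argue by contradiction, supposing $10\sqrt{k} \cdot \alpha_i > \beta_i$ for every $i \leq i_0$. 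The key combinatorial input is that $C_1,\ldots,C_{i-1}$ are pairwise disjoint subsets of $V^\pin_i$ containing no light terminals and no ghost vertices, so $\sum_{j < i}\alpha_j \leq \beta_i < 10\sqrt{k}\cdot \alpha_i$.

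Setting $S_i := \sum_{j \leq i} \alpha_j$, this yields the multiplicative recurrence $S_i > S_{i-1}\bigl(1 + \tfrac{1}{10\sqrt{k}}\bigr)$. Using the base case $S_1 = \alpha_1 \geq 1$ (guaranteed by the first step) and the standard estimate $\bigl(1+\tfrac{1}{10\sqrt{k}}\bigr)^{20\sqrt{k}} \geq 2$, the recurrence telescopes to $S_{i_0} > 2^{(i_0-1)/(20\sqrt{k})}$. Since $p' \geq 117\sqrt{k}\lg k$ and hence $i_0 \geq 58\sqrt{k}\lg k$, the exponent comfortably exceeds $\lg k$, giving $S_{i_0} > k$; but since the sets $X \cap C_j$ are disjoint subsets of $X$, we also have $S_{i_0} \leq |X| \leq k$, a contradiction. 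I expect the main obstacle to be the first step---cleanly verifying $X \cap C_i \neq \emptyset$ despite ghost vertices that might seemingly allow $X$ to ``bypass'' $C_i$---whereas the amortization is a standard multiplicative-growth argument, with the constant $117$ in the definition of $p'$ tuned precisely so that $i_0$ provides enough room for $S_{i_0}$ to overshoot $k$.
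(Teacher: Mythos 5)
Your proof is correct and takes essentially the same approach as the paper: establish that every separator $C_i$ intersects $X$ (handling the ghost-vertex subtleties as you note), use the containment $C_1,\ldots,C_{i-1}\subseteq V^\pin_i$ to lower-bound $\beta_i$ by $\sum_{j<i}\alpha_j$, and derive a multiplicative-growth contradiction $S_{i_0}>k$ from the negation of the claim. The only organizational difference is that you fix $i_0=\lceil p'/2\rceil$ in advance and apply a symmetry reduction, whereas the paper introduces $i_0$ as the $a$/$b$ crossover index and bounds it (together with its mirror $i_1$) against the endpoints---both presentations encode the same calculation, and yours is arguably a bit cleaner.
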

\begin{proof}
For $i\in [p']$, let 
$$a_i=|(X\cap V^\pin_i)\setminus \light|,\qquad \textrm{and}\qquad b_i=|(X\cap V^\pout_i)\setminus \light|,\qquad \textrm{and}\qquad c_i=|X\cap C_i|.$$
Observe that since all light terminals are within $\reach(r,G''-C_1)$, for each $i\in [p']$ the following holds:
\begin{eqnarray}
a_i & \geq & \sum_{j<i} c_j;\label{eq:ai}\\
b_i & \geq & \sum_{j>i} c_j.
\end{eqnarray}
Observe that each separator $C_i$ has to contain a vertex of $X$, because $X$ contains a non-ghost vertex at distance at most $1$ from $z$, 
and this vertex can be reached from $r$ by a path that uses only ghost vertices and vertices of $X$.
We conclude that $c_i\geq 1$ for each $i\in [p']$.
Consequently, $a_i\geq 1$ for each $i\geq 2$, and $b_i\geq 1$ for each $i\leq p'-1$.

For the sake of contradiction, suppose that 
\begin{equation}\label{eq:contra}
c_i>\frac{\min(a_i,b_i)}{10\sqrt{k}}\qquad \textrm{for all $i$.}
\end{equation}
Obviously, the sequence $(a_i)_{i\in [p']}$ is non-decreasing and the sequence $(b_i)_{i\in [p']}$ is non-increasing,
Let $i_0$ be the smallest index such that $a_{i_0}>b_{i_0}$; possibly $i_0=p'+1$ if the condition $a_i\leq b_i$ is satisfied for all $i\in [p']$.
We claim that in fact $i_0\leq 53\sqrt{k}\lg k$; suppose otherwise.
By assumption~\eqref{eq:contra} and the definition of $i_0$ we have that $c_i>a_{i}/(10\sqrt{k})$ for all $i<i_0$, .
Therefore, by combining this with~\eqref{eq:ai}, we obtain that
$$a_i > \frac{1}{10\sqrt{k}}\sum_{j<i} a_j$$
for all $i<i_0$. Equivalently,
$$\sum_{j\leq i} a_j > \left(1+\frac{1}{10\sqrt{k}}\right)\cdot \sum_{j<i} a_j.$$
Since $a_2\geq 1$, we infer by a trivial induction that
$$\sum_{j<i} a_j\geq \left(1+\frac{1}{10\sqrt{k}}\right)^{i-2},$$
for all $2\leq i<i_0$. Therefore, we conclude that
\begin{eqnarray*}
a_{i_0-1} & > & \frac{1}{10\sqrt{k}}\cdot \left(1+\frac{1}{10\sqrt{k}}\right)^{53\sqrt{k}\lg k-3}\geq \frac{1}{10\sqrt{k}}\cdot \left(1+\frac{1}{10\sqrt{k}}\right)^{50\sqrt{k}\lg k}\\
& \geq & \frac{1}{10\sqrt{k}}\cdot e^{5\lg k}>k.
\end{eqnarray*}
This is a contradiction with $|X|\leq k$. Hence, we have that indeed $i_0\leq 53\sqrt{k}\lg k$.

By applying a symmetric reasoning for the last separators and numbers $b_i$, instead of the first and numbers $a_i$, 
we obtain that if $i_1$ is the largest index such that $a_{i_1}<b_{i_1}$, then $i_1\geq 64\sqrt{k}\lg k$.
However, this means that $i_0<i_1$, which is a contradiction with the fact that sequences $(a_i)_{i\in [p']}$ and $(b_i)_{i\in [p']}$ are non-increasing and non-decreasing, respectively.
\cqed\end{proof}
Observe that if an index $i$ satisfies the property given by Claim~\ref{cl:balanced-index}, then $|X\cap C_i|\leq \sqrt{k}/10$. Indeed, otherwise we would have that
$\min(|X\cap V^\pin_i|,|X\cap V^\pout_i|)>k$, which is a contradiction with $|X|\leq k$.

The algorithm performs random sampling as follows:
\begin{itemize}
\item First, it samples an index $i\in [p]$ uniformly at random, and assumes that this index $i$ satisfies the property given by Claim~\ref{cl:balanced-index}.
\item Then, it samples an integer $\alpha$ between $1$ and $\sqrt{k}/10$, and assumes that the sampled number $\alpha$ is equal to $|X\cap C_i|$.
\item Finally, the algorithm samples a subset $Q\subseteq C_i$ of size $\alpha$ uniformly at random, and assumes $Q$ to be equal to $X\cap C_i$.
\end{itemize}
As $|C_i|\leq 2k$, we observe that the assumptions stated above are correct with probability at least
$$\frac{1}{p}\cdot \frac{10}{\sqrt{k}}\cdot \frac{1}{\binom{|C_i|}{\alpha}}\geq \left(k^5\cdot \binom{2k}{\alpha}\right)^{-1}\geq k^{-2\alpha-5},$$
where $\alpha=|X\cap C_i|$. Keeping this success probability assumption in mind, we proceed further with the supposition that the sampled objects are indeed as assumed.

The algorithm now defines two subinstances $\Ii_\pout$ and $\Ii_\pin$ as follows.

First, we define $\Ii_\pout=(G_\pout,r,\light_\pout,\heavy_\pout,\ghost_\pout,\lambda+\alpha)$; note that the guessed size of $Q$ is added to the credit. 
Note that $\lambda+\alpha\leq \sqrt{k}/10+\sqrt{k}/10$, so the instance on which we shall recurse will have credit at most $\sqrt{k}/5$.
Define $G_\pout$ to be the graph constructed as follows: 
take $G''$, and contract the whole subgraph induced by $V^\pin_i\cup (C_i\setminus Q)$ onto $r$.
Observe that since $G''[V^\pin_i]$ is connected by definition, and each vertex of $C_i$ has a neighbor in $V^\pin_i$, the contracted subgraph is indeed connected.

The ghost vertices are just inherited from the original instance: we put $\ghost_\pout=\ghost\cap V(G_\pout)$.
The sets of light and heavy terminals $\light_\pout$ and $\heavy_\pout$ are defined as follows.
First, heavy terminals are inherited, but we remove all heavy terminals that reside in $Q$: we put $\heavy_\pout=\heavy\cap (V(G_\pout)\setminus Q)$.
Second, as light terminals we put $r$ plus the whole set $Q$: $\light_\pout=\{r\}\cup Q$.
Recall that $\light\subseteq V^\pin_1\subseteq V^\pin_i$, so all the light terminals of the original instance, apart from $r$, got contracted onto $r$ during the construction of $G^{\pout}$;
this is why we do not need to inherit any of them in $\Ii_\pout$.
Clearly, $\heavy_\pout$ and $\light_\pout$ defined in this manner are disjoint.
Note that we have that $|\trms_\pout|\leq |\trms|+|Q|\leq 16014\ctw\sqrt{k}\lg k+\lambda+|Q|$ and $|Q|=\alpha$, so we indeed have that $|\trms_\pout|\leq 16014\ctw\sqrt{k}\lg k+(\lambda+\alpha)$; 
that is, invariant~\eqref{inv:terminals} is satisfied in the new instance. Invariant~\eqref{inv:dist} is also satisfied, because all new light terminals are adjacent to the root $r$.

Finally, we define $X_{\pout}=X\cap V(G_\pout)$.
Since $G_\pout$ was obtained from $G''$ only by contracting some vertices onto the root, it still holds that every vertex of $X_{\pout}$ can be reached from $r$ by a path traversing only ghost vertices and
vertices of $X_{\pout}$. 
Observe also that Claim~\ref{cl:balanced-index} implies that at least $10\sqrt{k}\cdot \alpha$ vertices of $X$ that are not light terminals are contained
in $V^{\pin}_i$. These vertices do not remain in $X^{\pout}$, and hence:
$$|X_{\pout}|\leq |X|-10\sqrt{k}\cdot \alpha\leq k-10\sqrt{k}\cdot \lambda-10\sqrt{k}\cdot \alpha = k-10\sqrt{k}\cdot (\lambda+\alpha).$$
Therefore, we conclude that $X_{\pout}$ is a pattern in the instance $\Ii_\pout$. 

By applying the algorithm recursively to the instance $\Ii_\pout$, we obtain a subset of vertices $A_\pout$ with the following properties:
\begin{itemize}
\item $A_\pout\supseteq \light_\pout$ and $G_\pout[A_\pout]$ admits a tree decomposition $\Tt_\pout$ of width at most $24022\ctw \sqrt{k}\lg k$ with $A_\pout\cap \trms_\pout$ contained in the root bag.
\item The probability that $X_\pout\subseteq A_\pout$ is at least $\Monster(n_\pout,\patsiz_{\Ii_\pout}(X_\pout),\grasiz_{\Ii_\pout},\dstpot_{\Ii_\pout}(X_\pout))$, 
      where $n_\pout$ is the number of vertices on $G_\pout$.
\end{itemize}

Second, we define $\Ii_\pin=(G_\pin,r,\light_\pin,\heavy_\pin,\ghost_\pin,\lambda+\alpha)$; again the guessed size of $Q$ is added to the credit.
Note that, again, $\lambda+\alpha\leq \sqrt{k}/10+\sqrt{k}/10$, so the instance on which we shall recurse will have credit at most $\sqrt{k}/5$.
Graph $G_\pin$ is constructed from $G''$ as follows.
Inspect the connected components of the graph $G''-(V^\pin_i\cup Q)$. For each such component $D$, contract it onto a new vertex $g_D$ that is declared to be a ghost vertex.
That is, we define $\ghost_\pin$ to be $(\ghost\cap V^\pin_i)\cup \{g_D\colon D\in \cc(G''-(V^\pin_i\cup Q))\}$.

Next, we define the terminal sets $\light_\pin$, $\heavy_\pin$.
Recall that $\light\subseteq V^\pin_i$, so all the original light terminals persist in the graph $G_\pin$.
Hence, the light terminals are defined as simply inherited from the original instance: $\light_\pin=\light$.
For the heavy terminals, we take all the ones inherited from the original instance, plus we add all vertices of $Q$ explicitly: $\heavy_\pin=(\heavy\cap V(G_\pin))\cup Q$.
Note that $\light_\pin$ and $\heavy_\pin$ are disjoint, because there was no light terminal in $Q$; that is, $Q\cap \light=\emptyset$.
Again, we have that $|\trms_\pin|\leq |\trms|+|Q|\leq 16014\ctw\sqrt{k}\lg k+\lambda+|Q|$ and $|Q|=\alpha$, so we indeed have that $|\trms_\pin|\leq 16014\ctw\sqrt{k}\lg k+(\lambda+\alpha)$; 
that is, invariant~\eqref{inv:terminals} is satisfied in the new instance. Invariant~\eqref{inv:dist} is also satisfied, since edge contractions could only have made the light terminals closer to the root.

Finally, we take $X_{\pin}=X\cap V(G_\pin)$.
We observe that each vertex of $X_\pin$ can be reached from $r$ in $G_\pin$ by a path that uses only ghost vertices and vertices of $X_\pin$.
Indeed, there is such a path in $G''$, and its parts that lie outside of $V(G_\pin)$ must be contained in the connected components of $G''-(V^\pin_i\cup Q)$, 
so they can be replaced by the traversal of the ghost vertices into which these connected components are collapsed.
Next, from Claim~\ref{cl:balanced-index} we infer that
$$|X^{\pin}|\leq |X|-10\sqrt{k}\cdot \alpha\leq k-10\sqrt{k}\cdot \lambda-10\sqrt{k}\cdot \alpha = k-10\sqrt{k}\cdot (\lambda+\alpha).$$
Hence, we conclude that $X^{\pin}$ is a pattern in $\Ii_\pin$.

Again, we apply the algorithm recursively to instance $\Ii_\pin$, thus obtaining a subset of vertices $A_\pin$ with the following properties:
\begin{itemize}
\item $A_\pin\supseteq \light_\pin$ and $G_\pin[A_\pin]$ admits a tree decomposition $\Tt_\pin$ of width at most $24022\ctw \sqrt{k}\lg k$ with $A_\pin\cap \trms_\pin$ contained in the root bag.
\item The probability that $X_\pin\subseteq A_\pin$ is at least $\Monster(n_\pin,\patsiz_{\Ii_\pin}(X_\pin),\grasiz_{\Ii_\pin},\dstpot_{\Ii_\pin}(X_\pin))$, where $n_\pin$ is the number of vertices on $G_\pin$.
\end{itemize}

Observe that the sets of non-ghost vertices of $G_\pout$ and $G_\pin$ are contained in the vertex set of $G''$, and hence we can treat $A_\pout$ and $A_\pin$ also as subsets of non-ghost vertices of $G''$.
Hence, let us define $A:=(A_\pout \setminus Q) \cup A_\pin$ and declare that the algorithm returns $A$ as the answer.
Note that here, again as in the case of Section~\ref{sss:disjoint}, we formally 
allow the instance $\Ii_\pin$ to exclude the vertices of $Q$ from $A_\pin$, since they are heavy terminals there.

We now verify that $A$ has the required properties.
First, since we have that $\light_\pin=\light$, then $A\supseteq A_\pin\supseteq \light_\pin=\light$, so $A$ indeed covers all light terminals.
We now check that $G[A]$ admits a suitable tree decomposition.

\begin{claim}\label{cl:case-intersect-cycles-tw}
The subgraph $G[A]$ admits a tree decomposition of width at most $24022\ctw\sqrt{k}\lg k$ with $A\cap \trms$ contained in the root bag.
\end{claim}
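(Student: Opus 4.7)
The plan is to glue the two tree decompositions $\Tt_\pout$ and $\Tt_\pin$ obtained from the recursive calls through a new common root bag. I would first restrict $\Tt_\pout$ to $A\cap V(G_\pout)$ by deleting from every bag all vertices outside this set, obtaining a decomposition $\Tt'_\pout$ of $G_\pout[A\cap V(G_\pout)]$; analogously, restrict $\Tt_\pin$ to $A\cap V(G_\pin)=A_\pin$ to obtain $\Tt'_\pin$. Then I would introduce a fresh root node with bag
\[
B_{\mathrm{root}} \;:=\; (A\cap \trms) \,\cup\, \{r\} \,\cup\, (A\cap Q),
\]
and attach the roots of $\Tt'_\pout$ and $\Tt'_\pin$ as its two children.

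The correctness argument breaks into three parts. Property (T1) is immediate, since every vertex of $A = (A_\pout\setminus Q)\cup A_\pin$ lies either in $A\cap V(G_\pout)$ or in $A_\pin$. For (T2), the crucial observation is that every vertex of $C_i\setminus Q$ has been either contracted onto $r$ in the definition of $G_\pout$ or absorbed into a ghost vertex in $G_\pin$, so $A\cap C_i \subseteq Q$; since $C_i$ is an $(r,z)$-separator in $G''$, this means $G[A]$ has no edge directly between $A\cap V^\pin_i$ and $A\cap V^\pout_i$, and every remaining edge of $G[A]$ lies within $V(G_\pout)$ or within $V(G_\pin)$ and is therefore covered by $\Tt'_\pout$ or $\Tt'_\pin$. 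For (T3), the only vertices shared between the two sides are $r$ and $A\cap Q$; these are light terminals in $\Ii_\pout$ (hence in the root bag of $\Tt_\pout$) and either the root or heavy terminals in $\Ii_\pin$ (hence in the root bag of $\Tt_\pin$), and are simultaneously placed in $B_{\mathrm{root}}$, so the subtree of bags containing any such vertex remains connected through the new root.

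It remains to verify the width bound, since $A\cap \trms \subseteq B_{\mathrm{root}}$ holds by construction. We have
\[
|B_{\mathrm{root}}| \;\leq\; |\trms| + 1 + |Q| \;\leq\; 16014\ctw\sqrt{k}\lg k + \lambda + 1 + \alpha,
\]
which, using $\lambda,\alpha \leq \sqrt{k}/10$ and $\ctw\geq 10$, is comfortably below $24022\ctw\sqrt{k}\lg k$. All other bags inherit their widths from $\Tt_\pout$ and $\Tt_\pin$, both of width at most $24022\ctw\sqrt{k}\lg k$. I expect the main obstacle to be the careful bookkeeping that verifies $A\cap C_i \subseteq Q$ after both contraction operations and uses the separator property of $C_i$ in $G''$ to rule out cross-edges in $G[A]$; modulo this, the assembly is the standard gluing of two decompositions along a shared boundary.
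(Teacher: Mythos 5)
Your proposal is correct and matches the paper's argument almost verbatim: the paper's root bag $(\trms\cup Q)\cap A$ equals your $B_{\mathrm{root}}$ (since $r\in\trms\cap A$), the paper likewise restricts $\Tt_\pout$ to $A\cap A_\pout = A\cap V(G_\pout)$ and attaches $\Tt_\pin$ unmodified because $A_\pin\subseteq A$, and the width bound is computed the same way. Your explicit verification of (T1)--(T3), including the observation $A\cap C_i\subseteq Q$ and the separator property ruling out cross-edges, is exactly the bookkeeping the paper compresses into ``it can be easily verified.''
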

\begin{proof}
Construct the root node and associate with it the bag $(\trms \cup Q) \cap A$. Since $|\trms|\leq 16014\ctw\sqrt{k}\lg k+\lambda$ and $\lambda,|Q|\leq\sqrt{k}/10$, it follows that this bag has size at most 
$16015\ctw\sqrt{k}\lg k$.
Let us restrict decomposition $\Tt_\pout$ to the vertex set $A\cap A_\pout$; that is, remove all vertices of $A_\pout\setminus A$ from all bags of $\Tt_\pout$, thus obtaining a tree decomposition
$\Tt'_\pout$ of $G_\pout[A_\pout\cap A]$.
We have that $A_\pin\subseteq A$, so there is no need of restricting decomposition $\Tt_\pin$.
Finally, attach decompositions $\Tt_\pin$ and $\Tt'_\pout$ as children of the root bag. 
It can be easily verified that in this manner we obtain a tree decomposition of $G[A]$, and
its width is clearly at most $24022\ctw\sqrt{k}\lg k$. Finally, the root bag contains $A\cap \trms$ by its definition.
\cqed\end{proof}

We are left with estimating the success probability. Before we proceed with the final calculation, let us analyze each of the potentials.
Note that graphs $G_\pin$ and $G_\pout$ intersect only at $Q\cup \{r\}$, and each vertex of $Q$ is a heavy terminal in $\Ii_\pin$ and a light terminal in $\Ii_\pout$. 
This observation will be crucial in the forthcoming analysis. Recall also that $Q$, as a subset of $C_i$, contains no original light terminal, i.e., $Q\cap \light=\emptyset$.

Firstly, in $\Ii_\pout$ we contracted all vertices of $V^\pin_i\cup (C_i\setminus Q)$, and in $\Ii_\pin$ we contracted all vertices of $V^{\pout}_i\cup (C_i\setminus Q)$. 
Among the vertices shared by the instances, being $\{r\}\cup Q$, $r$ is a light terminal in both instances, whereas the vertices of $Q$ are heavy terminals only in $\Ii_\pin$.
From this it immediately follows that
\begin{equation}\label{eq:dgrasiz}
\grasiz_\Ii\geq \grasiz_{\Ii_\pout}+\grasiz_{\Ii_\pin}.
\end{equation}
Observe that both $G_\pout$ and $G_\pin$ are constructed from $G$ by means of edge contractions only, which can only decrease the distances from $r$.
Hence, a vertex of $X$ that was close in the original instance $\Ii$, remains close in the instance $\Ii_\pin$ or $\Ii_\pout$ to which it belongs.
The vertices of $Q$ are adjacent to the root in $\Ii_\pout$, and hence none of them can be a far vertex of $X_\pout$. Hence it follows that
\begin{equation}\label{eq:ddstpot}
\dstpot_\Ii(X)\geq \dstpot_{\Ii_\pout}(X_\pout)+\dstpot_{\Ii_\pin}(X_\pin).
\end{equation}
Finally, the vertices of $Q$---shared among the instance---are declared light terminals in $\Ii_\pin$, and hence the same analysis yields that
\begin{equation}\label{eq:dpatsiz}
\patsiz_\Ii(X)\geq \patsiz_{\Ii_\pout}(X_\pout)+\patsiz_{\Ii_\pin}(X_\pin).
\end{equation}
We now perform a finer analysis of the behaviour of potential $\patsiz$. For this, we use Claim~\ref{cl:balanced-index} as follows.

\begin{claim}\label{cl:dpatsize}
The following holds:
\begin{equation}
\patsiz_\Ii(X)\lg \patsiz_\Ii(X)\geq \patsiz_{\Ii_\pout}(X_\pout)\lg \patsiz_{\Ii_\pout}(X_\pout)+\patsiz_{\Ii_\pin}(X_\pin)\lg \patsiz_{\Ii_\pin}(X_\pin) + 10\sqrt{k}\cdot \alpha.
\end{equation}
\end{claim}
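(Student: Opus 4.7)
}
The plan is to first translate the claim into a clean inequality about three numbers, then recognize it as a statement about binary entropy, and finally use the lower bound on $\alpha$ coming from Claim~\ref{cl:balanced-index} to conclude.

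First, I would unpack the three pattern potentials under the assumption $Q = X\cap C_i$ that the algorithm makes. Since $\light\subseteq V^\pin_i$ (so $V^\pout_i$ contains no light terminals) and $Q\cap\light=\emptyset$, setting $a:=|(X\cap V^\pout_i)\setminus\light|$ and $b:=|(X\cap V^\pin_i)\setminus\light|$, the constructions of $\Ii_\pout$ and $\Ii_\pin$ give
\[
\patsiz_{\Ii_\pout}(X_\pout)=a,\qquad \patsiz_{\Ii_\pin}(X_\pin)=b+\alpha,\qquad \patsiz_\Ii(X)=a+b+\alpha=:s.
\]
Here, the $+\alpha$ on the $\pin$ side appears because the vertices of $Q$ are \emph{heavy} (hence non-light) terminals in $\Ii_\pin$, whereas they are light terminals in $\Ii_\pout$ and thus do not contribute to $\patsiz_{\Ii_\pout}(X_\pout)$. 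The goal then reduces to
\[
s\lg s \;\geq\; a\lg a + (b+\alpha)\lg(b+\alpha) + 10\sqrt{k}\cdot\alpha.
\]

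Next, I would gather the relevant bounds on $a$, $b$, $\alpha$, $s$. Claim~\ref{cl:balanced-index} gives $\min(a,b)\geq 10\sqrt{k}\cdot\alpha$, and $\alpha\geq 1$ since (as shown inside the proof of Claim~\ref{cl:balanced-index}) every separator $C_i$ intersects $X$. Hence both $a$ and $b+\alpha$ are at least $10\sqrt{k}\alpha$, with strict inequality for $b+\alpha$, giving
\[
s \;\geq\; 20\sqrt{k}\cdot\alpha + \alpha \;>\; 20\sqrt{k}\cdot\alpha.
\]

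The heart of the proof is a binary-entropy identity. Writing $s=a+(b+\alpha)$ and letting $p:=a/s$, a direct computation yields
\[
s\lg s - a\lg a - (b+\alpha)\lg(b+\alpha) \;=\; s\cdot\bigl(-p\lg p - (1-p)\lg(1-p)\bigr) \;=\; s\cdot H(p),
\]
where $H$ is the binary entropy. Since $H$ is increasing on $[0,1/2]$, symmetric around $1/2$, and both $p$ and $1-p$ are at least $\nu:=10\sqrt{k}\alpha/s$, one has $H(p)\geq H(\nu)$. Dropping the non-negative $-(1-\nu)\lg(1-\nu)$ term gives
\[
s\cdot H(p) \;\geq\; -s\nu\lg\nu \;=\; 10\sqrt{k}\cdot\alpha\cdot \lg\!\bigl(s/(10\sqrt{k}\alpha)\bigr).
\]
Finally, the bound $s>20\sqrt{k}\alpha$ ensures that $s/(10\sqrt{k}\alpha)>2$, so the logarithm exceeds $1$, and the whole expression is strictly larger than $10\sqrt{k}\cdot\alpha$, finishing the proof.

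The only mildly delicate step is the one-variable entropy estimate; the bookkeeping of which vertices end up as light versus heavy terminals in the two subinstances is routine but must be done carefully, as it is precisely this bookkeeping that produces the asymmetric $+\alpha$ contribution responsible for the gain term $10\sqrt{k}\alpha$.
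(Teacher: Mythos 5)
Your proof is correct, and it takes a genuinely different route from the paper's. Both arguments start from the same accounting: write $a=\patsiz_{\Ii_\pout}(X_\pout)$ and $b+\alpha=\patsiz_{\Ii_\pin}(X_\pin)$, with $a+(b+\alpha)=\patsiz_\Ii(X)=:s$, and both use Claim~\ref{cl:balanced-index} in the form $\min(a,b)\geq 10\sqrt{k}\alpha$. The paper then proceeds elementarily by a WLOG case split: assuming $a\leq b+\alpha$, it writes $\lg s\geq 1+\lg a$ (because $a\leq s/2$) and $\lg s\geq \lg(b+\alpha)$, which after distributing yields a surplus of exactly $a\geq 10\sqrt{k}\alpha$; the other case is symmetric. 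Your approach instead rewrites the left-minus-right difference as $s\cdot H(a/s)$ where $H$ is the binary entropy, which removes the case split and makes the source of the gain transparent: since $\min(a/s,1-a/s)\geq \nu:=10\sqrt{k}\alpha/s$, one gets $s\cdot H(a/s)\geq -s\nu\lg\nu = 10\sqrt{k}\alpha\cdot\lg(1/\nu)$, and $s>20\sqrt{k}\alpha$ forces $\lg(1/\nu)>1$. The entropy identity is what explains conceptually why the potential is $\patsiz\lg\patsiz$ rather than plain $\patsiz$, and your version actually proves the slightly stronger bound with a factor $\lg(s/(10\sqrt{k}\alpha))$, whereas the paper's argument is shorter and avoids invoking properties of $H$. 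The two proofs are otherwise equivalent in strength, and your bookkeeping of the light/heavy terminal split (the $+\alpha$ going to the $\pin$ side) matches the paper's precisely.
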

\begin{proof}
Observe that
\begin{eqnarray*}
\patsiz_{\Ii_\pout}(X_\pout) & = & |(X\cap V^\pout_i)\setminus \light|;\\
\patsiz_{\Ii_\pin}(X_\pin) & = & |(X\cap V^\pin_i)\setminus \light|+|X\cap C_i|.
\end{eqnarray*}
Thus we have
\begin{equation}\label{eq:dbasic}
\patsiz_{\Ii_\pout}(X_\pout)+\patsiz_{\Ii_\pin}(X_\pin)=|(X\cap V^\pout_i)\setminus \light|+|(X\cap V^\pin_i)\setminus \light|+|X\cap C_i|=|X\setminus \light|=\patsiz_\Ii(X).
\end{equation}
Suppose first that $\patsiz_{\Ii_\pout}(X_\pout)\leq \patsiz_{\Ii_\pin}(X_\pin)$; the second case is symmetric.
Combining this with~\eqref{eq:dbasic} yields the following:
\begin{eqnarray*}
\patsiz_{\Ii_\pout}(X_\pout)\leq \patsiz_{\Ii}(X)/2\qquad \textrm{and}\qquad \patsiz_{\Ii_\pin}(X_\pin)\leq \patsiz_{\Ii}(X).
\end{eqnarray*}
By Claim~\ref{cl:balanced-index}, we infer that
\begin{equation*}
10\sqrt{k}\cdot \alpha \leq |(X\cap V^\pout_i)\setminus \light| \leq \patsiz_{\Ii_\pout}(X_\pout). 
\end{equation*}
Putting all these together, we observe that:
\begin{eqnarray*}
\patsiz_\Ii(X)\lg \patsiz_\Ii(X) & \geq & \patsiz_{\Ii_\pout}(X_\pout)\lg \patsiz_\Ii(X)+\patsiz_{\Ii_\pin}(X_\pin)\lg \patsiz_\Ii(X)\\
& \geq & \patsiz_{\Ii_\pout}(X_\pout)\cdot (1+\lg \patsiz_{\Ii_\pout}(X_\pout))+\patsiz_{\Ii_\pin}(X_\pin)\lg \patsiz_{\Ii_\pin}(X_\pin)\\
& \geq & \patsiz_{\Ii_\pout}(X_\pout)\lg \patsiz_{\Ii_\pout}(X_\pout)+\patsiz_{\Ii_\pin}(X_\pin)\lg \patsiz_{\Ii_\pin}(X_\pin) + \patsiz_{\Ii_\pout}(X_\pout)\\
& \geq  & \patsiz_{\Ii_\pout}(X_\pout)\lg \patsiz_{\Ii_\pout}(X_\pout)+\patsiz_{\Ii_\pin}(X_\pin)\lg \patsiz_{\Ii_\pin}(X_\pin) + 10\sqrt{k}\cdot \alpha.
\end{eqnarray*}
This is exactly the claimed inequality. As mentioned before, the case when $\patsiz_{\Ii_\pout}(X_\pout)\geq \patsiz_{\Ii_\pin}(X_\pin)$ is symmetric.
\cqed\end{proof}

Finally, we can proceed with the final success probability analysis. For this, we can take any $c_1\geq 2$.

\begin{claim}\label{cl:case-intersect-cycles-error}
Assume $c_1\geq 2$. Supposing $X\cap W_{\mathsf{isl}}\neq \emptyset$ and the subroutine of Theorem~\ref{thm:duality} returned a separator chain, 
the algorithm outputs a set $A$ with $X\subseteq A$ with probability at least $\Monster(n,\patsiz(X),\grasiz,\dstpot(X))$. 
This includes the $(1-1/k)$ probability of success of the preliminary clustering step, the $1/k$ probability that the algorithm makes the correct assumption that $X\cap W_{\mathsf{isl}}=\emptyset$,
the $k^{-7}$ probability of correctly choosing the island $C$ that intersects the pattern,
the $(10^2\lg n)^{-1}$ probability of choosing the right distance $d$, and $k^{-2\alpha-5}$ probability of correctly sampling the $i$, $\alpha$, and set $Q$.
\end{claim}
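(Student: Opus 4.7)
The plan is to mirror the argument of Claim~\ref{cl:case-intersect-paths-error}: multiply all the sampling probabilities from this subcase with the two recursive $\Monster$ guarantees, and then invoke the subadditivity of the potentials across the split to collapse the product back into a single $\Monster(n,\patsiz_\Ii(X),\grasiz_\Ii,\dstpot_\Ii(X))$ bound. Concretely, I would first chain the preliminary clustering ($1-1/k$), the case decision ($1/k$), the island sampling ($\ge k^{-7}$), the distance $d$ ($\ge(10k^2\lg n)^{-1}$), and the joint sampling of $(i,\alpha,Q)$ ($\ge k^{-2\alpha-5}$) together with the two recursive bounds $\mathbb{P}(X_\pin\subseteq A_\pin)\ge\Monster(n_\pin,\patsiz_{\Ii_\pin}(X_\pin),\grasiz_{\Ii_\pin},\dstpot_{\Ii_\pin}(X_\pin))$ and the analogous one for $\pout$, obtaining a bound of shape
\begin{equation*}
\mathbb{P}(X\subseteq A)\ \ge\ C\cdot\frac{1}{k^{15+2\alpha}\,\lg n}\cdot\Monster_\pin\cdot\Monster_\pout
\end{equation*}
for some absolute constant $C>0$. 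Since $n_\pin,n_\pout\le n$ and $\Monster$ is non-increasing in $n$ (the $\lg\lg n$ term only appears in the negative exponent), I would then replace both $n_\pin$ and $n_\pout$ by $n$, which only weakens the right-hand side.

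Next I would combine the two $\Monster$ factors using the potential bounds already at hand: \eqref{eq:ddstpot} takes care of $\dstpot$, while Claim~\ref{cl:dpatsize} handles $\patsiz\lg\patsiz$ with the crucial additive slack of $10\sqrt{k}\,\alpha$. For the Bernoulli factor $(1-1/k)^{c_2\patsiz\lg\grasiz}$, the inequalities~\eqref{eq:dpatsiz} and~\eqref{eq:dgrasiz} combine nicely: $\grasiz_\Ii\ge\grasiz_{\Ii_\pin}+\grasiz_{\Ii_\pout}$ gives $\lg\grasiz_\Ii\ge\max(\lg\grasiz_{\Ii_\pin},\lg\grasiz_{\Ii_\pout})$, and together with $\patsiz_\Ii(X)\ge\patsiz_{\Ii_\pin}(X_\pin)+\patsiz_{\Ii_\pout}(X_\pout)$ this yields $\patsiz_\Ii(X)\lg\grasiz_\Ii\ge\patsiz_{\Ii_\pin}(X_\pin)\lg\grasiz_{\Ii_\pin}+\patsiz_{\Ii_\pout}(X_\pout)\lg\grasiz_{\Ii_\pout}$, so the Bernoulli factors telescope in the correct direction. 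Pulling everything together, the plan is to reach
\begin{equation*}
\Monster_\pin\cdot\Monster_\pout\ \ge\ \Monster(n,\patsiz_\Ii(X),\grasiz_\Ii,\dstpot_\Ii(X))\cdot\exp\!\left[10\,c_1\,\alpha\,(\lg k+\lg\lg n)\right],
\end{equation*}
where the exponential bonus is the $10\sqrt{k}\,\alpha$ slack from Claim~\ref{cl:dpatsize} multiplied by the $(\lg k+\lg\lg n)/\sqrt{k}$ prefactor sitting in front of $\patsiz\lg\patsiz$ in $\Monster$.

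The main obstacle---and the only place where the hypothesis $c_1\ge 2$ is actually invoked---will be verifying that this exponential bonus absorbs the sampling prefactor $k^{-15-2\alpha}(\lg n)^{-1}$. Using $e^x\ge 2^x$, the bonus is at least $(k\lg n)^{10 c_1\alpha}$; for $c_1\ge 2$ and $\alpha\ge 1$ this is at least $(k\lg n)^{20\alpha}$, which comfortably dominates the prefactor $k^{15+2\alpha}\lg n$ whenever $k\ge 10$ and $\lg n\ge 1$. Note that $\alpha\ge 1$ is automatic here, since each separator $C_i$ must contain a vertex of $X$, as already observed in the proof of Claim~\ref{cl:balanced-index}. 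With $c_1=1$ the bonus would only be $(k\lg n)^{10}$, which cannot beat $k^{17}$ at $\alpha=1$; this is precisely why the statement demands $c_1\ge 2$ and fixes the value of the constant.
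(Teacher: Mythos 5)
Your proposal is correct and follows essentially the same route as the paper: chain the sampling probabilities with the two recursive $\Monster$ guarantees, use inequalities~\eqref{eq:dpatsiz},~\eqref{eq:dgrasiz},~\eqref{eq:ddstpot}, and the $10\sqrt{k}\alpha$ slack from Claim~\ref{cl:dpatsize} to telescope the potentials, and then observe that the exponential bonus $\exp[10c_1\alpha(\lg k+\lg\lg n)]$ dominates the $k^{-\Oh(\alpha)}(\lg n)^{-1}$ prefactor precisely because $c_1\geq 2$ and $\alpha\geq 1$. The only superficial differences are in bookkeeping — you carry the constant $C$ separately and keep the bonus as $(k\lg n)^{10c_1\alpha}$, whereas the paper absorbs the constants into $k^{-2\alpha-17}(\lg n)^{-1}$ and splits the bonus into $k^{-20\alpha}$ and $(\lg n)^{-1}$ — but the underlying estimates and the role of $c_1\geq 2$ are identical.
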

\begin{proof}
We denote by $n_\pin$ and $n_\pout$ the numbers of vertices in $G_\pin$ and $G_\pout$, respectively; note that $n_\pin,n_\pout\leq n$.
From the probability of the success of recursive calls, we infer that
\begin{eqnarray}
\mathbb{P}(X\subseteq A) & \geq & \left(1-\frac{1}{k}\right)\cdot k^{-8}\cdot (10k^2\lg n)^{-1}\cdot k^{-2\alpha-5}\cdot\nonumber\\
& & \Monster(n_\pout,\patsiz_{\Ii_\pout}(X_\pout),\grasiz_{\Ii_\pout},\dstpot_{\Ii_\pout}(X_\pout)) \cdot \Monster(n_\pin,\patsiz_{\Ii_\pin}(X_\pin),\grasiz_{\Ii_\pin},\dstpot_{\Ii_\pin}(X_\pin))\nonumber\\
& \geq & k^{-2\alpha-17}\cdot (\lg n)^{-1}\cdot\nonumber \\
& & \Monster(n_\pout,\patsiz_{\Ii_\pout}(X_\pout),\grasiz_{\Ii_\pout},\dstpot_{\Ii_\pout}(X_\pout)) \cdot \Monster(n_\pin,\patsiz_{\Ii_\pin}(X_\pin),\grasiz_{\Ii_\pin},\dstpot_{\Ii_\pin}(X_\pin))\label{eq:est3-main}
\end{eqnarray}
From~\eqref{eq:dgrasiz} and~\eqref{eq:dpatsiz} we infer that
\begin{eqnarray}\label{eq:est3-grasiz}
\left(1-\frac{1}{k}\right)^{c_2\patsiz_\Ii(X)\lg \grasiz_\Ii} & \leq & 
\left(1-\frac{1}{k}\right)^{c_2\patsiz_{\Ii_\pout}(X_\pout)\lg \grasiz_{\Ii_\pout}}\cdot 
\left(1-\frac{1}{k}\right)^{c_2\patsiz_{\Ii_\pin}(X_\pin)\lg \grasiz_{\Ii_\pin}}.
\end{eqnarray}
Similarly, from~\eqref{eq:ddstpot} and the fact that $n_\pin,n_\pout\leq n$, we infer that: 
\begin{eqnarray}
\exp\left[-c_1\cdot \frac{\lg k+\lg\lg n}{\sqrt{k}}\cdot \dstpot_\Ii(X)\right] & \leq & \exp\left[-c_1\cdot \frac{\lg k+\lg \lg n_\pout}{\sqrt{k}}\cdot \dstpot_{\Ii_\pout}(X_\pout)\right]\cdot\nonumber \\
& & \exp\left[-c_1\cdot \frac{\lg k+\lg \lg n_\pin}{\sqrt{k}}\cdot \dstpot_{\Ii_\pout}(X_\pin)\right].\label{eq:est3-dstpot}
\end{eqnarray}
Finally, from Claim~\ref{cl:dpatsize} we have that
\begin{eqnarray}
\exp\left[-c_1\cdot \frac{\lg k+\lg\lg n}{\sqrt{k}}\cdot \patsiz_\Ii(X)\lg \patsiz_\Ii(X)\right] & \leq & 
\exp\left[-c_1\cdot \frac{\lg k+\lg \lg n_\pout}{\sqrt{k}}\cdot \patsiz_{\Ii_\pout}(X_\pout)\lg \patsiz_{\Ii_\pout}(X_\pout)\right]\cdot\nonumber \\
& & \exp\left[-c_1\cdot \frac{\lg k+\lg \lg n_\pin}{\sqrt{k}}\cdot \patsiz_{\Ii_\pin}(X_\pin)\lg \patsiz_{\Ii_\pin}(X_\pin)\right]\cdot\nonumber \\
& & \exp\left[-c_1\cdot \frac{\lg k+\lg \lg n}{\sqrt{k}}\cdot 10\alpha\sqrt{k}\right].\label{eq:est3-patsiz}
\end{eqnarray}
Let us analyze the last factor of the right hand side of~\eqref{eq:est3-patsiz}, keeping in mind that $c_1\geq 2$.
\begin{eqnarray}
\exp\left[-c_1\cdot \frac{\lg k+\lg \lg n}{\sqrt{k}}\cdot 10\alpha\sqrt{k}\right] & = & \exp\left[-c_1\cdot 10\alpha\cdot \lg k- c_1\cdot 10\alpha\cdot \lg \lg n\right]\nonumber\\
& \leq & k^{-20\alpha}\cdot (\lg n)^{-1} \leq k^{-2\alpha-17}\cdot (\lg n)^{-1}.\label{eq:est3-patsize-aux}
\end{eqnarray}
Finally, by plugging~\eqref{eq:est3-grasiz},~\eqref{eq:est3-dstpot},~\eqref{eq:est3-patsiz}, and~\eqref{eq:est3-patsize-aux} into~\eqref{eq:est3-main} with $\Monster$ function expanded, 
and recognizing the expression $\Monster(n,\patsiz(X),\grasiz,\dstpot(X))$, we obtain
\begin{eqnarray*}
\mathbb{P}(X\subseteq A) & \geq & \Monster(n,\patsiz(X),\grasiz,\dstpot(X)),
\end{eqnarray*}
which is exactly what we needed to prove.
\cqed\end{proof}

Claim~\ref{cl:case-intersect-cycles-tw} ensures that the output of the algorithm has the required properties, whereas Claim~\ref{cl:case-intersect-cycles-error} yields the sought lower bound on the success probability.

\newcounter{qcount}

\newcommand{\question}[1]{
\medskip 
\noindent\begin{tabular}{ p{3em}@{}p{0.89\linewidth} }
{\bf{Q\stepcounter{qcount}\theqcount.}} & #1
\end{tabular}
\medskip
}

\section{Conclusions}\label{sec:conc}

In this work we have laid foundations for a new tool for obtaining subexponential parameterized algorithms for problems on planar graphs, and more generally on graphs that exclude a fixed apex graph as a minor.
The technique is applicable to problems that can be expressed as searching for a small, connected pattern in a large host graph.
Using the new approach, we designed, in a generic manner, a number of subexponential parameterized algorithms for problems for which the existence of such algorithms was open.
We believe, however, that this work provides only the basics of a new methodology for the design of parameterized algorithms on planar and apex-minor-free graphs. 
This methodology goes beyond the paradigm of bidimensionality and is yet to be developed.

An immediate question raised by our work is whether the technique can be derandomized. Note that our main result, Theorem~\ref{thm:maintheorem}, immediately yields the following combinatorial statement.

\begin{theorem}\label{thm:maincomb}
Let $\Cc$ be a class of graphs that exclude a fixed apex graph as a minor.
Suppose $G$ is an $n$-vertex graph from $\Cc$ and $k$ is a positive integer.
Then there exists a family $\Ff$ of subsets of vertices $G$, with the following properties satisfied:
\begin{enumerate}[(P1)]
\item For each $A\in \Ff$, the treewidth of $G[A]$ is at most $\Oh(\sqrt{k}\log k)$.
\item For each vertex subset $X\subseteq V(G)$ such that $G[X]$ is connected and $|X|\leq k$, there exists some $A\in \Ff$ for which $X\subseteq A$.
\item It holds that $|\Ff|\leq 2^{\Oh(\sqrt{k}\log^2 k)}\cdot n^{\Oh(1)}$.
\end{enumerate}
\end{theorem}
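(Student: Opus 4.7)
The plan is to derive Theorem~\ref{thm:maincomb} from Theorem~\ref{thm:maintheorem} by a straightforward application of the probabilistic method. Let $p(k,n)$ denote the lower bound $(2^{\Oh(\sqrt{k}\log^2 k)}\cdot n^{\Oh(1)})^{-1}$ on the success probability guaranteed by Theorem~\ref{thm:maintheorem}. Denote by $\mathcal{A}$ the randomized algorithm promised by that theorem, so that for every fixed connected pattern $X\subseteq V(G)$ with $|X|\leq k$, a single execution of $\mathcal{A}$ produces a set $A$ with $\tw(G[A])=\Oh(\sqrt{k}\log k)$ and $\mathbb{P}(X\subseteq A)\geq p(k,n)$.

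The first step is to bound the number of patterns we must cover. Every pattern $X$ is a subset of $V(G)$ of size at most $k$, so the total number of candidate patterns is at most $\sum_{i=0}^{k}\binom{n}{i}\leq (n+1)^k$; taking logarithms, $\log|\{X\}|\leq k\log(n+1)=\Oh(k\log n)$. (One does not even need a sharper connected-set enumeration bound here, since this crude estimate is already absorbed into the target bound on $|\Ff|$.)

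The second step is the sampling. Set
\[
N \;:=\; \bigl\lceil 2\cdot p(k,n)^{-1}\cdot k\log(n+1)\bigr\rceil \;=\; 2^{\Oh(\sqrt{k}\log^2 k)}\cdot n^{\Oh(1)},
\]
and let $A_1,\ldots,A_N$ be the outputs of $N$ independent executions of $\mathcal{A}$ on $(G,k)$. For any fixed pattern $X$, the probability that $X$ is not covered by any $A_i$ is at most
\[
(1-p(k,n))^{N}\;\leq\;\exp\bigl[-N\cdot p(k,n)\bigr]\;\leq\;\exp\bigl[-2k\log(n+1)\bigr]\;<\;(n+1)^{-2k}.
\]
By the union bound over all at most $(n+1)^k$ patterns, the probability that some pattern is uncovered is at most $(n+1)^{-k}<1$. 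Hence, with positive probability the family $\Ff:=\{A_1,\ldots,A_N\}$ simultaneously covers every pattern, which proves existence. Properties (P1) and (P3) hold for $\Ff$ deterministically, since each $A_i$ is produced by $\mathcal{A}$ and $|\Ff|\leq N=2^{\Oh(\sqrt{k}\log^2 k)}\cdot n^{\Oh(1)}$.

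There is no real obstacle: the argument is entirely routine probabilistic-method bookkeeping, and the apex-minor-freeness of $\Cc$ enters only implicitly through the use of Theorem~\ref{thm:maintheorem}. The only subtlety worth flagging is that this derivation establishes existence but does not yield an efficient deterministic construction of $\Ff$; obtaining such a deterministic algorithm would require a genuine derandomization of Theorem~\ref{thm:maintheorem}, which is precisely the open problem mentioned just before the statement.
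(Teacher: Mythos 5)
Your proposal is correct and follows essentially the same route as the paper's own proof: repeat the sampling algorithm of Theorem~\ref{thm:maintheorem} roughly $p(k,n)^{-1}\cdot k\log n$ times and apply a union bound over the at most $n^{\Oh(k)}$ candidate patterns to conclude that with positive probability every pattern is covered. The minor cosmetic differences (base of the logarithm, $(n+1)^k$ vs.\ $n^k$) do not affect the argument.
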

\begin{proof}
Let $f(n,k)\in 2^{\Oh(\sqrt{k}\log^2 k)}\cdot n^{\Oh(1)}$ be the inverse of the lower bound on the success probability of the algorithm of Theorem~\ref{thm:maintheorem}.
Repeat the algorithm of Theorem~\ref{thm:maintheorem} $f(n,k)\cdot 2k\ln n$ times, and consider the list of obtained vertex subsets as a candidate for $\Ff$.
Let us fix some $X\subseteq V(G)$ such that $|X|\leq k$ and $G[X]$ is connected, and consider the probability that there is some $A\in \Ff$ for which $X\subseteq A$.
For one particular run of the algorithm of Theorem~\ref{thm:maintheorem}, this holds with probability at least $f(n,k)^{-1}$.
As the runs are independent, the probability that no element of $\Ff$ covers $X$ is upper bounded by
$$\left(1-\frac{1}{f(n,k)}\right)^{f(n,k)\cdot 2k\ln n}\leq e^{-2k\ln n}=n^{-2k}.$$
As the number of $k$-vertex subsets of $V(G)$ is upper bounded by $n^k$, we infer that the expected value of the number of sets $X$ for which there is no element of $\Ff$ covering them, is upper bounded by
$n^{-k}<1$. Hence, there is a run of the described experiment for which this number is zero; this run yields the desired family $\Ff$.
\end{proof}

The above proof of Theorem~\ref{thm:maincomb} gives only a randomized algorithm constructing a family $\Ff$ that indeed covers all small patterns with high probability.
We conjecture that the algorithm can be derandomized; that is, that the family whose existence is asserted by Theorem~\ref{thm:maincomb} can be computed in time $2^{\Oh(\sqrt{k}\log^2 k)}\cdot n^{\Oh(1)}$.
So far we are able to derandomize most of the components of the algorithm, primarily using standard constructions based on splitters and perfect hash families~\cite{NaorSS95}.
However, one part of the reasoning with which we still struggle is the clustering step (Theorem~\ref{thm:clust}).

\question{Is it possible to construct a family with properties described in Theorem~\ref{thm:maincomb} in deterministic time $2^{\Oh(\sqrt{k}\log^c k)}\cdot n^{\Oh(1)}$, for some constant $c$?}

In Appendix~\ref{sec:minor-free} we attempt to generalize our technique to the case when the pattern is disconnected, and when the class only excludes some fixed (but arbitrary) graph $H$ as a minor.
In the case of disconnected patterns, we were able to prove a suitable generalization of Theorem~\ref{thm:maintheorem}, however the success probability of the algorithm depends
inverse-exponentially on number of connected components of the pattern. In the case of general $H$-minor-free classes, we needed to assume that the pattern admits a spanning tree of constant maximum degree.
So far we do not see any reason for any of these constraints to be necessary.

\question{Is it possible to prove Theorem~\ref{thm:maintheorem} without the assumption that the subgraph induced by $X$ has to be connected?}

\question{Is it possible to prove Theorem~\ref{thm:maintheorem} only under the assumption that all graphs from $\Cc$ exclude some fixed (but arbitrary) graph $H$ as a minor?}

Our next question concerns local search problems in the spirit of the \probLSVC problem considered in Section~\ref{sec:intro}. 
Apart from this problem, Fellows et al.~\cite{FellowsFLRSV12} designed FPT algorithms also for the local search for a number of other problems on apex-minor-free classes, including \probLSDS and its distance-$d$ generalization.
Here, we are given a dominating set $S$ in a graph $G$ from some apex-minor-free class $\Cc$, and we ask whether there exists a strictly smaller dominating set $S'$ that is at Hamming distance at most $k$ from $S$.
Again, the approach of Fellows et al.~\cite{FellowsFLRSV12} is based on the observation that if there is some solution, then there is also a solution $S'$ such that $S\triangle S'$ can be connected using at most
$k$ additional vertices. Thus, we need to search for a connected pattern of size $2k$, instead of $k$, in which suitable sets $S\setminus S'$ and $S'\setminus S$ are to be found.
Unfortunately, now the preprocessing step fails: vertices outside $A$ may require to be dominated from within $A$, which poses additional constraints that are not visible in the graph $G[A]$ only. 
Hence, we cannot just focus on the graph $G[A]$.
Observe, however, that the whole reasoning would go through if $A$ covered not just $S\triangle S'$, but also its neighborhood.
More generally, if the considered problem concerns domination at distance $d$, then we should cover the distance-$d$ neighborhood of $S\triangle S'$.
This motivates the following question.

\question{Fix some positive constant $d$. Is it possible to prove a stronger version of Theorem~\ref{thm:maintheorem}, where the sampled set $A$ is required to cover the whole
distance-$d$ neighborhood of the set $X$ with the same asymptotic lower bound on the success probability?}

Finally, so far we do not know whether the connectivity condition in Theorem~\ref{thm:SItwo} is necessary.

\question{Is it possible to solve \probSI on planar graphs in time $2^{\Oh(k/\log{k})}\cdot n^{\Oh(1)}$, even without the assumption that the pattern graph is connected?}

Note that a positive answer to Q2 implies a positive answer here as well, as the algorithm of Theorem~\ref{thm:bod-private} does not require the pattern graph to be connected.

\myparagraph{Acknowledgements.} The authors thank Felix Reidl, Fernando S\'anchez Villaamil, Somnath Sikdar, and Yngve Villanger for some preliminary discussions on the topic of this paper.

\bibliographystyle{abbrv}
\bibliography{pat-cov,book_pc}

\newpage\appendix
\newcommand{\compot}{\Lambda}
\newcommand{\CMonster}{\widehat{\Monster}}

\section{Extension to multiple components of the pattern and to $H$-minor-free graphs}\label{sec:minor-free}

In this section we develop the following extension of Theorem~\ref{thm:maintheorem} for graph classes excluding a fixed minor,
at the cost of a bound on the maximum degree of the pattern. By a {\em{proper minor-closed graph class}} we mean a graph class that is minor-closed and does not contain all graphs.

\begin{theorem}\label{thm:maintheorem-minor}
Let $\Cc$ be a proper minor-closed graph class, and let $\Delta$ be a fixed constant.
Then there exists a randomized polynomial-time algorithm that, given an $n$-vertex graph $G$ from $\Cc$ and an integer $k$,
samples a vertex subset $A\subseteq V(G)$ with the following properties:
\begin{itemize}
\item The induced subgraph $G[A]$ has treewidth $\Oh(\sqrt{k}\log k)$.
\item For every vertex subset $X\subseteq V(G)$ with $|X|\leq k$ 
such that $G[X]$ is connected and has a spanning tree of maximum degree
$\Delta$,
the probability that $X$ is covered by $A$, that is $X\subseteq A$, is at least $(2^{\Oh(\sqrt{k}\log^2 k)}\cdot n^{\Oh(1)})^{-1}$.
\end{itemize}
\end{theorem}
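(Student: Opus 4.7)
The plan is to reduce Theorem~\ref{thm:maintheorem-minor} to Theorem~\ref{thm:maintheorem} by invoking the Robertson--Seymour structural decomposition theorem for $H$-minor-free graphs, after first proving an extension of Theorem~\ref{thm:maintheorem} that handles disconnected patterns. Since $\Cc$ is a proper minor-closed class, it excludes some fixed graph $H$ as a minor, so there exists a constant $h = h(H)$ and a polynomial-time algorithm that computes a tree decomposition $(T,\{\beta_t\})$ of $G$ with adhesion at most $h$, in which every torso is $h$-almost embeddable: after removing a set $A_t$ of at most $h$ apex vertices, the torso embeds in a surface of Euler genus at most $h$, with at most $h$ vortices of depth at most $h$ attached. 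Bounded-genus graphs exclude some apex graph as a minor, so once apex vertices and vortices have been handled, each ``cleaned'' bag lies in an apex-minor-free class, and Theorem~\ref{thm:maintheorem} becomes applicable inside bags.

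As a preliminary step I would prove an extension of Theorem~\ref{thm:maintheorem} in which the pattern $X$ is permitted to have up to $c$ connected components in $G$, at the cost of an additional factor of $\exp[-O(c \log n)]$ in the success probability. The argument is a direct adaptation of the main proof: one samples a root per component (each root contributing a factor of roughly $1/n$), and runs the existing recursion with several roots simultaneously, relying on the ghost-vertex mechanism of Section~\ref{ss:proof:recur} to provide virtual connectivity between components.

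The main reduction then proceeds as follows. Compute the Robertson--Seymour decomposition of $G$, and unconditionally include in $A$ all apex vertices, all vortex vertices, and all adhesion vertices across all bags; the number of such distinguished vertices per bag is $O(h^2)$, which is absorbed into the treewidth bound. For each bag $t$, construct an auxiliary graph $G_t$ consisting of the embedded part of the torso with $A_t$ removed, and with ghost vertices replacing every subtree of $T$ hanging off of $t$; as in the main proof, the ghost vertices simulate the presence of the rest of $G$ for connectivity purposes. Each $G_t$ belongs to an apex-minor-free class, so the disconnected-pattern extension can be applied to $G_t$ to produce a sample $B_t$. Let $A$ be the union of all $B_t$ together with the distinguished vertex sets above; the overall tree decomposition of $G[A]$ is obtained by gluing the tree decompositions of $G_t[B_t]$ along $T$ and inserting the apex, vortex, and adhesion vertices into every bag.

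The main obstacle, and the reason for the bounded-degree spanning tree assumption, lies in the success probability analysis. Let $c_t$ denote the number of connected components of $X \cap V(G_t)$ in $G_t$, modulo ghost connectivity. By the disconnected-pattern extension, bag $t$ contributes a factor $\exp[-O(c_t \log n)]$ to the success probability, so the cumulative loss is $\exp[-O(\log n \cdot \sum_t c_t)]$. Without any assumption on $X$, only the crude bound $\sum_t c_t \leq k$ is available, which is far too weak. However, using a spanning tree $T_X$ of $G[X]$ of maximum degree $\Delta$, every component of $X \cap V(G_t)$ corresponds to a subtree of $T_X$ whose root is either the global root or an endpoint of an edge of $T_X$ crossing an adhesion or incident to an apex vertex; each such entry point admits at most $\Delta$ branches in $T_X$, and adhesions have size at most $h$. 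A careful charging argument, amortizing pattern pieces against the structure of $T$ and the internal path-like organization of vortices, should bound $\sum_t c_t$ tightly enough that the cumulative loss stays within $\exp[-O(\sqrt{k}\log^2 k)] \cdot n^{-O(1)}$. The hardest part will be controlling the behaviour of $X$ inside vortices, where the bounded-depth path-like structure can force the pattern to decompose into many locally disconnected pieces; here the bounded-degree spanning tree is used to ensure that the pattern cannot interleave through too many vortex layers at once.
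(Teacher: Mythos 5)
Your high-level plan shares the paper's outline (Robertson--Seymour decomposition plus a multi-component extension of Theorem~\ref{thm:maintheorem}, with the degree bound controlling how badly the pattern shatters inside a single bag), but there are genuine gaps in the probability bookkeeping that make the argument fail as stated.

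The central problem is that your charging argument cannot achieve the bound on $\sum_t c_t$ that you need. The degree-$\Delta$ spanning tree only yields a \emph{per-bag} bound on $c_t$: removing at most $h$ vertices from a graph that has a spanning tree of degree $O(\Delta)$ leaves $O(\Delta h)$ components, which is a constant. It gives nothing useful on $\sum_t c_t$: a Hamiltonian-path-like pattern ($\Delta=2$) threaded through $\Theta(k)$ bags of the decomposition, each bag containing one or two pattern vertices, already forces $\sum_t c_t = \Theta(k)$. Combined with your claimed per-component penalty of $\exp[-O(\log n)]$, this gives an overall loss of $n^{-\Theta(k)}$, which is far outside the allowed $2^{-O(\sqrt{k}\log^2 k)}\cdot n^{-O(1)}$. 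There is no ``tight amortization'' to be found here; the issue is real, not a matter of care. What the paper does instead is never attempt to bound $\sum_t c_t$. It observes $c_t = O(\Delta h) = O(1)$ for each $t$, and then introduces a random dichotomy: each bag is proclaimed \emph{interesting} with probability $1/k$ and \emph{standard} otherwise. Only interesting bags pay the (expensive) multi-component recursion, and one requires a bag to be interesting only when it carries at least $\Theta(\sqrt{k}\log k)$ pattern vertices, so that there are only $O(\sqrt{k}/\log k)$ of them. All other (standard) bags are handled with a cheap Baker-style layering (Theorem~\ref{thm:baker-apex}), whose cumulative cost across all standard bags is only $2^{-O(\sqrt{k})}$ by a product estimate of the form $\prod_t(1-a_t/\ell)$ with $\sum_t a_t\le k$. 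This interesting/standard mechanism is the load-bearing idea of the proof and is absent from your proposal.

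Two secondary points. First, your proposed multi-component extension (``sample a root per component'', cost $\exp[-O(c\log n)] = n^{-O(c)}$) is much weaker than what the paper develops: the paper's Theorem~\ref{thm:maintheorem-comps} tracks ``free components'' as a separate potential, charges each of them $\Theta(\sqrt{k}\log k)$ against the pattern's size budget, and obtains a per-component penalty more like $\exp[-O(\log^3 k + \log n\log k/\sqrt{k})]$, which is what allows $O(\sqrt{k}/\log k)$ components at affordable cost. Your version only tolerates $O(1)$ components before the $n$-dependence explodes, so even if the rest of the argument went through it would not combine with $\Theta(\sqrt k/\log k)$ interesting bags. Second, you work with the raw Robertson--Seymour structure (apices, genus, and vortices explicitly), which forces you to handle vortices by hand and to include potentially linearly many vortex vertices in $A$. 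The paper sidesteps all of this by using Grohe's reformulation (Theorem~\ref{thm:grohe}), in which $\torsoo(t)-Z_t$ already lies in an apex-minor-free class for a set $Z_t$ of bounded size, so there are no vortices or genus to reason about at all; your remark that ``the hardest part will be controlling the behaviour of $X$ inside vortices'' is an artifact of the decomposition choice rather than a genuine obstruction.
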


To see why the above assumption seems necessary with our techniques, let us look at the following example. 
Let $G$ be a graph that contains a universal vertex $v_0$ (i.e., adjacent to all vertices of $V(G) \setminus \{v_0\}$)
such that $G-v_0$ is planar. It is easy to see that, since $G-v_0$ is $K_5$-minor-free,
we have that $G$ is $K_6$-minor-free. Let $H$ be a connected pattern in $G$: a connected subgraph on $k$ vertices.
If $H$ contains $v_0$, then $H-v_0$ is a not necessarily connected pattern (subgraph) of $G-v_0$.
Hence, finding a connected $k$-vertex pattern in $G$ boils down to finding a not necessarily connected $(k-1)$-vertex pattern in $G-v_0$.
However, if we bound the maximum degree of the pattern, the $(k-1)$-vertex pattern $H-v_0$ in the graph $G-v_0$ has bounded number of connected components,
making the situation much more similar to the connected planar (or apex-minor-free) case.

We do not know how to handle arbitrary disconnected patterns (subgraphs) with our techniques. As we show in this section, we are able
to do it in a limited fashion, namely we can handle
up to roughly $\sqrt{k} / \lg k$ connected components without increasing the asymptotic bound in the exponential factor
in the success probability.
The proof of the following generalization of Theorem~\ref{thm:maintheorem} is described in Section~\ref{ss:ext:components}.

\begin{theorem}\label{thm:maintheorem-comps}
Let $\Cc$ be a class of graphs that exclude a fixed apex graph as a minor.
Then there exists a randomized polynomial-time algorithm that, given an $n$-vertex graph $G$ from $\Cc$ and an integer $k$,
samples a vertex subset $A\subseteq V(G)$ with the following properties:
\begin{itemize}
\item The induced subgraph $G[A]$ has treewidth $\Oh(\sqrt{k}\log k)$.
\item For every vertex subset $X\subseteq V(G)$ with $|X|\leq k$ 
such that $G[X]$ has $\Oh(\sqrt{k}/\log k)$ connected components,
the probability that $X$ is covered by $A$, that is $X\subseteq A$, is at least $(2^{\Oh(\sqrt{k}\log^2 k)}\cdot n^{\Oh(1)})^{-1}$.
\end{itemize}
\end{theorem}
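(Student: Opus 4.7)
The plan is to generalize the recursive framework of Section~\ref{sec:proof} to patterns $X$ whose induced subgraph $G[X]$ may have up to $c^\star := c_0\sqrt{k}/\lg k$ connected components, for a small constant $c_0$. I introduce a \emph{component potential} $\compot_\Ii(X)$, equal to the number of connected components of $G[X]$ that contain no light terminal of instance $\Ii$; call such components \emph{floating}. The definition of a pattern is relaxed accordingly: every vertex of $X$ must be reachable, via $X \cup \ghost$, from either a light terminal of its own component or from a vertex that will later be discovered as a new light terminal. The specification of the general problem is strengthened by adding a multiplicative factor to the success-probability bound, namely
\begin{equation*}
\mathbb{P}(X \subseteq A) \geq \Monster(n,\patsiz(X),\grasiz,\dstpot(X)) \cdot n^{-c_3 \compot(X)}
\end{equation*}
for a suitable constant $c_3$.

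A new random action, \textbf{component discovery}, is inserted into the algorithm's case analysis. In this action the algorithm samples a uniformly random non-ghost non-terminal vertex $v$ of $V(G)$ and declares $v$ to be a new light terminal and a new local root. With probability at least $1/\grasiz$ the sampled $v$ lies in a prescribed floating component of $X$, reducing $\compot$ by one; the $1/\grasiz \leq 1/n$ cost of this step is exactly matched by the $n^{-c_3}$ drop in the extended bound. Invariant~\eqref{inv:dist} is relaxed to ``every light terminal is within distance $3$ of \emph{some} root,'' and the margin $M$ in the divide-and-conquer step becomes the union of the balls of radius $2000\sqrt{k}\lg k$ around all roots. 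Proposition~\ref{prop:local-tw} still bounds the treewidth of the auxiliary contracted graph $H$ by $\Oh(\sqrt{k}\lg k)$ when applied to each root-centered cluster separately, so Lemma~\ref{lem:balsep} still yields a balanced separator of size $\Oh(\sqrt{k}\lg k)$.

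The divide-and-conquer step of Section~\ref{sss:disjoint} then transfers with only minor modifications: a floating component of $X$ has to avoid the separator $W$, which is handled by a dedicated branching when a floating component intersects $W_{\mathsf{nrm}}$, so $\compot$ distributes additively across the subinstances $\Ii_D$, analogously to Claim~\ref{cl:recur-bounds}. The nested-chain case of Section~\ref{sss:not-disjoint} is applied relative to the specific root at which the chain-endpoint vertex $z$ lies within the margin; floating components in other parts of the graph are passed to the appropriate subinstance ($\Ii_\pout$ or $\Ii_\pin$) unchanged. Credit bookkeeping absorbs the cost of new root introductions by adding $\Oh(1)$ to $\lambda$ per discovery; the total remains at most $\sqrt{k}/10$ thanks to $c^\star = \Oh(\sqrt{k}/\lg k)$ and a small enough choice of $c_0$.

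The main obstacle is the global bookkeeping of the $n^{-c_3\compot}$ factors: the naive cumulative penalty $n^{-c_3 c^\star}$ is not directly absorbable into $(2^{\Oh(\sqrt{k}\lg^2 k)}\cdot n^{\Oh(1)})^{-1}$ unless $n \leq 2^{\Oh(\lg^3 k)}$. To handle larger $n$, I exploit the geometric shrinkage of $\grasiz$ across recursion levels (Claim~\ref{cl:recur-bounds}): by deferring each discovery as deep in the recursion tree as the safety of its floating component allows, the effective cost $1/\grasiz$ of a discovery drops to $2^{-\Omega(\lg^3 k)}$ once $\grasiz \leq 2^{\lg^3 k}$, producing a total penalty $2^{-\Omega(\sqrt{k}\lg^2 k)}$ that fits inside the target bound via an argument analogous to Claim~\ref{cl:loglog}. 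A careful coupling between the depth at which each discovery must happen and the size of its surrounding subinstance is precisely what restricts the theorem to $c^\star = \Oh(\sqrt{k}/\lg k)$ rather than allowing a larger number of components.
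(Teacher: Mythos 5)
Your proposal introduces the right potential ($\compot$, the number of non-rooted components) and even the right final-answer shape, but the discovery mechanism and its accounting are fundamentally different from the paper's, and as described they do not close.

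The paper does not add a separate ``sample a uniformly random vertex and promote it to a new root'' step. Instead, a free component is converted into a rooted one only inside the two subcases of the island-intersection branch: in the paths subcase the free component near $z$ is stitched to the original root by adding the $\Oh(\sqrt{k}\log k)$ contracted path vertices $v_0,\dots,v_\ell$ into the pattern (Claim~\ref{cl:less-far-comp}), and in the separator-chain subcase a new value $\alpha=0$ is allowed, splitting the graph along a separator that misses $X$ so that $V^{\pout}_i$ becomes a fresh subinstance rooted at $z$. Because discovery is tied to the island that $X$ already intersects, its cost is only inverse-quasipolynomial in $k$ and $\lg n$ (roughly $k^{-\Oh(1)}(\lg n)^{-1}$ per event), not $1/\grasiz$. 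Accordingly the per-component penalty the paper bakes into $\CMonster$ is $\exp\left[-c_3\left(\lg^2 k(\lg k+\lg\lg n)+\frac{\lg n\lg k}{\sqrt{k}}\right)\right]$, which raised to the power $\compot=\Oh(\sqrt{k}/\log k)$ gives $2^{-\Oh(\sqrt{k}\log^2 k)}\cdot n^{-\Oh(1)}$ after applying Claim~\ref{cl:loglog}. Your $n^{-c_3\compot}$ factor is simply too large, and your own observation that the naive total $n^{-c_3 c^\star}$ doesn't fit is not repaired by the deferral idea.

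Two concrete gaps in the deferral argument. First, ``defer each discovery as deep as the safety of its floating component allows'' is not an algorithmic decision: the algorithm does not see $X$ and cannot know which vertices are in a floating component or when they are about to be removed. A floating component can lie entirely inside $W_{\mathsf{isl}}$ at the very first level of the recursion, and would be deleted when the algorithm (with probability $1-1/k$) guesses $W_{\mathsf{isl}}\cap X=\emptyset$; there is no ``safe window'' to wait for $\grasiz\le 2^{\log^3 k}$. Second, and independently, promoting a uniformly random vertex $v$ to a new ``local root'' destroys the locally-bounded-treewidth argument. Proposition~\ref{prop:local-tw} bounds treewidth by $\ctw\cdot r$ for a graph of radius $r$; a union of balls around several far-apart roots is not a ball and generally has unbounded radius. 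Adding a super-vertex adjacent to all roots is not available either, since it may leave $\Cc$. The paper sidesteps this entirely by keeping a single root per instance: the only place a new root appears is the $\Ii_{\pout}$ subinstance, which is separated from $\Ii_{\pin}$ by the $X$-free separator $C_i$ and hence analyzed independently. As written, your proposal would need a new treewidth argument for the multi-rooted margin, and a valid replacement for the deferral heuristic; neither is supplied.
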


After proving Theorem~\ref{thm:maintheorem-comps},
in Section~\ref{ss:ext:minor-free} we show how to use this extension for a bounded number of connected components in order to handle 
connected patterns in graph classes excluding a fixed minor. To this end, we use the Robertson-Seymour decomposition theorem that provides
a tree decomposition for any graph that exclude a fixed minor.
Roughly speaking, in this decomposition every bag corresponds to a graph almost embeddable into a fixed surface, and every adhesion (intersection of neighboring bags) has bounded size.
By a result of Grohe~\cite{Grohe03}, one can delete a bounded number of vertices from an almost embeddable graph to get an apex-minor-free graph. 
If the pattern we are looking for is connected and has bounded degree,
deleting a bounded number of vertices can split it only into a bounded
number of connected components.
Thus, the algorithm for graph classes excluding a fixed minor
boils down to an application
of either Theorem~\ref{thm:maintheorem-comps}
or a simple Baker-style argument
to every bag, after turning it into an apex-minor-free graph.

\subsection{Extension to bounded number of components}\label{ss:ext:components}

In this section we prove Theorem~\ref{thm:maintheorem-comps}, that is, we extend Theorem~\ref{thm:maintheorem} to handle bounded
number of connected components of the pattern.
We describe it as a series of modifications to the proof of Theorem~\ref{thm:maintheorem}
from Section~\ref{sec:proof}.

As in Section~\ref{sec:proof}, in a recursive step we are given
an instance $\Ii$ consisting of
a minor $G$ of the input graph $G_0$, a root $r \in V(G)$,
two disjoint sets of light and heavy terminals $\light,\heavy \subseteq V(G)$
with $r \in \light$, a set $\ghost \subseteq V(G) \setminus \trms$
of ghost vertices representing connectivity in other parts of the input graph,
and credit $\lambda$.
We maintain the same invariants regarding terminals:
every light terminal is within distance at most $3$ from the root, 
and the number of terminals is bounded by $16014\ctw \sqrt{k}\lg k + \lambda$.

The first significant difference is with regards to the definition of a pattern.
We start with the following definition.
\begin{definition}
Let $X \subseteq V(G) \setminus \ghost$ be a set of vertices.
Two vertices $x,y \in X$ are \emph{connected} if they belong to the same connected component of $G[X \cup \ghost]$.
A \emph{component} of the set $X$ is an equivalence class in the relation of being connected
(i.e., a set of vertices from $X$ from a connected component of $G[X \cup \ghost]$ that contains at least one vertex of $X$).
A component $Y$ is \emph{rooted} if it contains a vertex within distance at most $3$ from the root,
and \emph{free} otherwise.
\end{definition}

For a set $X \subseteq V(G) \setminus \ghost$ in an instance $\Ii$,
we introduce the following \emph{component potential} as the fourth potential:

\begin{center}
\begin{tabular}{ p{5cm} p{9cm} }
{\em{Component potential}} & $\compot_\Ii(X) := \textrm{number of free components of } X$.
\end{tabular}
\end{center}

We can now formally define a pattern.
A set $X \subseteq V(G) \setminus \ghost$ is a pattern if $r \in X$
and
$$|X| \leq k - 10\sqrt{k} \cdot \lambda - 486\sqrt{k}\lg k \cdot \compot_\Ii(X).$$
That is,
we drop the assumption of the connectivity of $X$ (possibly with help of some ghost vertices), but every free component imposes a penalty on
the allowed size of the pattern.
Note that every pattern contains at least one rooted component (the one containing the root $r$), and an arbitrary number of free components.

\subsubsection{Potentials}

Let us now proceed to the description of the potentials.
Apart from introducing the component potential, 
we extend the set of \emph{far} vertices: every vertex
in a free component is far, regardless of its distance from the root $r$.
\begin{equation*}
\Far_\Ii(X) := \{u\in X\ \colon\ \dist_G(u,r)>1000\sqrt{k}\lg k\quad \textrm{or}\quad u\textrm{ is in a free component}\}
\end{equation*}
As before, every vertex of the pattern that is not far is called \emph{close}.

Intuitively, every free component of the pattern decreases the success probability
of the algorithm by a factor inverse-quasipolynomial in $k$ and $\lg n$. Formally,
we define
\begin{align*}
& \CMonster(n,\patsiz(X),\grasiz,\dstpot(X),\compot(X)) := \\
& \qquad 
  \Monster(n,\patsiz(X),\grasiz,\dstpot(X)) \cdot \exp\left[-c_3 \cdot \compot(X) \cdot  \left(\lg^2 k (\lg k + \lg \lg n) + \frac{\lg n \lg k}{\sqrt{k}} \right)\right]
\end{align*}
for some constant $c_3$ that will be determined later in this section.
Here, again, $n$ denotes the total number of vertices of the graph, 
and we omit the subscript $\Ii$ whenever the instance is clear from the
context.

Our goal is to compute a subset of non-ghost vertices $A \subseteq V(G) \setminus \ghost$ with the following properties:
\begin{enumerate}[(1)]
\item It holds that $\light\subseteq A$, and the graph $G[A]$ admits a tree decomposition of width at most $24022\ctw\sqrt{k}\lg k$, where $\trms\cap A$ is contained in the root bag.
\item For every pattern $X$ in instance $\Ii$, we require that
\begin{equation}\label{eq:comp-prob}
\mathbb{P}(X\subseteq A)\geq \CMonster(n,\patsiz(X),\grasiz,\dstpot(X),\compot(X)).
\end{equation}
\end{enumerate}

\subsubsection{Operations on the instance}\label{sss:ops}

One of the crucial property of the algorithm of Section~\ref{sec:proof}
is that it modifies the input graph in a limited fashion. Namely, every subinstance 
is created by means of the following operations:
\begin{enumerate}
\item Edge contraction. Furthermore, if one of the contracted vertices is a ghost vertex, the new vertex remains a ghost vertex or the contraction
is made onto the root.
\item Other modifications such as vertex/edge deletion/addition, but only involving
vertices within distance larger than $2000 \sqrt{k} \lg k$ from $r$,
and not involving vertices in the pattern $X$ nor ghost vertices essential for the connectivity relation within the pattern
(assuming that the algorithm made correct random choices).
\end{enumerate}
The analysis of Section~\ref{sec:proof} used the above properties to ensure that
the algorithm never turns a close vertex into a far vertex, assuming that
the algorithm makes correct random choices.
Here, we observe that neither of the above modifications can create a new component.
Furthermore, a component that is rooted remains rooted, and a vertex belonging to a rooted component remains in a rooted component.
As a result, these modifications cannot turn a close vertex into a far vertex under the new definition of the far vertices,
nor create a new free component.
In particular, whenever we construct a pattern in a subinstance by projecting
the original pattern in the natural way, the projection remains a pattern
in the new instance. This is because the $486\sqrt{k} \lg k \cdot \compot(X)$
penalty in the upper bound on the size of the pattern does not increase.

\subsubsection{Solving the general problem}

First, note that we can make the same assumptions \eqref{inv:ghost}--\eqref{inv:pat}
as in Section~\ref{sec:proof}.

The general structure and the main steps of the algorithm are the same
as in Section~\ref{sec:proof}: we define the margin $M$
to be the set of vertices of $G$ within distance at most
$2000\sqrt{k}\lg k$ from the root $r$, and apply the
clustering procedure to the graph $\torso{(G-M)}{R \setminus M}$.
Note that the clustering procedure does not use the assumption
of the connectivity of the pattern.
Thus, we can assume that every island --- every connected component
of $G-M$ --- has radius at most $9k^2 \lg n$ (where ghost vertices
are traversed for free), at the cost
of a $(1-1/k)$ multiplicative factor in the success probability.
By slightly abusing the notation, 
we redefine $G$ to be the graph obtained from the clustering procedure;
this graph was named $G'$ in Section~\ref{sec:proof}.

By the same arguments as in Section~\ref{sec:proof},
by locally bounded treewidth
we obtain sets $Z$ and $W \subseteq V(G)$ with the following properties:
\begin{enumerate}[(1)]
\item $Z$ consists of a selection of vertices of $M$ and islands of $G-M$, and we have that $r \in Z$;
\item $|Z| \leq 8007 \ctw \sqrt{k} \lg k$;
\item $W$ consists of vertices of $Z \cap M$, denoted $W_{\mathsf{nrm}}$, and the vertices
that belong to the islands contained in $Z$, denoted $W_{\mathsf{isl}}$;
\item every connected component of $G-W$ contains at most $|\trms|/2$
terminals and at most $|V(G) \setminus (\light \cup \ghost)|/2$ vertices
that are neither light terminals nor ghost vertices.
\end{enumerate}
As before, we randomly select a branch we pursue: with probability $(1-1/k)$ we assume that
the pattern is disjoint with $W_{\mathsf{isl}}$, and with the remaining
probability $1/k$ we assume otherwise.
Thus, we have two cases: when $W_{\mathsf{isl}}$ is assumed to be disjoint with the pattern,
and when we suppose that $W_{\mathsf{isl}}$ intersects the pattern.

\subsubsection{Case when $W_{\mathsf{isl}}$ is disjoint with the pattern}\label{sss:comp:disjoint}

The crux in this case is to observe that nothing new happens, 
mostly because the argumentation of Section~\ref{sss:disjoint} does not rely
on the connectivity of the pattern.
That is, we argue that the algorithm
as described in Section~\ref{sec:proof} works also in our setting.

Recall that in this case we first delete $W_{\mathsf{isl}}$ from $G$; let the obtained graph be named $G''$, as in Section~\ref{sec:proof}.
Then recurse into instances $\Ii_D$ created for every 
connected component $D$ of $G-W=G''-W_{\textrm{nrm}}$, defined as in Section~\ref{sec:proof}.
In the instance $\Ii_D$ we look for pattern
$X_D := X \cap V_D$, where $V_D = N_{G''}[D]\cup\{r\}$.
We denote by $\Comps$
the set of connected components of $G-W$.

We now need to analyze the behavior of the free components
in the process of recursion.
We start with the following observation that follows directly from the discussion of Section~\ref{sss:ops}.
\begin{claim}\label{cl:stdred-comp1}
Let $Y$ be a component of $X$ in $\Ii$, and let $D \in \Comps$ be such that $Y \cap V_D \neq \emptyset$. 
Then $Y \cap V_D$ is contained in a single component of $X_D$ in $\Ii_D$.
\end{claim}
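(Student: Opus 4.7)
The plan is to take any two vertices $u, v \in Y \cap V_D$, use the fact that they lie in the same connected component of $G[X \cup \ghost]$ to produce a $u$-to-$v$ path $P$, and then track what happens to $P$ through the two modifications that turn $G$ into $G_D$: first the deletion of $W_{\mathsf{isl}}$ (producing $G''$), and then the contraction of each component of $G''-V_D$ onto either $r$ or a fresh ghost vertex $g_Q$. The goal is to exhibit a $u$-to-$v$ walk in $G_D[X_D \cup \ghost_D]$, which will certify that $u$ and $v$ lie in a common component of $X_D$ in $\Ii_D$.

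First I would argue that $P$ is already a path in $G''[X \cup \ghost]$, i.e.\ that $P$ avoids $W_{\mathsf{isl}}$. Since by the case assumption $X \cap W_{\mathsf{isl}} = \emptyset$, the only worry is that $P$ could pass through a ghost vertex $g \in \ghost \cap W_{\mathsf{isl}}$. But $g$ sits inside some island $C$, and since every vertex of an island neighboring $M$ must be non-ghost (as noted in Section~\ref{sec:proof}), all neighbors of $g$ lie within $C \subseteq W_{\mathsf{isl}}$. Any neighbor of $g$ on $P$ would have to belong to $X \cup \ghost$, but $X \cap W_{\mathsf{isl}} = \emptyset$ and invariant~\eqref{inv:ghost} forbids two adjacent ghost vertices, giving a contradiction. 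This step is the main obstacle, because it is the only place where the case assumption $X \cap W_{\mathsf{isl}} = \emptyset$ is actually used to protect connectivity through ghost vertices, and it relies on a careful combination of all three facts (no-island-boundary-is-ghost, pattern disjointness, ghost-non-adjacency).

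Next I would track $P$ through the contractions defining $G_D$. Partition $P$ into maximal segments: those lying in $V_D$, and those lying in a single connected component $Q$ of $G''-V_D$. A vertex of $P$ inside $V_D$ belongs to $X \cup \ghost$, hence to $X_D \cup \ghost_D$, and survives unchanged in $G_D$. A maximal segment of $P$ inside some $Q$ gets crushed onto either $r$ (if $Q$ touches $r$) or onto the freshly created ghost vertex $g_Q \in \ghost_D$; in the former case we use that $r \in X \subseteq X_D$ by the definition of a pattern, and in the latter that $g_Q$ is declared a ghost vertex. Concatenating these pieces in order yields a $u$-to-$v$ walk in $G_D$ whose vertex set is contained in $X_D \cup \ghost_D$.

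Finally, the existence of such a walk immediately implies that $u$ and $v$ lie in the same connected component of $G_D[X_D \cup \ghost_D]$, and hence in the same component of the set $X_D$ in the instance $\Ii_D$. Since $u, v \in Y \cap V_D$ were arbitrary, the entire set $Y \cap V_D$ lies in one component, as claimed.
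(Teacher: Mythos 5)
Your proof is correct and in fact supplies the detail that the paper elides: the paper dispenses with this claim by saying it ``follows directly from the discussion of Section~\ref{sss:ops}'' (namely that the passage from $\Ii$ to $\Ii_D$ consists only of contractions and of deletions that touch neither $X$ nor ghost vertices essential for $X$'s connectivity), and your argument is exactly the explicit verification of that: you exhibit a $Y$-path, show via the ghost-adjacency/island-boundary argument that it cannot meet $W_{\mathsf{isl}}$, and then track it through the contractions defining $G_D$. One small slip: you write ``$r \in X \subseteq X_D$,'' but the inclusion goes the other way ($X_D = X \cap V_D \subseteq X$); what you need, and what is true, is simply that $r \in X$ and $r \in V_D$, hence $r \in X_D$.
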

Second, we observe that the rooted components cannot give rise to any new free components.
\begin{claim}\label{cl:stdred-comp2}
Let $Y$ be a rooted component of $X$ in $\Ii$, and let $D \in \Comps$ be such that $Y \cap V_D \neq \emptyset$.
Then $Y \cap V_D$ is contained in a rooted component of $X_D$ in $\Ii_D$.
\end{claim}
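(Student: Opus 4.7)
The plan is to perform a case analysis according to how the rootedness witness of $Y$ (a vertex $v \in Y$ with $\dist_G(v,r) \leq 3$) sits relative to $V_D$ and to the contraction structure used to build $G_D$ from $G''$. Throughout, fix $u \in Y \cap V_D$, let $Y_D$ be the component of $X_D$ in $\Ii_D$ containing $u$ (well-defined by Claim~\ref{cl:stdred-comp1}), and let $\sigma$ be a $u$-to-$v$ path in $G[Y \cup \ghost]$, witnessing that $u$ and $v$ lie in the same component of $X$. My aim is to produce a vertex of $Y_D$ within distance $3$ of $r$ in $G_D$.

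First I would dispatch the easy cases. If $r \in Y$, then $r \in Y \cap V_D \subseteq Y_D$ and $Y_D$ is rooted trivially. Otherwise I pick $v \in Y$ with $\dist_G(v,r) \leq 3$ and fix a shortest $v$-to-$r$ path $\pi$ in $G$; since the margin has radius $2000\sqrt{k}\lg k \gg 3$, the path $\pi$ stays inside $M$, hence avoids $W_{\mathsf{isl}}$ and lies in $G''$. If $v \in V_D$, then $v \in Y_D$ by Claim~\ref{cl:stdred-comp1}, and because the operations taking $G''$ to $G_D$ are only edge contractions (which cannot increase distances), $\dist_{G_D}(v,r) \leq \dist_G(v,r) \leq 3$, so $Y_D$ is rooted.

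Second, I would handle the case $v \notin V_D$ by looking at the connected component $Q$ of $G'' - V_D$ containing $v$. If $Q$ has a neighbor of $r$ in $G''$, then by construction $Q$ is contracted onto $r$ in $G_D$, so $v$ is identified with $r$. Projecting $\sigma$ to $G_D$ yields a walk from $u$ to $r$ entirely in $G_D[X_D \cup \ghost_D]$: vertices of $\sigma$ in $V_D$ remain (and, since they belong to $X \cup \ghost$, they lie in $X_D \cup \ghost_D$), while vertices of $\sigma$ outside $V_D$ collapse either onto $r$ (if their component of $G'' - V_D$ touches $r$) or onto new ghost vertices placed in $\ghost_D$. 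Hence $r$ and $u$ lie in the same component of $G_D[X_D \cup \ghost_D]$, which gives $r \in Y_D$ and thus $Y_D$ is rooted.

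The main obstacle is the remaining subcase where $Q$ does not touch $r$: here $Q$ becomes a new ghost vertex $g_Q \in \ghost_D$ and $v$ is identified with $g_Q$, so the walk obtained from $\sigma$ connects $u$ only as far as $g_Q$ in $G_D[X_D \cup \ghost_D]$, not directly to $r$. To close this, I would follow $\pi$ from $v$ towards $r$ and let $p_t$ be the first vertex of $\pi$ in $V_D$. Since $Q$ does not touch $r$, $p_t \neq r$, and since vertices of $V(D)$ have no $G''$-neighbors outside $V_D$, $p_t \in N_{G''}(V(D)) \setminus V(D)$; the tail of $\pi$ gives $\dist_{G_D}(p_t,r) \leq 3$, and the edge $g_Q p_t$ is present in $G_D$. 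The remaining task is then to use the structure of $V_D$ and of the charging rule defining $\light_D/\heavy_D$ to argue that the $G_D[X_D \cup \ghost_D]$-component of $u$ reaches some vertex of $X_D$ within distance $3$ of $r$: one traces $\pi$ further inside $V_D$, using that consecutive non-ghost vertices on $\pi$ cannot both be ghosts (by invariant~\eqref{inv:ghost}), so that ghost transitions along $\pi$ preserve connectivity of the component through $\ghost_D$, while non-ghost transitions either reach a vertex already in $X$ (done: such a vertex lies in $X_D \cap Y_D$ and is close to $r$) or eventually enter a further component of $G'' - V_D$ that does touch $r$ (reducing to the previous subcase, producing $r \in Y_D$ via contraction). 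This last bookkeeping is the technical heart of the proof.
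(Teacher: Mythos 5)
Your first three cases (where $r \in Y$, where $v \in V_D$, and where $v \notin V_D$ with $Q$ touching $r$) are essentially sound, but the final subcase — which you rightly flag as the technical heart — is a genuine gap, and the sketch you propose for closing it cannot work. After the projected $\sigma$ reaches $g_Q$, you propose to continue along $\pi$ (a shortest $v$-to-$r$ path in the graph) to find a rootedness witness; however $\pi$ is an arbitrary shortest path, and its non-ghost internal vertices need not belong to $X$, hence need not lie in $X_D$, so they cannot be used to extend the $G_D[X_D\cup\ghost_D]$-component of $u$ toward $r$. Once you step off $X\cup\ghost$ you lose the walk that certifies membership in $Y_D$, and neither the charging rule nor invariant~\eqref{inv:ghost} repairs this; the bookkeeping you anticipate does not close.

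The paper avoids the issue by never chasing the image of the old witness toward $r$: it manufactures a fresh witness inside $Y\cap V_D$. Take $w\in Y$ with $\dist_{G''}(w,r)\le 3$ and, inside $G''[Y\cup\ghost]$, follow a path towards $Y\cap V_D$; stop at $v$, the first vertex of $Y\cap V_D$ encountered. Every vertex of that prefix other than $v$ lies outside $V_D$ or is a ghost vertex of $V_D$, so in $G_D$ the entire prefix projects onto $r$ and/or ghost vertices. Since the image of $w$ is thus $r$ or a ghost, and contractions cannot increase the ghost-free distance, one gets $\dist_{G_D}(v,r)\le\dist_{G''}(w,r)\le 3$ regardless of whether the collapsing component touches $r$. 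Crucially $v\in X\cap V_D=X_D$ is a legitimate witness, and $v$ lies in $Y_D$ by Claim~\ref{cl:stdred-comp1}, so $Y_D$ is rooted. This single argument subsumes all your cases and makes the ``bounce through $V_D$'' analysis unnecessary once the path is taken within $Y\cup\ghost$ rather than along $\pi$.
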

\begin{proof}
Let $w$ be a vertex of $Y$ that is within distance at most $3$ from the root $r$. By the definition of a component, there exists
a path $P$ in $G''[Y \cup \ghost]$ between $w$ and a vertex $v \in Y \cap V_D$ such that
no vertex of $P-\{v\}$ belongs to $V_D$, except for possibly a neighbor $v'$ of $v$ on $P$
if $v'$ is a ghost vertex in $N_{G''}(D)$. Consequently, in the process of construction of $\Ii_D$,
the path $P-\{v\}$ is contracted either onto the root or onto a ghost vertex. 
As the distance between $w$ and $r$ is at most three in $G$ (hence also in $G''$), and ghost vertices are traversed
for free in our distance measure, we have that $v$ is within distance at most $3$ from the root in $\Ii_D$.
Consequently, $Y \cap V_D$ is contained in a rooted component of $\Ii_D$.
\cqed\end{proof}
Third, we observe that a free component cannot split into multiple free components.
\begin{claim}\label{cl:stdred-comp3}
Let $Y$ be a free component of $X$ in $\Ii$. Then there exists a component $D_0 \in \Comps$
such that for every $D \in \Comps$ such that $D \neq D_0$ and $Y \cap V_D \neq \emptyset$,
the set $Y \cap V_D$ is contained in a rooted component of $X_D$ in $\Ii_D$.
\end{claim}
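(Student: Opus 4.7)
The plan is to prove the claim by contradiction, showing that if both $Y_{D_1}$ and $Y_{D_2}$ are free for distinct $D_1, D_2 \in \mathcal{D}(Y) := \{D \in \Comps : Y \cap V_D \neq \emptyset\}$, one can derive a contradiction. The required $D_0$ is then the unique $D$ with $Y_D$ free (if any); otherwise any choice works vacuously.

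First I would establish two technical observations that both proceed by contraposition:
\textbf{(A) Light-terminal blocking.} If $Y_D$ is free and $v \in Y \cap W_{\textrm{nrm}} \cap V_D$, then $v$ must be charged to $D$. Indeed, if $v$ were not charged to $D$, then by the definition of $\light_D$, $v$ would be a light terminal in $\Ii_D$, hence within distance $3$ from $r$ by invariant~(a); since $v \in Y \cap V_D$ would lie in $Y_D$, this would root $Y_D$.
\textbf{(B) Root-contraction blocking.} If $Y_D$ is free and $Q$ is a connected component of $G''-V_D$ with $Q \cap Y' \neq \emptyset$, then $Q$ contains no neighbor of $r$. Otherwise $Q$ is contracted onto $r$ in the construction of $G_D$; projecting a path in $G''[Y']$ from any $v \in Y \cap V_D$ to a vertex of $Q \cap Y'$ yields a walk in $G_D[X_D \cup \ghost_D]$ from $v$ to $r$ (vertices in contracted components become $r$ or ghosts $g_{Q^\ast}$, all of which lie in $X_D \cup \ghost_D$), placing $v$ in the same component of $X_D$ as $r$, hence rooting $Y_D$.

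With (A) and (B) in hand, assume for contradiction that $Y_{D_1}, Y_{D_2}$ are both free. Take a shortest path $P = (v_1, p_1, \ldots, p_{\ell-1}, v_2)$ in $G''[Y']$ with $v_i \in Y \cap V_{D_i}$. If $P$ stays entirely in $V_{D_1}$, then $v_2 \in V_{D_1} \cap V_{D_2}$; since $v_2 \neq r$ (as $Y$ is free), we get $v_2 \in Y \cap W_{\textrm{nrm}} \cap N_{G''}(D_1) \cap N_{G''}(D_2)$. Applying (A) to both instances forces $v_2$ to be charged simultaneously to $D_1$ and $D_2$, which is impossible as each $W_{\textrm{nrm}}$-vertex has at most one BFS parent in $L$. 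Otherwise, $P$ exits $V_{D_1}$ into some component $Q$ of $G''-V_{D_1}$; by minimality of $P$, once $P$ enters $Q$ it cannot re-enter $V_{D_1}$, so $v_2 \in Q$. By (B), $Q$ has no $r$-neighbor. Symmetrically, tracing $P$ backward from $v_2$, the path exits $V_{D_2}$ into a component $Q'$ of $G''-V_{D_2}$ with no $r$-neighbor, and $v_1 \in Q'$.

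The hard part will be completing this final case: having two components $Q, Q'$ without $r$-neighbors is not immediately contradictory. The plan is to analyze the first entry vertex $u^{\ast\ast}$ of $P$ into $V_{D_2}$. This vertex lies in $Q \cap V_{D_2} \subseteq N_{G''}(D_2) \subseteq W_{\textrm{nrm}}$, so $u^{\ast\ast} \in Y' \cap W_{\textrm{nrm}}$. If $u^{\ast\ast} \in Y$, then $u^{\ast\ast} \in Y \cap W_{\textrm{nrm}} \cap V_{D_2}$, and applying (A) forces $u^{\ast\ast}$ to be charged to $D_2$; but looking at $u^{\ast\ast}$ from the perspective of $\Ii_{D_1}$ and the path segment from $u^{\ast\ast}$ back to $v_1$, which lies in $Q \cup V_{D_1}$, one sees that $u^{\ast\ast}$'s presence in $V_{D_1}$ (if it is there) or the corresponding predecessor $u^{\ast}$ serves as a second charging target, contradicting the uniqueness of charging. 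If instead $u^{\ast\ast} \in \ghost'$, one uses the merging of adjacent ghosts mandated by invariant~(a) to argue that the combined ghost vertex in $G_{D_2}$ creates a short path to $r$ via $Q'$'s connectivity, again rooting $Y_{D_2}$. In both sub-cases, the desired contradiction follows, completing the proof.
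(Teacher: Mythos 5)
Your observations (A) and (B) are correct and genuinely useful reformulations of how the charging scheme and the root-contraction interact with free components, and your handling of the first case ($P$ never leaving $V_{D_1}$) is sound. However, the proposal has a genuine gap in the remaining case, which you yourself flag: ``having two components $Q, Q'$ without $r$-neighbors is not immediately contradictory.'' The sketch that follows does not close it. The vertex $u^{\ast\ast}$ you analyze lies in $Q \subseteq V(G'')\setminus V_{D_1}$, so it is not present in $G_{D_1}$ at all (it gets contracted into a ghost vertex $g_Q$), and it is unclear what ``looking at $u^{\ast\ast}$ from the perspective of $\Ii_{D_1}$'' is supposed to mean or how a ``second charging target'' arises. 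Similarly, in the sub-case $u^{\ast\ast}\in\ghost$, the appeal to ghost merging and ``short path to $r$ via $Q'$'s connectivity'' is not a proof; $Q'$ lives in $G''-V_{D_2}$, not in $G_{D_1}$, and nothing forces $Q$ and $Q'$ to interact in a way that produces a distance bound. There is also a smaller slip: ``by minimality of $P$, once $P$ enters $Q$ it cannot re-enter $V_{D_1}$'' does not follow from $P$ being shortest in $G''[Y\cup\ghost]$; you would instead want to take $Q$ to be the component containing $v_2$ itself.

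The missing idea is a \emph{global} choice of $D_0$ rather than a local contradiction argument. The paper's proof does not argue by contradiction: it picks $D_0$ to be the touched component minimizing the $G''$-distance from $r$ to $N_{G''}[D]$, lets $Q_0$ be a shortest such path ending at $w\in N_{G''}[D_0]$, extends it by a path $Q_1$ inside $D_0$ to some $v\in Y\cap V_{D_0}$, and then, for any other touched $D$, appends a path $Q_2$ in $G''[Y\cup\ghost]$ from $v$ to a vertex $u\in Y\cap V_D$ whose internal vertices avoid $V_D$ (up to one ghost neighbor of $u$). The minimality of $Q_0$ guarantees that only $r$, $w$, $u$, and possibly that ghost neighbor $u'$ of the concatenated walk lie in $V_D$; everything else is contracted away in $G_D$, so $u$ ends up at distance at most $2$ from the root, rooting $Y\cap V_D$. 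Your contradiction-based route would ultimately need to rediscover exactly this minimality structure to say anything about how $Q$ and $Q'$ relate to $r$; as written it does not, and the sketch does not supply a substitute.
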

\begin{proof}
We say that a component $D \in \Comps$ is \emph{touched} if $Y \cap V_D \neq \emptyset$.
We consider all paths in $G''$ between the root $r$ and a vertex $w \in N_{G''}[D]$ for a touched component $D$,
and pick $Q_0$ to be a shortest such path. Let $w \in N_{G''}[D_0]$ be the second endpoint of $Q_0$, where $D_0$ is a touched component.
By the minimality of $Q_0$, no vertex of $Q_0-\{w\}$ belongs to $N_{G''}[D]$ for a touched component $D$.
Let $Q_1$ be a shortest path between $w$ and a vertex $v \in Y \cap V_{D_0}$ with all internal vertices in $D_0$;
such a path exists by the connectivity of $G''[D_0]$, and by the minimality of $Q_1$ no vertex of $Q_1-\{v\}$ belongs to $Y$.

Consider a touched component $D \in \Comps$ different than $D_0$. By the definition of a component,
there exists a path $Q_2$ in $G''[Y \cup \ghost]$ between $v$ and a vertex $u \in Y \cap V_D$ such that
no vertex of $Q_2-\{u\}$ belongs to $V_D$, except for possibly a neighbor $u'$ of $u$ on $Q_2$
that is a ghost vertex in $N_{G''}(D)$. 
Observe that from the walk being the concatenation of the paths $Q_0$, $Q_1$, and $Q_2$, only the root $r$ and the vertices $w$, $u$, and $u'$
may potentially belong to $V_D$. Consequently, in $\Ii_D$, the vertex $u$ is within distance at most $2$ from the root (recall that $u'$ is a ghost vertex
if it belongs to $V_D$).
We infer that $Y \cap V_D$ is contained in a rooted component of $X_D$ in $\Ii_D$.
This finishes the proof of the claim.
\cqed\end{proof}

Claims~\ref{cl:stdred-comp1}--\ref{cl:stdred-comp3} justify the following.
\begin{claim}\label{cl:recur-bounds-comp}
The following holds:
\begin{eqnarray}
\label{eq:compot}\compot_\Ii(X) & \geq & \sum_{D \in \Comps} \compot_{\Ii_D}(X_D).
\end{eqnarray}
\end{claim}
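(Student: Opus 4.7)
The plan is to derive the bound by constructing an explicit injection from the free components across all the subinstances to the free components of $X$ in $\Ii$, and then conclude by comparing cardinalities. The three preceding claims, \ref{cl:stdred-comp1}--\ref{cl:stdred-comp3}, are arranged so that combining them delivers exactly this injection.

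First, I will record the projection map implicitly supplied by Claim~\ref{cl:stdred-comp1}: for every component $Y$ of $X$ in $\Ii$ and every $D \in \Comps$ with $Y \cap V_D \neq \emptyset$, the set $Y \cap V_D$ lies in a unique component of $X_D$ in $\Ii_D$, which I shall denote $\pi_D(Y)$. Claim~\ref{cl:stdred-comp2} then says that whenever $Y$ is rooted, $\pi_D(Y)$ is rooted; contrapositively, every free component $Y_D$ of $X_D$ can only be of the form $\pi_D(Y)$ for a \emph{free} component $Y$ of $X$. Claim~\ref{cl:stdred-comp3} says that whenever $Y$ is free, the image $\pi_D(Y)$ is free for at most one $D$, namely the distinguished index $D_0(Y)$ given by that claim.

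Now I define $\sigma$ on the set of pairs $\{(D,Y_D)\,:\,D\in\Comps,\ Y_D\text{ is a free component of }X_D\}$ by picking any $v \in Y_D$ and letting $\sigma(D,Y_D)$ be the component of $X$ in $\Ii$ containing $v$; by Claim~\ref{cl:stdred-comp2} this component must be free. The key step is injectivity of $\sigma$: if $\sigma(D_1,Y_{D_1}) = \sigma(D_2,Y_{D_2}) = Y$, then $\pi_{D_1}(Y) = Y_{D_1}$ and $\pi_{D_2}(Y) = Y_{D_2}$ are both free components of $X_{D_1}$ and $X_{D_2}$ respectively, so Claim~\ref{cl:stdred-comp3} forces $D_1 = D_2 = D_0(Y)$, after which $Y_{D_1}$ and $Y_{D_2}$ coincide because $\pi_{D_0(Y)}(Y)$ is a single component. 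Comparing the cardinalities of the domain and codomain of $\sigma$ then yields the desired inequality $\sum_{D\in\Comps}\compot_{\Ii_D}(X_D) \leq \compot_\Ii(X)$.

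I do not anticipate any serious obstacles, since Claims~\ref{cl:stdred-comp1}--\ref{cl:stdred-comp3} have isolated the combinatorial content and only a bookkeeping step remains. The one minor point requiring care is ensuring that $\sigma$ is well-defined as a function: if several distinct free components of $X$ happened to contribute vertices to one $Y_D$, any fixed canonical choice of representative $v \in Y_D$ suffices and the injectivity argument is insensitive to this choice. Alternatively, one can bypass $\sigma$ entirely by the dual bookkeeping --- each free component $Y$ of $X$ contributes at most $1$ to $\sum_D \compot_{\Ii_D}(X_D)$ via $\pi_{D_0(Y)}(Y)$ by Claim~\ref{cl:stdred-comp3}, and every free component of every $X_D$ is hit in this way by Claim~\ref{cl:stdred-comp2} --- which gives the same bound directly.
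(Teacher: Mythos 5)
Your proposal is correct and matches what the paper leaves implicit: Claims~\ref{cl:stdred-comp1}--\ref{cl:stdred-comp3} are set up precisely so that each free component of $X$ in $\Ii$ can account for at most one free component across all the $X_D$ (via the distinguished index $D_0$ from Claim~\ref{cl:stdred-comp3}), while Claim~\ref{cl:stdred-comp2} guarantees every free component of every $X_D$ is so accounted for. Your injection $\sigma$, together with the remark that its injectivity is insensitive to the choice of representative even when several free components of $X$ merge into one component of some $X_D$, is exactly the bookkeeping the authors intend.
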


Furthermore, Claims~\ref{cl:stdred-comp1}--\ref{cl:stdred-comp3} ensure that a close vertex of $X$ in $\Ii$
cannot become a far vertex in any of the instances $\Ii_D$.
Consequently, the potential analysis of Claim~\ref{cl:recur-bounds} holds also in our case.

Using Claim~\ref{cl:recur-bounds-comp}, the following claim follows
along the same lines as Claim~\ref{cl:case-disjoint-error}, finishing
the analysis of this subcase.

\begin{claim}\label{cl:case-disjoint-error-comp}
Supposing $X\cap W_{\mathsf{isl}}=\emptyset$, the algorithm outputs a set $A$ with $X\subseteq A$ with probability at least $\CMonster(n,\patsiz_\Ii(X),\grasiz_\Ii,\dstpot_\Ii(X),\compot_\Ii(X))$. 
This includes the $(1-1/k)$ probability of success of the preliminary clustering step, and $(1-1/k)$ probability that the algorithm makes the correct assumption that $X\cap W_{\mathsf{isl}}=\emptyset$.
\end{claim}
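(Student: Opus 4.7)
The plan is to mirror the derivation of Claim~\ref{cl:case-disjoint-error}, treating the new component factor of $\CMonster$ as a separate multiplicative piece that composes across recursive subinstances. First I would start from the analogue of~\eqref{eq:est-indep}, namely
\[
\mathbb{P}(X\subseteq A)\ \geq\ \Bigl(1-\tfrac{1}{k}\Bigr)^{2}\cdot\prod_{D\in\Comps}\CMonster\bigl(n_D,\patsiz_{\Ii_D}(X_D),\grasiz_{\Ii_D},\dstpot_{\Ii_D}(X_D),\compot_{\Ii_D}(X_D)\bigr),
\]
which follows from the success bound of the recursive call on each $\Ii_D$ combined with the two $(1-1/k)$ factors for the clustering step and for the case guess. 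I would then factor each $\CMonster$ on the right-hand side as $\Monster \cdot \mathsf{CF}$, where $\mathsf{CF}$ denotes the component-factor $\exp\!\bigl[-c_3\,\compot\cdot\bigl(\lg^2 k(\lg k+\lg\lg n)+\tfrac{\lg n\lg k}{\sqrt{k}}\bigr)\bigr]$.

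Second, for the $\Monster$ piece, I would reuse verbatim the computation of Claim~\ref{cl:case-disjoint-error}. The preceding discussion explicitly records that Claim~\ref{cl:recur-bounds} continues to hold: equations \eqref{eq:patsiz}, \eqref{eq:grasiz}, \eqref{eq:grasize-dec}, \eqref{eq:dstpot} survive since the projections $X_D$ are still patterns (Section~\ref{sss:ops}) and, by Claims~\ref{cl:stdred-comp1}--\ref{cl:stdred-comp3}, no close vertex ever becomes far in a subinstance under the extended definition of $\Far$. Thus the same chain of inequalities \eqref{eq:est-patsiz}--\eqref{eq:est-grasiz} yields
\[
\Bigl(1-\tfrac{1}{k}\Bigr)^{2}\cdot\prod_{D\in\Comps}\Monster\bigl(n_D,\patsiz_{\Ii_D}(X_D),\grasiz_{\Ii_D},\dstpot_{\Ii_D}(X_D)\bigr)\ \geq\ \Monster\bigl(n,\patsiz_\Ii(X),\grasiz_\Ii,\dstpot_\Ii(X)\bigr).
\]

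Third, I would handle the new component factor, which is the only genuinely new part of the calculation. Using Claim~\ref{cl:recur-bounds-comp}, i.e.\ $\compot_\Ii(X)\geq\sum_{D}\compot_{\Ii_D}(X_D)$, together with the elementary monotonicity $n_D\leq n$ (so $\lg n_D\leq \lg n$ and $\lg\lg n_D\leq\lg\lg n$), the exponent in each $\mathsf{CF}_D$ is at least $-c_3\,\compot_{\Ii_D}(X_D)\cdot\bigl(\lg^2 k(\lg k+\lg\lg n)+\tfrac{\lg n\lg k}{\sqrt{k}}\bigr)$, and summing over $D\in\Comps$ gives
\[
\prod_{D\in\Comps}\mathsf{CF}_D\ \geq\ \exp\!\left[-c_3\cdot\compot_\Ii(X)\cdot\Bigl(\lg^2 k(\lg k+\lg\lg n)+\tfrac{\lg n\lg k}{\sqrt{k}}\Bigr)\right]\ =\ \mathsf{CF}.
\]
Multiplying the two bounds gives the promised lower bound $\CMonster(n,\patsiz_\Ii(X),\grasiz_\Ii,\dstpot_\Ii(X),\compot_\Ii(X))$.

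I expect no serious obstacle here: the genuine work was already done in Section~\ref{sss:ops} and in Claims~\ref{cl:stdred-comp1}--\ref{cl:recur-bounds-comp}, which guarantee that patterns, close/far statuses, and free components all behave subadditively under the recursion. The only mild point to be careful about is the direction of the inequality for $\mathsf{CF}$: because $\compot$ is a nonnegative quantity weighted by a negative coefficient in the exponent, the subadditivity $\sum_D\compot_{\Ii_D}(X_D)\leq\compot_\Ii(X)$ turns into the desired lower bound on the product $\prod_D\mathsf{CF}_D$, and the monotonicity in $n_D$ works in the right direction for the same reason. No new constant needs to be fixed; any value of $c_3$ for which the base case of the recursion is valid propagates through this computation unchanged.
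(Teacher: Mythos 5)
Your proof is correct and matches the approach the paper intends but leaves implicit: the paper simply says the claim "follows along the same lines as Claim~\ref{cl:case-disjoint-error}" using Claim~\ref{cl:recur-bounds-comp}, and you have filled in exactly that computation, correctly factoring $\CMonster$ into $\Monster$ times the component factor, reusing the original analysis for $\Monster$ via the preserved Claim~\ref{cl:recur-bounds} (justified by Claims~\ref{cl:stdred-comp1}--\ref{cl:stdred-comp3}), and handling the new factor by combining the subadditivity $\compot_\Ii(X)\geq\sum_D\compot_{\Ii_D}(X_D)$ with the monotonicity $n_D\leq n$.
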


\subsubsection{Case when $W_{\mathsf{isl}}$ intersects the pattern}

Just as in Section~\ref{sec:proof}, in the second case we
\begin{enumerate}[(a)]
\item guess (by sampling at random) the island $u_C \in Z$ for which $C$ intersects the pattern, 
\item take $z$ to be a non-ghost vertex of $C$ that is at distance at most $9k^2\lg n$ from all vertices of $C$ within $C$,
\item guess (by sampling at random) the distance $d$ from $z$ to the pattern
within the island $C$, and 
\item\label{pr:contr} contract the vertices of $C$ 
within distance less than $d$ from $z$ onto $z$.
\end{enumerate}
In step~\eqref{pr:contr}, we perform the same distinction as in Section~\ref{sec:proof} between non-ghost vertices (for which we use distance less than $\max(d,1)$)
and ghost vertices (for which we use distance less than $\max(d,1)-1$).

As a result, by incurring a multiplicative factor
$$\frac{1}{k} \cdot |Z|^{-1} \cdot (10k^2 \lg n)^{-1} \geq k^{-11} (\lg n)^{-1}$$
in the success probability, we can assume that we have a vertex $z \notin \ghost$
with $\dist_G(r,z) > 2000 \sqrt{k} \lg k$ such that there
exists a vertex of the pattern within distance at most $1$ from $z$.
By slightly abusing the notation, $G$ denotes the graph after the modifications.
In this case, as before,
we apply the duality theorem (Theorem~\ref{thm:duality})
to the graph $\torso{G}{\ghost}$, pair of vertices $(s,t) = (r,z)$,
and parameters:
$$p=\lceil 120\sqrt{k}\lg k\rceil\qquad\textrm{and}\qquad q=k.$$
The further behavior of the algorithm, as well as its analysis, depends on the output of the duality theorem.
Thus, we need to consider two subcases: either the duality theorem returns a family of paths or a nested chain of circular separators.

\paragraph*{Subcase: a sequence of radial paths.}
Following the argumentation of Section~\ref{sec:proof},
in this section we are working with the following objects:
\begin{itemize}
\item A vertex $z \in V(G) \setminus \ghost$ with $\dist_G(z,r) > 2000 \sqrt{k} \lg k$,
such that some vertex of $X$ is within distance at most $1$ from $z$.
\item A sequence $P_1,P_2,\ldots,P_k$ of $(r,z)$-paths in $G$, such that
for every $i \in [k]$ the set $V(P_i)$ can be partitioned as
as $$V(P_i) = \{r,z\} \uplus (V(P_i) \cap \ghost) \uplus \Pub(P_i) \uplus \Prv(P_i),$$
where the sets $\Prv(P_i)$ are pairwise disjoint
and $|\Pub(P_i)| \leq 480 \sqrt{k} \lg k$.
\end{itemize}
The success probability so far in this case is at least $k^{-11} (\lg n)^{-1}$.

As in Section~\ref{sec:proof}, we randomly pick an index $i \in [k]$
and assume further that
$X \cap \Prv(P_i) = \emptyset$.
Such an index $i$ exists as $|X| \leq k$,
the sets $\Prv(P_i)$ are pairwise disjoint,
and $r \in X$ but $r \notin \Prv(P_i)$ for every $i$.
Hence, the success probability of this step is at least $1/k$.

We reduce $P_i$ in the same way as in Section~\ref{sec:proof}.
Let $v_0$ be the last light terminal on $P_i$ (it exists as $r$
is a light terminal), let $P'$ be the suffix of $P_i$ from $v_0$
to $z$, and let $v_0,v_1,\ldots,v_\ell=z$ be the vertices
of $(\Pub(P_i) \cap V(P')) \cup \{v_0,z\}$ in the order of 
their appearance on $P'$. For every $0 \leq j < \ell$, we inspect
the segment of $P'$ between $v_j$ and $v_{j+1}$.
If this segment contains some ghost vertex $g_j$, then contract it entirely onto $g_j$; if there is more than one ghost vertex, choose an arbitrary one as $g_j$.
Otherwise, if there are no ghost vertices on the segment, contract the whole segment onto vertex $v_{j}$.

Let $H$ be the resulting graph, and construct the instance
$\Ii'$ as in Section~\ref{sec:proof}.
We have $\ell\leq |\Pub(P_i)|+1\leq 485\sqrt{k}\lg k$ and
$\dist_G(r,v_0) \leq 3$, thus $\dist_H(r,z) \leq 488\sqrt{k} \lg k$.

However, the proof of Claim~\ref{cl:less-far} fails if the vertex
of the pattern within distance at most $1$ from $z$ belongs to a free component.
The crux here is that if this is the case, then
we can turn this free component into
close by adding $\{v_0,v_1,\ldots,v_\ell\}$ to the pattern $X$,
providing a gain in the potential $\compot(X)$.
Let $X' := X \cup \{v_0,v_1,\ldots,v_\ell\}$.
\begin{claim}\label{cl:less-far-comp}
We have
$\Far_{\Ii'}(X) \subseteq \Far_{\Ii}(X)$ and
$\compot_{\Ii'}(X) \leq \compot_{\Ii}(X)$. Furthermore,
one of the following holds:
\begin{itemize}
\item $|\Far_{\Ii}(X)\setminus \Far_{\Ii'}(X)|\geq 511\sqrt{k}\lg k$, or
\item $X'$ is a pattern in $\Ii'$ and $\compot_{\Ii'}(X') < \compot_{\Ii}(X)$.
\end{itemize}
\end{claim}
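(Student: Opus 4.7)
The plan is to establish the two monotonicity statements first, and then perform a case analysis depending on the location, within the component structure of $X$, of the vertex $x^\star\in X$ that lies within distance at most $1$ from $z$ (such a vertex exists by the construction of $z$, as recalled above the claim).

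First I would verify $\Far_{\Ii'}(X)\subseteq \Far_{\Ii}(X)$ and $\compot_{\Ii'}(X)\leq \compot_{\Ii}(X)$ directly from the discussion of Section~\ref{sss:ops}. The passage from $\Ii$ to $\Ii'$ consists solely of contractions along the reduced path, each either contracting some private vertices of $P_i$ onto a public vertex $v_j$, or onto a ghost vertex $g_j$ preserved in the instance. Such contractions cannot split any component of $X\cup \ghost$ and cannot move a vertex of a rooted component outside it, because distances to $r$ only decrease. Hence no rooted component becomes free, no new component of $X$ is created, and no close vertex becomes far, yielding both monotonicity statements.

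Next comes the case distinction. In the first case, $x^\star$ lies in a rooted component $Y$ of $X$ in $\Ii$. Then, as in the proof of Claim~\ref{cl:less-far}, there is a path $P$ in $G''[Y\cup\ghost]$ from $x^\star$ to a vertex at distance at most $3$ from $r$, so $P$ contains at least $\dist_{G''}(r,z)-4\geq 1999\sqrt{k}\lg k$ vertices of $X$; the last $511\sqrt{k}\lg k$ of them sit at distance exceeding $1000\sqrt{k}\lg k$ from $r$ in $\Ii$, and therefore belong to $\Far_\Ii(X)$. After the contraction, $\dist_H(r,z)\leq 488\sqrt{k}\lg k$, so by the triangle inequality all of them are close in $\Ii'$, giving the first bullet.

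In the second case, $x^\star$ lies in a free component $Y$ of $X$ in $\Ii$, and here the idea is precisely to pay for the progress by consuming one free component. The set $\{v_0,\ldots,v_\ell\}$ forms a connected path in $H$ whose endpoint $v_0\in\light$ is at distance at most $3$ from $r$ (hence in a rooted component of $X'$), and whose other endpoint $v_\ell=z$ is adjacent to $x^\star$ in $H$; consequently the former free component $Y$ merges into the rooted component of $r$ in $X'$, while by the monotonicity above no new free component is created. Thus $\compot_{\Ii'}(X')\leq\compot_{\Ii'}(X)-1\leq \compot_\Ii(X)-1$. To conclude that $X'$ is a pattern in $\Ii'$, the key check is the size bound: since $|X'|\leq |X|+\ell+1\leq |X|+486\sqrt{k}\lg k$ and $X$ is a pattern in $\Ii$, we get
\[
|X'|\leq k-10\sqrt{k}\lambda-486\sqrt{k}\lg k\cdot \compot_\Ii(X)+486\sqrt{k}\lg k\leq k-10\sqrt{k}\lambda-486\sqrt{k}\lg k\cdot \compot_{\Ii'}(X'),
\]
which is exactly the pattern condition; note that $r\in X\subseteq X'$. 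The main obstacle to be careful about is precisely this arithmetic: the constant $486$ in the definition of a pattern is tuned so that adding the reduced path $v_0,\ldots,v_\ell$ consumes at most one unit of slack in the $\compot$ term, which is why the second bullet can be stated as a net win.
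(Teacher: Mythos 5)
Your proof takes essentially the same approach as the paper's: establish both monotonicity statements from the discussion of how edge contractions interact with components and distances, then split on whether the vertex near $z$ lies in a rooted or a free component, reusing the Claim~\ref{cl:less-far} path-counting argument in the rooted case and the size-vs.-$\compot$ trade-off in the free case. The arithmetic on the size bound matches the paper exactly, and your remark that $r\in X\subseteq X'$ is a correct and necessary check for $X'$ to qualify as a pattern.
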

\begin{proof}
The first part of the claim follows directly from the discussion of Section~\ref{sss:ops} and the fact $H$ is created from $G$ by means of edge contractions,
in the same manner as in Section~\ref{sec:proof}.
For the second part, let $v \in X$ be a vertex within distance at most $1$ from $z$ in $G$ (possibly $v=z$).

If $v$ belongs to a rooted component of $X$ in $\Ii$, the analysis
of Claim~\ref{cl:less-far} remains valid.
That is, $G[X \cup \ghost]$ contains a path
$P$ from $v$ to a vertex $w$
that is within distance at most $3$
from the root, and the first $511 \sqrt{k}\lg k$ vertices
of $X$ of this path belong to $\Far_{\Ii}(X)$.
These vertices become close in $H$, as $\dist_H(r,v_0) \leq 488 \sqrt{k} \lg k$.

Hence, we are left with the case when $v$ belongs to some free component $Y$
of $X$ in $\Ii$.
The crucial observation is that in $H$, the vertex $v$ belongs to a rooted component
of $X'$, as $v_0 \in \light$ (and thus is within distance at most $3$ from the root) and $v_0,v_1,\ldots,v_\ell$ is a path in $H$.
By the discussion in Section~\ref{sss:ops}, no new free component is created
in $\Ii'$, while $Y$ stops to be free and 
becomes part of a rooted component in $\Ii'$. Hence,
$\compot_{\Ii'}(X') < \compot_{\Ii}(X)$.
Furthermore,
\begin{align*}
|X'| &\leq (\ell+1) + |X| \\
&\leq 486\sqrt{k}\lg k + \left(k - 10\sqrt{k} \cdot \lambda - 486\sqrt{k}\lg k \cdot \compot_{\Ii}(X)\right) \\
&\leq k - 10\sqrt{k} \cdot \lambda - 486 \sqrt{k} \lg k \cdot \compot_{\Ii'}(X').
\end{align*}
Consequently, $X'$ is a pattern in $\Ii'$, and the claim is proven.
\cqed\end{proof}

The potential gains in Claim~\ref{cl:less-far-comp} allow us to conclude
with the analog of Claim~\ref{cl:case-intersect-paths-error}.
For its proof, we need the following estimate.
\begin{claim}\label{cl:dupa}
For every $x,y > 0$ it holds that
$$(x+y)\lg(x+y) - x \lg x \leq y (2 + \lg (x+y)).$$
\end{claim}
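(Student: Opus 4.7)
The plan is to rewrite the inequality in a form that makes the essential estimate visible, then reduce it to a standard logarithmic inequality. First I would expand the left-hand side as
\[
(x+y)\lg(x+y) - x\lg x \;=\; x\lg(x+y) - x\lg x + y\lg(x+y) \;=\; x\lg\!\left(1+\tfrac{y}{x}\right) + y\lg(x+y).
\]
Subtracting the common term $y\lg(x+y)$ from both sides of the target inequality, the claim reduces to
\[
x\lg\!\left(1+\tfrac{y}{x}\right) \;\leq\; 2y.
\]

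Next I would reduce to the one-variable inequality $\lg(1+t) \leq 2t$ for $t>0$ by substituting $t = y/x$ (which is positive since $x,y>0$). This follows from the standard bound $\ln(1+t) \leq t$ valid for all $t>-1$: dividing by $\ln 2$ gives $\lg(1+t) \leq t/\ln 2$, and since $1/\ln 2 < 2$ we obtain $\lg(1+t) \leq 2t$. Multiplying through by $x$ yields the required $x\lg(1+y/x) \leq 2y$.

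There is no real obstacle here; the only thing to be careful about is the direction of the substitution and the use of the natural logarithm bound to conclude. Combining the two displayed estimates gives
\[
(x+y)\lg(x+y) - x\lg x \;\leq\; 2y + y\lg(x+y) \;=\; y\bigl(2 + \lg(x+y)\bigr),
\]
which is exactly the claimed inequality.
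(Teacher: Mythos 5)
Your proof is correct and follows essentially the same route as the paper: the same algebraic decomposition $(x+y)\lg(x+y) - x\lg x = y\lg(x+y) + x\lg(1+y/x)$ and the same key bound $\lg(1+t)\le 2t$ for $t>0$. The only difference is that you spell out the justification of $\lg(1+t)\le 2t$ via $\ln(1+t)\le t$ and $1/\ln 2<2$, which the paper simply asserts.
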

\begin{proof}
We have
$$(x+y)\lg(x+y) - x \lg x = y \lg (x+y) + x \lg(1+y/x) \leq y \lg(x+y) + 2y,$$
where in the last inequality we have used the fact that $\lg(1+t) \leq 2t$
for every $t > 0$.
\cqed\end{proof}

\begin{claim}\label{cl:case-intersect-paths-error-comp}
Assume $c_1\geq 1$ and $c_3$ is sufficiently large. Supposing $X\cap W_{\mathsf{isl}}\neq \emptyset$ and the subroutine of Theorem~\ref{thm:duality} returned a sequence of paths, 
the algorithm outputs a set $A$ with $X\subseteq A$ with probability at least $\CMonster(n,\patsiz(X),\grasiz,\dstpot(X),\compot(X))$. 
This includes the $(1-1/k)$ probability of success of the preliminary clustering step, the $1/k$ probability that the algorithm makes the correct assumption that $X\cap W_{\mathsf{isl}}=\emptyset$,
the $k^{-7}$ probability of correctly choosing the vertex $z$,
the $(10k^2\lg n)^{-1}$ probability of choosing the right distance $d$, and the $1/k$ probability of choosing the right path index $i$.
\end{claim}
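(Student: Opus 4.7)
The plan is to follow the template of Claim~\ref{cl:case-intersect-paths-error}, but to branch on the two outcomes offered by Claim~\ref{cl:less-far-comp} and choose a suitable pattern in $\Ii'$ in each case. As in~\eqref{eq:to-subs}, the combined probability of correct clustering, of correctly guessing that $W_{\mathsf{isl}}$ meets the pattern, of correctly sampling the island $C$, the distance $d$, and the path index $i$, is at least $k^{-12}(\lg n)^{-1}$. It therefore suffices to show that the recursive guarantee $\CMonster(n_H,\patsiz_{\Ii'}(X^\star),\grasiz_{\Ii'},\dstpot_{\Ii'}(X^\star),\compot_{\Ii'}(X^\star))$, where $X^\star\in\{X,X'\}$ is the pattern we feed into the recursive call, exceeds $k^{12}\lg n\cdot\CMonster(n,\patsiz_\Ii(X),\grasiz_\Ii,\dstpot_\Ii(X),\compot_\Ii(X))$. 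Along the way we observe that $n_H<n$ and $\grasiz_{\Ii'}<\grasiz_\Ii$, so the $(1-1/k)^{c_2\patsiz\lg\grasiz}$ factor of $\CMonster$ only helps us.

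In the first branch of Claim~\ref{cl:less-far-comp} we set $X^\star=X$, and the calculation is essentially identical to Claim~\ref{cl:case-intersect-paths-error}: the pattern potential is unchanged, $\compot_{\Ii'}(X)\leq\compot_\Ii(X)$ so the $\compot$ factor does not worsen, and the distance potential drops by at least $511\sqrt k\lg k$, yielding a gain of $\exp\bigl[c_1\cdot \tfrac{\lg k+\lg\lg n}{\sqrt k}\cdot 511\sqrt k\lg k\bigr]$ in the $\Monster$ part of $\CMonster$. For $c_1\geq 1$ this dominates the prefactor $k^{12}\lg n$, exactly as in~\eqref{eq:est2-dstpot}.

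In the second branch we set $X^\star=X'$; here the gain comes from $\compot$ strictly decreasing, which improves the $\compot$ factor of $\CMonster$ by at least $\exp\bigl[c_3\bigl(\lg^2 k(\lg k+\lg\lg n)+\lg n\lg k/\sqrt k\bigr)\bigr]$. The only potential that may worsen is $\patsiz$, which can grow by at most $\ell+1\leq 486\sqrt k\lg k$ (the vertices $v_0,\ldots,v_\ell$ added to form $X'$); the distance potential cannot grow, since the previously free component becomes rooted and the $v_i$'s themselves lie within distance at most $488\sqrt k\lg k$ from $r$ in $H$, and $\grasiz$ only decreases as we contracted along $P_i$. Applying Claim~\ref{cl:dupa} with $y\leq 486\sqrt k\lg k$ and $x+y\leq k$, the increase of $\patsiz\lg\patsiz$ is at most $486\sqrt k\lg k\cdot(2+\lg k)=O(\sqrt k\lg^2 k)$, so the $\patsiz$ factor of $\Monster$ shrinks by a factor of at most $\exp[O(c_1\lg^2 k(\lg k+\lg\lg n))]$.

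The main obstacle, and the reason we need $c_3$ sufficiently large, is to ensure that this one-component gain beats both the above $\patsiz$-loss and the $k^{12}\lg n$ prefactor simultaneously. A direct comparison of exponents shows that any $c_3$ exceeding a fixed constant multiple of $c_1$ suffices: the term $c_3\lg^2 k(\lg k+\lg\lg n)$ dwarfs $O(c_1\lg^2 k(\lg k+\lg\lg n))+\ln(k^{12}\lg n)$ once $c_3$ is large enough, and this threshold depends only on $c_1$, not on $n$ or $k$. Hence for $c_1\geq 1$ and $c_3$ large enough the desired inequality holds in both cases, which will conclude the proof.
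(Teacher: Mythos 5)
Your overall structure mirrors the paper's proof exactly: branch on the two outcomes of Claim~\ref{cl:less-far-comp}, feed either $X$ or $X'$ into the recursive call, and show that the gain in potentials dominates the prefactor $k^{12}\lg n$ coming from the random choices. The first branch and the bookkeeping for $\patsiz\lg\patsiz$ (via Claim~\ref{cl:dupa}) and for $\dstpot$ are correct; your observation that $\dstpot$ does not actually grow in the second branch (because $v_0,\ldots,v_\ell$ all end up close and the merged component is rooted) is a valid sharpening of the paper's looser bound $\dstpot_{\Ii'}(X')-\dstpot_\Ii(X)\leq 486\sqrt k\lg k$.

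There is, however, a genuine gap in your treatment of the second branch. You assert early on that because $\grasiz_{\Ii'}<\grasiz_\Ii$ ``the $(1-1/k)^{c_2\patsiz\lg\grasiz}$ factor of $\CMonster$ only helps us''; but in the second branch $\patsiz$ grows by up to $486\sqrt k\lg k$, so the exponent $c_2\,\patsiz\lg\grasiz$ can increase by as much as $c_2\cdot 486\sqrt k\lg k\lg n$, and since the base $1-1/k<1$, this is a multiplicative \emph{loss} of roughly $\exp\bigl[-\Theta(c_2\lg k\lg n/\sqrt k)\bigr]$ that your exponent comparison never offsets. Your last sentence even claims that the threshold on $c_3$ depends only on $c_1$; this is false for exactly this reason, and the paper's condition $c_3\geq 1458c_1+486\lg e\cdot c_2+12$ includes a $c_2$-dependent term precisely to absorb the $(1-1/k)^{c_2\patsiz\lg\grasiz}$ loss. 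More fundamentally, this loss scales with $\lg n$, which is why the $\compot$-factor in the definition of $\CMonster$ carries the extra summand $\frac{\lg n\lg k}{\sqrt k}$ on top of $\lg^2 k(\lg k+\lg\lg n)$; if the only gain were the polylogarithmic-in-$k$ term that your comparison considers, the single-component decrement of $\compot$ would \emph{not} dominate a loss growing with $\lg n$. You need to bound $\patsiz_{\Ii'}(X')\lg\grasiz_{\Ii'}-\patsiz_\Ii(X)\lg\grasiz_\Ii\leq 486\sqrt k\lg k\lg n$ explicitly, carry the resulting $(1-1/k)^{c_2\cdot 486\sqrt k\lg k\lg n}$ factor through the estimate, and cancel it against the $\exp[c_3\lg n\lg k/\sqrt k]$ part of the $\compot$ gain.
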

\begin{proof}
The proof follows the same lines as the proof of Claim~\ref{cl:case-intersect-paths-error}.
Note that we have $\compot_{\Ii'}(X) \leq \compot_{\Ii}(X)$
and $\dstpot_{\Ii'}(X) \leq \dstpot_{\Ii}(X)$.
If the first option of Claim~\ref{cl:less-far-comp} happens
(i.e., the drop in the potential $\dstpot(X)$), then the analysis
is the same as in Claim~\ref{cl:case-intersect-paths-error}.
It remains to analyze the second case.
Note that here we need to focus on all potentials, as we will analyze pattern
$X'$ in the instance $\Ii'$.

Since $|X' \setminus X| \leq 486 \sqrt{k}\lg k$,
we have (using Claim~\ref{cl:dupa} and $|X'| \leq k$ for the first inequality):
\begin{align*}
\patsiz_{\Ii'}(X') \lg \patsiz_{\Ii'}(X') - \patsiz_\Ii(X) \lg \patsiz_\Ii(X) & \leq 486 \sqrt{k} \lg k \cdot (2+\lg k) \leq 972 \sqrt{k} \lg^2 k \\
\dstpot_{\Ii'}(X') - \dstpot_{\Ii}(X) & \leq 486\sqrt{k}\lg k \\
\patsiz_{\Ii'}(X') \lg \grasiz_{\Ii'} - \patsiz_{\Ii}(X) \lg \grasiz_{\Ii} &\leq 486 \sqrt{k} \lg k \lg n\\
\compot_{\Ii'}(X') - \compot_{\Ii}(X) & \leq -1.
\end{align*}
Thus, a straightforward computation shows that
\begin{align*}
\frac{\CMonster(n',\patsiz_{\Ii'}(X'),\grasiz_{\Ii'},\dstpot_{\Ii'}(X'),\compot_{\Ii'}(X'))}{\CMonster(n,\patsiz_{\Ii}(X),\grasiz_\Ii,\dstpot_\Ii(X),\compot_\Ii(X))}  & \geq
\exp\left[-1458 \cdot c_1 \cdot \lg^2 k(\lg k + \lg \lg n)\right]\cdot \\
&\quad \left(1-\frac{1}{k}\right)^{c_2 \cdot 486 \sqrt{k} \lg k \lg n}\cdot\\
&\quad \exp\left[c_3 \left(\lg^2 k (\lg k + \lg \lg n) + \frac{\lg k \lg n}{\sqrt{k}}\right)\right].
\end{align*}
For sufficiently large $c_3$, namely
$c_3 \geq 1458 \cdot c_1 + 486 \lg e \cdot c_2 + 12$, the right hand side in the inequality
above is at least $k^{12} \lg n$, required to offset the 
success probability of the random choices, similarly as in the proof of Claim~\ref{cl:case-intersect-paths-error}.
This finishes the proof of the claim and the analysis of this subcase.
\cqed\end{proof}

\paragraph*{Subcase: nested chain of circular separators.}
Following the argumentation of Section~\ref{sec:proof},
in this section we are working with the following objects:
\begin{itemize}
\item A vertex $z \in V(G) \setminus \ghost$ with $\dist_G(z,r) > 2000 \sqrt{k} \lg k$,
such that some vertex of $X$ is within at most distance $1$ from $z$.
\item An $(r,z)$-separator chain $(C_1,\ldots,C_p)$ in $\torso{G}{\ghost}$ with $|C_j|\leq 2k$ for each $j\in [p]$, where $p = \lceil 120\sqrt{k} \lg k \rceil$.
\end{itemize}
The success probability so far in this case is at least $k^{-11} (\lg n)^{-1}$.

As in Section~\ref{sec:proof}, we treat $(C_1,\ldots,C_p)$ as a separator
chain in $G$, and drop the first three separators.
In this manner, we can assume $p \geq \lceil 117 \sqrt{k} \lg k \rceil$,
every $C_i$ is disjoint with $\light \cup \ghost$,
and all vertices within distance at most $3$ from the root, including all light terminals, are in $\reach(r,G-C_i)$ for every $i$.

Recall that in this case the algorithm of Section~\ref{sec:proof}
samples an index $i \in [p]$, an integer $\alpha$ between $1$ and $\sqrt{k}/10$,
and a set $Q \subseteq C_i$ of size $\alpha$.
The intuition is that we hope for $Q = C_i \cap X$, and $C_i$
being a sparse separator in the sense of Claim~\ref{cl:balanced-index}.

In our case, we additionally allow the value $\alpha = 0$ in the above
sampling, and whenever there exists an index $i \in [p]$
with $C_i \cap X = \emptyset$, we require from the algorithm to sample
such an index $i$ and sample $\alpha = 0$. 

If no such index exists, the analysis of the algorithm of Section~\ref{sec:proof}
remains applicable: Claim~\ref{cl:balanced-index} still holds, 
and only the probability of choosing the correct $\alpha$ drops from
$(\sqrt{k}/10)^{-1}$ to $(1+\sqrt{k}/10)^{-1}$; however, in both cases
it is at least $k^{-1}$, and the total probability of making correct
random choices is at least $k^{-2\alpha-5}$, as in Section~\ref{sec:proof}.
Furthermore, we have the following with regards to the potential $\compot(X)$:
\begin{claim}\label{cl:cycles:compot1}
If for every separator $C_i$ we have $C_i \cap X \neq \emptyset$, and the algorithm made correct random choices, then the following holds:
\begin{align*}
\compot_\Ii(X) & \geq \compot_{\Ii_\pin}(X_\pin) + \compot_{\Ii_\pout}(X_\pout).
\end{align*}
\end{claim}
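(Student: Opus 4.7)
The plan is to construct two injective maps with disjoint images, one from the free components of $X_\pout$ in $\Ii_\pout$ and one from the free components of $X_\pin$ in $\Ii_\pin$, both landing in the free components of $X$ in $\Ii$. Each map will send a free component $Y^\star$ to the component of $X$ in $\Ii$ that contains an arbitrary representative vertex $v \in Y^\star$. The two images will be disjoint because the $\Ii_\pout$-side will always hit a component of $X$ entirely contained in $V^\pout_i$, whereas the $\Ii_\pin$-side will always hit a component meeting $V^\pin_i \cup Q$; summing the two counts will then yield the desired inequality.

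For the $\Ii_\pout$ side, I would first observe that every vertex of $Q \cup \{r\}$ lies at distance at most one from $r$ in $G_\pout$ (because $Q \subseteq \light_\pout$ and each $q \in Q$ is adjacent to the contracted super-vertex), so any free $Y^\pout$ sits inside $V^\pout_i$. I would then show that the component $Y$ of $X$ in $\Ii$ containing any $v \in Y^\pout$ must itself lie in $V^\pout_i$: otherwise the connectedness of $Y$ in $G''[Y \cup \ghost]$ would force a path from $v$ to some $q \in Q \cap Y$, which after the contractions defining $G_\pout$ would translate to a walk in $G_\pout[X_\pout \cup \ghost_\pout]$ connecting $v$ to $q$, placing $v$ in the rooted component of $\Ii_\pout$. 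Once $Y \subseteq V^\pout_i$ is established, freeness of $Y$ in $\Ii$ is immediate: any potential close-to-$r$ witness $w \in Y$ would be reachable from $v$ by a $V^\pout_i$-internal $G''[Y \cup \ghost]$-path and hence lie in $Y^\pout$, and since the relevant contractions only shorten distances, $w$ would still be close to $r$ in $G_\pout$, contradicting the freeness of $Y^\pout$. Injectivity is clear, as any two vertices of such a $V^\pout_i$-only component remain connected in $G_\pout[X_\pout \cup \ghost_\pout]$.

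For the $\Ii_\pin$ side, the key structural observation is that, for every component $Y$ of $X$ in $\Ii$ with $Y_\pin := Y \cap (V^\pin_i \cup Q)$ nonempty, $Y_\pin$ is connected in $G_\pin[X_\pin \cup \ghost_\pin]$: any $V^\pout_i$-excursion along a $G''[Y \cup \ghost]$-path can be simulated by a single visit to the new ghost $g_D$ of the relevant component $D$ of $G'' - (V^\pin_i \cup Q)$. This immediately gives injectivity. To show the map lands in free components of $\Ii$, I would assume a close-to-$r$ witness $w \in Y$ in $G''$ and derive a contradiction. If $w \in V^\pin_i \cup Q$, then $w \in Y_\pin \subseteq Y^\pin$ and contractions only shorten distances, so $w$ stays close in $G_\pin$ and $Y^\pin$ cannot be free. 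If $w \in V^\pout_i$, I would take a shortest $r$--$w$ path $P'$ in $G''$ and its last $C_i$-vertex $c$; a brief case analysis on $P'$ (of length at most three) shows that $c$ is adjacent, in $G''$, to the component $D_w$ of $G'' - (V^\pin_i \cup Q)$ containing $w$, while the first $Q$-vertex $q^\star$ encountered along a $G''[Y \cup \ghost]$-path from $w$ towards $v$ is also adjacent to $D_w$ (since the $w$-to-$q^\star$ subpath stays in $V^\pout_i \cup (C_i \setminus Q)$). Consequently $q^\star$ is adjacent to $g_{D_w}$ in $G_\pin$, and the remaining short prefix of $P'$ yields a walk of length at most three from $r$ to $q^\star$ through $g_{D_w}$; since $q^\star \in Y_\pin \subseteq Y^\pin$, this contradicts $Y^\pin$ being free.

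The main obstacle will be this last case, in which the close witness $w$ lives on the opposite side of $C_i$ from the representative $v$. Here one must carefully verify that the contraction of $V^\pout_i$-components into ghost vertices $g_D$ transports the closeness of $w$ in $G''$ into closeness of a $Q$-vertex of $Y$ in $G_\pin$: one has to check both that such a $Q$-vertex actually exists inside $Y$ (which uses that $Y$ must straddle $C_i$ whenever it contains vertices on both sides) and that it ends up in the same $\Ii_\pin$-component as $v$ (which uses the connectedness of $Y_\pin$ in $\Ii_\pin$ provided by the new $g_D$ ghosts).
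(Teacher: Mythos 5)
Your proof is correct and relies on the same structural facts the paper uses: every component $Y$ of $X$ in $\Ii$ projects into a single component of $X_\pin$ (resp.\ $X_\pout$), the vertices of $Q$ become light terminals adjacent to the root in $\Ii_\pout$, and every vertex of $V^\pout_i$ is far in $\Ii$. The paper packages this as a short case analysis over the components $Y$ of $X$ in $\Ii$ --- a rooted $Y$ projects to rooted components in both subinstances, a free $Y$ can yield a free component in at most one of them --- rather than as two explicit injections with disjoint images, but the counting is identical. The one substantive remark: the case you single out as the ``main obstacle,'' a close witness $w\in V^\pout_i$, is vacuous. After dropping the first three separators of the chain, the paper explicitly arranges that \emph{all} vertices within distance $3$ of the root lie in $\reach(r,G-C_i)=V^\pin_i$ for every remaining index $i$, so any close witness of $Y$ automatically falls under your easy subcase $w\in V^\pin_i$, and the elaborate argument involving $g_{D_w}$ and $q^\star$ can be deleted outright.
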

\begin{proof}
Let $Y$ be a component of $X$ in $\Ii_D$. 
As discussed in Section~\ref{sss:ops}, $Y \cap X_\pin$ is either empty or contained in a single component $Y_\pin$ 
of $X_\pin$ in $\Ii_\pin$.
Similarly, $Y \cap X_\pout$ is empty or contained in a single connected component $Y_\pout$ of $X_\pout$ in $\Ii_\pout$.
We show that if $Y$ is a rooted component of $X$ in $\Ii$, then both $Y_\pin$ and $Y_\pout$ are rooted in their respective instances if they exist,
and if $Y$ is free, then at most one of these components is free.

We first note that if $Y \cap C_i \neq \emptyset$, then $Y_\pout$ exists and is rooted,
as all vertices of $X \cap C_i$ are neighbors of the root in $\Ii_\pout$.

Assume first that $Y$ is a rooted component of $\Ii$.
Then $Y_\pin$ is a rooted component of $\Ii_\pin$, as $\Ii_\pin$ is created from $\Ii$ by edge contractions only.
Furthermore, if $Y \cap C_i = \emptyset$, then $Y \cap X_\pout = \emptyset$, and otherwise as discussed above
$Y_\pout$ is a rooted component of $\Ii_\pout$.

Now assume that $Y$ is a free component of $\Ii$.
If $Y \cap C_i = \emptyset$, then $Y \cap X_\pin$ or $Y \cap X_\pout$ is empty.
Otherwise, as already discussed, $Y_\pout$ is a rooted component of $\Ii_\pout$.
This finishes the proof of the claim.
\cqed\end{proof}

Thus, we are left with the case when there exists a separator $C_i$
that is disjoint with $X$, and we assume that the algorithm correctly 
guessed such an index $i$ and guessed $\alpha = 0$.
The probability of making a correct choice is at least
$$\frac{1}{p} \cdot (1+\sqrt{k}/10)^{-1} \geq k^{-5} = k^{-2\alpha-5}.$$

Recall that for $i \in [p]$ we defined the following partition $V(G) = V^\pin_i \uplus C_i \uplus V^\pout_i$:
\begin{equation*}
V^\pin_i = \reach(r,G-C_i) \qquad \textrm{and} \qquad V^\pout_i = V(G)\setminus (C_i\cup V^\pin_i).
\end{equation*}
The intuition is as follows: since $C_i \cap X = \emptyset$
and $C_i$ does not contain any ghost terminal, we can independently
recurse on $V^\pin_i$ and $V^\pout_i$.
In $V^\pin_i$, we need to replace the components of $G-V^\pin_i$ with
ghost vertices to keep the potentials bounded. On the other hand,
$V^\pout_i$ does not contain any vertex within distance at most $3$ from the root, and thus every
vertex of the pattern $X$ in $V^\pout_i$ is in a free component
and thus far. Hence, we can freely choose a new root in this subcase: by proclaiming
$z$ the new root, we make the component containing a vertex of $X$ within
distance $1$ from $z$ close, making a gain in the $\compot(X)$ potential.

Let us now proceed with formal argumentation.
The algorithm defines two subinstances $\Ii_\pout$ and $\Ii_\pin$ as follows.

The instance $\Ii_\pin$ is defined in the same way as in Section~\ref{sec:proof}
for $\alpha = 0$ and $Q = \emptyset$.
That is, we take $\Ii_\pin=(G_\pin,r,\light_\pin,\heavy_\pin,\ghost_\pin,\lambda)$.
Graph $G_\pin$ is constructed from $G$ as follows.
Inspect the connected components of the graph $G-V^\pin_i$. For each such component $D$, contract it onto a new vertex $g_D$ that is declared to be a ghost vertex.
That is, we define $\ghost_\pin$ to be $(\ghost\cap V^\pin_i)\cup \{g_D\colon D\in \cc(G-V^\pin_i)\}$.

For the terminal sets $\light_\pin$, $\heavy_\pin$,
recall that $\light\subseteq V^\pin_i$, so all the original light terminals persist in the graph $G_\pin$.
Thus we take $\light_\pin = \light$.
For the heavy terminals, we inherit them: $\heavy_\pin = \heavy \cap V(G_\pin)$.
Finally, we take $X_{\pin}=X\cap V(G_\pin)$.
By the same arguments as in Section~\ref{sec:proof},
$\Ii_\pin$ is a valid instance with pattern $X_\pin$.

Second, we define $\Ii_\pout=(G_\pout,z,\light_\pout,\heavy_\pout,\ghost_\pout,\lambda)$; that is, we take the vertex $z$ to be the new root in
the instance $\Ii_\pout$.
Recall that $V^\pout_i$ does not contain any light terminal.
We define $G_\pout = G[V^\pout_i]$, $\light_\pout = \{z\}$,
$\heavy_\pout = \heavy \cap V^\pout_i$, and
$\ghost_\pout = \ghost \cap V^\pout_i$.
In other words, we inherit all terminals and ghost vertices from $\Ii$,
and additionally proclaim $z$ the root and a light terminal.
Since $r \in \light$, we have $|\trms| \geq |\trms_\pout|$ and, consequently,
$\Ii_\pout$ is a valid instance.

Finally, we take $X_\pout = (X \cap V^\pout_i) \cup \{z\}$.
Note that $|X_\pout| \leq |X|$, as $r \in X \setminus X_\pout$.

Since $C_i \cap (X \cup \ghost) = \emptyset$, every component of $X$
in $\Ii$ lies either entirely in $V^\pin_i$ or entirely in $V^\pout_i$.
Furthermore, since every vertex within distance at most $3$ from the root is in $V^\pin_i$, every component of $X$
lying in $V^\pout_i$ is free, and, consequently, all vertices
of $V^\pout_i \cap X$ are far.
Let $x \in X$ be a vertex within distance at most $1$ from $z$; clearly,
$x \in V^\pout_i$. The component $Y$ of $X$ in $\Ii$ containing $x$ is free,
but $Y \cup \{z\}$ is contained in a rooted component of $X_\pout$ in $\Ii_\pout$.
This, together with the discussion of Section~\ref{sss:ops} applied
to the instance $\Ii_\pin$, proves the following claim.
\begin{claim}\label{cl:comps:pots}
The following holds:
\begin{align*}
\patsiz_\Ii(X) & \geq \patsiz_{\Ii_\pin}(X_\pin) + \patsiz_{\Ii_\pout}(X_\pout) \\
\grasiz_\Ii & \geq \grasiz_{\Ii_\pin} + \grasiz_{\Ii_\pout} \\
\dstpot_\Ii(X) & \geq \dstpot_{\Ii_\pin}(X_\pin) + \dstpot_{\Ii_\pout}(X_\pout) \\
\compot_\Ii(X) & \geq 1 + \compot_{\Ii_\pin}(X_\pin) + \compot_{\Ii_\pout}(X_\pout).
\end{align*}
\end{claim}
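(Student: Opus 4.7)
The starting point will be the observation, already noted before the claim's statement, that $C_i$ separates $V^\pin_i$ from $V^\pout_i$ in $G$ and, being disjoint from $X \cup \ghost$, also separates them in $G[X \cup \ghost]$. I will use this to partition the components of $X$ in $\Ii$ into families $\mathcal{Y}_\pin$ (those entirely inside $V^\pin_i$) and $\mathcal{Y}_\pout$ (those entirely inside $V^\pout_i$). The crucial structural consequence is that $V^\pin_i = \reach(r,G-C_i)$ contains every vertex within distance at most $3$ from $r$, so every component in $\mathcal{Y}_\pout$ is forced to be free in $\Ii$, giving $\compot_\Ii(X) = F_\pin + |\mathcal{Y}_\pout|$, where $F_\pin$ denotes the number of free members of $\mathcal{Y}_\pin$.

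Next, I will dispatch the first three inequalities by direct counting on the vertex partition $V(G) = V^\pin_i \uplus C_i \uplus V^\pout_i$. For $\patsiz$ and $\grasiz$, since $\light \subseteq V^\pin_i$ and the ghost vertices of the subinstances are inherited from $\ghost$ (with the fresh $g_D$ ghosts on the $\pin$ side compensating for the vertices removed from $G_\pin$), the inequalities become simple bookkeeping; in fact $\grasiz_\Ii$ exceeds the sum by at least $|C_i|+1$, with the extra one accounting for $z$ being declared light in $\Ii_\pout$. For $\dstpot$ I will show that $\Far_{\Ii_\pin}(X_\pin)$ and $\Far_{\Ii_\pout}(X_\pout)$ are disjoint subsets of $\Far_\Ii(X)$. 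Disjointness is immediate because only $z$ could potentially be shared, and $z$ is the root of $\Ii_\pout$ so cannot be far there. For the $\Ii_\pout$ inclusion, every vertex of $X \cap V^\pout_i$ already lies in a free component of $\Ii$, trivialising the argument. For the $\Ii_\pin$ inclusion, I will appeal to the reasoning of Section~\ref{sss:ops}: contractions only shrink distances from $r$, and a component of $X_\pin$ in $\Ii_\pin$ is obtained by fusing images of members of $\mathcal{Y}_\pin$ through the newly introduced ghost vertices $g_D$, with rootedness inherited whenever any summand was rooted.

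The principal obstacle, and the origin of the extra $+1$ in the component potential bound, will be a careful analysis of how the distinguished vertex $x \in X$ with $\dist_G(x,z) \leq 1$ behaves in $\Ii_\pout$. I will first verify that the entire (length $\leq 1$ in the ghost-free distance) $x$-$z$ connection in $G$ already lives inside $V^\pout_i$: since $C_i$ contains no ghost vertex, any ghost neighbor of $z$ cannot be separated from $z$ by $C_i$ and must therefore lie in $V^\pout_i$, and similarly $x \notin C_i$ forces $x$ onto the same side as $z$. Consequently this connection survives in $G_\pout$, so the component of $X_\pout$ containing $z$ in $\Ii_\pout$ also contains the entire $Y_x \in \mathcal{Y}_\pout$ holding $x$, and is rooted by virtue of $z$ being the new root. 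Every remaining component of $X_\pout$ in $\Ii_\pout$ is a merger of elements of $\mathcal{Y}_\pout \setminus \{Y_x\}$, yielding $\compot_{\Ii_\pout}(X_\pout) \leq |\mathcal{Y}_\pout| - 1$. Combined with $\compot_{\Ii_\pin}(X_\pin) \leq F_\pin$ from the merging-preserves-rootedness analysis, this will complete the inequality $\compot_\Ii(X) \geq 1 + \compot_{\Ii_\pin}(X_\pin) + \compot_{\Ii_\pout}(X_\pout)$.
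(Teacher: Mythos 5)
Your proof is correct and follows essentially the same route as the paper: the first three inequalities are straightforward bookkeeping over the partition $V(G)=V^\pin_i\uplus C_i\uplus V^\pout_i$ together with the Section~\ref{sss:ops} observations applied to $\Ii_\pin$, and the extra $+1$ in the $\compot$ bound comes precisely from the free component $Y_x$ of $X$ containing the vertex near $z$ being absorbed into the rooted component of $X_\pout$ once $z$ is declared the new root. The only cosmetic slip is in the $\dstpot$ disjointness discussion, where you worry about $z$ being shared — in fact $V(G_\pin)$ and $V(G_\pout)$ are disjoint on non-ghost vertices, so there is no potential overlap to rule out.
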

In particular, the last inequality of Claim~\ref{cl:comps:pots} shows
that $X_\pout$ is a pattern in $\Ii_\pout$.

We apply the algorithm recursively to instances $\Ii_\pin$ and $\Ii_\pout$,
obtaining sets $A_\pin$ and $A_\pout$.
We have $\light = \light_\pin \subseteq A_\pin$ and $z \in A_\pout$.
We take $A := A_\pin \cup A_\pout$. Clearly, $\light \subseteq A$.

The fact that $G[A]$ admits a tree decomposition of width at most
$24022\ctw \sqrt{k} \lg k$ with $A \cap \trms$ in the root bag is straightforward:
since $A_\pin$ and $A_\pout$ are separated by $C_i$, there are no edges
between these two sets, and we can just take a root bag $A \cap \trms$,
and attach as children the decompositions of $G[A_\pin]$ and
$G[A_\pout]$.

We are left with analyzing the success probability.
All the necessary observations have already been made in Claim~\ref{cl:comps:pots},
so we can conclude with the following claim.

\begin{claim}\label{cl:case-intersect-cycles-error-comps}
Assume $c_1\geq 2$ and $c_3 \geq 17$. Supposing $X\cap W_{\mathsf{isl}}\neq \emptyset$ and the subroutine of Theorem~\ref{thm:duality} returned a separator chain, 
the algorithm outputs a set $A$ with $X\subseteq A$ with probability at least $\CMonster(n,\patsiz(X),\grasiz,\dstpot(X),\compot(X))$. 
This includes the $(1-1/k)$ probability of success of the preliminary clustering step, the $1/k$ probability that the algorithm makes the correct assumption that $X\cap W_{\mathsf{isl}}=\emptyset$,
the $k^{-7}$ probability of correctly choosing the island $C$ that intersects the pattern,
the $(10k^2\lg n)^{-1}$ probability of choosing the right distance $d$, and $k^{-2\alpha-5}$ probability of correctly sampling the $i$, $\alpha$, and set $Q$.
\end{claim}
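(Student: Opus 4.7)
The plan is to split the analysis into two subcases according to whether some separator $C_i$ is disjoint from $X$, matching the two branches of the algorithm. In the subcase where every $C_i$ intersects $X$, the behavior of the algorithm and the sampling probabilities are identical to those in Section~\ref{sec:proof}: we sample $i \in [p]$, $\alpha \in \{1, \ldots, \sqrt{k}/10\}$, and $Q \subseteq C_i$ of size $\alpha$, with combined success probability at least $k^{-2\alpha-5}$ (the extra allowed value $\alpha = 0$ costs at most a factor $2$, absorbed into the constants). Plugging the bounds from Claim~\ref{cl:cycles:compot1} together with the potential inequalities \eqref{eq:dgrasiz}, \eqref{eq:ddstpot}, and \eqref{eq:dpatsiz} and Claim~\ref{cl:dpatsize} into the recursive success bounds, we obtain the $\Monster$ part of the target $\CMonster$ exactly as in Claim~\ref{cl:case-intersect-cycles-error}. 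The extra $\exp[-c_3 \cdot \compot(X)(\cdots)]$ factor in $\CMonster$ is then multiplicative and subadditive over the split by Claim~\ref{cl:cycles:compot1}, so it reproduces along the recursion without additional loss.

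In the subcase where some $C_i$ is disjoint from $X$, the algorithm samples such an index together with $\alpha = 0$ with probability at least $p^{-1} \cdot (1+\sqrt{k}/10)^{-1} \geq k^{-5}$. The total success probability accumulated up to this point is thus
\[
\left(1-\tfrac{1}{k}\right) \cdot \tfrac{1}{k} \cdot k^{-7} \cdot (10k^2 \lg n)^{-1} \cdot k^{-5} \geq k^{-17} (\lg n)^{-1}.
\]
By Claim~\ref{cl:comps:pots}, the potentials $\patsiz$, $\grasiz$, and $\dstpot$ are subadditive across the decomposition into $\Ii_\pin$ and $\Ii_\pout$, so these contributions to $\Monster$ split cleanly. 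The crucial gain is in the component potential, where $\compot_\Ii(X) \geq 1 + \compot_{\Ii_\pin}(X_\pin) + \compot_{\Ii_\pout}(X_\pout)$. This drop of at least one in $\compot$ produces a multiplicative factor of at least
\[
\exp\!\left[c_3 \left(\lg^2 k \,(\lg k + \lg\lg n) + \tfrac{\lg n \lg k}{\sqrt{k}}\right)\right]
\]
in the ratio $\CMonster(\Ii_\pin) \cdot \CMonster(\Ii_\pout) / \CMonster(\Ii)$. For $c_3 \geq 17$ this factor dominates $k^{17} \lg n$, offsetting the random-sampling losses, and we recover the target bound $\CMonster(n, \patsiz(X), \grasiz, \dstpot(X), \compot(X))$.

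The main obstacle is bookkeeping: we must verify that $X_\pout = (X \cap V^\pout_i) \cup \{z\}$ is actually a pattern in $\Ii_\pout$ under the component-penalized definition of pattern size, and that proclaiming $z$ as the new root turns the free component of $X$ containing the vertex near $z$ into a rooted component in $\Ii_\pout$ without introducing any new free components anywhere. Both follow from the discussion in Section~\ref{sss:ops} and the fact that $C_i$ contains no ghost vertex, so every component of $X$ lies entirely inside $V^\pin_i$ or entirely inside $V^\pout_i$, and the $\lambda + 486\sqrt{k}\lg k \cdot \compot$ budget is respected because $\compot$ strictly drops. Once this is established, the two subcases combine into the claimed bound via a straightforward arithmetic comparison analogous to~\eqref{eq:est3-main}--\eqref{eq:est3-patsize-aux}.
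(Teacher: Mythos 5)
Your proposal is correct and follows essentially the same route as the paper's own proof: split on whether every $C_i$ meets $X$ (the $\alpha>0$ case, handled exactly as in Section~\ref{sec:proof} via Claim~\ref{cl:cycles:compot1}) versus some $C_i$ disjoint from $X$ (the $\alpha=0$ case, where Claim~\ref{cl:comps:pots} gives the drop $\compot_\Ii(X)\geq 1+\compot_{\Ii_\pin}(X_\pin)+\compot_{\Ii_\pout}(X_\pout)$, producing the $\exp[c_3(\cdots)]$ gain that offsets the $k^{-17}(\lg n)^{-1}$ sampling loss). The only cosmetic difference is that you retain the full gain factor $\exp[c_3(\lg^2 k(\lg k+\lg\lg n)+\frac{\lg n\lg k}{\sqrt{k}})]$, while the paper simply drops the (harmless, nonnegative) $\frac{\lg n\lg k}{\sqrt{k}}$ term before comparing against $k^{17}\lg n$.
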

\begin{proof}
The case $\alpha > 0$ has been already discussed, and is the same as in Section~\ref{sec:proof},
with the help of Claim~\ref{cl:cycles:compot1} to control the split of the potential $\compot(X)$. For $\alpha = 0$, Claim~\ref{cl:comps:pots}
ensures that
\begin{align*}
& \CMonster(n_\pin,\patsiz_{\Ii_\pin}(X_\pin),\grasiz_{\Ii_\pin},\dstpot_{\Ii_\pin}(X_\pin),\compot_{\Ii_\pin}(X_\pin)) \cdot 
   \CMonster(n_\pout,\patsiz_{\Ii_\pout}(X_\pout),\grasiz_{\Ii_\pout},\dstpot_{\Ii_\pout}(X_\pout),\compot_{\Ii_\pout}(X_\pout)) \\
& \qquad \geq 
   \CMonster(n,\patsiz_{\Ii}(X),\grasiz_{\Ii},\dstpot_{\Ii}(X),\compot_{\Ii}(X)) \cdot \exp\left[c_3 \lg^2 k (\lg k + \lg \lg n)\right].
\end{align*}
Hence,
\begin{eqnarray*}
\mathbb{P}(X\subseteq A) & \geq & \left(1-\frac{1}{k}\right)\cdot k^{-8}\cdot (10k^2\lg n)^{-1}\cdot k^{-5}\cdot\\
& & \CMonster(n_\pout,\patsiz_{\Ii_\pout}(X_\pout),\grasiz_{\Ii_\pout},\dstpot_{\Ii_\pout}(X_\pout),\compot_{\Ii_\pout}(X_\pout)) \cdot \\
& & \CMonster(n_\pin,\patsiz_{\Ii_\pin}(X_\pin),\grasiz_{\Ii_\pin},\dstpot_{\Ii_\pin}(X_\pin),\compot_{\Ii_\pin}(X_\pin))\\
& \geq & k^{-17}\cdot (\lg n)^{-1}\cdot \exp\left[c_3 \lg^2 k(\lg k + \lg \lg n)\right]
   \cdot \CMonster(n,\patsiz_{\Ii}(X),\grasiz_{\Ii},\dstpot_{\Ii}(X),\compot_{\Ii}(X)) \\
& \geq & \CMonster(n,\patsiz_{\Ii}(X),\grasiz_{\Ii},\dstpot_{\Ii}(X),\compot_{\Ii}(X))
\end{eqnarray*}
This finishes the proof of the claim and concludes the description
of the third and last subcase.
\cqed\end{proof}

\subsubsection{Wrap-up: a multiple-component version of Theorem~\ref{thm:maintheorem}}\label{sss:minor:comp-wrapup}

Let us now take a step back and use the developed recursive algorithm to obtain a multiple-component version of
Theorem~\ref{thm:maintheorem}, namely Theorem~\ref{thm:maintheorem-comps}, following the lines of the reasoning of Section~\ref{ss:proof:recur}. 

Assume we are given an $n$-vertex graph $G_0$ from a minor closed graph class $\Cc$ that excludes some apex graph, and we are looking for a pattern $X \subseteq V(G)$ of size $k$ such that $G[X]$ has $d$ connected components.
Similarly as in Section~\ref{ss:proof:recur}, we can guess (by sampling at random) one vertex $x \in X$
and create an instance $\Ii$
of our recursive problem with $G = G_0$, $r_0 = x$, $\light = \{x\}$,
and $\heavy = \ghost = \emptyset$. 
Since $G[X]$ has $d$ connected components, we have $\compot_\Ii(X) \leq d-1$.

However, $X$ may not be a pattern in $\Ii$ due to the size penalty 
incurred by multiple connected components. 
Instead, provided we assume that $d \leq c \cdot \sqrt{k}/ \lg k$ for some constant $c\geq 1$, we
can invoke the recursive subproblem with the parameter
$k' := (10^5\cdot c)^3 k = \Oh(k)$. Then, as $\sqrt{\alpha k} \lg (\alpha k) \leq \alpha^{2/3} \sqrt{k} \lg k$ for every $k \geq 10$ and $\alpha \geq 10^5$, we have:
$$k' - d \cdot 486 \cdot \sqrt{k'} \lg{k'} \geq (10^5\cdot c)^3 k - 486 \cdot 10^{10} c^3 k > k \geq |X|.$$
Consequently, $X$ is a pattern in $\Ii$ for the parameter $k' = \Oh(k)$.

Let us now look at the contribution of the term including the potential
$\compot_\Ii(X)$ into the success probability.
Using the estimate on $\Monster(n,\patsiz(X),\grasiz,\dstpot(X))$
from Section~\ref{ss:proof:recur}, we obtain that
\begin{align*}
& \CMonster(n,\patsiz(X),\grasiz,\dstpot(X),\compot(X)) = \\
& \qquad 
  = \Monster(n,\patsiz(X),\grasiz,\dstpot(X)) \cdot \exp\left[-c_3 \cdot \compot(X) \cdot  \left(\lg^2 k' (\lg k' + \lg \lg n) + \frac{\lg n \lg k'}{\sqrt{k'}} \right)\right]\\
& \qquad \geq \left(2^{\Oh(\sqrt{k}\lg^2 k)} n^{\Oh(1)}\right)^{-1} \cdot 
\exp\left[-c_3 \cdot c \cdot \left(\sqrt{k'} \lg k' (\lg k' + \lg \lg n) + \lg n\right)\right] \\
& \qquad \geq \left(2^{\Oh(\sqrt{k}\lg^2 k)} n^{\Oh(1)}\right)^{-1}.
\end{align*}
Note that we used Claim~\ref{cl:loglog} in the last inequality.
This finishes the proof of Theorem~\ref{thm:maintheorem-comps}.

\subsection{The excluded minor case}\label{ss:ext:minor-free}

Using the multiple-components variant of the previous section, we now prove Theorem~\ref{thm:maintheorem-minor}.
Let us fix a graph $G$ and an integer $k$ as in the theorem statement. Furthermore, let $X$ be a pattern
in $G$ such that $G[X]$ admits a spanning tree $S$ of maximum degree $\Delta$, for a fixed constant $\Delta$.

\paragraph{The Robertson-Seymour decomposition theorem.}
As announced at the beginning of this section, we use the decomposition
theorem of Robertson and Seymour for graphs excluding a fixed minor.
To make use of locally bounded treewidth, we will use the variant 
of Grohe~\cite{Grohe03}.

To formulate this decomposition, we need some notation.
Recall that we use a notation $\Tt = (T,\beta)$ for a tree decomposition,
where $T$ is a tree and $\beta: V(T) \to 2^{V(G)}$ are bags.

The set $\beta(t) \cap \beta(t')$ for an edge $tt' \in E(T)$
is called an \emph{adhesion} of $tt'$.
For a node $t \in V(T)$, the \emph{torso} of the node $t$,
denoted by $\torsoo(t)$,
is the graph $G[\beta(t)]$ with every adhesion $\beta(t) \cap \beta(t')$
for $t' \in N_T(t)$ turned into a clique.

Recall that we consider rooted tree decompositions; that is, the tree
$T$ is rooted in one node. For a non-root node $t \in V(T)$,
by $\parent(t)$ we denote the parent of $t$ in $T$.
Furthermore, we denote:
\begin{align*}
\adh(t) & = \begin{cases} \emptyset & \text{if t is the root of }T \\ \beta(t) \cap \beta(\parent(t)) & \text{otherwise.}\end{cases}
\end{align*}

We are now ready to formulate the variant of the Robertson-Seymour decomposition of Grohe that we use.
\begin{theorem}[\cite{Grohe03}]\label{thm:grohe}
For every proper minor-closed graph class $\Cc$ 
there exists a constant $h$ and an apex-minor-free graph class $\Cc'$
such that the following holds.
Given a graph $G \in \Cc$, one can in polynomial time compute
a tree decomposition $\Tt = (T,\beta)$ of $G$ together
with a family of sets $(Z_t)_{t \in V(T)}$ such that
every adhesion of $\Tt$ has size at most $h$, and for every $t \in V(T)$
the set $Z_t$ is a subset of $\beta(t)$ of size at most $h$, and the graph
$\torsoo(t)-Z_t$ belongs to $\Cc'$.
\end{theorem}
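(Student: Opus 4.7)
The plan is to invoke the Robertson--Seymour graph minor structure theorem as a black box and then extract the desired decomposition by a direct unpacking. Since $\Cc$ is a proper minor-closed class, it excludes some fixed graph $H$ as a minor, and the structure theorem yields a constant $h = h(H)$ such that every $G \in \Cc$ admits a tree decomposition $\Tt = (T,\beta)$ of adhesion at most $h$ in which every torso $\torsoo(t)$ is \emph{$h$-almost-embeddable}: there exists a set of at most $h$ apex vertices whose removal leaves a graph obtainable from a graph embedded in a surface of Euler genus at most $h$ by gluing at most $h$ vortices of depth at most $h$ into a bounded number of faces. For the algorithmic content, I would cite one of the polynomial-time variants of this decomposition (e.g., Demaine--Hajiaghayi--Kawarabayashi, or the more recent Grohe--Kawarabayashi--Marx--Wollan), which produces a concrete witness of the structural decomposition in polynomial time.

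Given this decomposition, the natural choice is to set $Z_t$ to be the set of apex vertices in the almost-embeddable witness for $\torsoo(t)$, so that $|Z_t| \le h$ holds by construction. Then $\torsoo(t) - Z_t$ is \emph{apex-free $h$-almost-embeddable}: obtained purely from a surface embedding of Euler genus at most $h$ by gluing at most $h$ vortices of depth at most $h$, with no additional apex attachments.

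I would then define $\Cc'$ to be the minor-closure of the class of all apex-free $h$-almost-embeddable graphs. The key point is that $\Cc'$ is apex-minor-free: a sufficiently large apex graph, say $K_1 + P$ for $P$ a large planar grid, cannot be a minor of an apex-free $h$-almost-embeddable graph. Intuitively, realizing the apex vertex of such a minor would require a single branch set adjacent to a minor model of a large planar piece spread across the embedded part, which would force either unbounded genus or a vortex of unbounded depth. Quantitatively, one shows that the largest apex graph that embeds as a minor into an apex-free $h$-almost-embeddable graph has size bounded by an explicit function of $h$, so fixing a slightly larger apex graph as the forbidden minor yields the required apex-minor-free class.

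The main obstacle is of course the Robertson--Seymour structure theorem itself, which spans the later installments of the Graph Minors series and is far beyond the scope of a sketch; we use it as a black box. The secondary and much lighter step is the quantitative ``no large apex minor'' bound for apex-free almost-embeddable graphs, which is essentially a pigeonhole/embedding argument on the bounded-genus surface plus a bounded number of bounded-depth vortices, and is made explicit in Grohe's exposition.
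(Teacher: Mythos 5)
The paper invokes Theorem~\ref{thm:grohe} as a black-box citation of~\cite{Grohe03} and supplies no proof of it, so there is no in-paper argument to compare against. Your outline does match the derivation underlying Grohe's result: apply an algorithmic form of the Robertson--Seymour structure theorem to get a tree decomposition with bounded adhesions and $h$-almost-embeddable torsos, set $Z_t$ to be the apex set of the almost-embeddability witness for $\torsoo(t)$, and take $\Cc'$ to be a minor-closed class containing all apex-free $h$-almost-embeddable graphs. The one point I would push back on is your characterization of the last step as ``secondary and much lighter.'' Showing that apex-free almost-embeddable graphs, and hence their minor closure $\Cc'$, exclude some fixed apex graph as a minor is precisely the main technical content of Grohe's paper; it is not a routine pigeonhole argument on the embedded surface, since a minor model of $K_1$ plus a large grid could in principle route the apex branch set through the surface part and the vortices in complicated ways. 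Grohe's actual route is to prove that apex-free almost-embeddable graphs have linear local tree-width and extract the forbidden apex minor from that. Also note that local tree-width is not preserved under taking minors in general, so some care is needed in passing from the almost-embeddable graphs to the minor-closed class $\Cc'$; the cleanest formulation is to establish the forbidden-minor statement directly, as you attempt, rather than to push the local-tree-width bound through the minor closure. So your sketch is a correct outline, but the step you relegate to ``Grohe's exposition'' is where essentially all the work lives.
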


Furthermore, we need the following variant of Baker's shifting technique.
\begin{theorem}[\cite{Grohe03}]\label{thm:baker-apex}
Let $\Cc$ be a proper minor-closed graph class.
Given a graph $G \in \Cc$ and an integer $\ell$, one can in polynomial time compute a
partition of $V(G)$ into $\ell$ sets $L_1,L_2,\ldots,L_\ell$, such that for every $1 \leq i \leq \ell$
the graph $G-L_i$ is of treewidth $\Oh(\ell)$.
\end{theorem}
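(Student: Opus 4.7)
The plan is to reduce the general minor-closed case to the apex-minor-free case via Theorem~\ref{thm:grohe}, and to handle the apex-minor-free case directly by classical Baker layering combined with Proposition~\ref{prop:local-tw}.

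First I would dispose of the apex-minor-free case. Let $\Cc'$ be apex-minor-free and let $G \in \Cc'$ be connected (otherwise handle each component separately). Pick an arbitrary root $v_0 \in V(G)$, perform BFS, and let $d(v)$ be the BFS distance from $v_0$. Define $L_i := \{v : d(v) \equiv i \pmod \ell\}$ for $i \in [\ell]$. For each $i$, every connected component of $G - L_i$ lies inside a union of at most $\ell - 1$ consecutive BFS layers, so it has radius less than $\ell$. By Proposition~\ref{prop:local-tw} applied to $\Cc'$ (or rather to its minor closure, which remains apex-minor-free), each such component has treewidth at most $\ctw (\ell-1) = \Oh(\ell)$, and hence so does $G - L_i$. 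This handles the apex-minor-free case.

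For the general case, invoke Theorem~\ref{thm:grohe} on $G \in \Cc$ to obtain in polynomial time a tree decomposition $\Tt = (T,\beta)$ of $G$ with adhesions of size at most $h$, together with sets $Z_t \subseteq \beta(t)$ of size at most $h$ such that $H_t := \torsoo(t) - Z_t$ belongs to the apex-minor-free class $\Cc'$. For every $t \in V(T)$, apply the apex-minor-free version just established to $H_t$ with the same parameter $\ell$, obtaining a partition $V(H_t) = L_1^t \cup \cdots \cup L_\ell^t$ such that $H_t - L_i^t$ has treewidth $\Oh(\ell)$. For each vertex $v \in V(G)$, let $t_v$ be the node of $T$ closest to the root with $v \in \beta(t_v)$ (unique by the connectedness property of tree decompositions). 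Put $v \in L_j$ where $j$ is the index with $v \in L_j^{t_v}$ if $v \notin Z_{t_v}$, and $v \in L_1$ if $v \in Z_{t_v}$ (an arbitrary assignment for the bounded-size set of apex vertices). This defines a partition $V(G) = L_1 \cup \cdots \cup L_\ell$.

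It remains to argue that $G - L_i$ has treewidth $\Oh(\ell)$. For every $t$, take a tree decomposition $\Tt_t$ of $H_t - L_i^t$ of width $\Oh(\ell)$, and inflate every bag by adding $(Z_t \cup \adh(t)) \setminus L_i$; this introduces at most $2h = \Oh(1)$ extra vertices per bag, keeps the width at $\Oh(\ell)$, and crucially places $\adh(t) \setminus L_i$ into every bag of $\Tt_t$, in particular into its root bag. Since $\torsoo(t)$ makes every adhesion $\adh(t') = \beta(t) \cap \beta(t')$ into a clique (for each child $t'$ of $t$), each such adhesion, intersected with $V(H_t) \setminus L_i^t$, already lies inside a single bag of $\Tt_t$; together with the $(Z_{t} \cup Z_{t'})$ vertices contributed from both sides, this gives enough overlap to glue $\Tt_t$ and $\Tt_{t'}$ along the edge $tt'$ of $T$. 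Gluing all the $\Tt_t$ along $T$ yields a tree decomposition of $G - L_i$ of width $\Oh(\ell) + \Oh(h) = \Oh(\ell)$, completing the proof.

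The main obstacle will be the bookkeeping in the gluing step: one must verify that every edge of $G - L_i$ is covered by some combined bag (the nontrivial edges are those inside each $\beta(t) \setminus L_i$, which are handled by $\Tt_t$ since all torso-added edges only strengthen the graph) and that the subtree-connectivity condition holds for every vertex $v \notin L_i$, which follows because the subtree $\{t : v \in \beta(t)\}$ in $\Tt$ is connected and the inflated bags include $v$ whenever $v \in \adh(t)$ and $v \in L_i^{t_v}$ at $t = t_v$. Verifying these two invariants carefully, together with confirming the $\Oh(h)$ blow-up suffices, constitutes the bulk of the technical work.
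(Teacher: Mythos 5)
The paper imports this theorem from Grohe~\cite{Grohe03} without proof, so there is no in-paper argument to compare against; your route --- reduce to apex-minor-free torsos via Theorem~\ref{thm:grohe}, apply Baker layering inside each torso, and glue along the decomposition tree --- is the standard reconstruction and is structurally sound.

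There is, however, a genuine gap in the apex-minor-free base case. You assert that a connected component of $G - L_i$, being confined to at most $\ell-1$ consecutive BFS layers, ``has radius less than $\ell$.'' That does not follow: a thin band of BFS distances can be a graph of arbitrarily large diameter. Take the fan (a path $v_1-\cdots-v_n$ plus a universal vertex $u$), root the BFS at $u$, and set $\ell=2$; then $L_0=\{u\}$ and $G-L_0$ is a path of radius $\lfloor n/2 \rfloor$, yet it lives entirely in BFS layer~$1$. So Proposition~\ref{prop:local-tw} cannot be applied to the component directly. The standard repair is the contraction trick: for a stripe occupying layers $j+1,\ldots,j+\ell-1$ with $j \equiv i \pmod \ell$, take the subgraph induced on layers $0,\ldots,j+\ell-1$ and contract the (connected) union of layers $0,\ldots,j$ to a single vertex. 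The resulting graph is a minor of $G$, hence lies in the minor closure of $\Cc'$, and every stripe vertex is within distance $\ell-1$ of the contracted vertex, so Proposition~\ref{prop:local-tw} applies to \emph{that} graph; the stripe, being an induced subgraph of it, inherits the $\ctw(\ell-1)$ treewidth bound. Without this step the base case is not established. The gluing in the general case is, as you say, mostly bookkeeping; the key fact you need --- that whenever $v \in \beta(t)$ with $t \neq t_v$ one automatically has $v \in \adh(t)$, so the inflation by $(Z_t \cup \adh(t)) \setminus L_i$ absorbs any discrepancy between the local layering $L_i^t$ and the global $L_i$ --- does hold, and should be made explicit when carrying out that verification.
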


\paragraph{The algorithm.}
Let us now proceed to the description of the algorithm.
Given a graph $G \in \Cc$, we compute its tree decomposition $\Tt = (T,\beta)$
and sets $(Z_t)_{t \in V(T)}$ using Theorem~\ref{thm:grohe}.
Paying $1/n$ in the success probability,
we guess an arbitrary vertex $r \in X$, and root the decomposition $\Tt$
in a bag $t_r$ such that $r \in \beta(t_r)$. 
By slightly abusing the notation, we proclaim $\adh(t_r) = \{r\}$.
Since we can assume that $G$ is connected, we can henceforth assume
that every adhesion is nonempty.

For every $t \in V(T)$, we create an instance $\Ii_t = (G_t,r_t,\light_t,\heavy_t,\ghost_t,\lambda_t)$ of the recursive problem as follows.
We first take $G_t = \torsoo(t)-Z_t$. If $\adh(t) \not \subseteq Z_t$, we define $r_t$ to be an arbitrary vertex of $\adh(t) \setminus Z_t$;
otherwise, we create a new vertex $r_t$ and make it adjacent to an arbitrary vertex of $G_t$.
We set $\light_t = \{r_t\} \cup (\adh(t) \setminus Z_t)$, $\heavy_t = \emptyset$, $\ghost_t = \emptyset$, and $\lambda_t = 0$.

Furthermore, we define $X_t^\circ = X \cap \beta(t)$ and $X_t = (X_t^\circ \setminus Z_t) \cup \{r_t\}$.
Note that $\torsoo(t)[X_t^\circ]$ is connected, as $G[X]$ is connected.
Furthermore, we claim that $\torsoo(t)[X_t^\circ]$ admits a spanning tree of bounded degree.
\begin{claim}\label{cl:torsodeg}
There exists a spanning tree of $\torsoo(t)[X_t^\circ]$ of maximum degree $2\Delta$.
\end{claim}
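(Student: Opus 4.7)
The plan is to project the spanning tree $S$ of $G[X]$ onto $X_t^\circ = X \cap \beta(t)$, using torso edges to bridge portions of $S$ that leave $\beta(t)$. The underlying tool is a standard crossing lemma for tree decompositions: if $a, b \in \beta(t)$ and there is a path in $G$ from $a$ to $b$ whose internal vertices all lie outside $\beta(t)$, then both $a$ and $b$ lie in a common adhesion $\beta(t) \cap \beta(t')$ for some neighbor $t'$ of $t$ in $T$, so $ab \in E(\torsoo(t))$. The proof of this lemma is routine: consecutive internal vertices must share a bag, no such bag is $t$, hence all internal vertices live in the same component of $T - t$ attached at some neighbor $t'$; the usual subtree-connectivity argument in $T$ then forces $a$ and $b$ into the adhesion of the edge $tt'$.

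Given this lemma, I would root $S$ at an arbitrary vertex $r \in X_t^\circ$ and, for every $v \in X_t^\circ \setminus \{r\}$, let $p(v) \in X_t^\circ$ denote the closest proper ancestor of $v$ in $S$ that lies in $X_t^\circ$; this is well-defined because $r$ itself is such an ancestor. Define $S''$ to be the graph on vertex set $X_t^\circ$ with edges $\{v\, p(v) : v \in X_t^\circ \setminus \{r\}\}$. Then $S''$ has $|X_t^\circ| - 1$ edges and is connected (every $v$ reaches $r$ by iterating $p$), so it is a spanning tree of $X_t^\circ$.

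To see that $S''$ is a subgraph of $\torsoo(t)[X_t^\circ]$, fix an edge $v\, p(v)$. By the choice of $p(v)$, the unique $v$--$p(v)$ path in $S$ has all internal vertices outside $X_t^\circ$; since $X \setminus X_t^\circ = X \setminus \beta(t)$, those internal vertices lie outside $\beta(t)$, and the crossing lemma yields $v\, p(v) \in E(\torsoo(t))$. For the degree bound, fix $w \in X_t^\circ$ and observe that $\deg_{S''}(w)$ equals the number of $v$ with $p(v) = w$, plus one if $w \neq r$. If two distinct $v_1, v_2$ satisfied $p(v_i) = w$ and lay in the same subtree of $S$ hanging off a single child of $w$, then the shallower of the two would block the deeper from having $w$ as its closest $X_t^\circ$-ancestor; hence distinct such $v$'s correspond to distinct children of $w$ in the rooted $S$, giving $\deg_{S''}(w) \leq \deg_S(w) \leq \Delta$. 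This is in fact stronger than the claimed bound $2\Delta$, and the only substantial ingredient is the crossing lemma, which is standard combinatorics of tree decompositions; no real obstacle arises.
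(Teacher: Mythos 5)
Your construction (connect each $v\in X_t^\circ$ to its nearest proper $X_t^\circ$-ancestor $p(v)$ in $S$) does produce a spanning tree, and the crossing lemma you invoke to show each edge $v\,p(v)$ lies in $\torsoo(t)$ is correct. The gap is in the degree bound. You assert that distinct vertices $v$ with $p(v)=w$ must hang off distinct children of $w$, because ``the shallower would block the deeper.'' But blocking only happens when one of the two is an ancestor of the other; two incomparable vertices in the \emph{same} child-subtree do not block each other. For a concrete failure, put a path $c_1,c_2,\dots,c_\ell$ below $w$ with every $c_i\notin X_t^\circ$, and hang a leaf $v_i\in X_t^\circ$ off each $c_i$. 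Then $p(v_i)=w$ for every $i$, all $v_i$ sit in the subtree of the single child $c_1$, and $\deg_{S''}(w)\ge \ell$, while $\deg_S(w)\le 2$ and $S$ has maximum degree $3$. So $S''$ can have unbounded maximum degree even when $\Delta$ is constant. (One can rescue a bound of roughly $h\cdot\Delta$ via the adhesion-size bound $h$ from the Robertson--Seymour decomposition---the crossing lemma forces all the $v_i$ in a fixed component of $T-t$ into one adhesion together with $w$---but that is not the claimed $2\Delta$.)

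The paper's proof sidesteps this by building a different connected spanning subgraph with a local charging argument. It keeps the edges of $S$ that already lie in $\torsoo(t)[X_t^\circ]$, and then for each decomposition neighbor $t'$ of $t$ it adds a path through the set $\textrm{leg}(t,t')$ of $X_t^\circ$-vertices in the adhesion $\beta(t)\cap\beta(t')$ that have an $S$-edge crossing into $\beta(t')\setminus\beta(t)$. Connectivity is then argued via the same crossing lemma you used. The degree bound is obtained by charging: each of the at most two path-edges added at a vertex $v$ when processing $t'$ is charged to the crossing edge $vu\in E(S)$, $u\in\beta(t')\setminus\beta(t)$, that witnesses $v\in\textrm{leg}(t,t')$; that edge is not in the new graph, and each such $S$-edge is charged at most once, giving degree at most $2\deg_S(v)\le 2\Delta$. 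The difference that matters is that the paper never ``skips over'' long chains of vertices outside $\beta(t)$ and instead works edge-locally, so degree never inflates beyond the charging constant.
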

\begin{proof}
We construct a connected spanning subgraph $S_t$ of $\torsoo(t)[X_t^\circ]$ of maximum degree $2\Delta$ as follows.
First, we take $V(S_t) = X_t^\circ$ and $E(S_t) = E(S) \cap E(\torsoo(t)[X_t^\circ]$. Second, for every $t' \in N_T(t)$, we perform the following operation.
Let $\textrm{leg}(t,t')$ be the set of those vertices $v \in X_t^\circ \cap \beta(t) \cap \beta(t')$ 
for which $v$ is incident to an edge $uv \in E(S)$ with $u \in \beta(t') \setminus \beta(t)$.
If $|\textrm{leg}(t,t')| \geq 2$, we add to $E(S_t)$ edges of an arbitrary path on vertex set $\textrm{leg}(t,t')$;
such a path exists in $\torsoo(t)$ as the adhesion $\beta(t) \cap \beta(t')$ is turned into a clique.

Let us now show that $S_t$ is connected. To this end, consider a maximal path $P$ in $S$ between two vertices of $X_t^\circ$
such that no edge or internal vertex of $P$ belongs to $\torsoo(t)$. Let $v_1,v_2$ be the endpoints of $P$.
By the properties of a tree decomposition, there exists a node $t' \in N_T(t)$ such that
$v_1,v_2 \in \beta(t) \cap \beta(t')$ and the first and last edges of $P$ are $v_1u_1$ and $v_2u_2$ with $u_1,u_2 \in \beta(t') \setminus \beta(t)$ 
(possibly $u_1 = u_2$). However, then $v_1,v_2 \in \textrm{leg}(t,t')$, and they remain connected in $S_t$. This shows that $S_t$ is connected.

To bound the maximum degree of $S_t$, note that for every $t' \in N_T(t)$ and every $v \in \textrm{leg}(t,t')$, at most two edges
incident to $v$ are added to $S_t$ when considering $t'$. These two edges can be charged to the edge $vu\in E(S)$ with $u \in \beta(t') \setminus \beta(t)$
that certifies that $v \in \textrm{leg}(t,t')$. We have $vu \in E(S)\setminus E(S_t)$, and every edge $vu$ can be charged at most once.
Since $S$ has maximum degree at most $\Delta$ by assumption, the bound on the maximum degree of $S_t$ follows.
\cqed\end{proof}

We define $k' := (10^5\cdot \Delta h)^3 \cdot k = \Oh(k)$.
We will use the machinery of Section~\ref{ss:ext:components}, in particular all the potentials, using the parameter $k'$ instead of the input parameter $k$.
The reason for this is that we need to pay the penalty in the size of the pattern for multiple connected components
of $X_t$ in $\Ii_t$. The following claim verifies that it suffices to inflate $k$ by a constant factor.
\begin{claim}\label{cl:minor:cc}
The graph $G_t[X_t]$ has at most $2\Delta h+1$ connected components and $\compot_{\Ii_t}(X_t) \leq 2\Delta h$.
Furthermore, the set $X_t$ is a pattern in $\Ii_t$ with respect to the parameter $k'$.
\end{claim}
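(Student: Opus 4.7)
The plan is to exploit the bounded-degree spanning tree $S_t$ of $\torsoo(t)[X_t^\circ]$ given by Claim~\ref{cl:torsodeg} in order to control how many connected components survive once we delete $Z_t$. Concretely, I first observe that $G_t[X_t^\circ \setminus Z_t]$ is obtained from $\torsoo(t)[X_t^\circ]$ by removing the vertex set $Z_t \cap X_t^\circ$, which has size at most $|Z_t|\leq h$. Restricting $S_t$ to the surviving vertices yields a spanning forest of $G_t[X_t^\circ \setminus Z_t]$, and a standard tree-cutting counting argument shows that removing $m$ vertices from a tree of maximum degree at most $2\Delta$ produces a forest with at most $1 + m(2\Delta - 1)$ components. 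Plugging in $m \leq h$ (and assuming $h\geq 1$, as otherwise the claim is essentially vacuous) gives at most $2\Delta h$ connected components of $G_t[X_t^\circ\setminus Z_t]$. Adding the single vertex $r_t$ to form $X_t$ creates at most one extra component, so $G_t[X_t]$ has at most $2\Delta h + 1$ components.

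Next, I will bound the component potential. Since $r_t$ is itself the root of $\Ii_t$, it is within distance $0$ from the root, and hence the unique component of $G_t[X_t]$ containing $r_t$ is rooted. Every other component is either rooted or free, and there are at most $2\Delta h$ of them by the argument above, which immediately yields $\compot_{\Ii_t}(X_t) \leq 2\Delta h$.

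Finally, to verify that $X_t$ is a pattern in $\Ii_t$ with respect to the parameter $k'$, I need to check that $r_t \in X_t$ (which holds by construction) and that $|X_t| \leq k' - 10\sqrt{k'}\lambda_t - 486\sqrt{k'}\lg k' \cdot \compot_{\Ii_t}(X_t)$. Here $\lambda_t = 0$ and $|X_t| \leq |X_t^\circ| + 1 \leq k + 1$, while $\compot_{\Ii_t}(X_t) \leq 2\Delta h$ and $k' = (10^5\Delta h)^3 k$. The required bound thus reduces to $k + 1 \leq k' - 972\Delta h \cdot \sqrt{k'}\lg k'$, which follows by exactly the same computation as in Section~\ref{sss:minor:comp-wrapup}: the inequality $\sqrt{\alpha k}\lg(\alpha k) \leq \alpha^{2/3}\sqrt{k}\lg k$ for $\alpha = (10^5\Delta h)^3$ and $k\geq 10$ ensures that $972\Delta h \cdot \sqrt{k'}\lg k' \leq 972\cdot 10^{10}\Delta^3 h^3 \sqrt{k}\lg k$, which is dwarfed by $k' = 10^{15}\Delta^3 h^3 k$.

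The only nontrivial step is the component count, and even there the main leverage is provided by Claim~\ref{cl:torsodeg}; the rest is a routine tree-cutting estimate together with the same slack computation already used for the apex-minor-free wrap-up. I do not anticipate a real obstacle, but care is needed in the case $h = 0$, which should be handled by noting that then $Z_t = \emptyset$ and $\torsoo(t) \in \Cc'$ directly, so the statement is trivial or absorbed into the $+1$ term.
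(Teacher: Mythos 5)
Your proof is correct and follows essentially the same route as the paper's: bound the components of $\torsoo(t)[X_t^\circ]-Z_t$ via the bounded-degree spanning tree from Claim~\ref{cl:torsodeg}, note the root component is rooted, and verify the size inequality using the same estimate on $\sqrt{k'}\lg k'$. The only difference is that you spell out the tree-cutting bound $1+m(2\Delta-1)$ explicitly (and flag the harmless $h\geq 1$ assumption), whereas the paper states the $2\Delta h$ bound without elaboration.
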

\begin{proof}
Since the maximum degree of $S_t$ is at most $2\Delta$ and $|Z_t| \leq h$, there are at most $2\Delta h$ connected components
of $\torsoo(t)[X_t^\circ]-Z_t$. Consequently, $G_t[X_t]$ has at most $2\Delta h + 1$ connected components and
$$\compot_{\Ii_t}(X_t) \leq 2\Delta h.$$
Then, as $\sqrt{\alpha k} \lg (\alpha k) \leq \alpha^{2/3} k$ for every $k \geq 10$ and $\alpha \geq 10^5$, we have:
$$k' - 486 \sqrt{k'} \lg k' \cdot \compot_{\Ii_t}(X_t) \geq (10^5\cdot \Delta h)^3 k - 486 \cdot (10^5\cdot \Delta h)^2 \cdot 2\Delta h > 2k \geq |X_t|.$$
Thus, $X_t$ satisfies the size bound for a pattern in $\Ii_t$.
\cqed\end{proof}
For every node $t \in V(T)$, we are going to look for the pattern $X_t$ in the instance $\Ii_t$, using $k'$ as the parameter. 
Observe that the first three potentials partition well between the instances.
\begin{claim}\label{cl:minor:pots}
The following holds, if we measure the potentials with respect to the parameter $k'$:
\begin{equation*}
k \geq \sum_{t \in V(T)} \patsiz_{\Ii_t}(X_t) \qquad \textrm{and} \qquad
n \geq \sum_{t \in V(T)} \grasiz_{\Ii_t} \qquad \textrm{and} \qquad
k \geq \sum_{t \in V(T)} \dstpot_{\Ii_t}(X_t).
\end{equation*}
\end{claim}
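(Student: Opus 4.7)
My plan is to prove the three bounds by a uniform charging argument based on the standard fact about tree decompositions that every vertex occupies a connected subtree of $T$, together with an explicit unfolding of the definitions of $\patsiz$, $\grasiz$ and $\dstpot$ on each instance $\Ii_t$. The third bound will additionally exploit the fact that adhesions are turned into cliques in the torso, which keeps light terminals close to $r_t$.

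First I would compute $X_t \setminus \light_t$ and $V(G_t) \setminus (\light_t \cup \ghost_t)$ in closed form. Unpacking definitions, $X_t \setminus \light_t = ((X \cap \beta(t)) \setminus Z_t) \setminus ((\adh(t) \setminus Z_t) \cup \{r_t\})$; since $r_t$ is either a fresh vertex disjoint from $X \cap \beta(t)$ or already lies in $\adh(t) \setminus Z_t$, this simplifies to $X \cap (\beta(t) \setminus (Z_t \cup \adh(t)))$. An identical calculation, using $\ghost_t = \emptyset$ and the fact that the fresh $r_t$ (when present) always lies in $\light_t$, yields $V(G_t) \setminus (\light_t \cup \ghost_t) = \beta(t) \setminus (Z_t \cup \adh(t))$.

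Next I would invoke the standard charging: for every $v \in V(G)$, the set of nodes $t \in V(T)$ with $v \in \beta(t)$ induces a subtree $T_v$ of $T$, and $v$ belongs to $\beta(t) \setminus \adh(t)$ only at the unique topmost node of $T_v$ with respect to the rooting of $T$. Thus
\[
\sum_{t \in V(T)} |\beta(t) \setminus (Z_t \cup \adh(t))| \leq \sum_{t \in V(T)} |\beta(t) \setminus \adh(t)| \leq |V(G)| = n,
\]
which together with the formulas of the previous step gives $\sum_t \grasiz_{\Ii_t} \leq n$, and the same inequality restricted to $X$ gives $\sum_t \patsiz_{\Ii_t}(X_t) \leq |X| \leq k$.

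For $\dstpot$, the step I would focus on is showing $\dstpot_{\Ii_t}(X_t) \leq \patsiz_{\Ii_t}(X_t)$, so that summing this inequality and reusing the bound on $\sum_t \patsiz_{\Ii_t}(X_t)$ completes the proof. The key observation is that $\adh(t) \cap \beta(\parent(t)) = \adh(t)$ is a clique in $\torsoo(t)$, and hence $\adh(t) \setminus Z_t$ remains a clique in $G_t$; in particular, if $r_t \in \adh(t) \setminus Z_t$ every light terminal is adjacent to $r_t$, and if $\adh(t) \subseteq Z_t$ then $\light_t = \{r_t\}$ trivially. Either way every light terminal is within distance at most $1$ from $r_t$ in $G_t$. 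Consequently any light terminal in $X_t$ lies in the (rooted) component of $X_t$ in $\Ii_t$ containing $r_t$ and is not far, so $\Far_{\Ii_t}(X_t) \subseteq X_t \setminus \light_t$ and the claimed pointwise bound follows. I do not expect any real obstacle here: the whole statement is a bookkeeping lemma, and the only subtle point is remembering that the torso turns $\adh(t)$ into a clique, which is exactly what ensures light terminals behave as in the main proof.
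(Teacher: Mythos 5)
Your proposal is correct and matches the paper's own argument: you isolate the same key fact---that $v \in \adh(t)$ implies $v \in Z_t$ or $v \in \light_t$, so every vertex of $G$ is present but not a light terminal in at most one instance $\Ii_t$---and the bound on $\dstpot$ reduces to the observation that light terminals are never far, just as in the paper. Your version is merely more explicit (closed-form sets, topmost-node charging, the pointwise inequality $\dstpot_{\Ii_t}(X_t)\leq\patsiz_{\Ii_t}(X_t)$), but there is no substantive difference from the paper's two-sentence proof.
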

\begin{proof}
The crucial observation is that for an edge between $t$ and $\parent(t)$ in $T$, every vertex $v \in \adh(t) = \beta(t)\cap \beta(\parent(t))$
is either in $Z_t$ or a light terminal in $\Ii_t$. Consequently, every vertex $v \in V(G)$ is present but \emph{not} a light terminal in at most
one instance $\Ii_t$.
\cqed\end{proof}
However, the potentials $\compot_{\Ii_t}(X_t)$ do not behave as nicely as the other potentials
in Claim~\ref{cl:minor:pots}: they are bounded by $2\Delta h$ by Claim~\ref{cl:minor:cc},
but may be positive in $\Omega(k)$ instances. Thus, we cannot afford to apply the algorithm of the previous section to every 
instance $\Ii_t$ separately.

Instead, for every instance $\Ii_t$, we make a random choice.
With probability $1/k$, we proclaim $\Ii_t$ \emph{interesting} and apply the recursive algorithm to $\Ii_t$ (that belongs to an apex-minor-free graph class $\Cc'$ promised by Theorem~\ref{thm:grohe}),
obtaining a set $A_t \subseteq (\beta(t) \setminus Z_t) \cup \{r_t\}$ such that $G_t[A_t]$ is of treewidth $\Oh(\sqrt{k'} \lg k') = \Oh(\sqrt{k}\lg k)$.
Furthermore, note that $\adh(t) \subseteq A_t \cup Z_t$.

With the remaining probability, we proclaim $\Ii_t$ \emph{standard}, and proceed as follows.
First, we apply the algorithm of Theorem~\ref{thm:baker-apex} to the graph $G_t$
with $\ell := \lceil c_3 \sqrt{k'} \lg k' \rceil$ obtaining a partition $L_1^t,L_2^t,\ldots,L_\ell^t$. 
Second, we pick a random index $1 \leq i_t \leq \ell$. Third, we set $A_t := (V(G_t) \setminus L_{i_t}^t) \cup \{r_t\} \cup (\adh(t) \setminus Z_t)$.
Note that in this case also $G_t[A_t]$ has treewidth $\Oh(\sqrt{k}\lg k)$ as $\ell = \Oh(\sqrt{k}\lg k)$ and $|\adh(t)| \leq h = \Oh(1)$.
Furthermore, we have again $\adh(t) \subseteq A_t \cup Z_t$.

We define
$$A := \{r\} \cup \bigcup_{t \in V(T)} (A_t \cup Z_t) \setminus (\adh(t) \cup \{r_t\}).$$
We claim that $A$ satisfies the desired properties. The treewidth bound is easy.
\begin{claim}\label{cl:minor:tw}
$G[A]$ is of treewidth $\Oh(\sqrt{k} \lg k)$.
\end{claim}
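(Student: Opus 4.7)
The plan is to assemble a tree decomposition of $G[A]$ of width $\Oh(\sqrt{k}\lg k)$ by stitching together local tree decompositions produced by the algorithm at each node of $T$, paralleling the standard technique of lifting torso decompositions.

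First, for every node $t \in V(T)$ I will produce a tree decomposition $\Tt_t^\star$ of $G\bigl[(A_t \cup Z_t) \setminus \{r_t\}\bigr]$ of width $\Oh(\sqrt{k}\lg k)$ with two features: (i) the adhesion $\adh(t)$ is contained in a designated root bag, and (ii) for every child $t'$ of $t$ in $T$, some bag of $\Tt_t^\star$ contains $\adh(t')$. In the interesting case this comes from the recursively produced tree decomposition $\Tt_t$ of $G_t[A_t]$ of width $\Oh(\sqrt{k'}\lg k') = \Oh(\sqrt{k}\lg k)$, after deleting $r_t$ from every bag and adding $\adh(t) \cup Z_t$ to every bag. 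The adhesion $\adh(t') \subseteq \beta(t) \cap \beta(t')$ is a clique in $\torsoo(t)$, hence, once $Z_t$ is added back, it is a clique in the ambient supergraph of every bag, so it must sit inside a single bag. In the standard case, Theorem~\ref{thm:baker-apex} directly gives a tree decomposition of $G_t - L_{i_t}^t$ of width $\Oh(\ell) = \Oh(\sqrt{k}\lg k)$, to which I again add $\adh(t) \cup Z_t$ to every bag. Since $|\adh(t)|, |Z_t| \leq h = \Oh(1)$, these additions do not affect the asymptotic width.

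Second, I glue the local decompositions along $T$: for each non-root $t$, I attach $\Tt_t^\star$ below the bag of $\Tt_{\parent(t)}^\star$ guaranteed by property (ii), identifying it with (or connecting it to) the designated root bag of $\Tt_t^\star$ provided by property (i). Both endpoints of the connection contain $\adh(t)$, so the subtree property at adhesion vertices is preserved across the gluing.

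Finally, I verify that the resulting object is indeed a tree decomposition of $G[A]$. Vertex coverage follows from the definition of $A$ as the union over $t$ of $(A_t \cup Z_t) \setminus (\adh(t) \cup \{r_t\})$, plus $r$; each contribution is captured by the corresponding $\Tt_t^\star$. The subtree property for any vertex $u$ follows from the fact that $\{t : u \in \beta(t)\}$ forms a connected subtree of $T$, so within the glued decomposition the bags containing $u$ form a connected collection. Edge coverage is the delicate point: for an edge $uv \in E(G[A])$ I must exhibit a single bag containing both $u$ and $v$, which reduces to finding a node $t \in V(T)$ with both $u, v \in A_t \cup Z_t$; a topmost-bag argument for $u$ and $v$ on $\Tt$, combined with the fact that no vertex strictly interior to a bag (i.e., not in any adhesion) is ever dropped from $A$ at that bag, handles this. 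The main subtlety I foresee is precisely this edge-coverage verification: the subtraction of $\adh(t) \cup \{r_t\}$ in the definition of $A$ must be tracked carefully to ensure no edge of $G[A]$ between two adhesion vertices is orphaned. Everything else amounts to standard bookkeeping using the $\Oh(1)$ bounds on $|Z_t|$ and $|\adh(t)|$ supplied by Theorem~\ref{thm:grohe}.
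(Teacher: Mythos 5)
Your approach is essentially the same as the paper's: the paper's proof invokes the standard fact that $G[A]$ arises from the torsos $\torsoo(t)[A_t\cup Z_t]$ (each of treewidth $\Oh(\sqrt{k}\lg k)$ since $|Z_t|\leq h$) by vertex deletions and clique-sums along cliques of size at most $h$, using $\adh(t)\subseteq A_t\cup Z_t$; you are simply unfolding that clique-sum argument into explicit bag-level bookkeeping. One imprecision to flag in your write-up: property (ii), that some bag of $\Tt_t^\star$ contains all of $\adh(t')$ for every child $t'$, need not hold, because $\adh(t')$ is not in general a subset of $A_t\cup Z_t$ (e.g.\ in the standard case some of $\adh(t')\setminus Z_t$ may land in the deleted layer $L_{i_t}^t$, and in the interesting case the recursion is free to discard them). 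What you actually need, and what your clique argument does establish, is the weaker statement that some bag contains $\adh(t')\cap(A_t\cup Z_t)$, since that is a clique of $G_t[A_t]$ once $Z_t$ is appended. This weaker statement suffices for the gluing, because the vertices of $\adh(t')\cap A$ that must live in both $\Tt_t^\star$ and $\Tt_{t'}^\star$ all lie in $A_t\cup Z_t$: for any $v\in A$ with $v\in\beta(t)$, either $t$ is the unique node owning $v$ (so $v\in A_t\cup Z_t$ directly by the definition of $A$), or $v\in\adh(t)\subseteq A_t\cup Z_t$ by the explicitly stated invariant. That is precisely the ``no interior vertex dropped'' observation you gesture at in your final paragraph, and spelling it out as above closes the gap.
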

\begin{proof}
Since $|Z_t| \leq h = \Oh(1)$ for every $t \in V(T)$, we have that $\torsoo(t)[A_t \cup Z_t]$ is of treewidth $\Oh(\sqrt{k} \lg k)$.
Since $\adh(t) \subseteq A_t \cup Z_t$ for every $t \in V(T)$, 
we have that $G[A]$ can be constructed from $\torsoo(t)[A_t \cup Z_t]$ by vertex deletions and clique sums along cliques of size at most $h$, and the claim follows.
\cqed\end{proof}
Finally, we check the probability that $X \subseteq A$.
To this end, we need the following simple estimate.
\begin{claim}\label{cl:minor:dupa}
Let $a,b$ be positive integers, and $a_1,a_2,\ldots,a_r$ be integers such that $0 \leq a_i < a$ for every $1 \leq i \leq r$ and $\sum_{i=1}^r a_i \leq b$.
Then
$$\prod_{i=1}^r \left(1-\frac{a_i}{a}\right) \geq a^{-2b/a - 1}.$$
\end{claim}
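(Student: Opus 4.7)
The plan is to take logarithms and reduce the estimate to a linear bound via convexity. Let $\phi(x) := -\ln(1-x)$, which is convex on $[0,1)$ with $\phi(0) = 0$ and $\phi((a-1)/a) = \ln a$. Each $a_i/a$ lies in the interval $[0, (a-1)/a]$, so by the chord bound from convexity,
$$\phi(x) \;\leq\; \frac{x}{(a-1)/a} \cdot \phi\!\left(\tfrac{a-1}{a}\right) \;=\; \frac{a x}{a-1} \cdot \ln a \qquad \text{for every } x \in [0,(a-1)/a].$$
Setting $x = a_i / a$ and summing over $i$, this yields
$$\sum_{i=1}^r -\ln\!\left(1 - \frac{a_i}{a}\right) \;\leq\; \frac{\ln a}{a-1} \sum_{i=1}^r a_i \;\leq\; \frac{b \ln a}{a-1},$$
which after exponentiating gives $\prod_{i=1}^r (1 - a_i/a) \geq a^{-b/(a-1)}$.

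It then suffices to verify the elementary inequality $\tfrac{b}{a-1} \leq \tfrac{2b}{a} + 1$, which after clearing denominators is equivalent to $0 \leq b(a-2) + a(a-1)$. This clearly holds whenever $a \geq 2$ and $b \geq 0$. The edge case $a = 1$ is trivial: the hypothesis $0 \leq a_i < a$ forces every $a_i = 0$, so the product is $1$, matching the claimed bound $1^{-2b-1} = 1$.

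There is no real obstacle here: the entire content of the claim is the observation that on the allowed range $[0, (a-1)/a]$, the convex function $-\ln(1-x)$ is dominated by its chord, which is linear in $a_i$ and therefore integrates nicely against the budget constraint $\sum a_i \leq b$. The slack factor $a^{-1}$ (equivalently, the ``$+1$'' in the exponent) is precisely what absorbs the small mismatch between $\tfrac{1}{a-1}$ and $\tfrac{2}{a}$ for small $a$.
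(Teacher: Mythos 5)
Your proof is correct, and it takes a genuinely different route from the paper's. The paper's argument is combinatorial: it repeatedly merges two indices $i,j$ with $a_i+a_j<a$ into one (using $(1-a_i/a)(1-a_j/a)\geq 1-(a_i+a_j)/a$), observes that after this process at most one remaining term is below $a/2$, hence $r\leq 2b/a+1$, and finishes by bounding each factor below by $1/a$. Your argument instead bounds the convex function $-\ln(1-x)$ on $[0,(a-1)/a]$ by its chord through the endpoints, which is linear in $x$ and therefore sums cleanly against the budget $\sum a_i\leq b$; this yields the stronger intermediate estimate $\prod(1-a_i/a)\geq a^{-b/(a-1)}$, after which a one-line algebraic check confirms $b/(a-1)\leq 2b/a+1$ for $a\geq 2$, and the case $a=1$ is degenerate. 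Your approach avoids any case analysis on the multiset $\{a_i\}$ and in fact proves a tighter bound, at the price of a small amount of calculus; the paper's merge-and-count argument is more elementary but looser, producing only $a^{-r}$ with $r\leq 2b/a+1$. Both are valid; yours is arguably cleaner and extends more readily to real-valued $a_i$.
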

\begin{proof}
We use the following local improvement argument: whenever we have two indices $1 \leq i < j \leq r$ such that $a_i+a_j < a$,
we can replace $a_i$ and $a_j$ with $a_i+a_j$, since
$$\left(1-\frac{a_i}{a}\right) \left(1-\frac{a_j}{a}\right) \geq 1-\frac{a_i+a_j}{a}.$$
Thus, we can assume that for every $1 \leq i < j \leq r$ we have that $a_i + a_j \geq a$. In particular,
every index $i$ satisfies $a_i \geq a/2$, apart from at most one. We infer that $r \leq 2b/a + 1$. Since $1-\frac{a_i}{a}\geq a^{-1}$ for each $i$ due to $a,a_i$ being integers, the claim follows.
\cqed\end{proof}
\begin{claim}
The probability that $X \subseteq A$ is at least $(2^{\Oh(\sqrt{k} \lg^2 k)}\cdot n^{\Oh(1)})^{-1}$.
\end{claim}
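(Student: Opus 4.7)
The plan is to factor the event $X \subseteq A$ across the bags of the Robertson-Seymour decomposition and estimate each factor separately using the two branches of the algorithm. First, I would observe that a vertex $v \in X \setminus \{r\}$ lies in $A$ if and only if $v \in A_{t(v)} \cup Z_{t(v)}$, where $t(v)$ is the highest bag of $T$ containing $v$: indeed, for every $t \neq t(v)$ with $v \in \beta(t)$ we have $v \in \adh(t)$, so $v$ is removed by the definition of $A$, while for $t = t(v)$ the conditions $v \notin \adh(t)$ and $v \neq r_t$ hold automatically. Setting $X'_t := \{v \in X : t(v) = t,\ v \notin Z_t,\ v \neq r\}$, the event $X \subseteq A$ is therefore equivalent to $X'_t \subseteq A_t$ holding simultaneously for all $t \in V(T)$, and since the algorithm's coins for different bags are independent, this gives $\mathbb{P}(X \subseteq A) = \prod_{t \in V(T)} p_t$ with $p_t = \mathbb{P}(X'_t \subseteq A_t)$ and $p_t = 1$ whenever $X'_t = \emptyset$.

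Next, I would classify each active bag as \emph{small} (when $0 < |X'_t| < \ell$, where $\ell = \lceil c_3 \sqrt{k'}\lg k' \rceil = \Theta(\sqrt{k}\lg k)$) or \emph{large} ($|X'_t| \geq \ell$). For a small bag, discarding the interesting branch gives $p_t \geq (1-1/k)\cdot(1-|X'_t|/\ell)$, since at least $\ell - |X'_t|$ of the $\ell$ Baker parts are disjoint from $X'_t$ and any such index $i_t$ forces $X'_t \subseteq V(G_t) \setminus L_{i_t}^t \subseteq A_t$. For a large bag, discarding the standard branch gives $p_t \geq (1/k)\cdot\CMonster(|V(G_t)|,\patsiz_{\Ii_t}(X_t),\grasiz_{\Ii_t},\dstpot_{\Ii_t}(X_t),\compot_{\Ii_t}(X_t))$ via Theorem~\ref{thm:maintheorem-comps} applied to the apex-minor-free instance $\Ii_t$; Claim~\ref{cl:minor:cc} ensures that $X_t$ is a valid pattern and that $\compot_{\Ii_t}(X_t) \leq 2\Delta h = \Oh(1)$, which trivially meets the component-count hypothesis of Theorem~\ref{thm:maintheorem-comps}.

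Multiplying the factors, $\sum_t |X'_t| \leq |X| \leq k$ implies there are at most $k/\ell = \Oh(\sqrt{k}/\lg k)$ large bags, so $\prod_{t\text{ large}} (1/k) = 2^{-\Oh(\sqrt{k})}$; Claim~\ref{cl:minor:dupa} with $a = \ell$ gives $\prod_{t\text{ small}} (1-|X'_t|/\ell) \geq \ell^{-2k/\ell-1} = 2^{-\Oh(\sqrt{k})}$, and $(1-1/k)^{|\text{active}|} = \Omega(1)$ since there are at most $k$ active bags. For the product of $\CMonster$-values over large bags, I would expand the definition and use Claim~\ref{cl:minor:pots} together with convexity to bound $\sum_t \patsiz_t \lg \patsiz_t \leq k\lg k$, $\sum_t \dstpot_t \leq k$, and $\sum_t \patsiz_t \lg \grasiz_t \leq k\lg n$, producing factors $\exp[-\Oh(\sqrt{k}\lg^2 k + \sqrt{k}\lg k\lg \lg n)]$ and $(1-1/k)^{\Oh(k\lg n)} = n^{-\Oh(1)}$; Claim~\ref{cl:loglog} then absorbs the $\lg\lg n$ into $n^{o(1)}$.

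The main obstacle, really a bookkeeping challenge rather than a new idea, is the component potential $\compot_t$: unlike $\patsiz_t$, $\dstpot_t$, and $\grasiz_t$, it does \emph{not} partition additively from a single quantity of the original pattern. The decisive point is that the $\compot$-penalty is paid only on the large bags, so $\sum_{t\text{ large}} \compot_t \leq \Oh(1)\cdot k/\ell = \Oh(\sqrt{k}/\lg k)$, and multiplying by the per-unit cost $\lg^2 k(\lg k+\lg\lg n) + \lg n \lg k/\sqrt{k}$ (using Claim~\ref{cl:loglog} once more) produces a factor of $2^{-\Oh(\sqrt{k}\lg^2 k)}\cdot n^{-\Oh(1)}$. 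This is precisely the step where the bounded-degree hypothesis on the spanning tree of $G[X]$ is consumed, through Claim~\ref{cl:torsodeg} and Claim~\ref{cl:minor:cc}. Combining all factors yields $\mathbb{P}(X \subseteq A) \geq (2^{\Oh(\sqrt{k}\lg^2 k)}\cdot n^{\Oh(1)})^{-1}$, as required.
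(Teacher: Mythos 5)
Your proposal is correct and follows essentially the same approach as the paper: factor the event $X\subseteq A$ across bags via the highest-bag decomposition, classify nonempty bags by the size of their local pattern (your small/large corresponding to the paper's standard/interesting touched nodes), pay $2^{-\Oh(\sqrt k)}$ for the correct proclamations, use Baker plus Claim~\ref{cl:minor:dupa} for small bags, use the recursive $\CMonster$ bound with Claims~\ref{cl:minor:pots} and~\ref{cl:minor:cc} for large bags, and — the one genuinely non-additive part — control the $\compot$-penalty by the $\Oh(\sqrt k/\lg k)$ bound on the number of large bags. The only difference is bookkeeping: you carry the per-bag $(1-1/k)$ and $(1/k)$ factors inside $p_t$ rather than splitting off a separate "correct proclamation" product, which is equivalent.
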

\begin{proof}
Note that we have the following partition:
$$V(G) = \{r\} \uplus \biguplus_{t \in V(T)} \beta(t) \setminus \sigma(t).$$
By the definition of the set $A$, we have $r \in A$, and for every $t \in V(T)$ and for every $v \in \beta(t) \setminus \sigma(t)$
it holds that $v \in A$ if and only if $v \in A_t \cup Z_t$. Consequently, by the definition of $X_t$ and $G_t$,
if for every $t \in V(T)$ we have $X_t \subseteq A_t$, then we have $X \subseteq A$.
In what follows we will argue that with sufficient probability it holds that for every $t \in V(T)$ we have $X_t \subseteq A_t$.

First, observe that this assertion is clearly true for every node $t$ where $X_t \subseteq \{r_t\} \cup (\adh(t) \setminus Z_t)$ as
both in standard and interesting nodes we have $\adh(t) \subseteq A_t \cup Z_t$ and $r_t \in A_t$.

If this is not the case for a node $t$ (i.e., $X_t \not\subseteq \{r_t\} \cup (\adh(t) \setminus Z_t)$),
we call the node $t$ \emph{touched}.
Note that we have $\patsiz_{\Ii_t}(X_t) > 0$ for a touched node $t$.
Hence, there are at most $k$ touched nodes.
We require that a touched node $t$ is proclaimed interesting if $\patsiz_{\Ii_t}(X_t) \geq c_3 \sqrt{k'} \lg k'$, and standard otherwise.
Note that Claim~\ref{cl:minor:pots} implies that we require at most $\sqrt{k'} / (c_3 \lg k')$ nodes to be interesting and at most $k$ nodes to be standard.
Consequently, the probability that we proclaim touched nodes as requested is $2^{-\Oh(\sqrt{k})}$.

In every standard touched node $t$
we have $X_t \subseteq A_t$ if $X_t \cap L_{i_t}^t \subseteq \light_t$, as $\light_t = \{r_t\} \cup (\adh(t) \setminus Z_t)$.
We have $X_t \cap L_{i_t}^t \subseteq \light_t$ with probability at least $1-|X_t \setminus \light_t|/\ell = 1-\patsiz_{\Ii_t}(X_t) / \ell$.
Recall that $\ell = \lceil c_3 \sqrt{k'} \lg k' \rceil$ but $\patsiz_{\Ii_t}(X_t) < \ell$ in a standard node $t$.
Consequently, by Claim~\ref{cl:minor:dupa}, the probability that in every standard node we have $X_t \subseteq A_t \cup \{r_t\}$ is at most
$$\ell^{-2k'/\ell-1} = 2^{-\Oh(\sqrt{k})}.$$

Let us now consider an interesting node $t$, that is, a node $t$ with $\patsiz_{\Ii_t}(X_t) \geq c_3 \sqrt{k'} \lg k'$.
Let $W_\mathsf{int}$ be the set of these nodes; note that $|W_\mathsf{int}| \leq \sqrt{k'} / (c_3 \lg k')$.
Since $X_t$ is a pattern in $\Ii_t$, we have $X_t \subseteq A_t$ with probability at least
$\CMonster(n,\patsiz_{\Ii_t}(X_t),\grasiz_{\Ii_t},\dstpot_{\Ii_t}(X_t),\compot_{\Ii_t}(X_t))$ (with respect to the parameter $k'$).
By Claims~\ref{cl:minor:cc} and~\ref{cl:minor:pots} we have that
\begin{align*}
&\prod_{t \in W_\mathsf{int}} \CMonster(n,\patsiz_{\Ii_t}(X_t),\grasiz_{\Ii_t},\dstpot_{\Ii_t}(X_t),\compot_{\Ii_t}(X_t)) \\
& \qquad \leq \prod_{t \in W_\mathsf{int}} \Monster(n,\patsiz_{\Ii_t}(X_t),\grasiz_{\Ii_t},\dstpot_{\Ii_t}(X_t)) \\
& \qquad\qquad \cdot \prod_{t \in W_\mathsf{int}} \exp\left[-c_3 \cdot \compot_{\Ii_t}(X_t) \cdot  \left(\lg^2 k' (\lg k' + \lg \lg n) + \frac{\lg n \lg k'}{\sqrt{k'}} \right)\right] \\
& \qquad \leq \left(2^{\Oh(\sqrt{k} \lg^2 k)} n^{\Oh(1)}\right)^{-1} \\
&\qquad \qquad \cdot \exp\left[-|W_\mathsf{int}| \cdot c_3 \cdot 2\Delta h \cdot  \left(\lg^2 k' (\lg k' + \lg \lg n) + \frac{\lg n \lg k'}{\sqrt{k'}} \right)\right] \\
& \qquad \leq \left(2^{\Oh(\sqrt{k} \lg^2 k)} n^{\Oh(1)}\right)^{-1} \cdot \exp\left[-2\Delta h\left(\sqrt{k'} \lg k' (\lg k' + \lg \lg n) + \lg n\right)\right]\\
& \qquad \leq \left(2^{\Oh(\sqrt{k} \lg^2 k)} n^{\Oh(1)}\right)^{-1}. 
\end{align*}
Here, we estimated the product of the terms $\Monster(n,\patsiz_{\Ii_t}(X_t),\grasiz_{\Ii_t},\dstpot_{\Ii_t}(X_t))$ using Claim~\ref{cl:minor:pots}
as in Section~\ref{sec:proof}, and in the last inequality we used Claim~\ref{cl:loglog}.
\cqed\end{proof}
This concludes the proof of Theorem~\ref{thm:maintheorem-minor}.

\end{document}